\definecolor{Gray}{gray}{0.95}
\definecolor{LightCyan}{rgb}{0.88,1,1}
\newcolumntype{g}{>{\columncolor{Gray}}c}
\newtheorem{theorem}{Theorem}
\newtheorem{corollary}{Corollary}
\newtheorem{lemma}{Lemma}
\newtheorem{remark}{Remark}
\newtheorem{claim}{Claim}
\newtheorem{definition}{Definition}
\newtheorem{proposition}{Proposition}
\DeclareRobustCommand{\ubar}[1]{\underaccent{\bar}{#1}}
\begin{document}

\title{Threshold Policies with Tight Guarantees for\\ Online Selection with Convex Costs}

\author{Xiaoqi Tan\thanks{University of Alberta \& Amii. Email: {\tt xiaoqi.tan@ualberta.ca}}
\and Siyuan Yu\thanks{University of Alberta. Email: {\tt syu3@ualberta.ca}} 
\and Raouf Boutaba\thanks{University of Waterloo. Email: {\tt rboutaba@uwaterloo.ca}} 
\and Alberto Leon-Garcia\thanks{University of Toronto. Email: {\tt alberto.leongarcia@utoronto.ca}} 
}


\begin{titlepage}
\maketitle

\begin{abstract}
This paper provides threshold policies with tight guarantees for online selection with convex cost (OSCC). In OSCC, a seller wants to sell some asset to a sequence of buyers with the goal of maximizing her profit. The seller can produce additional units of the asset, but at non-decreasing marginal costs. At each time, a buyer arrives and offers a price. The seller must make an immediate and irrevocable decision in terms of whether to accept the offer and produce/sell one unit of the asset to this buyer. The goal is to develop an online algorithm that selects a subset of buyers to maximize the seller's profit, namely, the total selling revenue minus the total production cost. Our main result is the development of a class of simple threshold policies that are logistically simple and easy to implement, but have provable optimality guarantees among all deterministic algorithms. We also derive a lower bound on competitive ratios of randomized algorithms and prove that the competitive ratio of our threshold policy asymptotically converges to this lower bound when the total production output is sufficiently large.  Our results generalize and unify various online search, pricing, and auction problems, and provide a new perspective on the impact of non-decreasing marginal costs on real-world online resource allocation problems.
\end{abstract}

\end{titlepage}

\section{Introduction}

We consider the following online selection problem:  a seller is selling some asset to a sequence of buyers who arrive one at a time, aiming to maximize the profit. The seller can produce $ k \geq 1 $ units of the asset in total, but at non-decreasing marginal costs, namely, it is increasingly more costly to produce additional units of the asset. At each time $ t = 1, 2, \cdots $, buyer $ t $ arrives and offers a price $ p_t $, and the seller needs to make an immediate and irrevocable decision  in terms of whether to accept the offer and produce/sell one unit of the asset to buyer $ t $. The goal is to develop online algorithms that select a subset of \textit{at most} $ k $ buyers to maximize the seller's profit, namely, the total selling revenue  minus the total production cost. In this paper, we refer to this problem as \textit{online selection with convex costs} (OSCC) since a non-decreasing marginal cost per unit implies that the total production cost is convex w.r.t. the total production output.

The idea of incorporating such convex costs in online selection is primarily motivated by real-world online resource allocation problems  invovling ``diseconomy-of-scale" costs (e.g.,  \cite{network_design_cost_2016, wierman_speed_scaling_2009}), where i) only a subset of incoming requests can be satisfied (due to capacity constraints), and ii) the seller needs to balance the \textit{value-cost tradeoff} by selecting a subset of ``worthy" requests, one at a time, so that  these online decisions turn out to be good choices in hindsight.  For example, in cloud computing \cite{XZhang_2015, Zhang2017}, when incoming jobs have different priorities, how to decide which subset of jobs to admit to maximize allocation efficiency while at the same time to balance the power and cooling costs of computing servers? In electric vehicle (EV) charging \cite{OMD_EV_2011, OKP_EV_BoSun_2020}, how to select a subset of charging requests to balance the aggregate utility of EV owners and the cost of electricity supply? Similar online selection problems have also been studied in the context of admission control in queueing systems \cite{online_admission_control_2009, admission_queue_setup_costs_2020} and various online search, pricing, and auction problems such as \textmd{online time series search} \cite{time_series_search_2001}, \textmd{$ k $-max search} \cite{k_search_2009}, \textmd{one-way trading} \cite{one_way_trading_2019}, \textmd{online conversion} \cite{online_conversion_2021}, and \textmd{online auctions}  \cite{Blum_2011, Huang_2019}. However, despite decades of research, the impact of non-decreasing marginal costs on online selection is not yet fully understood. In fact, it is not clear how to design optimal online selection algorithms with general non-decreasing marginal costs, even for special cases (e.g., linear marginal cost).

We study OSCC under the  worst-case competitive analysis framework, where the sequence of buyers to arrive is unknown and does not necessarily follow any pattern. This means, if we denote the arrival instance by $ \mathcal{I} = \{p_1, p_2, \cdots, p_T\} $, then both the prices  $ \{p_t\}_{\forall t} $ and the total number of buyers $ T $ remain unknown a priori. The performance of an online algorithm is quantified by its \textit{competitive ratio}. Given an arrival instance $ \mathcal{I} $, let us denote by  $ \textsf{OPT}(\mathcal{I}) $ the optimal profit achieved in the offline setting when the information of the arrival instance $ \mathcal{I} $ is known beforehand. Let  $ \textsf{ALG}(\mathcal{I}) $ denote the profit achieved by an online algorithm \textsf{ALG}. Then, the competitive ratio of \textsf{ALG} is defined as follows
\begin{equation}\label{equation_alpha}
	\alpha \triangleq \max_{\mathcal{I}}\ \frac{ \textsf{OPT}(\mathcal{I})}{\mathbb{E}[\textsf{ALG}(\mathcal{I})]},
\end{equation}
where $ \alpha\geq 1 $ and the closer to 1 the better, and the expectation is taken w.r.t. the possible randomization of the online algorithm.  

\subsection{Overview of Main Results}
To solve OSCC, one of the main challenges we face is that \textit{we must guard against accepting prices that look good early on but cause the rapid growth of marginal costs as more units are produced}. In this paper, we focus on a subclass of algorithms: threshold policies. A threshold policy makes decisions of accepting or rejecting prices based on whether they exceed a predesigned threshold. Meanwhile, we  allow the competitive ratio to be setup-dependent, i.e., the value of $ \alpha $ can be related to the production cost function $ f $, the capacity $ k $, and the \textit{fluctuation ratio} $ \rho $ defined as follows:
\begin{equation}\label{rho_def}
\rho \triangleq p_{\max}/p_{\min}.
\end{equation}
In Eq. \eqref{rho_def}, $ p_{\min} $ (rep. $ p_{\max} $) denotes the minimum (rep. maximum) price offered by the buyers. By definition, $ \rho\in [1,+\infty) $ always holds and a larger $ \rho $ indicates a higher level of price volatility, hence the name ``fluctuation ratio."  

The major contribution of this paper is the design of threshold policies with tight guarantees for various setups of $ f $, $ k $, and $ \rho $. In particular, we derive tight, setup-dependent lower bounds of competitive ratios and prove that these lower bounds cannot be overcome by randomized algorithms when $ k $ is finite. In contrast, when $ k $ is large, a simple threshold policy can achieve a competitive ratio matching these lower bounds. Specifically, our main results are threefold:
\begin{itemize}
	\item \textbf{Optimal deterministic algorithms}. In Theorem \ref{theorem_optimality}, we show the \textit{existence} and \textit{uniqueness} of a threshold-based online selection algorithm which achieves the optimal competitive ratio, denoted by $ \textsf{CR}_f^*(\rho, k) $, of all deterministic algorithms. 

	\item \textbf{Lower bound for randomized algorithms}. In Theorem \ref{theorem_hardness_results_general}, we derive a lower bound, denoted by $ \textsf{CR}_f^{\textsf{lb}}(\rho, k) $, and show that no randomized algorithm can achieve a competitive ratio better than $ \textsf{CR}_f^{\textsf{lb}}(\rho, k) $ when $ k $ is finite. Here the superscript ``\textsf{lb}" means ``\textsf{lower bound}." 
	
	\item \textbf{Asymptotic convergence}. In Theorem \ref{theorem_asymptotic}, we prove that both $ \textsf{CR}_f^*(\rho, k) $  and  $ \textsf{CR}_f^{\textsf{lb}}(\rho,k) $ converge to the same asymptotic lower bound $ \underline{\textsf{CR}}_f(\rho) $ when $ k \rightarrow +\infty$. Thus, our proposed threshold policy is not only optimal among all deterministic algorithms, but also asymptotically optimal among randomized algorithms.   
\end{itemize}

As illustrated in Fig. \ref{Fig_CR_bound}, for any given $ k\in \{1,2, \cdots\} $,  we have $\textsf{CR}_f^*(1,k) =  \textsf{CR}_f^{\textsf{lb}}(1,k) =  \underline{\textsf{CR}}_f(1) = 1 $, indicating the reduction of OSCC to a trivial deterministic selection problem if all buyers offer the same price (i.e., $ \rho = 1 $). In general, $\textsf{CR}_f^*(\rho,k) \geq  \textsf{CR}_f^{\textsf{lb}}(\rho,k) \geq  \underline{\textsf{CR}}_f(\rho) $ holds and they are all increasing functions of $ \rho\in [1,+\infty) $, meaning that a higher price volatility leads to a less competitive performance. Our results show that $\textsf{CR}_f^*(\rho,k) $, $ \textsf{CR}_f^{\textsf{lb}}(\rho,k) $, and $ \underline{\textsf{CR}}_f(\rho) $ do not have closed-form expressions in general -- their values are implicitly given by solving some system of nonlinear equations. Given the generality of $ f $, we argue that this is inevitable. In fact, even in the simplest case when the production cost function $ f $ is linear (e.g.,  $ f(y) = a y $ with $ a\geq 0 $)\footnote{A linear production cost function $ f $ implies constant marginal costs, i.e., producing each unit of the asset costs the same.},  the optimal competitive ratio $\textsf{CR}_{f(y) = a y}^*(\rho,k) $ and the lower bound $ \textsf{CR}_{f(y) = a y}^{\textsf{lb}}(\rho,k) $ still have no close-form expressions, although in this case the asymptotic lower bound $ \underline{\textsf{CR}}_{f(y) = a y}(\rho) $ can be written as
\begin{equation}\label{CR_linear}
\underline{\textsf{CR}}_{f(y) = a y}(\rho) = 1 + \ln \left(\rho(a)\right),
\end{equation}
where $ \rho(a) = \frac{p_{\max} - a}{p_{\min}-a} $ is a shifted value of the fluctuation ratio $ \rho $ (note that here we abuse the notation and use the same symbol to denote  $ \rho $  and $ \rho(a) $). Compared with online selection without costs, our results show that considering non-decreasing marginal costs (or convex production costs) makes the competitive analysis of online selection much harder.

\begin{wrapfigure}{r}{0.35\textwidth}
	\begin{center}
		\includegraphics[height=4.7cm]{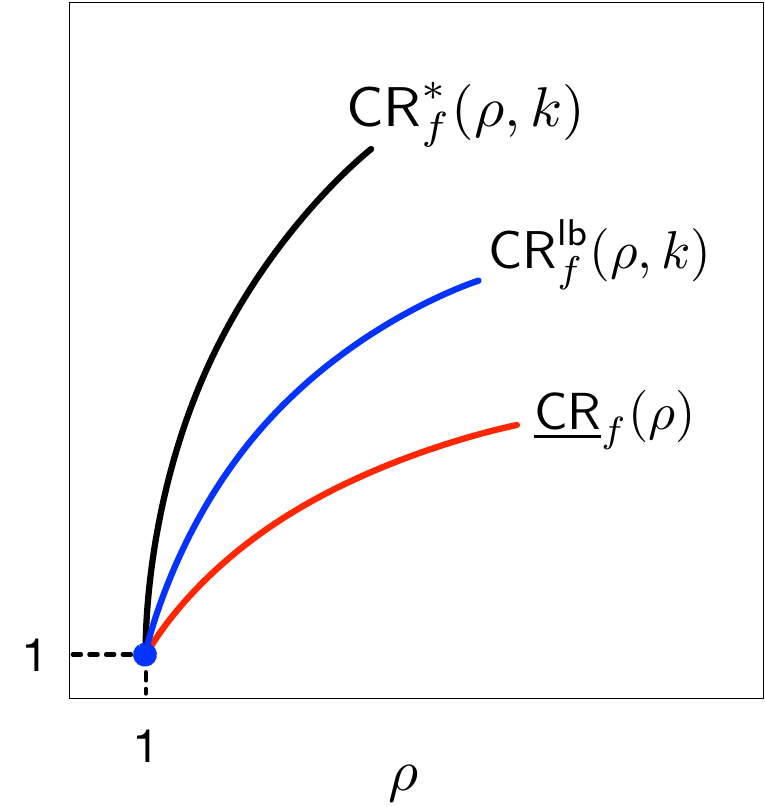}
	\end{center}
	\caption{Illustration of $\textsf{CR}_f^*(\rho,k) $, $ \textsf{CR}_f^{\textsf{lb}}(\rho,k) $, and $ \underline{\textsf{CR}}_f(\rho) $.}
	\label{Fig_CR_bound}
\end{wrapfigure}

In addition, we also obtain some interesting yet counter-intuitive results regarding the impact of non-decreasing marginal costs on online selection: \textit{a faster-rising marginal cost provides a better and more robust performance guarantee for online selection in the worst-case}. To be more specific, as one would expect, the existence of non-decreasing marginal costs has a negative impact on the seller's profit in the offline setting. In the online setting, however, our results show that a non-decreasing marginal cost actually leads to a better and more robust performance in terms of having a smaller competitive ratio (compared to online selection without costs). Moreover, the faster the marginal costs grow, the better the competitive ratio will be. For example, previous studies (e.g., \cite{OKP_Zhou_2008, Tan_ORA_2020, Zhang2017}) have shown that no online algorithm can achieve a competitive ratio better than $ 1 + \ln \rho  $ for OSCC with $ f = 0 $. Our results show that a better competitive ratio can be obtained if $ f $ is convex. Moreover, strong convexity can further improve the competitive ratio with a better worst-case performance guarantee. These findings provide a new perspective on the impact of non-decreasing marginal costs on real-world online resource allocation problems, especially if robustness or worst-case guarantee is a major concern. Meanwhile, we also expect that these results could shed light on a very fruitful research agenda that examines more general settings of OSCC such as a broader class of cost models, multi-unit demand, and multiple types of resources, etc.

\subsection{Overview of Our Techniques} 
To analyze the best-possible competitive ratio for OSCC, we adopt the following techniques to tackle the challenges en route. 
\begin{itemize}[leftmargin=*]
	\item \textit{System of sufficient equations (SoSEs)}. We  develop a group of $ \alpha $-parameterized inequalities based on the online primal-dual framework and show that there exists an $ \alpha $-competitive  threshold policy if these inequalities are satisfied. The problem then reduces to  appropriately designing the threshold  so as to minimize the competitive ratio $ \alpha $ while preserving the feasibility of these $ \alpha $-parameterized inequalities. This requires solving an optimization problem with non-convex and mixed-integer constraints.  To address this challenge, we  derive a system of equations, dubbed SoSEs, and prove that i) there exists a unique set of positive real numbers that satisfy the SoSEs, and ii) the threshold policy designed based on these positive real numbers is $ \textsf{CR}_f^*(\rho, k) $-competitive, which is the optimal competitive ratio of all deterministic threshold policies.
	
	\item \textit{System of necessary equations (SoNEs)}. A key challenge en route is the proof of why $ \textsf{CR}_f^*(\rho,k) $ is optimal not just for the subclass of threshold policies, but for all deterministic algorithms. Our idea is to construct a series of ``hard" arrival instances and show that if there exists an $ \alpha $-competitive deterministic  algorithm $ \textsf{ALG} $, then $ \textsf{ALG} $ must achieve at least $ \frac{1}{\alpha} $ fraction of optimum, denoted by $ \textsf{OPT} $, under these hard arrival instances. This helps us construct another system of equations, dubbed SoNEs, by repeatedly applying $\textsf{ALG} = \frac{1}{\alpha}\textsf{OPT}$. Here, the term ``\textit{necessity}" means that whenever there exists an $ \alpha $-competitive deterministic algorithm, then it is necessary to have a unique set of positive real numbers to satisfy the SoNEs. Based on this principle, we prove that for any $ \alpha =  \textsf{CR}_f^*(\rho,k) - \epsilon $ with $ \epsilon > 0 $, the SoNEs has no solution, leading to the conclusion that $ \textsf{CR}_f^*(\rho,k) $ is the smallest-possible competitive ratio of all deterministic algorithms. Our proof towards this conclusion is given in Appendix \ref{sec_proof_of_theorem_optimality}.
	
	\item \textit{Convergence to ordinary differential equations (ODEs)}. As to the asymptotic convergence, we leverage the fact that the importance or weight of each unit of the asset becomes negligible when $ k $ is large. Based on this property, we prove that the SoSEs and SoNEs asymptotically converge to the same ODE when $ k \rightarrow +\infty $. The asymptotic lower bound $ \underline{\textsf{CR}}_f(\rho) $ can then be derived by analyzing the existence and uniqueness of solutions to the ODE.
\end{itemize}

We emphasize that \textit{the SoSEs and SoNEs are developed separately without any prior knowledge of each other, but coincide in the end}. This allows for the conclusion that our proposed threshold policy is optimal among all deterministic algorithms. Meanwhile, the construction of the SoNEs also plays a key role in deriving the lower bound on competitive ratios of randomized algorithms. 

\subsection{Related Work}
For setups where producing additional units is free  (i.e., $ f = 0 $), the competitive ratio of OSCC has been studied extensively under many streams of literature, which we review in the following:  
	
\textit{$ k $-max search}: If the total number of buyers $ T $ is known and exactly $ k $ buyers are to be selected, then the presented online selection problem is the $ k $-max search problem of Lorenz et al. \cite{k_search_2009}. In $ k $-max search, the goal is to search for $ k $ highest prices in a sequence of fixed length $ T $. When the prices are chosen from the real interval $ [p_{\min}, p_{\max}] $, Lorenz et al. \cite{k_search_2009} developed optimal deterministic and randomized algorithms for $ k $-max search. In particular, when $ k $ is large, the competitive ratio of an optimal $ k $-max search algorithm behaves asymptotically like $ \ln (p_{\max}/p_{\min}) $. While for finite $ k \geq 1 $, the optimal competitive ratios derived in \cite{k_search_2009} are given as some implicit functions of $ k, p_{\min} $, and $ p_{\max} $, which in this paper is referred to as setup-dependent competitive ratios\footnote{The $ k $-max search problem is also closely related to the $ k $-\textit{secretary} problem \cite{k_secretary_2005} if buyers are presented in \textit{random order} \cite{random_order_Gupta_2021} and the goal is to select $ k $ of them such that the expected sum of the selected prices is maximized. Since in this work we primarily focus on competitive analysis of OSCC in the worst-case setting, we refer to \cite{random_order_Gupta_2021} for a recent survey of random order model and the $ k $-secretary problem (and its many variants).}.

\textit{One-way trading}: Closely related to $ k $-max search is the one-way trading problem, which was first pioneered by \cite{time_series_search_2001} and then followed by a growing body of literature with many variants, e.g., \cite{one_way_trading_2019, OKP_one_way_trading_Cao_2020, OKP_EV_BoSun_2020}. In one-way trading, an investor wants to trade a total amount of 1 dollar into yen with sequential arrivals of exchange rates. At time $ t = 1, 2, \cdots, T $, rate $ p_t $ is revealed and if $ x_t \in [0, 1] $ dollars are traded, then the investor gains $ p_t x_t $ amount of yen. The goal is to maximize the amount of yen acquired in the end, namely, $ \sum_{t=1}^T p_t x_t $,  subject to the budget limit $ \sum_{t=1}^T x_t \leq 1 $. EI-Yaniv et al. \cite{time_series_search_2001} proved that one-way trading is essentially equivalent to 1-max search in the sense that any deterministic/randomized one-way trading algorithm can be viewed as a randomized 1-max search algorithm. In particular, if the prices are chosen from the real interval $ [p_{\min}, p_{\max}] $, a simple threshold-based 1-max search algorithm can achieve the optimal competitive ratio of $ \sqrt{p_{\max}/p_{\min}} $ (among all deterministic algorithms). EI-Yaniv et al. \cite{time_series_search_2001} also developed a randomized 1-max search algorithm that is $ O(\log(p_{\max}/p_{\min})) $-competitive when $ p_{\max}/p_{\min} \rightarrow +\infty $.

\textit{Online auctions}: If incentive issue (i.e., buyers may take strategic behaviors) is taken into account, then the presented online selection problem has also been extensively studied in the context of online auctions. Specifically, an auctioneer is selling $ k $ copies of identical items to a sequence of buyers. At each time $ t = 1, 2, \cdots $, buyer $ t $ offers his/her bid $ p_t $ for one copy of the item, which may or may not be the buyer's true valuation $ v_t $. The goal is to develop an online mechanism such that a desired objective (e.g., welfare or profit maximization) can be achieved and at the same time, to guarantee that it is in buyers' best interest to bid with  $ p_t = v_t $. Mechanisms with such properties are considered to be truthful or incentive compatible, which is a golden standard to ensure economic efficiency \cite{AGT}. When each buyer's valuation is within $ [v_{\min}, v_{\max}] $, Bartal et al. \cite{CA_multi_unit_2003} proposed an $ O(\log (v_{\max}/v_{\min})) $-competitive online mechanism when there are $ \Omega(\log (v_{\max}/v_{\min})) $ copies of each item. Huang et al. \cite{Huang_2019} later showed that no online algorithms can be $ o(\log (v_{\max}/v_{\min}) ) $-competitive, indicating that the logarithmic lower bound cannot be overcome even by randomized algorithms. 
	
\textit{Online knapsack}. Online selection is also related to a special type of online knapsack problem where each item $ t $ has a \textit{unit weight} with value $ p_t $ (i.e., the weight of each item equals 1). The standard online knapack problem has received a surge in attention since the seminal work by Marchetti-Spaccamela and Vercellis \cite{stochastic_knapsack_1995}. In the standard setting  when both the weight and value are arbitrary, no online algorithm can be $ O(1) $-competitive, even for algorithms with randomization \cite{stochastic_knapsack_1995}. Thus, most of the existing literature focuses on restricted variants of the classic setting \cite{OKP_Zhou_2008, Zhang2017, OKP_EV_BoSun_2020} or algorithms beyond worst-case analysis, e.g.,  average-case analysis \cite{Average_knapsack_1995}, online optimization with predictions \cite{okp_predictions_2021}, etc. Among these works,  \cite{OKP_Zhou_2008} is particularly related. To be more specific, when the weight of each item is small compared to the capacity of the knapsack\footnote{This is equivalent to our presented online selection problem when $ k \rightarrow +\infty$.}, and the value-to-weight ratio of each item is bounded within the range of $ [L, U] $, the authors of \cite{OKP_Zhou_2008} developed a threshold policy which is $ \big(1+\ln(U/L) \big)$-competitive. It was further proved by Yao's minimax principle that no online algorithm can achieve a competitive ratio better than $ 1+\ln(U/L) $ \cite{OKP_Zhou_2008}. 

Our main contribution is to extend all of the preceding results to setups where producing additional units of the asset is not free (i.e., $ f \neq 0 $). As mentioned earlier, such a setup arises naturally from online resource allocation problems involving   production or operating costs  \cite{Tan_ORA_2020, Zhang2017, online_allocation_regularizer_2021, Blum_2011, Huang_2019}. In particular, Tan et al. \cite{Tan_ORA_2020} considered a general online resource allocation problem with convex production costs, and proposed an optimal threshold policy under the assumption that the demand of each buyer is \textit{infinitesimal}. A similar problem  was also studied in a more complex setting of online combinatorial auctions in which buyers have combinatorial valuations over bundles of resources \cite{Huang_2019, Blum_2011}. However, their models do not consider the capacity limit (i.e., the seller can produce infinitely-many units of the asset at increasing marginal costs). More importantly, the key difference between this work and \cite{Tan_ORA_2020, Huang_2019} is that the unit demand of each buyer in OSCC introduces 0-1 integral objectives and constraints, leading to the failure of the techniques proposed by \cite{Tan_ORA_2020, Huang_2019} for our problem. It is also worth noting that Huang et al. \cite{Huang_2019} later extended their main results regarding the infinitesimal case to the 0-1 integral and limited $ k$-supply cases, but their competitive ratios are not optimal. While in this paper, we derive the optimal competitive ratios for deterministic algorithms under various setups of OSCC, and show that these competitive ratios are asymptotically optimal for randomized algorithms.  

The techniques adopted in this paper are related to the online primal-dual framework, a tool that has proved powerful in approaching various online optimization problems, including online matching \cite{b_matching, willma_OR_2020}, Adwords problems \cite{Adword2009, Adwords2007}, online covering/packing problems \cite{online_routing_FOCS, Buchbinder2009}. It has also been generalized to handle online optimization with nonlinear and combinatorial objectives \cite{concave_return, covering_packing, Buchbinder2015, Blum_2011, Huang_2019}, or more recently, with machine learned advice \cite{OPD_learning_2020}. For more details of the online primal-dual framework, please refer to \cite{OPD2009}.

\section{The Model}
\label{section_preliminaries}
Throughout the paper,  we use $ [k] =  \{1,2,\cdots, k\} $ to denote the set of positive integers not exceeding $ k $, and use $ [0,k] $ to denote the continuous interval of real numbers between 0 and $ k $. 

\subsection{OSCC: Problem Statement and Assumptions}
\label{section_OSCC_statement}
A seller is selling some asset to a sequence of $ T\geq 1 $ buyers who arrive one at a time. The seller can produce $ k \geq 1 $ units of the asset in total, but at non-decreasing marginal costs. Specifically, the production cost is characterized by the following function $ f $:
\begin{align}\label{equation_f}
	f(y)=
	\begin{cases}
		\text{convex and increasing} & \text{if } y \in [0, k],\\
		+\infty & \text{otherwise}.
	\end{cases}
\end{align}
Here, we denote by $ f(i)  $ the \textit{total  cost} of producing the first $ i $ units. In other words, if $ c_i $ denotes the \textit{marginal cost} of producing the $ i $-th unit, then $ f(i) = c_1 + c_2 + \cdots + c_i, \forall i\in [k]$. Conversely, we can also write the marginal cost $ c_i = f(i) - f(i-1) $, and thus a convex and increasing $ f $ guarantees that the marginal costs are non-decreasing, i.e., $ c_{i+1} \geq c_i $ holds for all $ i\in [k-1] $.  At time $ t = 1, 2, \cdots, T $, buyer $ t $ arrives and offers $ p_t \in [p_{\min}, p_{\max}] $ for one unit of the asset, where $ p_{\min} $ and $ p_{\max} $ are respectively the minimum and maximum price offered by the buyers.\footnote{Ma et al. \cite{willma_OR_2020} studied an online assortment and pricing problem in which customers are allowed to bid with a discrete set of prices (e.g., $ \{0.98, 1.98, 2.98\} $ \$/unit). Alternatively,  buyers may offer arbitrary prices that do not necessarily follow any lower and upper limits \cite{Blum_2011}, or the prices can be predicted with machine learned advice \cite{secretary_ML_advice_NuerIPS_2020, online_selection_constrained_adversary_ICML_2021}. Our problem setting is a middle ground between these models and is commonly used in various online selection problems (e.g., \cite{Huang_2019, CA_multi_unit_2003, k_search_2009, time_series_search_2001}). } We denote by $ x_t = 1 $ if buyer $ t $ is selected (i.e., $ p_t $ is accepted) and $ x_t = 0 $ otherwise.  The goal is to select a subset of buyers to maximize the seller's profit $ \sum_t p_t x_t - f(\sum_t x_t) $.

Note that we do not assume the knowledge of $ T $, so it is possible that we have fewer buyers to select than the total units the seller can produce (i.e., $ T < k $). We do not consider infinite marginal costs, so $ \{c_i\}_{\forall i} $ is assumed to be upper bounded by $ c_{\max} $, namely, $ c_1\leq c_2\leq \cdots \leq c_k\leq c_{\max} $. Meanwhile, we also assume zero start-up cost, namely, $ f(0) = 0 $, and thus $ c_1 = f(1) - f(0) = f(1) $. 

We can divide the elements described above into two groups:
\begin{itemize}[leftmargin=*]
\item The \textit{Setup} $ \mathcal{S} = \{f, p_{\min}, p_{\max},k\} $, which consists of parameters that formally define  OSCC\footnote{We add two remarks here regarding the setup of OSCC. First, we can eliminate $ k $ in $ \mathcal{S} $ since it is included in the production cost function $ f $ according to Eq. \eqref{equation_f}. Second, to guarantee that a setup $ \mathcal{S} $ is interesting, we implicitly assume $ p_{\max}\geq p_{\min} > c_1 $. Specifically, we assume that $ p_{\min} > c_1 $ since any price $ p_t \leq  c_1 $ should be immediately rejected. Meanwhile, it is not interesting to consider the setup with $ p_{\max} < c_1  $ since in this scenario we have $ p_t \leq p_{\max} < c_1, \forall t $, and thus the naive strategy, which is also optimal, is to simply reject all incoming buyers.}.   
\item The \textit{Arrival Instance} $  \mathcal{I} = \{p_1, p_2, \cdots, p_T\} $, which consists of the information revealed over time. 
\end{itemize}

Given a setup $ \mathcal{S} $, an online algorithm  must decide, based on information revealed over time in $ \mathcal{I} $, whether a price should be accepted or rejected. The goal is to develop online algorithms that achieve a competitive ratio that is as close to 1 as possible.

\subsection{Definitions and Notations}
Here we introduce some definitions and notations that are key to the upcoming analysis. Based on the sequence of marginal costs $ \{c_i\}_{\forall i} $, we define $ \Gamma(p): [p_{\min},p_{\max}] \rightarrow [k] $ by
\begin{align}\label{eq_def_Gamma}
	\Gamma(p) \triangleq \sum_{i=1}^{k} i\cdot \mathds{1}_{\big\{p\in [c_i,c_{i+1})\big\}}, \quad  p\in  [p_{\min},p_{\max}].
\end{align}
Here, $ \mathds{1}_{\{\text{`A'}\}} $ is an indicator function which equals 1 if `A' is true and 0 otherwise. Note that in Eq. \eqref{eq_def_Gamma} we define $ c_{k+1} = +\infty $ as the marginal cost of producing the ``virtual"  $ (k+1) $-th unit. 

One can interpret $ \Gamma(p) $ as the maximum number of units to be produced if all the buyers offer the same price $ p $. In this case, the seller has no incentive to produce more than $ \Gamma(p) $ units since the production cost of the $ (\Gamma(p)+1) $-th unit would be higher than the offered price $ p $. Based on the definition of $ \Gamma(p) $, we also define $ \ubar{k} $ and $ \bar{k} $ by:
\begin{equation}\label{eq_M}
	\ubar{k}  \triangleq \Gamma(p_{\min}), \quad \bar{k} \triangleq \Gamma(p_{\max}),
\end{equation}
which  means the maximum number of units to be produced if all the prices are $ p_{\min} $ and $ p_{\max} $, respectively. Note that $ \ubar{k} $ and $ \bar{k} $ are determined as long as the setup $ \mathcal{S} $ is given, and $ 1\leq \ubar{k} \leq  \bar{k} \leq  k $ always holds.  For example, suppose $ k = 20$ and consider $ c_3 \leq  p_{\min} <  c_4 $ and $ c_{10} \leq  p_{\max} <  c_{11} $, then $ \ubar{k} = 3 $ and $ \bar{k} = 10 $. In this case, $ \bar{k} $ is strictly less than $ k $,  meaning that it is not profitable to sell more than $ \bar{k} $ units although the seller can (physically) produce $ k > \bar{k} $ units in total. On the other hand, if the setup $ \mathcal{S} $ is such that $  p_{\max} \geq p_{\min} > c_{20} $, then $ \ubar{k} =  \bar{k} = k = 20 $, meaning that it is profitable for the seller to produce $ k $ units even if all the buyers offer the minimum price $ p_{\min} $. 

\begin{definition}[Min-Profit]\label{def_g}
	We define $ g(i) $ as a function of $ i \in \{0,1,\cdots,\ubar{k}\} $  as follows:
	\begin{align}\label{eq_g}
		g(i)  \triangleq p_{\min}i - f(i), \quad i\in \{0,1,\cdots,\ubar{k}\}.
	\end{align}
\end{definition}

By definition, one can interpret $ g(i) $ as the minimum profit of producing $ i $ units of the asset. Thus, we refer to $ g $ as the \textit{min-profit function} hereinafter.  Based on the definition of $ \ubar{k} $ in Eq. \eqref{eq_M}, $ g(i) $ is strictly-increasing over  $ i\in \{0,1,\cdots,\ubar{k}\} $, i.e., $ g(i) > g(i-1) $ holds for $ i\in [\ubar{k}]$.

\begin{definition}[Min-Production]\label{theorem_inverse_g} 
	We define $ g^{\textsf{inv}}(\varsigma) $ as a function of  $ \varsigma \in [0, g(\ubar{k})]   $ as follows:
	\begin{align}\label{eq_inverse_g}
		g^{\textsf{inv}}(\varsigma) \triangleq \sum_{i=0}^{\ubar{k}-1} (i+1) \cdot \mathds{1}_{\big\{\varsigma \in (g(i), g(i+1)]\big\}},
	\end{align}
	which maps a real value $ \varsigma \in [0, g(\ubar{k}))] $ to an integer in $ [\ubar{k}] $.
\end{definition}

We refer to  $ g^{\textsf{inv}}(\varsigma) $ as the \textit{min-production function} because it captures the minimum number of units the seller should produce to achieve a profit no less than $ \varsigma $. Mathematically,  the definition of $ g^{\textsf{inv}} $ is an extension of the standard inverse function of $ g(i) $ to the discrete domain by rounding the value of $ g^{\textsf{inv}}(\varsigma) $ up to the nearest integer so that $ g\left(g^{\textsf{inv}}(\varsigma)\right) \geq  \varsigma $. For example, given an integer  $ i\in [\ubar{k}] $, if $ \varsigma = g(i) $, then $ g^{\textsf{inv}}(\varsigma) = i $; if $ \varsigma = g(i) + \epsilon $ for some small positive value $ \epsilon > 0 $, then $ g^{\textsf{inv}}(\varsigma) = i+1 $ as long as $ g(i) + \epsilon \leq g(i+1) $. 

\begin{definition}[Conjugate]\label{def_conjugate}
    Given the production cost function $ f $,  we define its conjugate $ f^{*} $ by
	\begin{equation}\label{eq_f_star}
		f^{*}(p) \triangleq  \max_{i \in \{0,1,\cdots, k\}}\ p i - f(i), \quad   p\in [0,+\infty).
	\end{equation}
\end{definition}

The definition of $ f^{*} $ is a generalized version of the standard Fenchel conjugate \cite{convex_nonlinear_book_2006}. Here, $ i $ takes integer values only while the Fenchel conjugate is defined on continuous functions. Lemma \ref{theorem_conjugate} below shows that $ f^* $ can be written as an analytical function of $ p $ and $ \Gamma(p) $ for any $ p\in [p_{\min},p_{\max}] $.

\begin{lemma}\label{theorem_conjugate}
	The conjugate function $ f^*(p) $ is continuous, piecewise linear, and strictly increasing over $ p\in \mathbb{R}^+ $. Meanwhile, for $ p\in [p_{\min},p_{\max}] $, $ f^{*}(p) $ can be written as 
	\begin{align}\label{f_star}
		f^{*}(p)  = p \Gamma(p) - f\left(\Gamma(p)\right).
	\end{align}
\end{lemma}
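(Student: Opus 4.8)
The plan is to exploit the fact that $f^{*}$ is, by construction, the pointwise maximum of the $k+1$ affine functions $p\mapsto pi-f(i)$, $i\in\{0,1,\dots,k\}$, and then to pin down which index attains the maximum via a discrete-concavity argument on $i\mapsto pi-f(i)$.

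\textbf{Structural claims.} Since $f(0),\dots,f(k)$ are fixed reals, each map $p\mapsto pi-f(i)$ is affine in $p$ with slope $i\ge 0$. Hence $f^{*}=\max_{i}\bigl(pi-f(i)\bigr)$ is a maximum of finitely many affine functions, which is automatically convex, continuous and piecewise linear on $[0,+\infty)$, with every one-sided slope lying in $\{0,1,\dots,k\}$; it is non-decreasing because it is a maximum of non-decreasing functions. Strict monotonicity on $[p_{\min},p_{\max}]$ will then drop out of \eqref{f_star}, since that formula shows the active slope there equals $\Gamma(p)\ge \Gamma(p_{\min})=\ubar{k}\ge 1$ (using that $\Gamma$ is non-decreasing because $\{c_i\}$ is). I note in passing that $f^{*}\equiv 0$ on $[0,c_1]$, so ``strictly increasing over $\mathbb{R}^{+}$'' should be read as strict monotonicity on $[c_1,+\infty)\supseteq[p_{\min},p_{\max}]$, which is all that is used later.

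\textbf{The identity \eqref{f_star}.} Fix $p\in[p_{\min},p_{\max}]$; since $p\ge p_{\min}>c_1$, the index $\Gamma(p)\in[k]$ is well defined and $p\in[c_{\Gamma(p)},c_{\Gamma(p)+1})$. Consider the increments $\Delta(i)\triangleq\bigl(pi-f(i)\bigr)-\bigl(p(i-1)-f(i-1)\bigr)=p-c_i$ for $i\in[k]$. Because $c_1\le\cdots\le c_k$ (with $c_{k+1}=+\infty$), the sequence $i\mapsto\Delta(i)$ is non-increasing, so $i\mapsto pi-f(i)$ is discretely concave; moreover $\Delta(i)=p-c_i\ge p-c_{\Gamma(p)}\ge 0$ for every $i\le\Gamma(p)$, and $\Delta(i)=p-c_i\le p-c_{\Gamma(p)+1}<0$ for every $i\ge\Gamma(p)+1$. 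Hence $pi-f(i)$ is non-decreasing on $\{0,\dots,\Gamma(p)\}$ and strictly decreasing on $\{\Gamma(p),\dots,k\}$, so its maximum over $\{0,\dots,k\}$ is attained at $i=\Gamma(p)$, i.e. $f^{*}(p)=p\,\Gamma(p)-f(\Gamma(p))$. For the piecewise-linear picture, on each maximal subinterval of $[p_{\min},p_{\max}]$ on which $\Gamma\equiv i$ the identity reads $f^{*}(p)=pi-f(i)$ (an affine piece of slope $i$), and at a breakpoint $p=c_{i+1}$ the adjacent pieces agree, $c_{i+1}i-f(i)=c_{i+1}(i+1)-f(i+1)$, because $f(i+1)-f(i)=c_{i+1}$; so $f^{*}$ is continuous, its slope jumps upward at breakpoints (reconfirming convexity), and every active slope is at least $\ubar{k}\ge 1$, giving strict monotonicity on $[p_{\min},p_{\max}]$.

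\textbf{Main obstacle.} The only real care is bookkeeping around the half-open convention in \eqref{eq_def_Gamma} and the virtual cost $c_{k+1}=+\infty$: the discrete-concavity argument must stay valid when $f$ is only weakly convex (so several consecutive $c_i$ coincide and $\Gamma$ skips integer values, which one handles by telescoping $f(i+m)-f(i)=m\,c_{i+1}$ at such breakpoints) and in the boundary case $\Gamma(p)=k$, where the ``strictly decreasing'' part is vacuous but the maximum is still attained at $k$. Everything else is routine.
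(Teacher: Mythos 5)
Your proof is correct and is exactly the routine argument the paper leaves implicit (Lemma~\ref{theorem_conjugate} is stated without proof): $f^*$ is a finite maximum of affine functions $p\mapsto pi-f(i)$, and the non-increasing increments $p-c_i$ identify $\Gamma(p)$ as the argmax, with continuity at breakpoints following from $f(i+1)-f(i)=c_{i+1}$. Your side remark is also accurate: $f^*\equiv 0$ on $[0,c_1]$, so the stated ``strictly increasing over $\mathbb{R}^+$'' really only holds on $[c_1,+\infty)\supseteq[p_{\min},p_{\max}]$, which is all the paper ever uses.
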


The conjugate $ f^{*}(p) $ captures the maximum profit achievable under a sequence of buyers whose offered prices are less than or equal to $ p $. For example, by Eq. \eqref{eq_M} and Eq. \eqref{f_star}, $ f^*(p_{\min}) = p_{\min}\Gamma(p_{\min}) - f(\Gamma(p_{\min})) = p_{\min}\ubar{k} - f(\ubar{k})$, meaning that if the arrival instance is such that all the buyers offer $ p_{\min} $, then the maximum profit achievable is $ p_{\min}\ubar{k} - f(\ubar{k}) $ (selling $ \ubar{k} $ units at price $ p_{\min} $).

\section{Threshold Policies and Competitive Analysis}
\label{section_threshold_policies}
In this section, we first introduce a threshold-based online selection algorithm and explain how it works. After that, we  discuss the principles of how to design a competitive threshold.

\subsection{Threshold-based Online Selection (TOS)}

We propose a threshold policy in Algorithm \ref{TP} below, dubbed $ \textsf{TOS}_{\boldsymbol{\lambda}} $. Given a predesigned threshold $ \boldsymbol{\lambda} = (\lambda_0, \lambda_1,\cdots, \lambda_{\bar{k}}) $, $ \textsf{TOS}_{\boldsymbol{\lambda}} $ makes decisions of rejecting or accepting buyers based on whether their offered prices exceed the corresponding threshold or not. Note that under $ \textsf{TOS}_{\boldsymbol{\lambda}} $, at most $ \bar{k} $ units of the asset will be produced. This follows our definition of $ \bar{k} $ in Eq. \eqref{eq_M} as it is either not profitable to produce more than $ \bar{k} $ units (when $ \bar{k} < k $), or we already reach the capacity limit (when $ \bar{k} = k $).

\begin{algorithm}
	\caption{Threshold-based Online Selection ($ \textsf{TOS}_{\boldsymbol{\lambda}} $)}\label{online_mechanism_U}	
	\begin{algorithmic}[1]
		\STATE \textbf{Inputs:} $ \boldsymbol{\lambda} = (\lambda_0, \lambda_1,\cdots, \lambda_{\bar{k}})$. 
		\STATE \textbf{Initialization}: $ i = 0 $.
		
		\WHILE{a new buyer $ t $ arrives}
		
		\IF {$ p_t - \lambda_i < 0 $ or $ i > \bar{k} $}\label{if_utility_negative_U} 
		\STATE $ p_t $ is rejected (i.e., set $ x_t =0 $) \label{rejection_1_U}
		
		\ELSE 
		\STATE $ p_t $ is accepted (i.e., set $ x_t = 1 $)\label{accepted_U}
		
		\STATE $ i = i + 1$.
		
		\ENDIF \label{end_if_U}
		\ENDWHILE
	\end{algorithmic}
	\label{TP}
\end{algorithm}

As a threshold policy, $ \textsf{TOS}_{\boldsymbol{\lambda}} $ is simple  and implementation-friendly\footnote{More importantly, threshold policies have the merit of being incentive compatible, and thus avoid incentive issues such as manipulation of the market when buyers are strategic \cite{AGT}. For example, $ \textsf{TOS}_{\boldsymbol{\lambda}} $ can be implemented as a \textit{posted price mechanism} \cite{Tan_ORA_2020, Blum_2011, Huang_2019} by publishing the threshold $ \lambda_i $ to each incoming buyer $ t $, and let buyer $ t $ solve line 4-line 9 of $ \textsf{TOS}_{\boldsymbol{\lambda}} $.}. 
Our objective is to design a threshold $ \boldsymbol{\lambda} $ so that $ \textsf{TOS}_{\boldsymbol{\lambda}} $ achieves a competitive performance.  Intuitively, the threshold should never be lower than $ p_{\min} $ as it suffices to make any buyer selected with  $ p_{\min} $.  Similarly, there is no need to design a threshold higher than $ p_{\max} $ since  $ p_{\max}  + \epsilon $ is enough to reject all buyers for any $ \epsilon \rightarrow 0^+ $. Thus, the threshold $ \boldsymbol{\lambda} $ should be a sequence of $ \bar{k} + 1 $ positive real numbers within the range $ [p_{\min},p_{\max}] $, and the design objective is to guarantee that  $ \textsf{TOS}_{\boldsymbol{\lambda}} $ achieves a bounded competitive ratio that is as close to 1 as possible.

\subsection{Admission Thresholds: Definition and A Three-Case Overview}
Our first proposition below shows that, $ \textsf{TOS}_{\boldsymbol{\lambda}} $ achieves a bounded competitive ratio only if a certain number of buyers are selected at the beginning regardless of their offered prices.

\begin{proposition}\label{theorem_k_alpha}
	For any $ \alpha \geq 1 $, if $\normalfont \textsf{TOS}_{\boldsymbol{\lambda}} $ is $ \alpha $-competitive, then $ \lambda_0 =  \lambda_1 = \cdots = \lambda_{\tau} =  p_{\min} $ holds for some $ \tau \in \mathcal{T}^{(\alpha)} $, where $ \mathcal{T}^{(\alpha)} $ is defined as
	\begin{align}\label{eq_K_alpha}
		\mathcal{T}^{(\alpha)} \triangleq \left\{ g^{\textsf{inv}}\Big(\frac{1}{\alpha} f^{*}(p_{\min})\Big) - 1,\cdots, \ubar{k}-1 \right\}.
	\end{align}
	In Eq. \eqref{eq_K_alpha}, $ g^{\textsf{inv}} $ is the min-production function defined in Eq. \eqref{eq_inverse_g}.
\end{proposition}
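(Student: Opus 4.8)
The plan is to prove the contrapositive by constructing an explicit adversarial arrival instance: namely, a long sequence consisting entirely of buyers offering the minimum price $p_{\min}$. Against such an instance, the offline optimum is exactly $\textsf{OPT}(\mathcal{I}) = f^{*}(p_{\min}) = p_{\min}\ubar{k} - f(\ubar{k})$ by Lemma~\ref{theorem_conjugate} and the discussion following it. On the other hand, the behavior of $\textsf{TOS}_{\boldsymbol{\lambda}}$ on this instance is completely determined by how many of the leading thresholds $\lambda_0, \lambda_1, \ldots$ equal $p_{\min}$: if $\lambda_0 = \cdots = \lambda_{\tau} = p_{\min}$ but $\lambda_{\tau+1} > p_{\min}$, then exactly $\tau+1$ units are accepted (each at price $p_{\min}$), yielding $\textsf{TOS}_{\boldsymbol{\lambda}}(\mathcal{I}) = p_{\min}(\tau+1) - f(\tau+1) = g(\tau+1)$. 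Here I should handle the edge case where $\lambda_i > p_{\min}$ for all $i$, giving profit $0$, and note that $\tau$ ranges over $\{-1, 0, 1, \ldots, \bar{k}\}$ with the convention that $\tau = -1$ means no leading threshold equals $p_{\min}$.

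\medskip

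\noindent\textbf{Key steps, in order.}
First I would formally define $\tau = \max\{i : \lambda_0 = \lambda_1 = \cdots = \lambda_i = p_{\min}\}$ (with $\tau = -1$ if $\lambda_0 > p_{\min}$), and verify the claim that feeding $\textsf{TOS}_{\boldsymbol{\lambda}}$ a sufficiently long run of price-$p_{\min}$ buyers produces profit exactly $g(\tau+1)$ when $0 \le \tau+1 \le \ubar{k}$ — using that $g$ is strictly increasing on $\{0,\ldots,\ubar{k}\}$ so the seller keeps accepting while $\lambda_i = p_{\min}$ and stops once $\lambda_i > p_{\min}$; and noting that accepting beyond $\ubar{k}$ units would only be possible if $\lambda_i = p_{\min}$ there too, but then the profit is at most $g(\ubar{k}) = f^*(p_{\min})$ anyway, which is consistent. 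Second, I would invoke $\alpha$-competitiveness: $\textsf{TOS}_{\boldsymbol{\lambda}}(\mathcal{I}) \geq \frac{1}{\alpha}\textsf{OPT}(\mathcal{I}) = \frac{1}{\alpha}f^{*}(p_{\min})$, hence $g(\tau+1) \geq \frac{1}{\alpha}f^{*}(p_{\min})$. Third, I would translate this inequality through the min-production function: since $g^{\textsf{inv}}$ is the monotone "rounding-up inverse" of $g$ (Definition~\ref{theorem_inverse_g}), the inequality $g(\tau+1) \geq \frac{1}{\alpha}f^{*}(p_{\min})$ is equivalent to $\tau+1 \geq g^{\textsf{inv}}\big(\frac{1}{\alpha}f^{*}(p_{\min})\big)$, i.e. $\tau \geq g^{\textsf{inv}}\big(\frac{1}{\alpha}f^{*}(p_{\min})\big) - 1$. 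Fourth, the upper bound $\tau \leq \ubar{k} - 1$ comes for free: if $\tau \geq \ubar{k}$ then $\lambda_{\ubar{k}} = p_{\min}$, but since $\boldsymbol{\lambda}$ has only $\bar{k}+1$ components this is only an issue when $\ubar{k} = \bar{k}$ and amounts to showing it is never necessary to set more than $\ubar{k}$ thresholds to $p_{\min}$ — or, more cleanly, one restricts attention to the claimed form and observes the statement only asserts membership in $\mathcal{T}^{(\alpha)}$, whose top element is $\ubar{k}-1$; I would argue that WLOG an optimal competitive threshold never sets $\lambda_{\ubar{k}}, \ldots$ to $p_{\min}$ since accepting those units at price $p_{\min}$ can only hurt worst-case performance on other instances (or simply that $g(\ubar{k}+1)$ is undefined / not larger in the relevant sense). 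Combining the third and fourth steps gives $\tau \in \mathcal{T}^{(\alpha)}$, which is the claim.

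\medskip

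\noindent\textbf{Main obstacle.}
I expect the delicate part to be the boundary bookkeeping at $\tau = \ubar{k}-1$ versus $\tau \ge \ubar{k}$: the threshold vector $\boldsymbol{\lambda}$ a priori has $\bar{k}+1 \ge \ubar{k}+1$ entries, so nothing syntactic forbids $\lambda_{\ubar{k}} = \cdots = p_{\min}$, yet the proposition's set $\mathcal{T}^{(\alpha)}$ caps $\tau$ at $\ubar{k}-1$. Resolving this requires the argument that no competitive (or at least no sensibly-designed) threshold policy benefits from admitting more than $\ubar{k}$ units unconditionally — because the $(\ubar{k}+1)$-th unit's marginal cost $c_{\ubar{k}+1}$ already exceeds $p_{\min}$ by definition of $\ubar{k} = \Gamma(p_{\min})$, so forcing its acceptance at price $p_{\min}$ strictly decreases profit on the all-$p_{\min}$ instance and hence cannot improve the competitive ratio. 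A secondary subtlety is making precise that "sufficiently long" run of $p_{\min}$-buyers: since $T$ is unknown, I just take $T$ larger than $\bar{k}$, after which $\textsf{TOS}_{\boldsymbol{\lambda}}$ has already frozen its acceptance count, so the profit is as computed regardless of further arrivals. Everything else — monotonicity of $g$, the definition-chasing for $g^{\textsf{inv}}$, the competitiveness inequality — is routine.
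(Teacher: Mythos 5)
Your proposal is correct and follows essentially the same route as the paper's proof: feed $\textsf{TOS}_{\boldsymbol{\lambda}}$ an all-$p_{\min}$ instance, compare the online profit $g(\tau+1)$ against the offline optimum $f^{*}(p_{\min})$, and invert through $g^{\textsf{inv}}$ to bound $\tau$ from below. Your extra care with the edge cases ($\tau=-1$, and the observation that the upper cap $\tau\leq \ubar{k}-1$ is harmless because the proposition only asserts membership of \emph{some} $\tau$ in $\mathcal{T}^{(\alpha)}$) is sound and in fact tidier than the paper's terse treatment of those boundaries.
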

\begin{proof}
	The proof is based on the fact that if $\textsf{TOS}_{\boldsymbol{\lambda}} $ is $ \alpha $-competitive, then $\textsf{TOS}_{\boldsymbol{\lambda}} $ must achieve at least $ \frac{1}{\alpha} $ portion of optimum under all possible arrival instances. Let us assume the arrival instance is given by $ \mathcal{I} = \{p_{\min},p_{\min},\cdots,p_{\min}\} $, where there exist $ k $ buyers with price $ p_{\min} $. In the offline setting, the optimal decision is to produce $ \ubar{k} $ units and the optimal profit is given by
	$\textsf{OPT}(\mathcal{I}) = p_{\min}\ubar{k} - f(\ubar{k}) = f^*(p_{\min})$. Here, we use the definition of $ \ubar{k} $ in Eq. \eqref{eq_M} and the property of $ f^* $ in Lemma \ref{theorem_conjugate}. In the online setting, $ \textsf{TOS}_{\boldsymbol{\lambda}} $ is $ \alpha $-competitive indicates that there must exist an integer $ \tau \in \{0,1,\cdots,\ubar{k}-1\} $ such that 
	\begin{equation*}
		\textsf{TOS}_{\boldsymbol{\lambda}}(\mathcal{I}) 
		= p_{\min}(\tau+1) - f(\tau+1) = g(\tau+1) \geq \frac{1}{\alpha}\textsf{OPT}(\mathcal{I}) = \frac{1}{\alpha} f^*(p_{\min}),
	\end{equation*}
	which thus leads to $ \tau \geq  g^{\textsf{inv}}(\frac{1}{\alpha} f^{*}(p_{\min})) - 1$.  On the other hand,  $ \tau+1 \leq \ubar{k} $ always holds as no more than $ \ubar{k} $ buyers with price $ p_{\min} $ should be selected. Combining the lower and upper bounds of $ \tau $, we complete the proof.
\end{proof}

Proposition \ref{theorem_k_alpha} drastically simplifies the design and analysis of threshold policies because it restricts to thresholds which are horizontal at the beginning (i.e., \textit{horizontal segment}). Intuitively, after selecting some buyers at the initial stages, the seller should gradually increase the threshold when more and more buyers are selected (i.e., \textit{increasing segment}) because the marginal cost is non-decreasing. To formally define such type of thresholds, we define \textit{admission thresholds} below. 

\begin{definition}[Admission Threshold]\label{def_admission_threshold}
	An admission threshold $ \boldsymbol{\lambda} = (\lambda_0, \lambda_1,\cdots, \lambda_{\tau}, \cdots, \lambda_{\bar{k}}) $ is a sequence of $ \bar{k}+1 $ monotonically non-decreasing positive numbers such that  
	\begin{align}\label{eq_admission_threshold}
		\lambda_0 = \lambda_1 =  \cdots = \lambda_\tau  = p_{\min}  <  \lambda_{\tau+1}\leq \cdots\leq \lambda_{\bar{k}-1} \leq \lambda_{\bar{k}} \leq p_{\max},
	\end{align}
	where $ \tau $ is an integer within $ \{0,1,\cdots, \ubar{k}-1\} $.
\end{definition}

\begin{figure}
	\centering
	\subfigure[\textsc{High-Value}: $ c_k <  p_{\min} \leq  p_{\max} $]{\includegraphics[width=4.7cm]{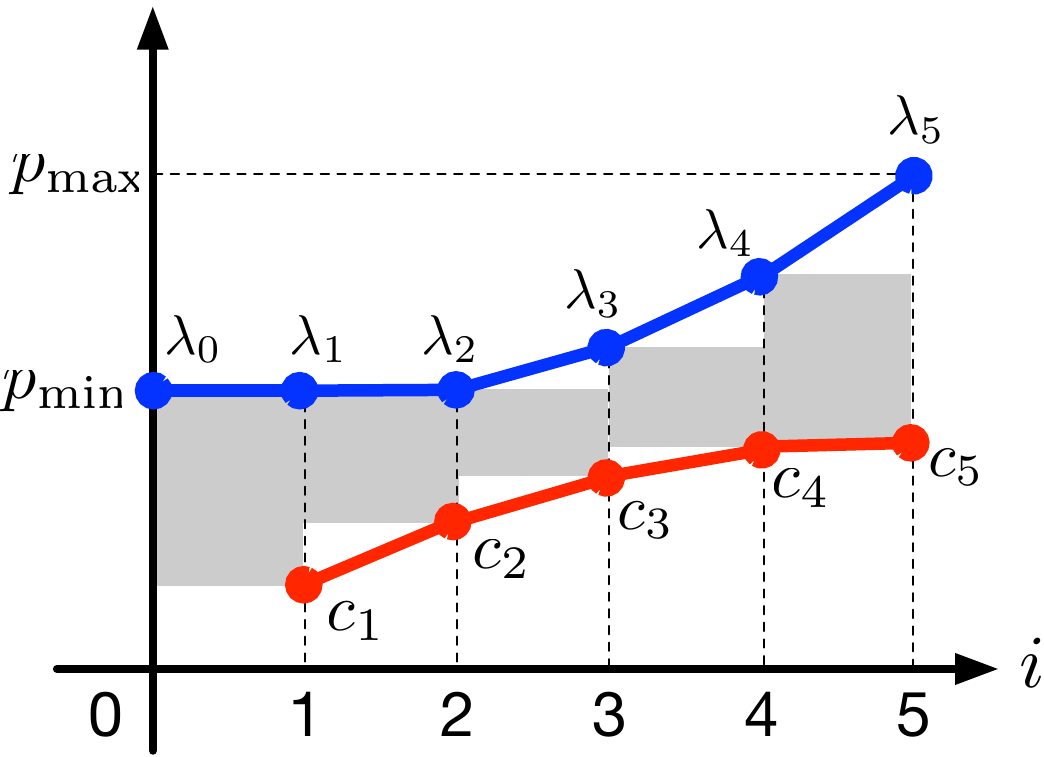}}
	\qquad 
	\subfigure[\textsc{Low-Value}: $ p_{\min} \leq p_{\max} \leq c_k $]{\includegraphics[width=4.7cm]{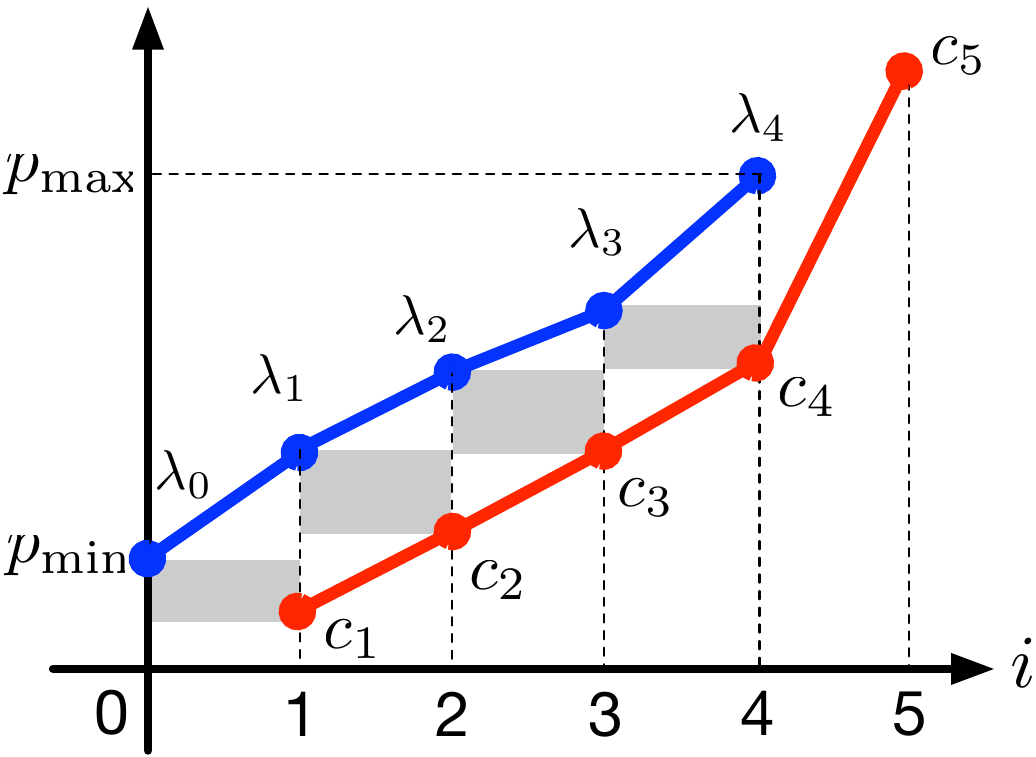}}
	\qquad 
	\subfigure[\textsc{Mix-Value}: $ p_{\min} < c_k < p_{\max} $]{\includegraphics[width=4.7cm]{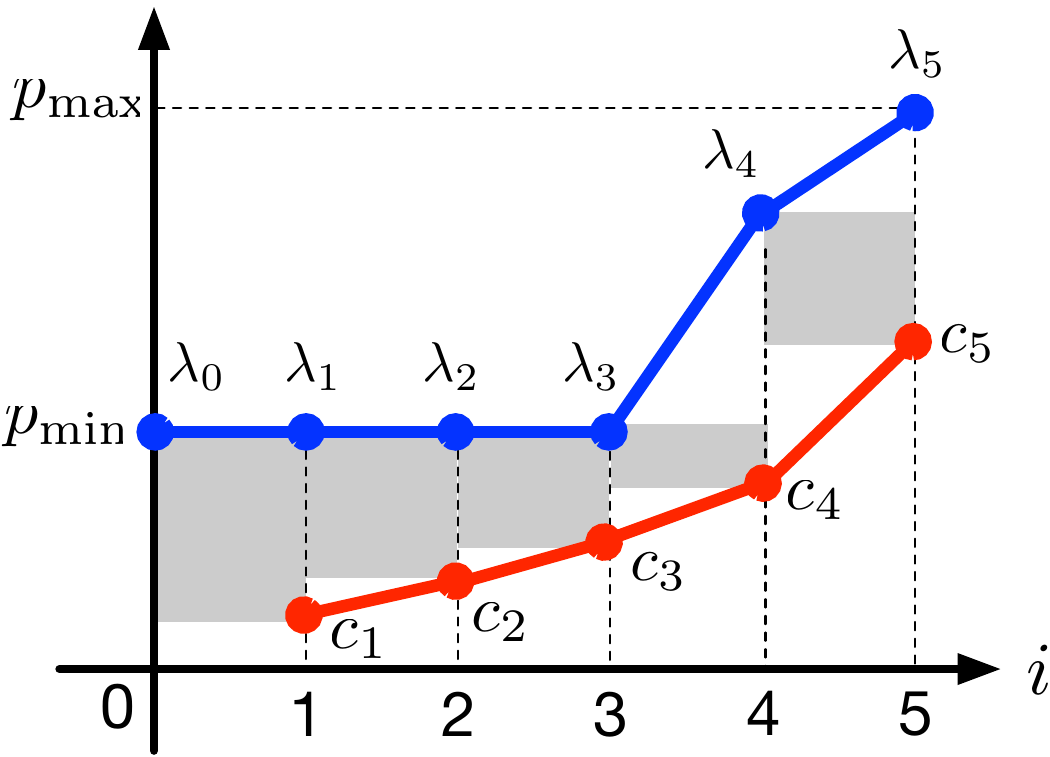}}
	
	\caption{Illustration of admission thresholds with $ k = 5 $ in three cases. In subfigure (a), the turning point $ \tau = 2 $ (i.e., $ \lambda_0 = \lambda_1 = \lambda_2 = p_{\min} $) and $ \ubar{k} = \bar{k} = k = 5$; In subfigure (b), the turning point $ \tau = 0 $ (i.e., $ \lambda_0 = p_{\min} $), $ \ubar{k} = 1$, $ \bar{k} =4$, and $ k = 5$; In subfigure (c), the turning point $ \tau = 3 $ (i.e., $ \lambda_0 = \lambda_1 = \lambda_2 = \lambda_3 =  p_{\min} $), $\ubar{k} = 4 $, and $ \bar{k} = k = 5$. Recall that the sequence of marginal costs are non-decreasing, i.e.,  $ c_{i+1} \geq  c_i $ holds for all $ i\in [k-1] $.
	}
	\label{fig_three_cases}
\end{figure}

For a given admission threshold, the integer $ \tau $ determines the turning point after which the threshold starts to increase. Thus, we refer to $ \tau $ as the \textbf{turning point} hereinafter.  Intuitively, an admission threshold $ \boldsymbol{\lambda} $
can be uniquely determined once the turning point $ \tau $ and the sequence of $ \bar{k} - \tau $ positive real numbers $ \{\lambda_{\tau+1}, \lambda_{\tau+2}, \cdots, \lambda_{\bar{k}}\} $ are given (i.e., the increasing segment). 

(\textbf{Admission Thresholds in Three Cases}) As illustrated in Fig. \ref{fig_three_cases}, depending on the setup $ \mathcal{S} $, the admission threshold can be designed in the following three cases:
\begin{enumerate}[leftmargin = *]
	\item The \textsc{High-Value} case (Fig. \ref{fig_three_cases}(a)): $ c_k < p_{\min}\leq p_{\max} $. In this case, $ \ubar{k} = \bar{k} = k $. The \textsc{High-Value} case may occur if the costs of resource  supply is much lower than the value/revenue it generates\footnote{This could be a monopoly where $ p_{\min} $ and $ p_{\max} $ are set to be much higher than the production costs.}. 
	
	\item The \textsc{Low-Value} case (Fig. \ref{fig_three_cases}(b)): $ p_{\min}\leq p_{\max} \leq c_k $. In this case, $ \ubar{k} \leq  \bar{k} \leq k $. The \textsc{Low-Value} case can model application scenarios where the costs of resource supply is comparable to or even higher than the value/revenue it generates\footnote{This could be a service platform that is poorly managed, e.g., a highly congested communication network \cite{network_RA, network_resource_mechanism_jsac_2006} when  delay/latency is modeled as a convex cost function of link utilization (i.e., the production costs in our OSCC formulation).}.
	
	\item The \textsc{Mix-Value} case (Fig. \ref{fig_three_cases}(c)): $ p_{\min} < c_k  < p_{\max} $. In this case, $ \ubar{k} <  \bar{k} = k $. As suggested by the name, the \textsc{Mix-Value} case sits in between the \textsc{High-Value} and \textsc{Low-Value} cases.
\end{enumerate}

The above three cases provide a complete characterization of admission thresholds under different setups of $ \mathcal{S} = \{f, p_{\min}, p_{\max}, k\} $. It is worth emphasizing that the purpose of discussing these three cases is not just conceptual --- Section \ref{section_lower_bound} and Section \ref{section_asymptotic_properties} will show that the lower bound analysis and asymptotic properties   of the admission thresholds in these three cases are different.

\subsection{Design Principles of Competitive Admission Thresholds}
\label{section_design_principles}

Proposition \ref{theorem_CR_threshold} below shows that $ \textsf{TOS}_{\boldsymbol{\lambda}} $ attains a constant competitive ratio as long as $ \boldsymbol{\lambda} $ is an admission threshold. 

\begin{proposition}\label{theorem_CR_threshold}
	Given a setup $ \mathcal{S} $, for any admission threshold $ \boldsymbol{\lambda} = \{\lambda_0,\lambda_1,\cdots, \lambda_\tau, \cdots, \lambda_{\bar{k}}\} $ with $ \tau \in \{0,1,\cdots,\ubar{k}-1\} $, $\normalfont \textsf{TOS}_{\boldsymbol{\lambda}}$ achieves a bounded competitive ratio given by  $ \alpha $:
	\begin{align}\label{eq_cr_admission_threshold}
		\alpha = \max_{j\in [\bar{k}-\tau-1]} \left\{\frac{f^{*}(\lambda_{\tau+j})}{\sum_{i=0}^{\tau+j-1}(\lambda_i - c_{i+1})}, \ \frac{f^{*}(p_{\max})}{\sum_{i=0}^{\bar{k}-1}(\lambda_i - c_{i+1})}\right\}.
	\end{align}
\end{proposition}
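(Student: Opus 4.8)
The plan is to analyze $\textsf{TOS}_{\boldsymbol{\lambda}}$ directly by distinguishing worst-case arrival instances according to how many units $\textsf{TOS}_{\boldsymbol{\lambda}}$ ends up producing. Fix an admission threshold $\boldsymbol{\lambda}$ with turning point $\tau$. Because $\lambda_0 = \cdots = \lambda_\tau = p_{\min}$, the algorithm always accepts the first $\tau+1$ buyers (every price is $\geq p_{\min}$), so it produces at least $\tau+1$ units on any instance with enough arrivals; it never produces more than $\bar k$. Let $m \in \{\tau+1,\dots,\bar k\}$ be the number of units $\textsf{TOS}_{\boldsymbol\lambda}$ produces on a given instance $\mathcal{I}$. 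The key structural fact is that whenever $\textsf{TOS}_{\boldsymbol\lambda}$ accepts the $(i+1)$-th unit, the accepted price is $\geq \lambda_i$, so the profit it collects is at least $\sum_{i=0}^{m-1}(\lambda_i - c_{i+1})$ — each accepted price contributes $\lambda_i$ to revenue and the $i{+}1$-st unit costs $c_{i+1}$. So $\textsf{TOS}_{\boldsymbol\lambda}(\mathcal{I}) \geq \sum_{i=0}^{m-1}(\lambda_i - c_{i+1})$, and this lower bound is essentially tight for an adversary who offers prices exactly equal to the thresholds.

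The second ingredient is an upper bound on $\textsf{OPT}(\mathcal{I})$ in terms of $m$. If $\textsf{TOS}_{\boldsymbol\lambda}$ stops at $m$ units with $m < \bar k$, it must be that the algorithm saw no further price exceeding $\lambda_m$ (otherwise it would have accepted an $(m+1)$-st unit), i.e. every price in $\mathcal{I}$ that was \emph{not} accepted is $< \lambda_m$; combined with the fact that accepted prices are at most $p_{\max}$ but more sharply that the whole instance has all prices below $\lambda_m$ for the unaccepted ones, the offline optimum can produce at most units whose prices are $\leq$ the relevant bound. Concretely: if $m < \bar k$, all prices in $\mathcal{I}$ are $< \lambda_{m}$ — wait, more carefully, the accepted ones could be large, but there are only $m$ of them; the cleanest bound is $\textsf{OPT}(\mathcal{I}) \leq f^{*}(\lambda_m)$ when $m \le \bar k - 1$, since the offline optimum over an instance all of whose prices are at most $\lambda_m$ (after noting accepted prices land in brackets consistent with having produced $m$ units) is by Lemma~\ref{theorem_conjugate} exactly $f^*(\lambda_m)$; and $\textsf{OPT}(\mathcal{I}) \leq f^{*}(p_{\max})$ always, with this being the operative bound when $m = \bar k$. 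Dividing, $\textsf{OPT}(\mathcal{I})/\textsf{TOS}_{\boldsymbol\lambda}(\mathcal{I})$ is bounded by $f^*(\lambda_m)/\sum_{i=0}^{m-1}(\lambda_i-c_{i+1})$ for $m = \tau+1,\dots,\bar k-1$ and by $f^*(p_{\max})/\sum_{i=0}^{\bar k-1}(\lambda_i-c_{i+1})$ for $m=\bar k$. Writing $m = \tau+j$ gives exactly the maximum in Eq.~\eqref{eq_cr_admission_threshold}. I also need to separately dispose of instances too short to reach $\tau+1$ units: there $\textsf{TOS}_{\boldsymbol\lambda}$ accepts \emph{everything}, and for $i \le \tau < \ubar k$ we have $\lambda_i = p_{\min} > c_{i+1}$ only if $i+1 \le \ubar k$ — this is where $\tau \le \ubar k - 1$ matters — so the algorithm's profit equals $\textsf{OPT}$ on such instances, contributing ratio $1$.

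I should also verify the denominators are positive so the ratios are well-defined: for $i \le \tau$, $\lambda_i - c_{i+1} = p_{\min} - c_{i+1} > 0$ because $i+1 \le \tau+1 \le \ubar k = \Gamma(p_{\min})$ means $c_{i+1} \le p_{\min}$, and in fact $c_{i+1} < p_{\min}$ unless there is an edge equality I should handle (the definition of $\Gamma$ and $\ubar k$ controls this via the half-open intervals, and the remark's assumption $p_{\min} > c_1$ plus convexity give strict positivity at least of the partial sums). For $i > \tau$, $\lambda_i > \lambda_\tau = p_{\min} \ge c_{i+1}$ need not hold termwise, but the \emph{partial sum} $\sum_{i=0}^{\tau+j-1}(\lambda_i - c_{i+1})$ stays positive because it dominates $\sum_{i=0}^{\tau}(p_{\min}-c_{i+1}) > 0$ plus nonneg... actually $\lambda_i - c_{i+1}$ for $i>\tau$ could be negative if $c_{i+1}$ is large; here is where the \textsc{High-Value}/\textsc{Low-Value}/\textsc{Mix-Value} case structure and the constraint $\lambda_i \le p_{\max}$ with $i \le \bar k = \Gamma(p_{\max})$ ensures $c_{i+1} \le p_{\max}$, keeping things controlled — I expect this positivity bookkeeping, together with the precise claim "$m < \bar k \Rightarrow \textsf{OPT}(\mathcal{I}) \le f^*(\lambda_m)$," to be the main obstacle, since it requires carefully arguing that the unaccepted prices really are bounded by $\lambda_m$ and that the $m$ accepted prices cannot help the offline optimum beyond what producing up to $\Gamma(\lambda_m)$ units allows. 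The rest is assembling these per-instance bounds and taking the max over $\mathcal{I}$, equivalently over $m$.
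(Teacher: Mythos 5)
Your overall strategy is the same as the paper's: condition on the number $m$ of units $\textsf{TOS}_{\boldsymbol{\lambda}}$ ends up producing, lower-bound the algorithm's profit by $\sum_{i=0}^{m-1}(\lambda_i-c_{i+1})$, upper-bound $\textsf{OPT}$ by $f^*(\lambda_m)$ (resp.\ $f^*(p_{\max})$ when $m=\bar{k}$), and take the maximum over $m$; the paper does exactly this via its Scenarios I--IV, instantiated on explicit extremal instances. The step you yourself flag as the obstacle is, however, a genuine gap as written: the per-instance bound $\textsf{OPT}(\mathcal{I})\le f^{*}(\lambda_m)$ for $m<\bar{k}$ is \emph{false}. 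Only the \emph{rejected} prices are guaranteed to be below $\lambda_m$; the $m$ accepted prices need only exceed the thresholds in force when they arrived, and can be as large as $p_{\max}$ (e.g.\ a single buyer at $p_{\max}$ accepted while the threshold is still $p_{\min}$). So the offline optimum on such an instance can strictly exceed $f^{*}(\lambda_m)$, and separately bounding numerator and denominator does not close the case.

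The repair is to bound the ratio jointly rather than each side separately. Write the $i$-th accepted price as $q_i=\lambda_{i-1}+\Delta_i$ with $\Delta_i\ge 0$ and set $\Delta=\sum_i\Delta_i$, so $\textsf{TOS}_{\boldsymbol{\lambda}}(\mathcal{I})=B+\Delta$ with $B=\sum_{i=0}^{m-1}(\lambda_i-c_{i+1})$. Every price in $\mathcal{I}$ is at most $\lambda_m$ plus its own $\Delta_i$ (zero for rejected buyers, since $\lambda_{i-1}\le\lambda_m$ for accepted ones), whence $\textsf{OPT}(\mathcal{I})\le \max_j\{\lambda_m j-f(j)\}+\Delta=f^{*}(\lambda_m)+\Delta$. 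Because $f^{*}(\lambda_m)\ge \lambda_m m-f(m)\ge B$, the quantity $(f^{*}(\lambda_m)+\Delta)/(B+\Delta)$ is non-increasing in $\Delta$, so the ratio is maximized at $\Delta=0$, giving exactly $f^{*}(\lambda_m)/B$; the case $m=\bar{k}$ needs no such care since all prices are $\le p_{\max}$ unconditionally. The paper's proof avoids stating the false per-instance bound by exhibiting the $\Delta=0$ instances $\mathcal{I}^{(\tau+j)}$ and asserting they are worst-case (its justification likewise only discusses the rejected buyers, so the $\Delta$-argument above is the missing glue in both write-ups). Your positivity worry about individual terms $\lambda_i-c_{i+1}$ for $i>\tau$ is resolved the same way: only the partial sums $B$ matter, and $B\ge\sum_{i=0}^{\tau}(p_{\min}-c_{i+1})>0$ is not what you need --- rather, $B>0$ follows once the ratios are shown to be the competitive ratio of a policy with positive profit on the extremal instances, which the admission-threshold conditions $\tau\le\ubar{k}-1$ and $\lambda_i\le p_{\max}$, $i\le\bar{k}=\Gamma(p_{\max})$, guarantee.
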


\begin{proof} 
	Eq. \eqref{eq_cr_admission_threshold} can be proved by evaluating $ \textsf{OPT}(\mathcal{I})/\textsf{TOS}_{\boldsymbol{\lambda}}(\mathcal{I}) $ over all possible worst-case scenarios. In what follows  we use the term ``\textit{total product}" to denote the total produced/sold units under $ \textsf{TOS}_{\boldsymbol{\lambda}} $ in the end. We now prove that the competitive ratio of an admission threshold is determined by the worst-case of the following four scenarios.
	
	\textit{Scenario-I}: the total product is less than or equal to $ \tau $. In this scenario, we have $ \textsf{OPT}(\mathcal{I})/\textsf{TOS}_{\boldsymbol{\lambda}}(\mathcal{I}) = 1$ since both online and offline decisions select all the buyers available.
	
	\textit{Scenario-II}: the total product is $ \tau+1 $. Let us consider the following arrival instance 
	\begin{equation*}
		\mathcal{I}^{(\tau+1)} = \Big\{\underbrace{p_{\min},\ \cdots,\ p_{\min}}_{\tau+1},\ \underbrace{\lambda_{\tau+1}-\epsilon,\ \cdots,\ \lambda_{\tau+1}-\epsilon}_{k}\Big\},
	\end{equation*} 
	where the first $ \tau+1 $ identical buyers all offer the same price $ p_{\min} $ and the remaining $ k $ identical buyers offer $ \lambda_{\tau+1}-\epsilon $. Here, $ \epsilon $ is a small positive value which is arbitrarily close to 0, i.e., $ \epsilon \rightarrow 0^+ $.  Under $ \mathcal{I}^{(\tau+1)} $, $ \textsf{TOS}_{\boldsymbol{\lambda}} $ selects the first $ \tau+1 $ buyers and rejects all the remaining buyers as their prices are less than $ \lambda_{\tau+1} $. Thus,  the profit achieved by $ \textsf{TOS}_{\boldsymbol{\lambda}} $ is 
	$ \textsf{TOS}_{\boldsymbol{\lambda}}(\mathcal{I}^{(\tau+1)}) = p_{\min}(\tau+1) - \sum_{i=1}^{\tau+1}c_i = \sum_{i=0}^\tau (\lambda_i - c_{i+1}) $, where we use the fact that $ \lambda_0 = \lambda_1 =\cdots = \lambda_\tau = p_{\min} $.
	
	In the offline setting, the optimal profit in hindsight is to reject the first $ \tau + 1 $ buyers but accept buyers with price $ \lambda_{\tau+1} -\epsilon $ only, that is,  $
	\textsf{OPT}(\mathcal{I}^{(\tau+1)}) = \max_{i\in[k]}\ (\lambda_{\tau+1}-\epsilon) i - f(i) = f^{*}(\lambda_{\tau+1}-\epsilon) $, where we use the definition of $ f^* $ in Eq. \eqref{eq_f_star}.
	Thus, the ratio $ \textsf{OPT}(\mathcal{I}^{(k+1)})/\textsf{TOS}_{\boldsymbol{\lambda}}(\mathcal{I}^{(k+1)})$ is given by
	\begin{equation*}
		\frac{\textsf{OPT}(\mathcal{I}^{(\tau+1)})}{\textsf{TOS}_{\boldsymbol{\lambda}}(\mathcal{I}^{(\tau+1)}) }= \frac{f^{*}(\lambda_{\tau+1} -\epsilon)}{\sum_{i=0}^{\tau} (\lambda_i -c_{i+1})} \ \overset{\epsilon \rightarrow 0^+}{\rarrowfill{0.8cm}} \  \frac{f^{*}(\lambda_{\tau+1})}{\sum_{i=0}^{\tau} (\lambda_i -c_{i+1})},
	\end{equation*}
	where the right-hand-side convergence is due to  $ \epsilon \rightarrow 0^+ $. 
	We argue that the above ratio is indeed the worst-case scenario when the total product is $ \tau+1 $. By worst-case, we mean that the minimum profit achievable by $ \textsf{TOS}_{\boldsymbol{\lambda}} $ is $ \textsf{TOS}_{\boldsymbol{\lambda}}(\mathcal{I}^{(\tau+1)})  $, and the maximum profit achievable in the offline setting is $ \textsf{OPT}(\mathcal{I}^{(\tau+1)}) $ with $ \epsilon \rightarrow 0^+ $. Note that for any $ \epsilon < 0 $, $ \textsf{TOS}_{\boldsymbol{\lambda}} $ will select more than $ \tau+1 $ buyers under the arrival instance $ \mathcal{I}^{(\tau+1)} $, which thus violates our definition of \textit{Scenario-II}. Therefore, if the total product is $ \tau+1 $, it is impossible to have any $ (\tau+2)$-th buyer whose price is larger than $ \lambda_{\tau+1} $. For this reason, the maximum profit achievable in the offline setting is indeed $ f^*(\lambda_{\tau+1}-\epsilon) $ with $ \epsilon \rightarrow 0^+ $.
	
	\textit{Scenario-III}: the total product is larger than $ \tau + 1 $ but less than $ \bar{k} $. Consider that the total product is $ \tau+j $ for  $ j = 2,3,\cdots,\bar{k}-\tau-1 $. Similar to \textit{Scenario-II}, we consider the following arrival instance $ \mathcal{I}^{(\tau+j)} $:
	\begin{equation*}
		\mathcal{I}^{(\tau+j)} = \Big\{\underbrace{p_{\min},\ \cdots,\ p_{\min}}_{\tau+1},\ \lambda_{\tau+1},\ \lambda_{\tau+2},\ \cdots,\ \lambda_{\tau+j-1}, \ \underbrace{\lambda_{\tau+j}-\epsilon,\ \cdots,\ \lambda_{\tau+j}-\epsilon}_{k}\Big\}, 
	\end{equation*}
	where the first $ \tau+1 $ identical buyers offer price $ p_{\min} $, the $ (\tau+j) $-th buyer offers price $ \lambda_{\tau+j-1} $, and the remaining $ k $ identical buyers all offer price $ \lambda_{\tau+j}-\epsilon $. Here, $ \epsilon $ is defined in the same way as \textit{Scenario-II}, namely, $ \epsilon \rightarrow 0^+ $.  We can prove similarly that the ratio $ \textsf{OPT}(\mathcal{I}^{(\tau+j)})/\textsf{TOS}_{\boldsymbol{\lambda}}(\mathcal{I}^{(\tau+j)})$  is captured by $ \frac{f^{*}(\lambda_{\tau+j})}{\sum_{i=0}^{\tau+j-1}(\lambda_i - c_{i+1})} $ for $ j = 2,3,\cdots,\bar{k}-\tau-1 $.
	
	\textit{Scenario-IV}: the total product is $ \bar{k} $. Let us consider the following arrival instance $ \mathcal{I}^{(\bar{k})} $:
	\begin{equation*}
		\mathcal{I}^{(\bar{k})} = \Big\{\underbrace{p_{\min},\ \cdots,\ p_{\min}}_{\tau+1},\ \lambda_{\tau+1},\ \lambda_{\tau+2},\ \cdots,\ \lambda_{\bar{k}-1}, \ \underbrace{p_{\max},\ \cdots,\ p_{\max}}_{k} \Big\},
	\end{equation*}
	Under $ \mathcal{I}^{(\bar{k})} $, similar to our previous analysis, the ratio between the maximum offline profit and the minimum online profit is:
	\begin{equation*}
		\frac{\textsf{OPT}(\mathcal{I}^{(\bar{k})})}{\textsf{TOS}_{\boldsymbol{\lambda}}(\mathcal{I}^{(\bar{k})})} = \frac{\max_{i\in[k]}\ p_{\max} i - f(i)}{ p_{\min}(\tau+1) + \sum_{i=\tau+1}^{\bar{k}-1} \lambda_i - \sum_{i=0}^{\bar{k}-1}c_{i+1}} = \frac{f^*(p_{\max})}{\sum_{i=0}^{\bar{k}-1} (\lambda_i - c_{i+1})}.
	\end{equation*}  
	Here in \textit{Scenario-IV} we do not  construct $ \mathcal{I}^{(\bar{k}-1)} $ based on $ p_{\max} - \epsilon $ with $\epsilon \rightarrow 0^+$ since  $ \bar{k} $ is the maximum number of buyers that should be accepted. 
	
	Combining the discussion of the above four scenarios, Proposition \ref{theorem_CR_threshold} follows. 
\end{proof}

Proposition \ref{theorem_CR_threshold} provides a way of analyzing the competitive ratio of $ \textsf{TOS}_{\boldsymbol{\lambda}} $ with an arbitrary admission threshold $ \boldsymbol{\lambda} $. A direct application of Proposition \ref{theorem_CR_threshold} is given in 
Corollary \ref{theorem_OPD_inequality} below, which summarizes a group of $ \alpha $-parameterized inequalities that once satisfied, $ \textsf{TOS}_{\boldsymbol{\lambda}} $ is $ \alpha $-competitive.

\begin{corollary}[\textsc{Sufficient Inequalities}]
	\label{theorem_OPD_inequality}
	{\normalfont $\textsf{TOS}_{\boldsymbol{\lambda}}$} is $ \alpha $-competitive  if  $\boldsymbol{\lambda} = \{\lambda_0,\lambda_1,\cdots, \lambda_\tau, \cdots, \lambda_{\bar{k}}\}$ is an admission threshold with turning point $ \tau \in \mathcal{T}^{(\alpha)} $ and $ \lambda_{\bar{k}} = p_{\max} $, and  satisfies:
	\begin{equation}\label{eq_OPD_sufficiency}
		\sum_{j=0}^i \big(\lambda_j - c_{j+1}\big) \geq \frac{1}{\alpha} f^{*}(\lambda_{i+1}), \quad \forall i\in \{\tau,\tau+1,\cdots,\bar{k}-1\}.
	\end{equation}
\end{corollary}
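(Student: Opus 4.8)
The plan is to derive the corollary as an immediate consequence of Proposition~\ref{theorem_CR_threshold}. First I would observe that $\tau \in \mathcal{T}^{(\alpha)} \subseteq \{0,1,\dots,\ubar{k}-1\}$: indeed, since $\alpha \ge 1$ we have $\tfrac{1}{\alpha}f^{*}(p_{\min}) = \tfrac{1}{\alpha}g(\ubar{k}) \in [0, g(\ubar{k})]$, so $g^{\textsf{inv}}$ is well defined at this argument and returns an integer in $[\ubar{k}]$, whence $g^{\textsf{inv}}\big(\tfrac{1}{\alpha}f^{*}(p_{\min})\big)-1 \in \{0,\dots,\ubar{k}-1\}$. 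Consequently $\boldsymbol{\lambda}$ satisfies the hypotheses of Proposition~\ref{theorem_CR_threshold}, so the competitive ratio of $\textsf{TOS}_{\boldsymbol{\lambda}}$ equals the maximum in Eq.~\eqref{eq_cr_admission_threshold}. It therefore suffices to show that, under the hypothesized inequalities \eqref{eq_OPD_sufficiency}, every term inside that maximum is at most $\alpha$.

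Before rearranging ratios I would record that all denominators in \eqref{eq_cr_admission_threshold} are strictly positive. By Lemma~\ref{theorem_conjugate}, $f^{*}$ is strictly increasing, and $f^{*}(p_{\min}) \ge p_{\min}\cdot 1 - f(1) = p_{\min} - c_1 > 0$ by the standing assumption $p_{\min} > c_1$; hence $f^{*}(\lambda_{i+1}) \ge f^{*}(p_{\min}) > 0$ for every $i \in \{\tau,\dots,\bar{k}-1\}$ (recall $\lambda_{i+1}\ge\lambda_{\tau+1}>p_{\min}$). Thus the right-hand side of \eqref{eq_OPD_sufficiency} is strictly positive, and the inequality itself forces $\sum_{j=0}^{i}(\lambda_j - c_{j+1}) > 0$ for all such $i$ — precisely the sums that appear as denominators in \eqref{eq_cr_admission_threshold}. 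In particular, dividing an instance of \eqref{eq_OPD_sufficiency} through by its (positive) left-hand side preserves the inequality direction.

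The remaining step is a routine reindexing that lines up \eqref{eq_OPD_sufficiency} with the terms of \eqref{eq_cr_admission_threshold}. For $j \in [\bar{k}-\tau-1]$, set $m = \tau+j$; instantiating \eqref{eq_OPD_sufficiency} at $i = m-1 \in \{\tau,\dots,\bar{k}-2\}$ gives $\sum_{i'=0}^{\tau+j-1}(\lambda_{i'}-c_{i'+1}) \ge \tfrac{1}{\alpha}f^{*}(\lambda_{\tau+j})$, equivalently $\tfrac{f^{*}(\lambda_{\tau+j})}{\sum_{i=0}^{\tau+j-1}(\lambda_i - c_{i+1})} \le \alpha$. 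For the last term, instantiating \eqref{eq_OPD_sufficiency} at $i = \bar{k}-1$ and using the boundary condition $\lambda_{\bar{k}} = p_{\max}$ gives $\sum_{i=0}^{\bar{k}-1}(\lambda_i - c_{i+1}) \ge \tfrac{1}{\alpha}f^{*}(\lambda_{\bar{k}}) = \tfrac{1}{\alpha}f^{*}(p_{\max})$, i.e. $\tfrac{f^{*}(p_{\max})}{\sum_{i=0}^{\bar{k}-1}(\lambda_i - c_{i+1})} \le \alpha$. Taking the maximum over all these terms and invoking Proposition~\ref{theorem_CR_threshold} shows that $\textsf{TOS}_{\boldsymbol{\lambda}}$ is $\alpha$-competitive.

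There is essentially no hard step: the substantive content was already proved in Proposition~\ref{theorem_CR_threshold}, and what remains is the index shift $i \leftrightarrow \tau+j-1$ (resp. $i = \bar{k}-1$) together with the small observation that \eqref{eq_OPD_sufficiency} itself certifies the positivity needed to rearrange the ratios. The only places I would flag for care are the degenerate ranges: when $\tau = \bar{k}-1$ the index set $[\bar{k}-\tau-1]$ is empty, so only the $f^{*}(p_{\max})$ term survives and is covered by the $i=\bar{k}-1$ instance of \eqref{eq_OPD_sufficiency}; and one should confirm, as done above, that the membership $\tau\in\mathcal{T}^{(\alpha)}$ really does place $\tau$ inside the admissible range $\{0,\dots,\ubar{k}-1\}$ demanded by Proposition~\ref{theorem_CR_threshold}.
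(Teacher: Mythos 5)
Your proof is correct and follows exactly the route the paper intends: the authors explicitly skip the proof, remarking that Corollary~\ref{theorem_OPD_inequality} ``largely follows Proposition~\ref{theorem_CR_threshold},'' and your argument spells out that derivation, including the index alignment between Eq.~\eqref{eq_OPD_sufficiency} and the terms of Eq.~\eqref{eq_cr_admission_threshold} and the positivity check on the denominators. No gaps.
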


We skip the proof of Corollary \ref{theorem_OPD_inequality} as it largely follows Proposition \ref{theorem_CR_threshold}\footnote{Corollary \ref{theorem_OPD_inequality} can also be proved by the online primal-dual approach. For more details, please refer to Appendix \ref{appendix_OPD}.}. The fact that the turning point $ \tau $ must be an integer within $ \mathcal{K}^{(\alpha)} $ is based on  Proposition \ref{theorem_k_alpha}. The following remark provides a  geometric and economic interpretation of the sufficient inequalities in Eq. \eqref{eq_OPD_sufficiency}.

\begin{remark}
	\label{remark_geometric_economic}
	Geometrically, the value of $ \lambda_i - c_{i+1} $ in Eq. \eqref{eq_OPD_sufficiency} equals the areas of the $ (i+1) $-th grey rectangle illustrated in Fig. \ref{fig_three_cases}. In this regard, Corollary \ref{theorem_OPD_inequality} argues that, if for any $ i\in \{\tau, \tau+1,\cdots, \bar{k}-1\} $, the total areas of the first $ i+1 $ grey rectangles (from left to right) is no less than $ \frac{1}{\alpha} $ fraction of $ f^{*}(\lambda_{i+1}) $, then $ \normalfont \textsf{TOS}_{\boldsymbol{\lambda}} $ is $ \alpha $-competitive. From an economic perspective, if we consider $ \boldsymbol{\lambda} $ as a sequence of ``reserved prices", then $ \lambda_i - c_{i+1} $ denotes the ``reserved profit" of producing the $ i $-th unit. An $ \alpha $-competitive threshold policy requires that an additional unit be produced only if a minimum of profit is reserved at each step.
\end{remark}

Corollary \ref{theorem_OPD_inequality} is key to the upcoming analysis. Now the problem reduces to designing an admission threshold $ \boldsymbol{\lambda} $ so that the sufficient inequalities in Eq. \eqref{eq_OPD_sufficiency} are satisfied with an $ \alpha $ that is as small as possible. 
Mathematically, we need to solve the following optimization problem
\begin{equation}\label{eq_optimization}
	\normalfont
	\underset{\alpha, \tau, \lambda_{\tau+1},\cdots, \lambda_{\bar{k}}}{\textsf{minimize}}\ \  \alpha \quad   \textsf{subject to}\   \tau\in \mathcal{T}^{(\alpha)}, \text{ Eq. }\eqref{eq_admission_threshold}, \text{ and} \text{ Eq. } \eqref{eq_OPD_sufficiency}.
\end{equation}
It is challenging, however, to directly solve Problem \eqref{eq_optimization} as it involves a group of non-convex and mixed-integer constraints. As one of the major technical contributions of this paper, the next section is dedicated to identifying the structural property of Problem \eqref{eq_optimization} and characterizing the existence and uniqueness of the optimal admission threshold.

\section{Main Results: Optimal Threshold, Lower Bound, and Asymptotic Properties} 
\label{section_main_results}

This section presents the main results of this paper. We first prove in Section \ref{section_optimal_threshold} that there exists a unique optimal admission threshold, denoted by $ \boldsymbol{\lambda}^* $, so that $ \textsf{TOS}_{\boldsymbol{\lambda}^*} $ achieves the best-possible competitive ratio of all deterministic algorithms. In Section \ref{section_lower_bound}, we give a lower bound on competitive ratios that no randomized algorithm can outperform. Moreover, we prove in Section \ref{section_asymptotic_properties} that the competitive ratio of  $ \textsf{TOS}_{\boldsymbol{\lambda}^*} $ asymptotically converges to this lower bound when $ k\rightarrow +\infty $.  We illustrate our results with a case study in Section \ref{section_constant_packing_costs}, and end this section with a discussion of the impact of convex costs on online selection in Section \ref{section_upper_bound}.

\subsection{Optimal Threshold: Existence and Uniqueness}
\label{section_optimal_threshold}
To solve Problem \eqref{eq_optimization}, our first step is to define the following system of equations:
\begin{align}\label{eq_system_of_equations_X}
	\Big(\textsf{SoE}(\bm{\chi}^{(\tau)})\Big):\ 
	\frac{f^{*}\big(\chi_1^{(\tau)}\big)}{g(\tau+1)} = \frac{f^{*}\big(\chi_2^{(\tau)}\big)-f^{*}\big(\chi_1^{(\tau)}\big)}{\chi_1^{(\tau)}-c_{\tau+2}} = \cdots =  \frac{f^{*}(p_{\max})-f^{*}\big(\chi_{\bar{k}-\tau-1}^{(\tau)}\big)}{\chi_{\bar{k}-\tau-1}^{(\tau)} - c_{\bar{k}}}.
\end{align}
Eq. \eqref{eq_system_of_equations_X} consists of $ \bar{k} - \tau -1 $ variables, denoted by $ \bm{\chi}^{(\tau)} = \{\chi_1^{(\tau)}, \chi_2^{(\tau)}, \cdots, \chi_{\bar{k}-\tau-1}^{(\tau)}\} $. In what follows we refer to the above system of equations by $ \textsf{SoE}(\bm{\chi}^{(\tau)}) $.

For a given setup $ \mathcal{S}$, $ \textsf{SoE}(\bm{\chi}^{(\tau)}) $ is well-defined as long as $ \tau $ is given. The following Proposition \ref{theorem_uniqueness} argues that $ \textsf{SoE}(\bm{\chi}^{(\tau)}) $ has a unique solution for any $ \tau\in \{0,1,\cdots,\ubar{k}-1\} $. 

\begin{proposition}\label{theorem_uniqueness}
	For any $ \tau\in \{0,1,\cdots,\ubar{k}-1\} $, there exists a unique set of $ \bar{k}-\tau-1$ positive real numbers, denoted by $ \bm{\chi}^{(\tau)} = \{\chi_1^{(\tau)}, \chi_2^{(\tau)}, \cdots, \chi_{\bar{k}-\tau-1}^{(\tau)}\} $, that satisfy the system of equations $\normalfont \textsf{SoE}(\bm{\chi}^{(\tau)}) $ characterized by Eq. \eqref{eq_system_of_equations_X}. Moreover, the solution has the following properties:
	\begin{itemize}
		\item Monotonic:  $ \chi_1^{(\tau)} \leq  \chi_2^{(\tau)} \leq \cdots \leq \chi_{\bar{k}-\tau-1}^{(\tau)} \leq p_{\max} $.
		\item Lower Bounded: $ \chi_i^{(\tau)} > c_{i+\tau+1}  $ for all $ i = [\bar{k}-\tau-1] $.
	\end{itemize}
\end{proposition}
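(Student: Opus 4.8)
\emph{Reformulation via the common ratio.} Write $m=\bar k-\tau-1$ and let $\alpha>0$ denote the (unknown) common value of all the ratios in $\textsf{SoE}(\bm\chi^{(\tau)})$. By Lemma~\ref{theorem_conjugate}, $f^*$ is continuous, equals $0$ on $[0,c_1]$, is strictly increasing on $[c_1,+\infty)$, and tends to $+\infty$, so it has a continuous strictly increasing inverse $(f^*)^{-1}:(0,+\infty)\to(c_1,+\infty)$. The plan is to treat $\alpha$ as the single free parameter. Since $g(\tau+1)>0$ (recall $1\le\tau+1\le\ubar k$ and $g$ is strictly increasing on $\{0,\dots,\ubar k\}$), the first equation forces $\chi_1=(f^*)^{-1}\!\big(\alpha\,g(\tau+1)\big)$, and then the $j$-th equation, $j=1,\dots,m$, with the convention $\chi_{m+1}:=p_{\max}$, forces the forward recursion
\begin{equation*}
  f^*(\chi_{j+1})=f^*(\chi_j)+\alpha\big(\chi_j-c_{\tau+j+1}\big),\qquad j=1,\dots,m .
\end{equation*}
Hence a solution of $\textsf{SoE}(\bm\chi^{(\tau)})$ is exactly a value $\alpha>0$ for which this recursion, started at $\chi_1(\alpha)$, stays well defined (nonnegative argument of $(f^*)^{-1}$) for $m$ steps and ends at $\chi_{m+1}(\alpha)=p_{\max}$. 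I would first note that $\alpha>0$ is automatic for any solution: a one-line sign argument ($f^*\ge 0$, $g(\tau+1)>0$, and $f^*(p_{\max})>0$ appearing in the last ratio) rules out $\alpha\le 0$.

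\emph{Existence and uniqueness of $\alpha^*$.} Next I would let $\underline\alpha_j$ be the value of $\alpha$ at which $\chi_j(\alpha)=c_{\tau+j+1}$ and prove, by induction on $j$, that for $\alpha>\underline\alpha_j$ the partial recursion $\chi_1(\alpha),\dots,\chi_{j+1}(\alpha)$ is well defined, each $\chi_i(\alpha)$ is continuous and strictly increasing in $\alpha$, and $c_{\tau+2}<\chi_1(\alpha)<\cdots<\chi_{j+1}(\alpha)$ with $\chi_i(\alpha)>c_{\tau+i+1}$ for $i\le j$; the step works because, once $\chi_j>c_{\tau+j+1}$, the right-hand side of the recursion is a sum of positive, strictly increasing functions of $\alpha$, so $f^*(\chi_{j+1})$ and hence $\chi_{j+1}$ is strictly increasing. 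One also checks $\underline\alpha_1\le\cdots\le\underline\alpha_m$, so with $\underline\alpha:=\underline\alpha_m$ the map $\alpha\mapsto\chi_{m+1}(\alpha)$ is continuous and strictly increasing on $(\underline\alpha,+\infty)$; at the left end $\chi_{m+1}(\alpha)\to c_{\bar k}\ (\le p_{\max})$ and $\chi_{m+1}(\alpha)\to+\infty$ as $\alpha\to+\infty$, so by the intermediate value theorem there is a \emph{unique} $\alpha^{*}>\underline\alpha$ with $\chi_{m+1}(\alpha^{*})=p_{\max}$, yielding a solution $\bm\chi^{(\tau)}=\big(\chi_1(\alpha^{*}),\dots,\chi_m(\alpha^{*})\big)$. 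Uniqueness over \emph{all} solutions (not just those with $\alpha>\underline\alpha$) is deferred to the structural argument below.

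\emph{Monotonicity, lower bounds, and global uniqueness.} Given any solution $\bm\chi^{(\tau)}$ with common ratio $\alpha>0$, I would establish the two bullet points by a downward induction on $j=m,m-1,\dots,1$ using only the equations. Base case: the last equation $f^*(p_{\max})-f^*(\chi_m)=\alpha(\chi_m-c_{\bar k})$ forces $\chi_m<p_{\max}$ (else the left side is $\le 0$ while the right side is $>0$), hence $f^*(\chi_m)<f^*(p_{\max})$ and therefore $\chi_m>c_{\bar k}=c_{\tau+m+1}$. Step: from the inductive hypothesis $\chi_{j+1}>c_{\tau+j+2}\ge c_{\tau+j+1}$, one gets that $\chi_j\ge\chi_{j+1}$ would force (via the $j$-th equation) $\chi_j\le c_{\tau+j+1}<\chi_{j+1}$, a contradiction, so $\chi_j<\chi_{j+1}$; and then $f^*(\chi_{j+1})>f^*(\chi_j)$ makes the right side positive, giving $\chi_j>c_{\tau+j+1}$. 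This yields $\chi_1<\cdots<\chi_m<p_{\max}$ and $\chi_i^{(\tau)}>c_{i+\tau+1}$ for all $i\in[\bar k-\tau-1]$ — both stated properties (indeed with strict inequalities). Since $\chi_j>c_{\tau+j+1}$ for every $j$ is equivalent (by the monotonicity of $\chi_j(\cdot)$ from the previous paragraph) to $\alpha>\underline\alpha_j$, every solution in fact satisfies $\alpha>\underline\alpha$, where $\chi_{m+1}(\cdot)$ is strictly increasing; hence $\alpha$ is forced to equal $\alpha^{*}$, which gives global uniqueness.

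\emph{Main obstacle.} The delicate part is the second paragraph. Because $f^*$ is only weakly increasing near $c_1$ and the forward recursion can break down (negative argument to $(f^*)^{-1}$) when $\alpha$ is too small, the ``admissible interval'' $(\underline\alpha,+\infty)$ and the strict monotonicity and continuity of $\chi_{m+1}(\cdot)$ on it must be tracked simultaneously in a single induction, and it is this monotonicity that drives \emph{both} existence (via the intermediate value theorem) and uniqueness (via injectivity). I would also dispatch the trivial case $\bar k=\tau+1$ (then $\textsf{SoE}$ has no variables and the claim is vacuous) separately, and note that the borderline configuration $p_{\max}=c_{\bar k}$ — where the last rectangle in Fig.~\ref{fig_three_cases} degenerates — needs a short extra check.
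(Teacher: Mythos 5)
Your proof is correct and establishes the proposition, but it parameterizes the one-dimensional search differently from the paper. The paper's argument in Appendix~\ref{proof_of_solution_SoE} treats the initial value $\theta := \chi_1^{(\tau)}$ as the free variable: Lemma~\ref{theorem_difference_equation_X} runs the chain \emph{backward} from the fixed terminal value $p_{\max}$ with ratio $\alpha_{\textsc{is}}^{(\tau)}$, showing the landing point $\chi_1$ depends bijectively on $\alpha_{\textsc{is}}^{(\tau)}$; thus $\alpha_{\textsc{is}}^{(\tau)}(\theta)$ is a well-defined, strictly decreasing function on $(c_{\tau+2},p_{\max})$, while the first ratio defines a strictly increasing $\alpha_{\textsc{hs}}^{(\tau)}(\theta)=f^*(\theta)/g(\tau+1)$, and the unique solution is the crossing of these two curves. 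You instead treat the common ratio $\alpha$ as the free variable, solve the first equation for $\chi_1(\alpha)=(f^*)^{-1}(\alpha g(\tau+1))$, run the chain \emph{forward}, and pin $\alpha$ by the terminal condition $\chi_{m+1}(\alpha)=p_{\max}$ (with $m=\bar k-\tau-1$). The paper's backward parameterization buys a clean, fixed domain $(c_{\tau+2},p_{\max})$ on which the chain always exists; your forward parameterization is more direct (the free variable \emph{is} $\alpha$), and your downward induction on the equations yields the monotonicity and the lower bounds $\chi_i^{(\tau)}>c_{i+\tau+1}$ with strict inequalities as a byproduct, whereas the paper simply asserts these follow from Lemma~\ref{theorem_difference_equation_X}. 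The price is tracking the admissible interval $(\underline{\alpha},+\infty)$ for the forward recursion, which your chain of inequalities $\underline{\alpha}_1\le\cdots\le\underline{\alpha}_m$ (justified by $\chi_{j+1}(\underline{\alpha}_j)=c_{\tau+j+1}\le c_{\tau+j+2}$) handles cleanly. Both proofs share the same underlying engine --- strict monotonicity plus an intermediate-value argument --- and both pass over the degenerate boundary $p_{\max}=c_{\bar k}$, where the last ratio in $\textsf{SoE}(\bm{\chi}^{(\tau)})$ becomes $0/0$; you explicitly flag this edge case, the paper does not.
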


The proof of Proposition \ref{theorem_uniqueness}, as well as the rationality of why Eq. \eqref{eq_system_of_equations_X} is related to the optimal solution to Problem \eqref{eq_optimization}, is  given in Appendix \ref{proof_of_solution_SoE}. Based on Proposition \ref{theorem_uniqueness}, we give our optimal threshold design in Theorem \ref{theorem_optimality} below.

\begin{theorem}[\textsc{Optimal Threshold}]\label{theorem_optimality}
	Given a setup $ \mathcal{S} = \{f, p_{\min}, p_{\max}, k\} $,  $\normalfont \textsf{TOS}_{\boldsymbol{\lambda}^*} $ achieves the optimal  competitive ratio of all deterministic online algorithms, denoted by $\textsf{CR}_f^*(\rho, k) $, if and only if  $\boldsymbol{\lambda}^* = \big\{\lambda_0^*, \lambda_1^*,\cdots, \lambda_\tau^*,\cdots, \lambda_{\bar{k}}^*\big\} $ is an admission threshold such that
	\begin{itemize}
		\item The lower and upper limits: $ \lambda_0^* = \lambda_1^* = \cdots = \lambda_{\tau}^* =  p_{\min} $ and $ \lambda_{\bar{k}}^* =  p_{\max} $. 
		\item The turning point $ \tau $ of the admission threshold  is given by
		\begin{align}\label{eq_k_minimum}
			\tau = g^{\textsf{inv}}\Big(\frac{f^{*}(p_{\min})}{\normalfont \textsf{CR}_f^*(\rho, k)}\Big)  - 1.
		\end{align}
		
		\item The optimal competitive ratio $ \normalfont \textsf{CR}_f^*(\rho, k) $  and $ \{\lambda_{\tau+1}^*, \lambda_{\tau+2}^*, \cdots, \lambda_{\bar{k}-1}^*, \lambda_{\bar{k}}^*\} $ satisfy:
		\begin{align}\label{eq_system_of_equations}
			\normalfont
			\textbf{(SoSE):}\quad 
			\textsf{CR}_f^*(\rho, k) =
			\frac{f^{*}\big(\lambda_{\tau+1}^*\big)}{g(\tau+1)}
			=  \frac{f^{*}\big(\lambda_{\tau+2}^*\big)-f^{*}\big(\lambda_{\tau+1}^*\big)}{\lambda_{\tau+1}^*-c_{\tau+2}} = \cdots =  \frac{f^{*}(\lambda_{\bar{k}}^*)-f^{*}\big(\lambda_{\bar{k}-1}^*\big)}{\lambda_{\bar{k}-1}^* - c_{\bar{k}}}.
		\end{align}
	\end{itemize}
\end{theorem}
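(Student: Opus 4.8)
The plan is to establish Theorem~\ref{theorem_optimality} in two logically separate parts: (i) the threshold policy $\textsf{TOS}_{\boldsymbol{\lambda}^*}$ defined through the SoSE in Eq.~\eqref{eq_system_of_equations} is well-defined and $\textsf{CR}_f^*(\rho,k)$-competitive (the \emph{sufficiency}, or achievability, direction); and (ii) no deterministic online algorithm can achieve a competitive ratio strictly smaller than $\textsf{CR}_f^*(\rho,k)$, and the admission threshold attaining it is unique (the \emph{necessity}, or optimality, direction). The excerpt already gives me the SoSEs (the achievability engine) and promises the SoNEs (the optimality engine); I would organize the proof around showing these two separately-derived systems coincide.

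For the sufficiency direction, I would start from Proposition~\ref{theorem_uniqueness}: for each candidate turning point $\tau\in\{0,\dots,\ubar{k}-1\}$, the system $\textsf{SoE}(\bm{\chi}^{(\tau)})$ has a unique positive solution with the stated monotonicity and lower-bound properties. Reading Eq.~\eqref{eq_system_of_equations_X} against the sufficient inequalities of Corollary~\ref{theorem_OPD_inequality}, one sees that setting $\lambda_{\tau+j}^*=\chi_j^{(\tau)}$ and taking the common ratio value as $\alpha^{(\tau)}$ makes every inequality in Eq.~\eqref{eq_OPD_sufficiency} hold with equality (telescoping the chain of equal ratios recovers $\sum_{j=0}^{i}(\lambda_j^*-c_{j+1})=\frac{1}{\alpha^{(\tau)}}f^*(\lambda_{i+1}^*)$, using $\lambda_0^*=\dots=\lambda_\tau^*=p_{\min}$ and $g(\tau+1)=p_{\min}(\tau+1)-f(\tau+1)=\sum_{i=0}^{\tau}(\lambda_i^*-c_{i+1})$). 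Hence $\textsf{TOS}_{\boldsymbol{\lambda}^{*,(\tau)}}$ is $\alpha^{(\tau)}$-competitive. I would then show $\alpha^{(\tau)}$ is monotone in $\tau$ over the feasible range, so that minimizing over $\tau$ — subject to the consistency constraint $\tau\in\mathcal{T}^{(\alpha^{(\tau)})}$ from Proposition~\ref{theorem_k_alpha} — pins down the unique optimal turning point, and that this optimality condition is exactly the fixed-point equation Eq.~\eqref{eq_k_minimum}: $\tau=g^{\textsf{inv}}(f^*(p_{\min})/\textsf{CR}_f^*(\rho,k))-1$. This identifies $\textsf{CR}_f^*(\rho,k):=\min_\tau \alpha^{(\tau)}$ and the associated $\boldsymbol{\lambda}^*$.

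For the necessity (``only if'' and optimality-over-all-deterministic-algorithms) direction, I would follow the SoNE strategy sketched in the overview. Fix any deterministic algorithm $\textsf{ALG}$ that is $\alpha$-competitive. Feed it the family of nested hard instances — an initial block of $p_{\min}$ buyers followed by blocks of buyers at successively higher price levels, each block long enough ($k$ copies) to let $\textsf{ALG}$ produce as much as it ever would at that level, then cut off. Comparing $\textsf{ALG}$ against $\textsf{OPT}(\mathcal{I})=f^*(\text{current price})$ at each cutoff forces $\textsf{ALG}$ to have accepted a prescribed number of units by each price threshold; writing $q_j$ for the price at which $\textsf{ALG}$ accepts its $j$-th unit, the constraints $\textsf{ALG}\ge\frac{1}{\alpha}\textsf{OPT}$ applied at each stage, together with the requirement that the very last stage ($p_{\max}$) also be covered, yield precisely a chain of equalities of the SoNE form — structurally identical to Eq.~\eqref{eq_system_of_equations} but with an a priori unknown $\alpha$. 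By Proposition~\ref{theorem_uniqueness} applied with the turning point forced by Proposition~\ref{theorem_k_alpha}, this chain has a solution only when $\alpha\ge\textsf{CR}_f^*(\rho,k)$; for $\alpha=\textsf{CR}_f^*(\rho,k)-\epsilon$ the system is overdetermined and infeasible, so no such $\textsf{ALG}$ exists. The same argument, read on threshold policies, shows any optimal admission threshold must satisfy the SoSE with equality and have turning point given by Eq.~\eqref{eq_k_minimum}, giving uniqueness of $\boldsymbol{\lambda}^*$.

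The main obstacle, I expect, is the necessity direction — specifically, rigorously arguing that the worst-case instances for an \emph{arbitrary} deterministic algorithm can, without loss of generality, be taken from the nested block family, and that the resulting accept-counts must match those prescribed by the SoNE \emph{at every} stage simultaneously (not just at one stage). The subtlety is that $\textsf{ALG}$ sees a block at level $q$ and may "hedge" by accepting fewer units than a threshold policy would, but then a later, higher-priced block exposes the shortfall; converting this adaptivity argument into a clean system of equalities — and showing that relaxing any one equality to a strict inequality propagates to a contradiction with $\alpha<\textsf{CR}_f^*(\rho,k)$ — is where the real work lies. A secondary technical point is handling the integrality of $g^{\textsf{inv}}$ and the turning point $\tau$ cleanly across the three cases (\textsc{High-}, \textsc{Low-}, \textsc{Mix-Value}), since the length $\bar{k}-\tau-1$ of the increasing segment and the boundary term $f^*(p_{\max})$ versus $f^*(\lambda_{\bar{k}}^*)$ behave differently in each; I would isolate this in a lemma rather than carry it through the main argument. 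I defer the detailed SoNE construction to Appendix~\ref{sec_proof_of_theorem_optimality} as the paper indicates.
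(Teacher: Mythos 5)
Your architecture matches the paper's: sufficiency via the SoSE (telescoping the chain of equalities into the inequalities of Corollary~\ref{theorem_OPD_inequality}, with existence and uniqueness from Proposition~\ref{theorem_uniqueness}), necessity via a nested family of hard instances that forces any $\alpha$-competitive deterministic algorithm to satisfy a SoNE structurally identical to Eq.~\eqref{eq_system_of_equations}, and the conclusion drawn from the coincidence of the two systems. The sufficiency half of your plan is essentially complete and correct.

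The obstacle you flag in the necessity direction is, however, precisely the content of the paper's proof rather than a technicality to be deferred: an arbitrary deterministic algorithm need not behave like a threshold policy on the hard instances, so one cannot directly read off a chain of equalities from $\textsf{ALG}\ge\frac{1}{\alpha}\textsf{OPT}$. The paper closes this by encoding any algorithm's behaviour on the $\varepsilon$-instances as a \emph{selection function} $\psi$ subject to integral feasibility conditions (Proposition~\ref{theorem_necessary}), and then showing via three feasibility-preserving operations (\textsc{Push-Down-to-Minimum}, \textsc{Push-Down-to-One}, \textsc{Push-Right-to-Equality}) that every feasible $\psi$ can be canonicalized to a unique characteristic selection function $\psi^{(\alpha)}$ satisfying the SoNE --- this is the "clean system of equalities" you ask for, and the convexity of $f$ and strict monotonicity of $f^*$ are what make the push operations legal. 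A second point worth correcting: the infeasibility for $\alpha=\textsf{CR}_f^*(\rho,k)-\epsilon$ does not come from the chain being "overdetermined" (Proposition~\ref{theorem_uniqueness} gives a unique SoE solution for \emph{every} $\tau$ and every common ratio); it comes from the capacity bound --- decreasing $\alpha$ shifts the non-zero points of $\psi^{(\alpha)}$ leftward and increases their number, and once $\alpha$ drops below the value at which $L^{(\alpha)}=\bar{k}-\tau^{(\alpha)}$ and the terminal equality at $p_{\max}$ both hold, any smaller $\alpha$ would require producing more than $\bar{k}$ units, violating Eq.~\eqref{eq_necessary_boundary}. Without these two ingredients the necessity argument does not go through, so as written the proposal has a genuine gap there, albeit one you have correctly located.
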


We prove Theorem \ref{theorem_optimality} in Appendix \ref{sec_proof_of_theorem_optimality}. The design of the turning point $ \tau $ in Eq. \eqref{eq_k_minimum} and the system of equations characterized by Eq. \eqref{eq_system_of_equations} identify the sufficient conditions that once satisfied, $ \textsf{TOS}_{\boldsymbol{\lambda}^*} $ achieves the optimal competitive ratio of all deterministic online algorithms. For this reason, we refer to Eq. \eqref{eq_system_of_equations} as the \textbf{system of sufficient equations} (\textbf{SoSE}). We give the following remarks to explain the intuitions of Theorem \ref{theorem_optimality}. 
\begin{itemize}
	\item \textbf{Intuition of Eq. \eqref{eq_k_minimum}}.
	The definition of admission thresholds requires $ \lambda_{\tau+1}\geq p_{\min} $ for all $ \tau \in [\ubar{k}] $, meaning that  $ f^*(\lambda_{\tau+1}^*) \geq f^*( p_{\min})$ holds since $ f^* $ is strictly increasing. For any $ \alpha\in [1,+\infty) $, Proposition \ref{theorem_k_alpha} argues that the turning point $ \tau $ must be an integer in $ \mathcal{T}^{(\alpha)} $,  whose lower bound is $ g^{\textsf{inv}}(\frac{1}{\alpha} f^{*}(p_{\min})) - 1 $. Our result shows that for any $ \alpha\in [1,+\infty) $, the turning point $ \tau $ should be as small as possible (to be proved in Section \ref{sec_proof_of_theorem_optimality}). Since this holds for general $ \alpha \in [1,+\infty)$, it holds for $\alpha =  \textsf{CR}_f^*(\rho, k) $ as well.  Thus, Eq. \eqref{eq_k_minimum} follows.
	
	\item \textbf{Intuition of Eq. \eqref{eq_system_of_equations}}.  If the sequence of positive real numbers $ \{\lambda_{\tau+1}^*, \lambda_{\tau+2}^*,\cdots, \lambda_{\bar{k}-1}^*, \lambda_{\bar{k}}^*\} $ satisfy the \textbf{SoSE} in Eq. \eqref{eq_system_of_equations}, it can be easily shown that the equalities hold in the sufficient inequalities given by Eq. \eqref{eq_OPD_sufficiency}, which thus indicates that $ \textsf{TOS}_{\boldsymbol{\lambda}^*} $ is $ \textsf{CR}_f^*(\rho, k) $-competitive. 
	
	\item \textbf{Uniqueness of $ \boldsymbol{\lambda}^* $}. Proposition \ref{theorem_uniqueness} shows that $ \textsf{SoE}(\bm{\chi}^{(\tau)}) $ has a unique solution for any $ \tau\in \{0,1,\cdots,\ubar{k}-1\} $. When the turning point  $ \tau  $ is given by Eq. \eqref{eq_k_minimum}, it can be shown that $ \tau $ is at least $ 0 $ (when $ \textsf{CR}_f^*(\rho, k) = +\infty $), and at most $ \ubar{k} - 1 $ (when $ \textsf{CR}_f^*(\rho, k) = 1 $). Thus, the \textbf{SoSE} in Eq. \eqref{eq_system_of_equations} has a unique solution, meaning that $ \boldsymbol{\lambda}^* $ uniquely exists.
	
	\item \textbf{Computation of $ \boldsymbol{\lambda}^* $}. There is no closed-form solution for Eq. \eqref{eq_system_of_equations}  in general. To obtain the optimal admission threshold  $ \boldsymbol{\lambda}^* $, one needs to solve the \textbf{SoSE} in Eq. \eqref{eq_system_of_equations} by  searching over the one-dimensional space of $ \textsf{CR}_f^*(\rho, k) \in [1,+\infty) $. Note that this can be performed in an offline fashion by various off-the-shelf numerical algorithms such as bisection searching, and thus there is no complexity issue here.
\end{itemize}

\subsection{Lower Bound for Randomized Algorithms}
\label{section_lower_bound}

In Theorem \ref{theorem_optimality}, $ \textsf{CR}_f^*(\rho,k) $ is written as a function of $ \rho  $ and $ k $. Intuitively,  $ \textsf{CR}_f^*(\rho,k) $ is monotonically increasing in $ \rho\in [1,+\infty) $ and monotonically decreasing in $ k \geq 1 $. Theorem \ref{theorem_hardness_results_general} below shows that $ \textsf{CR}_f^*(\rho,k) $ is lower bounded by  $ \textsf{CR}_f^{\textsf{lb}}(\rho,k) $ for all $ \rho\in [1,+\infty) $ and $ k\geq 1 $, where $ \textsf{CR}_f^{\textsf{lb}}(\rho,k) $ is the lower bound of competitive ratios that even randomized algorithms cannot overcome.

\begin{theorem}[Lower Bound] 
	\label{theorem_hardness_results_general}
	For any given setup $ \mathcal{S} = \{f, p_{\min}, p_{\max}, k\} $, no online algorithms (possibly randomized) is $ \normalfont \big( \textsf{CR}_f^{\textsf{lb}}(\rho,k) -\epsilon\big)$-competitive for any $ \epsilon > 0 $, where $ \textsf{CR}_f^{\textsf{lb}}(\rho,k) $ is given by
	\begin{equation}\label{eq_CR_lb_general}
		\normalfont
		\textsf{CR}_f^{\textsf{lb}}(\rho,k) = \frac{p_{\min} \ubar{k} - f(\ubar{k})  }{p_{\min}\gamma^{(1)} - f(\gamma^{(1)})}, 
	\end{equation}
	where $ \gamma^{(1)} $ is a real value within $ (0, \ubar{k}] $. Specifically, let us define the right-hand-side of Eq. \eqref{eq_CR_lb_general} as $F(\gamma^{(1)})$ to indicate that it is an explicit function of $ \gamma^{(1)}\in (0, \ubar{k}] $. Together with $ \big\{ \gamma^{(\ell)} \big\}_{\ell = \{2, \cdots, \bar{k}-\ubar{k} + 2\} } $, they form a unique set of increasing positive real numbers (i.e., $ 0 < \gamma^{(1)} < \gamma^{(2)} < \cdots < \gamma^{(\bar{k}-\ubar{k} + 1)} < \gamma^{(\bar{k}-\ubar{k} + 2)} =  \bar{k}  $) that satisfy
	\begin{equation} \label{eq_beta_gamma} 
		\frac{ q^{(\ell + 1)}(\ubar{k} + \ell -1)}{\exp\Big(\frac{F(\gamma^{(1)})}{\ubar{k} + \ell -1}\gamma^{(\ell + 1)}\Big)}  - 
		\frac{q^{(\ell)}(\ubar{k} + \ell -1)}{ \exp\Big(\frac{F(\gamma^{(1)})}{\ubar{k} + \ell -1} \gamma^{(\ell)} \Big) } 
		= \int_{\gamma^{(\ell)}}^{\gamma^{(\ell+1)}} \frac{ F(\gamma^{(1)}) f'(y)}{\exp\Big(\frac{F(\gamma^{(1)})}{\ubar{k} + \ell -1}  y \Big)}  dy, \quad  \forall \ell = [\bar{k} - \ubar{k} + 1],
	\end{equation}
	where $ q^{(\ell)} = c_{\ubar{k} + \ell -1} $ for $ \ell = \{2,3,\cdots, \bar{k}-\ubar{k} + 1\}  $,  $ q^{(1)}  = p_{\min}$,  and $ q^{(\bar{k}-\ubar{k} + 2)} = p_{\max} $.
\end{theorem}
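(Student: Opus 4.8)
The plan is to adapt the standard adversarial/averaging argument for online search-type problems to the setting with convex production costs, building the ``hard'' instances out of the same price levels that appear in the admission threshold. Concretely, I would work with a family of arrival instances $\mathcal{I}_1 \subset \mathcal{I}_2 \subset \cdots$ of increasing difficulty: $\mathcal{I}_\ell$ first presents $\ubar{k}$ buyers at price $p_{\min}$, then a block of buyers at the marginal-cost levels $c_{\ubar{k}+1}, c_{\ubar{k}+2}, \dots, c_{\ubar{k}+\ell-1}$ (one ``price jump'' per additional unit), and finally a large block of buyers at $p_{\max}$; the adversary stops the sequence after any one of these blocks. Because the algorithm must be $\alpha$-competitive \emph{simultaneously} against every prefix, the same worst-case reasoning used in Proposition \ref{theorem_k_alpha} and the four scenarios of Proposition \ref{theorem_CR_threshold} forces the algorithm (even a randomized one, via its expected acceptance counts) to have accepted ``enough'' profit after each block. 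Writing $\textsf{ALG} \ge \frac{1}{\alpha}\textsf{OPT}$ for each of these prefixes, and using $\textsf{OPT}(\mathcal{I}_\ell) = f^*(\cdot)$ evaluated at the current top price via Lemma \ref{theorem_conjugate}, produces a chain of equalities at the critical $\alpha$ — exactly the discrete analogue of the SoSE in Eq. \eqref{eq_system_of_equations} — which I will call the SoNE.

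The second step is to pass from the discrete SoNE to the mixed differential/integral system in Eq. \eqref{eq_beta_gamma}. Here I would introduce, for each ``level'' $\ell$, a continuous variable $\gamma^{(\ell)}$ representing the (fractional, in expectation) cumulative amount the algorithm has produced when the price first reaches the $\ell$-th level, and treat the algorithm's acceptance rule as a monotone map from price to cumulative production. Differentiating the competitiveness constraint $\textsf{ALG}(\text{price level }y) \ge \frac{1}{\alpha} f^*(y)$ with respect to the running price — and using that $\frac{d}{dy} f^*(y) = \Gamma(y)$ away from the breakpoints (a consequence of the envelope characterization in Lemma \ref{theorem_conjugate}) — yields a linear first-order ODE in the cumulative-production variable whose integrating factor is $\exp\!\big(-\frac{F(\gamma^{(1)})}{\ubar{k}+\ell-1}\, y\big)$; integrating this ODE over $[\gamma^{(\ell)},\gamma^{(\ell+1)}]$ and matching the jump contributions from the discrete price levels $q^{(\ell)}$ gives precisely Eq. \eqref{eq_beta_gamma}. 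The boundary conditions $\gamma^{(\bar k-\ubar k+2)} = \bar k$ (the algorithm can never profitably exceed $\bar k$ units) and $q^{(1)}=p_{\min}$, $q^{(\bar k-\ubar k+2)}=p_{\max}$ close the system, and the top-level equation collapses to $F(\gamma^{(1)}) = \big(p_{\min}\ubar{k}-f(\ubar{k})\big)/\big(p_{\min}\gamma^{(1)}-f(\gamma^{(1)})\big)$, which is Eq. \eqref{eq_CR_lb_general}.

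Finally I would establish existence and uniqueness of the increasing solution $0<\gamma^{(1)}<\cdots<\gamma^{(\bar k-\ubar k+2)}=\bar k$. The strategy is a shooting argument: fix the candidate ratio $c = F(\gamma^{(1)})$ as the unknown, solve the ODE/jump system forward level by level starting from $\gamma^{(1)}$ (which is itself determined from $c$ by inverting the explicit function $F$), and show that the terminal value $\gamma^{(\bar k-\ubar k+2)}(c)$ is continuous and strictly monotone in $c$, hence hits the target $\bar k$ exactly once; monotonicity of each $\gamma^{(\ell+1)}$ as a function of $\gamma^{(\ell)}$ follows from the fact that the integrand in Eq. \eqref{eq_beta_gamma} is strictly positive (since $f$ is increasing) and the left-hand side is strictly increasing in $\gamma^{(\ell+1)}$. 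The impossibility statement then follows in the usual way: if some (randomized) algorithm were $\big(\textsf{CR}_f^{\textsf{lb}}(\rho,k)-\epsilon\big)$-competitive, running it against the family $\{\mathcal{I}_\ell\}$ and taking expectations would yield a solution of the SoNE with ratio strictly below $\textsf{CR}_f^{\textsf{lb}}$, contradicting uniqueness. I expect the main obstacle to be the bookkeeping at the interface between the discrete price jumps $c_{\ubar{k}+\ell-1}$ and the continuous integration — in particular, justifying that the expected-production variable behaves monotonically and that no randomized ``smoothing'' across a breakpoint can beat the bound — together with verifying that the three cases \textsc{High-Value}, \textsc{Low-Value}, and \textsc{Mix-Value} (which control whether $\ubar k = \bar k$, $\ubar k \le \bar k \le k$, or $\ubar k < \bar k = k$) all reduce to the same system with the stated index ranges.
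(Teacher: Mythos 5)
Your high-level architecture --- hard instances, necessary constraints, pass to an ODE, solve piecewise, then a shooting argument for uniqueness --- matches the paper's plan, and the ODE, integrating factor $\exp\big(-\tfrac{F(\gamma^{(1)})}{\ubar{k}+\ell-1}y\big)$, and resulting system are all correct. However, two steps in the middle do not go through as written.

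First, your hard instances $\mathcal{I}_\ell$ present buyers only at the $\bar{k}-\ubar{k}+2$ discrete prices $p_{\min}, c_{\ubar{k}+1},\dots,c_{\bar{k}}, p_{\max}$. That only forces $\mathbb{E}[\textsf{ALG}] \ge \tfrac{1}{\alpha}f^*(p)$ at those finitely many prices, and a constraint known on a finite set cannot be differentiated to yield an ODE. The ODE requires the competitiveness constraint to hold at \emph{every} $p\in(p_{\min},p_{\max}]$, which the paper enforces by running the algorithm against the densely discretized family $\mathcal{I}_n^{(\varepsilon)}$ (Definition~\ref{def_I_n}), which presents $\Gamma(p^{(\varepsilon)}_\ell)$ buyers at each grid price $p^{(\varepsilon)}_\ell$ with step $\varepsilon$, and then sends $\varepsilon\to 0^+$. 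Without this dense adversary, the ``monotone map from price to expected cumulative production'' you want to differentiate is not pinned down between your breakpoints, and the subsequent integration over $[\gamma^{(\ell)},\gamma^{(\ell+1)}]$ is not justified.

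Second, the plan to ``pass from the discrete SoNE to the continuous system'' is not a valid derivation. The discrete SoNE (Eq.~\eqref{eq_system_equations_omega_alpha}) characterizes the \emph{deterministic} optimum $\textsf{CR}_f^*(\rho,k)$, which is strictly larger than $\textsf{CR}_f^{\textsf{lb}}(\rho,k)$ at finite $k$; the two only coincide as $\bar{k}\to\infty$ (Theorem~\ref{theorem_asymptotic}). So you cannot obtain the randomized bound by ``continuizing'' the deterministic one. The randomized argument must start from an \emph{average} selection function $\bar\psi$ taking real (not integer) values, and one must show --- via Jensen's inequality together with the concavity of $i\mapsto p\,i - f(i)$ implied by the convexity of $f$ (Proposition~\ref{theorem_necessary_random}) --- that the expected production trajectory of any $\alpha$-competitive randomized algorithm satisfies the continuous constraint. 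That concavity step is precisely what rules out the ``randomized smoothing across a breakpoint'' you flag as a worry. Once the constraint on $\bar\psi$ is established for all $p$, the differentiation, the piecewise integration yielding Eq.~\eqref{eq_beta_gamma}, the boundary identifications $q^{(1)}=p_{\min}$, $q^{(\bar{k}-\ubar{k}+2)}=p_{\max}$, $\gamma^{(\bar{k}-\ubar{k}+2)}=\bar{k}$, and the shooting/bisection argument for uniqueness go through essentially as you describe, and they do cover all three cases uniformly since the only role of \textsc{High}/\textsc{Low}/\textsc{Mix-Value} is to set $\ubar{k}$ and $\bar{k}$.
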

\begin{proof}
	Theorem \ref{theorem_hardness_results_general}  is proved in Appendix \ref{proof_theorem_hardness_results}.  The idea is to construct a series of ``hard" arrival instances and then prove that if there exists any randomized algorithm with a bounded competitive ratio,  Eq. \eqref{eq_beta_gamma} must have a feasible solution in variables  $ \big\{\gamma^{(\ell)} \big\}_{\forall \ell} $  in order to comply with the definition of $ \alpha $ in Eq. \eqref{equation_alpha}. We then prove that any competitive ratio smaller than $ \textsf{CR}_f^{\textsf{lb}}(\rho,k) $ will inevitably lead to the non-existence of such $ \big\{ \gamma^{(\ell)} \big\}_{\forall \ell}  $, leading to the conclusion that the competitive ratio is tightly lower bounded by $ \textsf{CR}_f^{\textsf{lb}}(\rho,k) $. 
\end{proof}

In some cases, the system of equations in Eq. \eqref{eq_beta_gamma} can be greatly simplified. For example, in the \textsc{High-Value} case when $ p_{\max}\geq p_{\min} > c_k $, we have $ \ubar{k} = \bar{k} = k $, and thus Eq. \eqref{eq_beta_gamma} reduces to having  $ \gamma^{(1)} $ as the only variable. A formal statement of this result is given in Corollary \ref{theorem_hardness_results} below.

\begin{corollary}[\textsc{Lower Bound: The High-Value Case}]
	\label{theorem_hardness_results}
	Given a setup $ \mathcal{S} $ with $ p_{\max}\geq p_{\min} > c_k $,  no online algorithm (possibly randomized) is $ \normalfont \big( F(\gamma^{(1)}) -\epsilon\big)$-competitive for any $ \epsilon > 0 $, where $ F(\gamma^{(1)}) $ is defined as the right-hand-side of Eq. \eqref{eq_CR_lb_general} and $ \gamma^{(1)} \in (0,\ubar{k}] $ is the unique root to the following equation:
	\begin{equation}\label{eq_gamma}
		\frac{p_{\max}}{\exp(F(\gamma^{(1)}))} - \frac{p_{\min}}{\exp\left(\gamma^{(1)} F(\gamma^{(1)})/k\right)} =  \int_{\gamma^{(1)}}^{k} \frac{F(\gamma^{(1)}) f'(y)}{k\exp\left(yF(\gamma^{(1)})/k\right)} dy.
	\end{equation}
\end{corollary}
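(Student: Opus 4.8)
The plan is to obtain Corollary \ref{theorem_hardness_results} as a direct specialization of Theorem \ref{theorem_hardness_results_general} to the \textsc{High-Value} regime, in which the system \eqref{eq_beta_gamma} degenerates to a single scalar equation. First I would record the structural consequences of the hypothesis $p_{\max}\ge p_{\min}>c_k$. Since $c_{k+1}=+\infty$, every price $p\in[p_{\min},p_{\max}]$ lies in $[c_k,c_{k+1})$, so $\Gamma(p)=k$ by \eqref{eq_def_Gamma}; in particular $\ubar{k}=\Gamma(p_{\min})=k$ and $\bar{k}=\Gamma(p_{\max})=k$ by \eqref{eq_M}, and Lemma \ref{theorem_conjugate} gives $f^{*}(p)=pk-f(k)$ throughout this range. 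Hence $\bar{k}-\ubar{k}+1=1$ and $\bar{k}-\ubar{k}+2=2$, so in \eqref{eq_beta_gamma} the index $\ell$ ranges over the singleton $\{1\}$, the only boundary variable is $\gamma^{(2)}=\bar{k}=k$, the coefficient $\ubar{k}+\ell-1$ equals $k$, and $q^{(1)}=p_{\min}$, $q^{(2)}=p_{\max}$ by the conventions in the theorem statement. The unique remaining unknown is thus $\gamma^{(1)}\in(0,\ubar{k}]=(0,k]$, and the bound $\textsf{CR}_f^{\textsf{lb}}(\rho,k)$ in \eqref{eq_CR_lb_general} collapses to exactly $F(\gamma^{(1)})=\frac{p_{\min}k-f(k)}{p_{\min}\gamma^{(1)}-f(\gamma^{(1)})}$, which is the bound asserted by the corollary.

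Second, I would carry out the substitution $\ell=1$ in \eqref{eq_beta_gamma}: using $\gamma^{(2)}=k$ and $(F(\gamma^{(1)})/k)\cdot k=F(\gamma^{(1)})$, the first left-hand term becomes $p_{\max}k/\exp(F(\gamma^{(1)}))$, the second becomes $p_{\min}k/\exp(\gamma^{(1)}F(\gamma^{(1)})/k)$, and the right-hand integral becomes $\int_{\gamma^{(1)}}^{k} F(\gamma^{(1)})f'(y)\exp(-yF(\gamma^{(1)})/k)\,dy$. Dividing through by $k$ reproduces \eqref{eq_gamma} verbatim. Existence and uniqueness of the root $\gamma^{(1)}\in(0,k]$ are then inherited from Theorem \ref{theorem_hardness_results_general}, which already asserts a \emph{unique} increasing tuple $(\gamma^{(1)},\dots,\gamma^{(\bar{k}-\ubar{k}+2)})$ solving \eqref{eq_beta_gamma}; here that tuple is simply $(\gamma^{(1)},k)$, so $\gamma^{(1)}$ is the unique solution of \eqref{eq_gamma}.

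As a cross-check independent of the general theorem, I would note that the scalar map $\Phi(\gamma)$ defined as the left-hand minus the right-hand side of \eqref{eq_gamma} is strictly negative as $\gamma\to 0^+$ (there $F(\gamma)\to+\infty$, which sends the first term to $0$ while the other term and the integral stay finite and nonnegative, using $f(0)=0$, $f'(0^+)\le c_1<p_{\min}$, and convexity of $f$), while $\Phi(k)=(p_{\max}-p_{\min})/e\ge 0$ with equality only when $\rho=1$ (there $F(k)=1$ and the integral vanishes); so a root exists by the intermediate value theorem, and the $\rho=1$ boundary case correctly recovers $\gamma^{(1)}=k$, $F=1$, consistent with $\textsf{CR}_f^{*}(1,k)=1$. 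Well-definedness of $F$ on $(0,k]$ is guaranteed by $p_{\min}>c_1$ together with $\ubar{k}=k$: the concave map $\gamma\mapsto p_{\min}\gamma-f(\gamma)$ vanishes at $0$ and equals $p_{\min}k-f(k)>0$ at $k$, hence stays positive on $(0,k]$.

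The main obstacle is essentially illusory for the corollary itself: all the substance — the adversarial instance construction, the passage from the discrete Yao-type lower bound to the integral equation, and the solvability/monotonicity analysis of the resulting system — is carried once and for all by Theorem \ref{theorem_hardness_results_general}. Within the corollary the only genuine hazards are bookkeeping: correctly invoking the convention $c_{k+1}=+\infty$ to force $\ubar{k}=\bar{k}=k$, handling the off-by-one in $\bar{k}-\ubar{k}+2=2$, and not overlooking the degenerate values $k=1$ or $\rho=1$. The one point that would require real work in a fully self-contained proof (as opposed to by specialization) is \emph{uniqueness}, not merely existence, of the root of \eqref{eq_gamma}, which would demand establishing strict monotonicity of $\Phi$.
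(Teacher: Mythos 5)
Your proof is correct and takes the same route as the paper: the paper's own proof is simply the one-line observation that the High-Value hypothesis forces $\ubar{k}=\bar{k}=k$, after which substituting $q^{(1)}=p_{\min}$, $q^{(2)}=p_{\max}$, $\gamma^{(2)}=k$ into Eq.~\eqref{eq_beta_gamma} (and dividing by $k$) yields Eq.~\eqref{eq_gamma}. Your additional intermediate-value cross-check and the careful verification that $\Gamma(p)=k$ throughout $[p_{\min},p_{\max}]$ go beyond the paper's terse argument but add only reassurance, not a different method.
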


\begin{proof}
	As illustrated in Fig. \ref{fig_three_cases}(a), the \textsc{High-Value} case implies $ \ubar{k} = \bar{k} = k $. Thus, Corollary \ref{theorem_hardness_results} follows Theorem \ref{theorem_hardness_results_general} by substituting $ q^{(1)} = p_{\min} $, $ q^{(2)} = p_{\max} $, and $ \gamma^{(2)} = k $ into Eq. \eqref{eq_beta_gamma}.
\end{proof}

Similar to Eq. \eqref{eq_system_of_equations}, we remark that Eq. \eqref{eq_beta_gamma} also has no closed-form solution in general --- even in the \textsc{High-Value} case with linear marginal costs (or equivalently, quadratic production costs). While finding the exact solution to Eq. \eqref{eq_beta_gamma} requires solving a system of $ \bar{k} - \ubar{k} + 1 $  equations with $ \bar{k} - \ubar{k} + 1 $ variables (i.e., $ \big\{ \gamma^{(\ell)} \big\}_{\ell = [\bar{k}-\ubar{k} + 1] } $), a numerical solution can easily be obtained via a one-dimensional bisection search over $ \gamma^{(1)}\in (0,\ubar{k}]$. 
For discussions of how to solve Eq. \eqref{eq_beta_gamma} numerically, please refer to Remark \ref{remark_computation_gamma_1} in Appendix \ref{proof_theorem_hardness_results}.

\subsection{Asymptotic Convergence of $ \textsf{CR}_f^*(\rho, k) $ and $ \textsf{CR}_f^{\textsf{lb}}(\rho,k) $}
\label{section_asymptotic_properties}

We argue that the lower bound $ \textsf{CR}_f^{\textsf{lb}}(\rho,k)  $ characterized by Theorem \ref{theorem_hardness_results_general} is tight, especially for setups with a large $ k $. This result is formally proved by Theorem \ref{theorem_asymptotic} below, which shows that $ \textsf{CR}_f^*(\rho, k) $  and $ \textsf{CR}_f^{\textsf{lb}}(\rho,k) $ are asymptotically equivalent to each other  when $ \bar{k} \rightarrow +\infty$ (recall that $ \bar{k} \rightarrow +\infty$ implies $ k\rightarrow +\infty $, but not vice versa).

\begin{theorem}[\textsc{Asymptotic Lower Bound}]\label{theorem_asymptotic}
	For any given setup $ \mathcal{S} = \{f, p_{\min}, p_{\max}, k\} $,
	$\normalfont \textsf{TOS}_{\boldsymbol{\lambda}^{\star}} $ is asymptotically optimal among all online algorithms (including those with randomization) when $ \bar{k} $ is sufficiently large, namely, 
	\begin{equation}\normalfont 
		\lim\limits_{\bar{k} \rightarrow \infty} \textsf{CR}_f^*(\rho, k) = \lim\limits_{ \bar{k} \rightarrow \infty} \textsf{CR}_f^{\textsf{lb}}(\rho,k) = \underline{\textsf{CR}}_f(\rho),
	\end{equation}
	where $ \normalfont \underline{\textsf{CR}}_f(\rho) $ is the asymptotic lower bound that depends on $ f $ and $ \rho $ only.
\end{theorem}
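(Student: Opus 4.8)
The plan is to show that the two systems of equations governing $\textsf{CR}_f^*(\rho,k)$ (the \textbf{SoSE} of Eq.~\eqref{eq_system_of_equations}) and $\textsf{CR}_f^{\textsf{lb}}(\rho,k)$ (Eq.~\eqref{eq_beta_gamma}) both converge, as $\bar k\to\infty$, to the same continuous limiting object --- an ordinary differential equation --- and that this ODE has a unique solution parameter $\underline{\textsf{CR}}_f(\rho)$ depending only on $f$ and $\rho$. The guiding intuition, stated in the ``Overview of Techniques'' section, is that when $\bar k$ is large the weight of any single unit becomes negligible, so the discrete chain of reserved-profit equalities turns into an integral/differential relation.

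\textbf{Step 1: Continuum rescaling.} First I would introduce a rescaled continuous variable: write $u = i/\bar k \in [0,1]$ for the (normalized) unit index, let $\hat f(u) = f(\bar k u)/(\text{suitable scale})$ and $\hat c(u)$ be the corresponding limiting marginal-cost density, and let $\lambda(u)$ be the limit of the discrete threshold sequence $\lambda_i^*$. The min-profit function $g$ and the conjugate $f^*$ from Definitions~\ref{def_g} and~\ref{def_conjugate} should each have well-defined continuum limits (the conjugate becoming the classical Fenchel conjugate of the limiting cost, via Lemma~\ref{theorem_conjugate}). Under this rescaling, a single step of the telescoping chain in Eq.~\eqref{eq_system_of_equations}, namely $\dfrac{f^*(\lambda_{i+1}^*)-f^*(\lambda_i^*)}{\lambda_i^*-c_{i+1}} = \textsf{CR}$, has numerator $O(1/\bar k)$ and denominator $O(1)$, so dividing through and passing to the limit yields a first-order ODE relating $\tfrac{d}{du}f^*(\lambda(u))$ to $\lambda(u)-\hat c(u)$ with the constant $\textsf{CR}$ as a parameter, together with boundary conditions $\lambda$ at the turning point equal to $p_{\min}$ and $\lambda$ at $u=\bar k/\text{scale}$ equal to $p_{\max}$. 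I would also show that the turning point $\tau$ from Eq.~\eqref{eq_k_minimum}, being at most $\ubar k - 1$ and hence $o(\bar k)$ relative to $\bar k$ in the regimes where it matters (or converging to a fixed fraction), contributes only a negligible correction to the integral.

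\textbf{Step 2: Same ODE from the lower-bound side.} Next I would perform the analogous rescaling on Eq.~\eqref{eq_beta_gamma}. Note that Eq.~\eqref{eq_beta_gamma} already contains an integral $\int_{\gamma^{(\ell)}}^{\gamma^{(\ell+1)}}$ and exponential factors $\exp\!\big(\tfrac{F(\gamma^{(1)})}{\ubar k+\ell-1}\,y\big)$; as $\bar k\to\infty$ the index $\ubar k + \ell - 1$ grows, the spacing $\gamma^{(\ell+1)}-\gamma^{(\ell)}$ shrinks, and the telescoped sum of these relations becomes a single integral equation. Differentiating that integral equation recovers exactly the same ODE as in Step 1 (after identifying $F(\gamma^{(1)})$ in the limit with $\textsf{CR}$). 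The key algebraic observation to verify is that the exponential integrating factor $\exp(\cdot/(\ubar k+\ell-1))$ appearing in the lower-bound construction is precisely the integrating factor one gets when solving the ODE from Step 1 --- this is the structural ``coincidence'' the paper alludes to, that SoSE and SoNE are built independently but match.

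\textbf{Step 3: Existence, uniqueness, and identification of $\underline{\textsf{CR}}_f(\rho)$.} Having reduced both sides to the same boundary value problem, I would argue: for each candidate value of the parameter $\textsf{CR}$, the ODE with initial condition $\lambda(\cdot)=p_{\min}$ has a unique solution trajectory (Picard--Lindelöf / monotonicity in the parameter), and the terminal value $\lambda(\text{endpoint})$ is a continuous, strictly monotone function of $\textsf{CR}$; hence there is exactly one $\textsf{CR}$, call it $\underline{\textsf{CR}}_f(\rho)$, for which the terminal condition $\lambda=p_{\max}$ is met. Because the ODE and both boundary conditions depend on the data only through the shape of $\hat f$ and through $p_{\min},p_{\max}$ (equivalently $\rho$, after normalization), the limiting value depends on $f$ and $\rho$ only, not on $k$. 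Finally I would invoke a sandwiching/monotonicity argument --- $\textsf{CR}_f^*(\rho,k)\ge\textsf{CR}_f^{\textsf{lb}}(\rho,k)\ge\underline{\textsf{CR}}_f(\rho)$ from the introduction, plus the two convergence claims --- to conclude both limits equal $\underline{\textsf{CR}}_f(\rho)$, and to recover the closed form $1+\ln(\rho(a))$ in the linear-cost special case as a sanity check.

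\textbf{Main obstacle.} I expect the hardest part to be making the two discrete-to-continuous limits rigorous \emph{uniformly} --- in particular, controlling the $O(1/\bar k)$ error terms in the telescoping chains and showing the discrete threshold sequences $\lambda_i^*$ and the lower-bound breakpoints $\gamma^{(\ell)}$ actually converge (not just formally) to solutions of the ODE, rather than oscillating or drifting. The piecewise-linear, non-smooth nature of $f^*$ (it is only piecewise linear by Lemma~\ref{theorem_conjugate}) means one cannot naively Taylor-expand; I would need an equicontinuity/Arzelà--Ascoli argument on the rescaled threshold functions together with a stability estimate for the ODE to pin down the limit. A secondary technical point is handling the three cases (\textsc{High-Value}, \textsc{Low-Value}, \textsc{Mix-Value}) separately, since $\ubar k$ versus $\bar k$ versus $k$ scale differently and the turning-point contribution and the endpoint of the integral differ across cases --- but the \textsc{High-Value} case (where $\ubar k=\bar k=k$ and Eq.~\eqref{eq_beta_gamma} already collapses to the single equation~\eqref{eq_gamma}) should serve as the clean prototype from which the others follow with bookkeeping.
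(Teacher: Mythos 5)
Your proposal follows essentially the same route as the paper: both rescale the discrete index into a continuous variable, show that the \textbf{SoSE} recursion (and separately the lower-bound system) converges to a first-order ODE boundary-value problem as $\bar{k}\rightarrow\infty$, and identify $\underline{\textsf{CR}}_f(\rho)$ as the unique parameter for which that BVP is solvable. The one shortcut the paper takes that you do not is to invoke Lemma~\ref{theorem_Tan2020} (citing Tan et al.~\cite{Tan_ORA_2020}) to characterize $\underline{\textsf{CR}}_f(\rho)$ directly as the solvability parameter of an ODE in the infinitesimal-demand regime --- so the paper only needs to match the rescaled \textbf{SoSE}/\textbf{SoNE} limits to that existing ODE (unifying the three value cases via $\vartheta = \lim \bar{k}/k$), whereas your Step~3 re-derives existence and uniqueness from scratch; that is more self-contained but redundant given the cited result, and the rigor concerns you flag about the discrete-to-continuous passage apply equally to the paper's own treatment.
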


The proof of  Theorem \ref{theorem_asymptotic} is given in Appendix \ref{proof_of_theorem_asymptotic}. Here, we briefly explain the intuitions. When $ \bar{k} $ is sufficiently large, the importance or weight of each unit of the asset becomes arbitrarily low. Using this property, we can show that the \textbf{SoSE} in Eq. \eqref{eq_system_of_equations} asymptotically converges to an ordinary differential equation (ODE). For any given cost function $ f $ in the form of Eq. \eqref{equation_f} and fluctuation ratio $ \rho \in [1, +\infty) $, the asymptotic lower bound  $ \underline{\textsf{CR}}_f(\rho) $ can be numerically calculated by solving the resulting ODE with some boundary conditions. Due to space constraints, we defer the details to Appendix \ref{proof_of_theorem_asymptotic} as well. 

Combining the results of Theorem \ref{theorem_optimality} and Theorem \ref{theorem_asymptotic}, we claim that  the threshold policy $ \textsf{TOS}_{\boldsymbol{\lambda}^*} $ is not only an optimal deterministic algorithm, but also asymptotically optimal among all online algorithms.

\subsection{Case Study: Linear Production Costs ($ f(y) = a y $ for $ y\in [0,k] $)}
\label{section_constant_packing_costs}

In this subsection, we use linear production costs as an example to illustrate Theorem \ref{theorem_optimality}, Theorem \ref{theorem_hardness_results_general}, and Theorem \ref{theorem_asymptotic}. In  particular, we assume $ f(y) = a y $ for $ y \in [0,k] $ and $ a $ is a non-negative constant (i.e., $ a \geq 0 $). In this case,  the sequence of marginal costs are all constant, i.e., $c_1 = c_2 =\cdots = c_k =  a $ (including  $ a=0 $ as a special case). For this reason, the \textsc{High-Value} case is the only active one as $ p_{\max}\geq p_{\min} > c_1 = c_k = a $ always holds (i.e., $ \ubar{k} = \bar{k} = k $).

\paragraph{(\textbf{Illustration of Theorem \ref{theorem_optimality}})}
When $ f(y) =  a y $,  we have $ g(i) = (p_{\min}-a)i$ and  $ f^*(p) = (p-a)k $. Substituting $ g $ and $ f^* $ into the \textbf{SoSE} in Eq. \eqref{eq_system_of_equations} leads to
\begin{align}\label{eq_system_equations_constant}
	\textsf{CR}_f^*(\rho, k) = \frac{k \lambda_{\tau+1}^*}{(p_{\min}-a)(\tau+1)}
	=  \frac{k(\lambda_{\tau+2}^*-\lambda_{\tau+1}^*)}{\lambda_{\tau+1}^* -a} = \cdots =  \frac{k(\lambda_k^* - \lambda_{k-1}^*)}{\lambda_{k-1}^* - a}.
\end{align}
Solving Eq. \eqref{eq_system_equations_constant} leads to the following analytical solution:
\begin{align}\label{eq_p_m}
	\lambda_i^* = \textsf{CR}_f^*(\rho, k)\cdot\Big(1 + \frac{\textsf{CR}_f^*(\rho, k)}{k}\Big)^{i-\tau-1}\cdot \frac{\tau+1}{k}\cdot (p_{\min} - a) + a, \quad  i = \{\tau+1,\cdots, k\}.
\end{align}
Based on Eq. \eqref{eq_k_minimum} and the definition of $ g^{\textsf{inv}} $ in Eq. \eqref{eq_inverse_g}, we have 
\begin{align*}
	\tau  = g^{\textsf{inv}}\Big(\frac{f^*(p_{\min})}{\textsf{CR}_f^*(\rho, k)} \Big)  = \big\lceil \frac{k}{\textsf{CR}_f^*(\rho, k)}\big\rceil - 1.
\end{align*}
Substituting $ \tau $ into Eq. \eqref{eq_p_m}, and further using the condition of $ \lambda_k^* = p_{\max} $ (the first bullet in Theorem \ref{theorem_optimality}), we reach to the following equation of $ \textsf{CR}_f^*(\rho, k) $:
\begin{align}\label{eq_CR_f_0}
	\normalfont
	\Big(1 + \frac{\textsf{CR}_f^*(\rho, k)}{k} \Big)^{k-\lceil \frac{k}{\textsf{CR}_f^*(\rho, k)}\rceil}\cdot 	\frac{\textsf{CR}_f^*(\rho, k)}{k}\cdot \big\lceil \frac{k}{\textsf{CR}_f^*(\rho, k)}\big\rceil 
	= \rho(a),
\end{align}
where $ \rho(a) $ is the shifted value of the fluctuation ratio $ 
\rho $ defined  as follows:
\begin{align}
	\label{eq_rho}
	\rho(a) \triangleq  \frac{p_{\max} - a}{p_{\min} - a}, \quad a \in [0,p_{\min}).
\end{align}
Recall that $ \rho(a) $ first appears in Eq. \eqref{CR_linear}.  Proposition \ref{theorem_uniqueness} implies that Eq. \eqref{eq_CR_f_0} has a unique  root in variable $ \textsf{CR}_f^*(\rho,k) \in  [1,+\infty) $, based on which the optimal threshold $ \boldsymbol{\lambda}^* $ readily follows Eq. \eqref{eq_p_m}. 

\paragraph{(\textbf{Illustration of Theorem \ref{theorem_hardness_results_general}})}
When $ f(y) = a y $, we have $ f'(y) = a $ and $ F(\gamma^{(1)}) = \frac{k}{\gamma^{(1)}} $.  In this case, Eq. \eqref{eq_beta_gamma} reduces to Eq. \eqref{eq_gamma} and can be written as:
\begin{align*}
	\frac{p_{\max}}{\exp(k/\gamma^{(1)})}  - \frac{p_{\min}}{\exp(1)} =  \frac{a}{\exp(1)} - \frac{a}{\exp(k/\gamma^{(1)})},
\end{align*}
which thus leads to the following solution
\begin{align}\label{eq_CR_lb_constant}
	\normalfont
	\gamma^{(1)} = \frac{k}{1 + \ln \big(\rho(a)\big)}, \quad \textsf{CR}_f^{\textsf{lb}}(\rho,k) = F(\gamma^{(1)})  = 1 + \ln \big(\rho(a)\big).
\end{align}
Based on Theorem \ref{theorem_hardness_results_general}, no online algorithm  can be  $ (1 + \ln (\rho(a)) - \epsilon) $-competitive for any $ \epsilon > 0 $. 

\paragraph{(\textbf{Illustration of Theorem \ref{theorem_asymptotic}})} 
Eq. \eqref{eq_CR_lb_constant} shows that $ \textsf{CR}_f^{\textsf{lb}}(\rho,k) $ is independent of $ k $, meaning that $ \textsf{CR}_f^{\textsf{lb}}(\rho,k) = \underline{\textsf{CR}}_f(\rho) = 1 + \ln(\rho(\sigma)) $. We emphasize that this is a special scenario as the sequence of marginal costs are constant. For general setups with increasing marginal costs, $ \textsf{CR}_f^{\textsf{lb}}(\rho,k) $ indeed depends on $ k $, and converges to $ \textsf{CR}_f^{\textsf{lb}}(\rho) $ when $ k \rightarrow +\infty$. 

\begin{corollary}[\textsc{Linear Production Costs}]\label{theorem_constant_packing_costs}
	Given $ \mathcal{S} $ with $ f(y) = a y $ for $ a \geq 0 $, the optimal threshold $ \boldsymbol{\lambda}^* $ can be constructed by Eq. \eqref{eq_p_m}, where the optimal competitive ratio $ \normalfont \textsf{CR}_f^*(\rho, k) $ is the unique root to Eq. \eqref{eq_CR_f_0}. Meanwhile, when $ k \rightarrow +\infty$, we have
	\begin{align*}\normalfont
		\lim\limits_{k\rightarrow \infty}\  \textsf{CR}_f^*(\rho, k) = \lim\limits_{k\rightarrow \infty}\ \textsf{CR}_f^{\textsf{lb}}(\rho,k) 
		= \underline{\textsf{CR}}_f(\rho) =  1 + \ln \big(\rho(a)\big).
	\end{align*}
\end{corollary}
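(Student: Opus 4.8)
The plan is to prove Corollary~\ref{theorem_constant_packing_costs} by specializing the three main theorems to the case $f(y)=ay$ and then establishing the limit $\lim_{k\to\infty}\textsf{CR}_f^*(\rho,k)=1+\ln(\rho(a))$ directly from Eq.~\eqref{eq_CR_f_0}. The first two bullets of the corollary (the closed-form of $\boldsymbol{\lambda}^*$ in Eq.~\eqref{eq_p_m} and the characterization of $\textsf{CR}_f^*(\rho,k)$ as the unique root of Eq.~\eqref{eq_CR_f_0}) follow immediately from Theorem~\ref{theorem_optimality} together with the algebra already carried out in the ``Illustration of Theorem~\ref{theorem_optimality}'' paragraph: substituting $g(i)=(p_{\min}-a)i$ and $f^*(p)=(p-a)k$ into the \textbf{SoSE} gives the recursion Eq.~\eqref{eq_system_equations_constant}, whose solution is Eq.~\eqref{eq_p_m}; imposing $\lambda_k^*=p_{\max}$ yields Eq.~\eqref{eq_CR_f_0}; and uniqueness of the root is inherited from Proposition~\ref{theorem_uniqueness}. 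Similarly, the lower-bound side $\textsf{CR}_f^{\textsf{lb}}(\rho,k)=1+\ln(\rho(a))$ for every finite $k$ is exactly the computation in Eq.~\eqref{eq_CR_lb_constant}, obtained by plugging $f'(y)=a$ and $F(\gamma^{(1)})=k/\gamma^{(1)}$ into Corollary~\ref{theorem_hardness_results} (itself the High-Value specialization of Theorem~\ref{theorem_hardness_results_general}, valid here since $\ubar{k}=\bar{k}=k$).

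The substantive step is the asymptotic claim, which by the sandwich $\textsf{CR}_f^*(\rho,k)\geq\textsf{CR}_f^{\textsf{lb}}(\rho,k)=1+\ln(\rho(a))$ (Theorem~\ref{theorem_hardness_results_general}) reduces to showing $\limsup_{k\to\infty}\textsf{CR}_f^*(\rho,k)\leq 1+\ln(\rho(a))$. I would argue this by a fixed-point/continuity analysis of Eq.~\eqref{eq_CR_f_0}. Write $\alpha_k\triangleq\textsf{CR}_f^*(\rho,k)$ and let $m_k\triangleq\lceil k/\alpha_k\rceil$, so that $m_k/k\to 1/\alpha_k$ up to an $O(1/k)$ error. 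Then the left-hand side of Eq.~\eqref{eq_CR_f_0} is
\[
\Big(1+\tfrac{\alpha_k}{k}\Big)^{k-m_k}\cdot\tfrac{\alpha_k}{k}\cdot m_k .
\]
Since $\alpha_k\in[1,\rho]$ is bounded (the $f=0$ bound $1+\ln\rho$ is an upper estimate, or one can use monotonicity in $k$ to extract a convergent subsequence $\alpha_{k}\to\alpha_\infty$), along such a subsequence $(1+\alpha_k/k)^{k-m_k}\to e^{\alpha_\infty(1-1/\alpha_\infty)}=e^{\alpha_\infty-1}$ and $\tfrac{\alpha_k}{k}m_k\to\alpha_\infty\cdot\tfrac1{\alpha_\infty}=1$. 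Hence Eq.~\eqref{eq_CR_f_0} passes to the limit as $e^{\alpha_\infty-1}=\rho(a)$, i.e. $\alpha_\infty=1+\ln(\rho(a))$. Because the limit is the same for every convergent subsequence, the full sequence converges, giving $\lim_{k\to\infty}\textsf{CR}_f^*(\rho,k)=1+\ln(\rho(a))=\underline{\textsf{CR}}_f(\rho)$; combined with Theorem~\ref{theorem_asymptotic} this also identifies $\underline{\textsf{CR}}_f(\rho)$.

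The main obstacle, and the place requiring care, is the rigorous control of the ceiling term $\lceil k/\alpha_k\rceil$ and of the exponent $k-\lceil k/\alpha_k\rceil$ as $k\to\infty$. One must verify that $\alpha_k$ stays uniformly bounded away from $0$ and $\infty$ (boundedness below by $1$ is definitional; boundedness above follows since a larger competitive ratio only makes the left side of Eq.~\eqref{eq_CR_f_0} larger, so $\alpha_k$ cannot exceed the root of the limiting equation plus a vanishing correction — a monotonicity argument in Eq.~\eqref{eq_CR_f_0} in the variable $\alpha$ for fixed $k$ closes this), so that the rounding error $|m_k-k/\alpha_k|\le 1$ contributes only $O(1/k)$ to $m_k/k$ and an $O(\alpha_k)=O(1)$ additive perturbation to the exponent, whence $(1+\alpha_k/k)^{\pm O(1)}\to 1$. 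A clean way to package this is to show the map $\alpha\mapsto\big(1+\alpha/k\big)^{k-\lceil k/\alpha\rceil}\tfrac{\alpha}{k}\lceil k/\alpha\rceil$ is continuous and strictly increasing on $[1,\infty)$ with range covering $\rho(a)$ (giving existence/uniqueness of $\alpha_k$ and recovering Proposition~\ref{theorem_uniqueness} in this special case), and that it converges uniformly on compacts to $\alpha\mapsto e^{\alpha-1}$; standard convergence-of-roots arguments for monotone families then finish the proof. Everything else is routine substitution already spelled out in the surrounding text.
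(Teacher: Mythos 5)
Your proof is correct, but it takes a genuinely different route from the paper's. The paper establishes the limit by a direct, explicit two-sided estimate: it observes that for any $\alpha\in[1,+\infty)$ and $k$ large,
\[
\Big(1+\tfrac{\alpha}{k}\Big)^{k-\lceil k/\alpha\rceil}\cdot\tfrac{\alpha}{k}\cdot\big\lceil\tfrac{k}{\alpha}\big\rceil \;\geq\; \Big(1+\tfrac{\alpha}{k}\Big)^{k-k/\alpha} \;\geq\; e^{\alpha-1}\Big(1-\tfrac{\alpha^2}{k}\Big),
\]
and, plugging $\alpha=\textsf{CR}_f^*(\rho,k)$ into Eq.~\eqref{eq_CR_f_0}, derives the finite-$k$ sandwich $1+\ln\rho(a)\leq\textsf{CR}_f^*(\rho,k)\leq 1+\ln\rho(a)-\ln\!\big(1-\textsf{CR}_f^*(\rho,k)^2/k\big)$, from which the limit is immediate and even quantitative (an $O(\ln(1-\alpha^2/k))$ error). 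You instead use a compactness/subsequence argument: bound $\alpha_k$, extract a convergent subsequence, pass to the limit in Eq.~\eqref{eq_CR_f_0} term by term (controlling the ceiling via $\lceil k/\alpha_k\rceil=k/\alpha_k+O(1)$), identify the unique subsequential limit $1+\ln\rho(a)$, and conclude convergence of the full sequence. Both approaches are valid and use the same lower-bound input from Theorem~\ref{theorem_hardness_results_general}; the paper's buys an explicit rate, yours is softer but conceptually clean and would extend more readily if the explicit inequality were unavailable. One place to tighten: your stated bound $\alpha_k\in[1,\rho]$ (with $\rho=p_{\max}/p_{\min}$ rather than $\rho(a)$) is not obviously correct, and ``the $f=0$ bound $1+\ln\rho$'' is an asymptotic statement rather than a finite-$k$ uniform one. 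The required boundedness does hold, and a self-contained justification is exactly the continuity/monotonicity packaging you sketch at the end: the left-hand side of Eq.~\eqref{eq_CR_f_0} is continuous in $\alpha$ (the apparent jumps at $\alpha=k/m$ cancel exactly) and strictly increasing on $[1,\infty)$, equals $1$ at $\alpha=1$, and diverges as $\alpha\to\infty$, so the root $\alpha_k$ lies in a compact set depending only on $\rho(a)$; this is a sketch-level gap, not a substantive error.
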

\begin{proof}
	Corollary  \ref{theorem_constant_packing_costs} directly follows our above analysis except the convergence of $ \textsf{CR}_f^*(\rho, k) $, whose proof is elementary. For any $ \alpha \in [1,+\infty) $, the following inequality holds when $ k $ is large:
	\begin{equation*}
		\big(1 + \frac{\alpha}{k} \big)^{k-\lceil \frac{k}{\alpha}\rceil}\cdot 	\frac{\alpha}{k}\cdot \big\lceil \frac{k}{\alpha}\big\rceil  \geq \big(1 + \frac{\alpha}{k} \big)^{k-\frac{k}{\alpha}} \geq e^{\alpha - 1} \big(1- \frac{\alpha^2}{k}\big), 
	\end{equation*}
	where $ e $ is the Euler's number. Substituting $ \alpha  = \textsf{CR}_f^*(\rho,k) $ into the above inequality leads to 
	\begin{equation*}
		1 + \ln \big(\rho(a)\big) \leq \textsf{CR}_f^*(\rho, k)  \leq 1 + \ln \big(\rho(a)\big) - \ln \Big(1- \frac{(\textsf{CR}_f^*(\rho, k) )^2}{k}\Big). 
	\end{equation*}
	Thus, when $ k \rightarrow +\infty $, the optimal competitive ratio $ \textsf{CR}_f^*(\rho,k) $ converges to $ 1 + \ln \big(\rho(a)\big)  $. 
\end{proof}

It has been proven by various existing studies (e.g., \cite{OKP_Zhou_2008, Tan_ORA_2020, Zhang2017}) that the logarithmic competitive ratio $ 1 + \ln(\rho(\sigma))  $ is the best possible for all online algorithms in the setting with linear $ f $ (i.e., constant marginal costs). 
Thus, Corollary \ref{theorem_constant_packing_costs} validates Theorem \ref{theorem_optimality}, Corollary \ref{theorem_hardness_results}, and Theorem \ref{theorem_asymptotic} in this special case. \textit{For general convex $ f $, we argue that Theorem \ref{theorem_optimality}, Theorem \ref{theorem_hardness_results_general}, and Theorem \ref{theorem_asymptotic} provide the first set of results that characterize the \textit{optimal competitive ratio}  $\normalfont \textsf{CR}_f^*(\rho, k) $ and the \textit{lower bound} $\normalfont \textsf{CR}_f^{\textsf{lb}}(\rho, k) $  with a proven convergence to the \textit{asymptotic lower bound} $\normalfont \underline{\textsf{CR}}_f(\rho) $ when $ \bar{k} $ is large}.

\subsection{Impact of Convex Production Costs on Online Selection}
\label{section_upper_bound}

For online selection with linear production costs (or constant marginal costs), Section \ref{section_constant_packing_costs} has demonstrated that  $ \textsf{TOS}_{\boldsymbol{\lambda}^*} $ is asymptotically $ (1 + \ln(\rho(a))) $-competitive when $ k \rightarrow +\infty $. In this subsection, we analyze the impact of non-constant marginal costs on the performance of $ \textsf{TOS}_{\boldsymbol{\lambda}^*} $, and show that (strong)  convexity of $ f $ helps improve the competitive ratio of online selection. 

\begin{proposition}
	\label{theorem_convex_upper_bound}
	Given any setup $ \mathcal{S} $ with $  p_{\min} > c_{\max} $, $ \textsf{CR}_f^*(\rho, k) $ and $ \underline{\textsf{CR}}_f(\rho) $ satisfy:
	\begin{equation}\label{eq_upper_bound_convex}
		\normalfont
		\Big(1 + \frac{\textsf{CR}_f^*(\rho, k)}{k} \Big)^{k-\lceil \frac{k}{\textsf{CR}_f^*(\rho, k)} \rceil} \leq \rho\left(c_k\right), \quad \underline{\textsf{CR}}_f(\rho) \leq 1 + \ln\big(\rho(c_{\max})\big),
	\end{equation}
	where $ c_{\max} $ is the upper bound of the marginal costs, i.e., $ c_i \leq c_{\max}, \forall i\in [k]$.
\end{proposition}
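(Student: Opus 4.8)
The plan is to bound $\textsf{CR}_f^*(\rho,k)$ from above by comparing the \textbf{SoSE} of the general convex $f$ against the one for linear production costs with marginal cost $c_k$ (equivalently, $c_{\max}$, since $p_{\min}>c_{\max}$ forces $\ubar{k}=\bar{k}=k$, so we are in the \textsc{High-Value} case and $c_{\max}=c_k$ when the marginal costs are non-decreasing and capped). The key observation is monotonicity: for $p\in[p_{\min},p_{\max}]$ we have $f^*(p)=p\Gamma(p)-f(\Gamma(p))=pk-f(k)$ here, and since $f$ is convex with $c_i\le c_k$ for all $i$, one gets $f(i)\le f(k)-(k-i)c_k$ is false in general but the useful inequality is $f(k)-f(i)\le (k-i)c_k$, i.e. $f^*(p)=pk-f(k)\le pk-f(i)-(k-i)c_k$... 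I would instead work directly with the telescoped form of the \textbf{SoSE}. The \textbf{SoSE} in Eq.~\eqref{eq_system_of_equations} says $f^*(\lambda_{i+1}^*)-f^*(\lambda_i^*)=\textsf{CR}_f^*(\rho,k)\,(\lambda_i^*-c_{i+1})$ for each $i\ge \tau+1$, together with $f^*(\lambda_{\tau+1}^*)=\textsf{CR}_f^*(\rho,k)\,g(\tau+1)$. Since $f^*(p)-f^*(q)=k(p-q)$ on the relevant range, these read $\lambda_{i+1}^*-\lambda_i^*=\frac{\textsf{CR}_f^*(\rho,k)}{k}(\lambda_i^*-c_{i+1})$, hence $\lambda_{i+1}^*-c_{i+2}=(1+\tfrac{\textsf{CR}}{k})(\lambda_i^*-c_{i+1})+(c_{i+1}-c_{i+2})\le(1+\tfrac{\textsf{CR}}{k})(\lambda_i^*-c_{i+1})$ because marginal costs are non-decreasing so $c_{i+1}-c_{i+2}\le0$. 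This is the crucial sign: the convexity of $f$ makes the recursion sub-geometric rather than exactly geometric as in the linear case.

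From here I would iterate the inequality from $i=\tau+1$ up to $i=k-1$. Writing $d_i\triangleq\lambda_i^*-c_{i+1}$, we have $d_{\tau+1}=\lambda_{\tau+1}^*-c_{\tau+2}$ and $d_{i+1}\le(1+\tfrac{\textsf{CR}}{k})d_i$, so $d_{k-1}\le(1+\tfrac{\textsf{CR}}{k})^{k-\tau-2}d_{\tau+1}$. Combined with the last \textbf{SoSE} relation $\lambda_k^*-\lambda_{k-1}^*=\tfrac{\textsf{CR}}{k}d_{k-1}$ and $\lambda_k^*=p_{\max}$, $\lambda_{\tau+1}^*\ge p_{\min}$, together with the first relation controlling $\lambda_{\tau+1}^*$ via $g(\tau+1)=(p_{\min}-\cdots)$ — actually here $g(i)=p_{\min}i-f(i)$ and $f^*(\lambda_{\tau+1}^*)=k\lambda_{\tau+1}^*-f(k)$ — one assembles a chain of inequalities that ends in $(1+\tfrac{\textsf{CR}}{k})^{k-\lceil k/\textsf{CR}\rceil}\le\rho(c_k)$ after substituting $\tau=\lceil k/\textsf{CR}\rceil-1$ from Eq.~\eqref{eq_k_minimum} and using $g^{\textsf{inv}}$. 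The bookkeeping mirrors exactly the derivation of Eq.~\eqref{eq_CR_f_0} in the linear case, except every ``$=$'' relating consecutive $\lambda$'s to a pure geometric factor becomes ``$\le$'' once the $c_{i+1}-c_{i+2}\le0$ terms are dropped, and the constant-shift $a$ is replaced by the largest marginal cost $c_k$. So the structure is: (i) reduce to the \textsc{High-Value} case and rewrite \textbf{SoSE} in incremental form; (ii) establish the sub-geometric recursion using non-decreasing marginal costs; (iii) telescope and plug in the boundary conditions $\lambda_k^*=p_{\max}$, $\lambda_{\tau+1}^*\ge p_{\min}$ and the formula for $\tau$; (iv) read off the first claimed inequality.

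For the second claimed inequality, $\underline{\textsf{CR}}_f(\rho)\le 1+\ln(\rho(c_{\max}))$, I would take $k\to\infty$ (equivalently $\bar k\to\infty$) in the first inequality. By Theorem~\ref{theorem_asymptotic}, $\textsf{CR}_f^*(\rho,k)\to\underline{\textsf{CR}}_f(\rho)$, and $(1+\tfrac{\alpha}{k})^{k-\lceil k/\alpha\rceil}\to e^{\alpha(1-1/\alpha)}=e^{\alpha-1}$ for fixed $\alpha$; a uniform estimate (as in the proof of Corollary~\ref{theorem_constant_packing_costs}, e.g. $(1+\tfrac{\alpha}{k})^{k-\lceil k/\alpha\rceil}\ge e^{\alpha-1}(1-\tfrac{\alpha^2}{k})$) makes this rigorous along the sequence $\alpha=\textsf{CR}_f^*(\rho,k)$. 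Taking limits in $e^{\textsf{CR}-1}(1-o(1))\le\rho(c_{\max})$ (noting $c_k\le c_{\max}$ so $\rho(c_k)\le\rho(c_{\max})$ since $\rho(\cdot)$ is increasing in its argument on $[0,p_{\min})$) gives $e^{\underline{\textsf{CR}}_f(\rho)-1}\le\rho(c_{\max})$, i.e. $\underline{\textsf{CR}}_f(\rho)\le1+\ln\rho(c_{\max})$.

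The main obstacle I anticipate is the careful handling of the first \textbf{SoSE} relation and the ceiling-function in $\tau$: the telescoping from $\lambda_{\tau+1}^*$ uses $f^*(\lambda_{\tau+1}^*)/g(\tau+1)=\textsf{CR}$, and with $f^*(p)=kp-f(k)$ but $g(\tau+1)=(p_{\min})(\tau+1)-f(\tau+1)$ these are not as cleanly proportional as in the linear case where $f(k)=ak$, $f(\tau+1)=a(\tau+1)$ collapse everything. I expect to need the bound $g(\tau+1)\ge(p_{\min}-c_k)(\tau+1)$ — which again follows from convexity/non-decreasing marginal costs, $f(\tau+1)\le c_k(\tau+1)$ — to recover the clean form, and then the exponent $k-\lceil k/\textsf{CR}\rceil$ together with the substitution $\tau+1=\lceil k/\textsf{CR}\rceil$ produces exactly the left-hand side of Eq.~\eqref{eq_upper_bound_convex}. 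Keeping track of which inequalities point the right way through this reduction, rather than any deep idea, is the delicate part.
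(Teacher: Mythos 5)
Your overall strategy — compare the \textbf{SoSE} for convex $f$ against the linear baseline, telescope the recursion $\lambda_{i+1}^*-\lambda_i^*=\tfrac{\textsf{CR}}{k}(\lambda_i^*-c_{i+1})$, and then plug in $\lambda_k^*=p_{\max}$, $\lambda_{\tau+1}^*\geq p_{\min}$, $c_i\leq c_k$ — is the same one the paper uses. But your telescoping goes in the wrong direction, and the step you single out as ``the crucial sign'' is in fact the fatal one. With $d_i=\lambda_i^*-c_{i+1}$ you get $d_{i+1}\leq(1+\tfrac{\textsf{CR}}{k})d_i$, hence, combining with $\lambda_k^*-\lambda_{k-1}^*=\tfrac{\textsf{CR}}{k}d_{k-1}$ and $\lambda_{k-1}^*=d_{k-1}+c_k$,
\begin{equation*}
p_{\max}-c_k=(1+\tfrac{\textsf{CR}}{k})\,d_{k-1}\;\leq\;(1+\tfrac{\textsf{CR}}{k})^{k-\tau-1}\,d_{\tau+1}\;=\;(1+\tfrac{\textsf{CR}}{k})^{k-\tau-1}\bigl(\lambda_{\tau+1}^*-c_{\tau+2}\bigr).
\end{equation*}
This is an \emph{upper} bound on $p_{\max}-c_k$, so to conclude $(1+\textsf{CR}/k)^{k-\tau-1}\leq\rho(c_k)$ you would need $\lambda_{\tau+1}^*-c_{\tau+2}\leq p_{\min}-c_k$. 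But the admission threshold gives $\lambda_{\tau+1}^*\geq p_{\min}$, and non-decreasing marginal costs give $c_{\tau+2}\leq c_k$, so in fact $\lambda_{\tau+1}^*-c_{\tau+2}\geq p_{\min}-c_k$ — the opposite. The argument dead-ends.

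The fix is a small but essential change of gauge: shift by the \emph{fixed} quantity $c_k$ rather than by the sliding $c_{i+1}$. Setting $e_i\triangleq\lambda_i^*-c_k$, the same recursion reads $e_{i+1}=e_i+\tfrac{\textsf{CR}}{k}(\lambda_i^*-c_{i+1})$; since $c_{i+1}\leq c_k$ we get $\lambda_i^*-c_{i+1}\geq e_i$, hence $e_{i+1}\geq(1+\tfrac{\textsf{CR}}{k})e_i$ — a \emph{super}-geometric recursion, not a sub-geometric one. Telescoping from $e_{\tau+1}\geq p_{\min}-c_k$ to $e_k=p_{\max}-c_k$ gives exactly $\rho(c_k)\geq(1+\textsf{CR}/k)^{k-\tau-1}$. (The paper reaches the same conclusion by solving the recurrence in closed form and then substituting $c_j\leq c_k$ into the explicit sum; the two are equivalent, but both rely on the constant-$c_k$ comparison, not the index-shifted one.) A secondary imprecision: you set $\tau=\lceil k/\textsf{CR}\rceil-1$, which is an \emph{equality} only for linear $f$; for general convex $f$, Eq.~\eqref{eq_k_minimum} gives $\tau+1=g^{\textsf{inv}}(f^*(p_{\min})/\textsf{CR})$, and one must separately show $g^{\textsf{inv}}(f^*(p_{\min})/\textsf{CR})\leq\lceil k/\textsf{CR}\rceil$, which the paper does via $\alpha f(k/\alpha)\leq f(k)$ (convexity), so that $k-\tau-1\geq k-\lceil k/\textsf{CR}\rceil$ and the exponent can only shrink. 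Your closing remark about needing $g(\tau+1)\geq(p_{\min}-c_k)(\tau+1)$ is pointing at the right circle of ideas but is not the estimate that is actually required. The asymptotic step you propose for the second inequality (take $k\to\infty$ along $\textsf{CR}_f^*(\rho,k)$ using Theorem~\ref{theorem_asymptotic} and a uniform $(1+\alpha/k)^{k-\lceil k/\alpha\rceil}\to e^{\alpha-1}$ estimate, then $\rho(c_k)\leq\rho(c_{\max})$) matches the paper and is fine once the first inequality is repaired.
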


Proposition \ref{theorem_convex_upper_bound} is proved in  Appendix \ref{proof_of_theorem_convex_upper_bound}. Recall that Corollary \ref{theorem_constant_packing_costs} shows that $ \underline{\textsf{CR}}_f(\rho) = 1 + \ln(\rho(a)) $ when $ f(y) = a
y $ and $ c_k = c_{\max} = a $. Thus, we can unify Corollary \ref{theorem_constant_packing_costs} and Proposition \ref{theorem_convex_upper_bound}, and give Corollary \ref{theorem_tightly_upper_bounded} below.

\begin{corollary}\label{theorem_tightly_upper_bounded}
	Given any setup $ \mathcal{S} $ with $   p_{\min} > c_{\max} $, the optimal competitive ratio $ \textsf{CR}_f^*(\rho, k) $ and its asymptotic lower bound $ \underline{\textsf{CR}}_f(\rho) $ are tightly upper bounded by their counterparts when $ f $ is linear (including $ f = 0 $ as a special case). 
\end{corollary}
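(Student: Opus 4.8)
The plan is to read off Corollary \ref{theorem_tightly_upper_bounded} by pairing Corollary \ref{theorem_constant_packing_costs}, which gives the linear-cost ``counterparts'' in closed form, with Proposition \ref{theorem_convex_upper_bound}. First I would settle the setup: the hypothesis $p_{\min} > c_{\max}$ forces $p_{\min} > c_{\max} \ge c_k$, so $\Gamma(p_{\min}) = \Gamma(p_{\max}) = k$, i.e.\ $\ubar{k} = \bar{k} = k$, placing us in the \textsc{High-Value} case; note that this hypothesis is also exactly what makes the linear comparison function $\hat f(y) = c_{\max}\, y$ on $[0,k]$ a legitimate setup (the standing assumption $p_{\min} > c_1 = c_{\max}$ for $\hat f$), with $f = 0$ the case $c_{\max}=0$. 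Applying Corollary \ref{theorem_constant_packing_costs} with $a = c_{\max}$, the optimal ratio $\textsf{CR}_{\hat f}^*(\rho,k)$ is the unique root of Eq. \eqref{eq_CR_f_0} with right-hand side $\rho(c_{\max})$, and $\underline{\textsf{CR}}_{\hat f}(\rho) = 1 + \ln(\rho(c_{\max}))$.

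The asymptotic half of the corollary is then immediate from the second estimate of Proposition \ref{theorem_convex_upper_bound}: $\underline{\textsf{CR}}_f(\rho) \le 1 + \ln(\rho(c_{\max})) = \underline{\textsf{CR}}_{\hat f}(\rho)$. For the finite-$k$ half, $\textsf{CR}_f^*(\rho,k) \le \textsf{CR}_{\hat f}^*(\rho,k)$, I would exploit that $f(i) = \sum_{j\le i} c_j \le c_{\max}\, i = \hat f(i)$ for every $i\in[k]$, whence $f^*(p) \ge \hat f^*(p)$ and $g(i) \ge \hat g(i)$ pointwise, and transport these comparisons through the \textbf{SoSE} of Theorem \ref{theorem_optimality} (equivalently, through the first estimate $(1+\textsf{CR}_f^*(\rho,k)/k)^{\,k-\lceil k/\textsf{CR}_f^*(\rho,k)\rceil} \le \rho(c_k) \le \rho(c_{\max})$ of Proposition \ref{theorem_convex_upper_bound}, the last step using $c_k\le c_{\max}$ and monotonicity of $a\mapsto\rho(a)$), using the monotonicity of the left-hand side of Eq. \eqref{eq_CR_f_0} in its argument -- injective on $[1,+\infty)$ by Proposition \ref{theorem_uniqueness} -- and the bound $\tfrac{\alpha}{k}\lceil k/\alpha\rceil \ge 1$ on its correction factor, to conclude $\textsf{CR}_f^*(\rho,k) \le \textsf{CR}_{\hat f}^*(\rho,k)$.

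Tightness follows by specializing $f = \hat f$: then $c_k = c_{\max}$, all the inequalities above collapse to the identities of Corollary \ref{theorem_constant_packing_costs}, so both upper bounds are attained and cannot be improved. I expect the finite-$k$ comparison to be the main obstacle: the bare inequality stated in Proposition \ref{theorem_convex_upper_bound} is strictly weaker than $\textsf{CR}_f^*(\rho,k) \le \textsf{CR}_{\hat f}^*(\rho,k)$, so the argument must genuinely use convexity of $f$ (through $f \le \hat f$ pointwise) when propagating the comparison along the \textbf{SoSE}, and must carefully track the step-valued correction factor $\tfrac{\alpha}{k}\lceil k/\alpha\rceil$ as well as the turning-point constraint \eqref{eq_k_minimum}; by contrast, the asymptotic half is a one-line consequence of Proposition \ref{theorem_convex_upper_bound}.
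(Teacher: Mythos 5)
Your proposal follows the same route as the paper: the paper justifies this corollary in one line by ``unifying'' Corollary \ref{theorem_constant_packing_costs} (which computes the linear-cost counterparts, with $a=c_{\max}$) with Proposition \ref{theorem_convex_upper_bound} (which gives the upper bounds for general convex $f$), and that is exactly your pairing. The asymptotic half and the tightness claim are handled identically in both: $\underline{\textsf{CR}}_f(\rho)\le 1+\ln(\rho(c_{\max}))=\underline{\textsf{CR}}_{\hat f}(\rho)$ with equality at $f=\hat f$, where $\hat f(y)=c_{\max}y$.

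Where you go beyond the paper is in flagging that the finite-$k$ half, read literally as $\textsf{CR}_f^*(\rho,k)\le\textsf{CR}_{\hat f}^*(\rho,k)$, does not follow verbatim from Eq. \eqref{eq_upper_bound_convex}: the linear-case ratio is pinned down by Eq. \eqref{eq_CR_f_0}, whose left-hand side carries the extra factor $\tfrac{\alpha}{k}\lceil k/\alpha\rceil\ge 1$, so both ratios satisfy the \emph{same} implicit inequality $(1+\alpha/k)^{k-\lceil k/\alpha\rceil}\le\rho(c_{\max})$ without that inequality ordering them. This is a real observation, and the paper offers nothing more. Be aware, however, that your proposed patch --- transporting $f\le\hat f$, $f^*\ge\hat f^*$, $g\ge\hat g$ through the \textbf{SoSE} --- is not yet a proof: in the first ratio of Eq. \eqref{eq_system_of_equations} the perturbation $f\to\hat f$ moves the numerator $f^*(\lambda_{\tau+1})$ and the denominator $g(\tau+1)$ in the same direction (and also shifts the turning point $\tau$ via Eq. \eqref{eq_k_minimum}), so the sign of the effect on $\alpha$ does not fall out of pointwise comparisons alone; some additional structure (e.g., the strong-convexity machinery of Theorem \ref{theorem_strongly_convex}, or a direct monotonicity lemma for the solution of the \textbf{SoSE} in $f$) is needed to close it. In short: same approach as the paper, correct where the paper is explicit, and honest about the one step the paper leaves implicit --- but that step remains a sketch in your write-up as well.
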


Proposition \ref{theorem_convex_upper_bound} and Corollary \ref{theorem_tightly_upper_bounded} give an interesting yet counter-intuitive result: Compared to classic online  selection problems (without costs), it appears to be more challenging to solve OSCC when $ f $ is nonlinear (for both online and offline settings). Nevertheless, both  the optimal competitive ratio $ \textsf{CR}_f^*(\rho, k) $ and its asymptotic lower bound $ \underline{\textsf{CR}}_f(\rho) $ are better than their respective counterparts when $ f $ is zero or linear. 

Despite being counter-intuitive at first glance, it can be shown  mathematically that the optimal turning point $ \tau $, given by Eq. \eqref{eq_k_minimum},  is monotonic w.r.t. the convexity of $ f $: \textit{the more convex $ f $ is, the smaller the optimal turning point $ \tau $ will be, leading to a shortened horizontal segment in $ \boldsymbol{\lambda}^* $ (i.e., the part with $ \lambda_0^* = \lambda_1^* = \cdots = \lambda_\tau^* =  p_{\min} $, as illustrated in Fig. \ref{fig_three_cases}) and consequently, a better competitive ratio as fewer low-price buyers will be selected}. This explanation is formally stated by Theorem \ref{theorem_strongly_convex} below, which gives a tighter upper bound based on the strong convexity of $ f $.
\begin{theorem}\label{theorem_strongly_convex}
	Given any setup $ \mathcal{S} $ with $ p_{\min} > c_{\max} $, if $ f(y) $ is $ \mu $-strongly convex in $ y\in [0,k] $, i.e.,  $
	f(\delta y_1+ (1-\delta) y_2) \le \delta f(y_1) + (1-\delta)f(y_2) - \frac{\mu}{2}\delta(1-\delta)\Vert y_1-y_2\rVert^2,  \forall  \delta \in [0,1]$, 
	then the optimal competitive ratio $ \textsf{CR}_f^*(\rho, k) $ and its asymptotic lower bound $\underline{\textsf{CR}}_f(\rho) $ satisfy
	\begin{subequations}\label{eq_upper_bound_strong_convex}
		\begin{align}
			& \normalfont  \Big(1 + \frac{\textsf{CR}_f^*(\rho, k)}{k} \Big)^{k- \big\lceil\frac{k}{\textsf{CR}_f^*(\rho, k)}\Big(\xi\cdot \textsf{CR}_f^*(\rho, k) - \sqrt{\left(\xi\cdot \textsf{CR}_f^*(\rho, k) - 1 \right)^2 + \textsf{CR}_f^*(\rho, k) -1} \Big) \big\rceil} \leq  \rho(c_k), \label{eq_upper_bound_strong_convex_1}\\
			& \normalfont \underline{\textsf{CR}}_f(\rho)
			\leq 
			\frac{1}{2} + \frac{\zeta-1}{2\zeta - 1} \ln\big(\rho(c_{\max})\big) + \sqrt{\frac{1}{4}+\frac{\zeta-1}{2\zeta-1} \ln\big(\rho(c_{\max})\big) + \frac{\zeta^2}{(2\zeta-1)^2}\ln^2\big(\rho(c_{\max})\big)}, \label{eq_upper_bound_strong_convex_2}
		\end{align}
	\end{subequations}
	where $ \xi $ and $ \zeta $ are two constants given by:
	$$ 
	\xi = \frac{p_{\min} - f'(0)}{\mu k}\in \big[1- \frac{1}{2k},+\infty\big), \quad 
	\zeta =  \lim_{k\rightarrow +\infty} \frac{p_{\min} - f'(0)}{\mu k} \in \big [1,+\infty\big).
	$$
	Meanwhile, when $ \xi $ and $ \zeta $ approach infinity, Eq. \eqref{eq_upper_bound_strong_convex} reduces to Eq. \eqref{eq_upper_bound_convex}.
\end{theorem}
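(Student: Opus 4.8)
The plan is to mimic the proof of Proposition \ref{theorem_convex_upper_bound}: extract from $\mu$-strong convexity a sharper upper bound on the turning point $\tau$ and then push it through the recursion implied by the \textbf{SoSE} of Theorem \ref{theorem_optimality}. The assumption $p_{\min}>c_{\max}\ge c_k$ places the setup in the \textsc{High-Value} case, so $\ubar{k}=\bar{k}=k$ and, by Lemma \ref{theorem_conjugate}, $f^{*}(p)=pk-f(k)$ for every $p\in[p_{\min},p_{\max}]$. Write $\alpha=\textsf{CR}_f^*(\rho,k)$. With this form of $f^{*}$, Eq. \eqref{eq_system_of_equations} becomes the linear recursion $\lambda_{i+1}^{*}=(1+\tfrac{\alpha}{k})\lambda_i^{*}-\tfrac{\alpha}{k}c_{i+1}$ for $i=\tau+1,\dots,k-1$; using $c_{i+1}\le c_k$ gives $\lambda_{i+1}^{*}-c_k\ge(1+\tfrac{\alpha}{k})(\lambda_i^{*}-c_k)$, and iterating with $\lambda_k^{*}=p_{\max}$ and $\lambda_{\tau+1}^{*}>\lambda_\tau^{*}=p_{\min}$ yields $(1+\tfrac{\alpha}{k})^{k-\tau-1}\le\rho(c_k)$. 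Hence it remains only to show $\tau+1\le\big\lceil\tfrac{k}{\alpha}\big(\xi\alpha-\sqrt{(\xi\alpha-1)^2+\alpha-1}\big)\big\rceil$, which then implies Eq. \eqref{eq_upper_bound_strong_convex_1}.

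\textbf{Bounding the turning point.} By Eq. \eqref{eq_k_minimum} and $f^{*}(p_{\min})=g(k)$, $\tau+1$ is the least integer $i$ with $g(i)\ge g(k)/\alpha$, where $g(y)=p_{\min}y-f(y)$. I would use two facts. (i) $g$ is $\mu$-strongly concave on $[0,k]$ (its negative is $\mu$-strongly convex); applying the strong-concavity inequality on the chord joining $0$ to $k$ and using $g(0)=0$ gives $g(i)\ge\tfrac ik g(k)+\tfrac\mu2 i(k-i)$. (ii) The first-order form of $\mu$-strong convexity at $0$ (with $f(0)=0$) gives $f(k)\ge f'(0)k+\tfrac\mu2 k^2$, hence $g(k)\le(p_{\min}-f'(0))k(1-\tfrac1{2\xi})$ and therefore $\tfrac{\mu k^2}{2g(k)}\ge\tfrac1{2\xi-1}$ (with $2\xi-1>0$ forced by the hypotheses, e.g. since $g(k)>0$). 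Combining, with $t=i/k$, the condition $g(i)\ge g(k)/\alpha$ is implied by $t+\tfrac1{2\xi-1}t(1-t)\ge\tfrac1\alpha$. This concave quadratic has smaller root exactly $t^{-}:=\tfrac1\alpha\big(\xi\alpha-\sqrt{(\xi\alpha-1)^2+\alpha-1}\big)$: with $\beta=\tfrac1{2\xi-1}$ one verifies $(1+\beta)^2-\tfrac{4\beta}\alpha=\tfrac{4((\xi\alpha-1)^2+\alpha-1)}{(2\xi-1)^2\alpha^2}$, so the quadratic formula collapses to the stated form. Since $t^{-}\in[0,1]$ and the larger root is $\ge1$ (because $\alpha\ge1$), every integer in $[\lceil kt^{-}\rceil,k]$ satisfies $g(i)\ge g(k)/\alpha$; hence $\tau+1\le\lceil kt^{-}\rceil$, as required.

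\textbf{The asymptotic bound.} For Eq. \eqref{eq_upper_bound_strong_convex_2} I would take $k\to\infty$ in Eq. \eqref{eq_upper_bound_strong_convex_1}. By Theorem \ref{theorem_asymptotic}, $\alpha\to\underline{\textsf{CR}}_f(\rho)=:\alpha_\infty$, and $\xi\to\zeta$ by hypothesis; since $t^{-}$ is continuous in $(\xi,\alpha)$ and the ceiling contributes only an $O(1)$ term to the exponent, the left-hand side converges to $\exp(\alpha_\infty(1-t^{-}_\infty))$ with $t^{-}_\infty=\tfrac1{\alpha_\infty}\big(\zeta\alpha_\infty-\sqrt{(\zeta\alpha_\infty-1)^2+\alpha_\infty-1}\big)$, while $\rho(c_k)\le\rho(c_{\max})$ as $\rho(\cdot)$ is increasing and $c_k\le c_{\max}$. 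Setting $L=\ln\rho(c_{\max})$, this gives $\alpha_\infty(1-\zeta)+\sqrt{(\zeta\alpha_\infty-1)^2+\alpha_\infty-1}\le L$; since $\zeta\ge1$ the right side of $\sqrt{(\zeta\alpha_\infty-1)^2+\alpha_\infty-1}\le L+(\zeta-1)\alpha_\infty$ is nonnegative, so squaring and expanding gives the quadratic $(2\zeta-1)\alpha_\infty^2-((2\zeta-1)+2L(\zeta-1))\alpha_\infty-L^2\le0$. As $2\zeta-1>0$, $\alpha_\infty$ is at most its larger root, and completing the square in the discriminant via $\tfrac{(\zeta-1)^2}{(2\zeta-1)^2}+\tfrac1{2\zeta-1}=\tfrac{\zeta^2}{(2\zeta-1)^2}$ turns that root into the right-hand side of Eq. \eqref{eq_upper_bound_strong_convex_2}. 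Letting $\xi,\zeta\to\infty$ gives $t^{-}\to1/\alpha$, $\tfrac{\zeta-1}{2\zeta-1}\to\tfrac12$ and $\tfrac{\zeta^2}{(2\zeta-1)^2}\to\tfrac14$, so Eq. \eqref{eq_upper_bound_strong_convex} degenerates to Eq. \eqref{eq_upper_bound_convex}.

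\textbf{Main obstacle.} The crux is the turning-point step: the strong-concavity chord bound for $g$ must be combined with the global bound on $g(k)$ in exactly the normalization that produces $\xi$, and one must recognize that the harmless-looking quadratic $t+\tfrac1{2\xi-1}t(1-t)\ge\tfrac1\alpha$ has the closed-form root appearing in the theorem. Everything after that—the recursion, the limit, and the quadratic solve for $\alpha_\infty$—is bookkeeping of the same type already performed for Proposition \ref{theorem_convex_upper_bound} and Corollary \ref{theorem_constant_packing_costs}.
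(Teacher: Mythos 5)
Your proof is correct, and it reproduces the paper's bound by a genuinely different route at the key step: the paper establishes Lemma~\ref{theorem_g_inverse_upper_bound} by working directly with $f$, using the first-order strong-convexity inequality at $\varphi$ (via $f'(\varphi)\ge \mu\varphi+f'(0)$) together with the intermediate chord estimate $\alpha f(k/\alpha)\le f(k)-\tfrac{(\alpha-1)\mu}{2\alpha}k^2$, and then solves a quadratic in $\varphi$ to extract $\varphi_0 = \tfrac{k}{\alpha}\big(\xi\alpha-\sqrt{(\xi\alpha-1)^2+\alpha-1}\big)$. You instead work with $g=p_{\min}y-f(y)$, observing it is $\mu$-strongly concave, apply the chord bound $g(i)\ge \tfrac{i}{k}g(k)+\tfrac{\mu}{2}i(k-i)$, and separately bound $g(k)\le (p_{\min}-f'(0))k(1-\tfrac{1}{2\xi})$ from the first-order strong-convexity inequality at $0$; normalizing by $g(k)$ replaces the paper's two-inequality cascade with the single clean condition $t+\tfrac{1}{2\xi-1}t(1-t)\ge \tfrac1\alpha$, whose smaller root is exactly $t^- = \varphi_0/k$. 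The two calculations produce the same quadratic and hence the same bound on the turning point. Your handling of the SoSE recursion (factoring out $\lambda_i^*-c_k$ to get a clean geometric growth) is also a slightly tidier rederivation of the paper's Eq.~\eqref{eq_CR_f_upper_bound_proof}, which the paper obtains by writing out the explicit solution of the recurrence and bounding the geometric sum. The asymptotic step — passing to $k\to\infty$, squaring with the nonnegativity check $L+(\zeta-1)\alpha_\infty\ge 0$, and solving the resulting quadratic using $(\zeta-1)^2+(2\zeta-1)=\zeta^2$ — matches the paper. Overall, your route is a bit more streamlined because it centralizes the strong-convexity usage on $g$ rather than splitting it between $f$ at two points; the trade-off is that it leans on the high-value identification $f^*(p_{\min})=g(k)$, whereas the paper's Lemma~\ref{theorem_g_inverse_upper_bound} is stated for general $\ubar{k}$ and could be reused outside the high-value case.
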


Theorem \ref{theorem_strongly_convex} is our general result  regarding the upper bounds of $ \textsf{CR}_f^*(\rho, k) $ and $ \underline{\textsf{CR}}_f(\rho) $, and the proof is given in Appendix \ref{proof_of_strongly_convex}. Eq. \eqref{eq_upper_bound_strong_convex} implies that  strong convexity pushes $ \textsf{CR}_f^*(\rho,k)$ and $ \underline{\textsf{CR}}_f(\rho) $ further down. For example, when $ \zeta = 1 $, Eq. \eqref{eq_upper_bound_strong_convex_2} is given by
\begin{equation*}
	\normalfont \underline{\textsf{CR}}_f(\rho) \leq \frac{1}{2}\Big(1 +\sqrt{1+4\ln^2\big(\rho(c_{\max})\big)}\Big),
\end{equation*}
where the right-hand-side is strictly smaller than $ 1 + \ln(\rho(c_{\max})) $ for $ \rho(c_{\max}) \in (1,+\infty)$. When $ \mu = 0 $ (i.e., $ f $ is convex but not strongly convex such as $ f(y) = a y $), we have $ \xi = \zeta = +\infty $. In this case, Eq. \eqref{eq_upper_bound_strong_convex} reduces to Eq. \eqref{eq_upper_bound_convex}. Thus, Theorem \ref{theorem_strongly_convex} generalizes  Corollary \ref{theorem_constant_packing_costs} and Proposition \ref{theorem_convex_upper_bound}, and provides a unified  characterization of the upper bounds of
$\normalfont \textsf{CR}_f^*(\rho, k) $ and $\normalfont \underline{\textsf{CR}}_f(\rho) $.

\section{Empirical Performance Evaluation}

In this section we perform numerical simulations to evaluate the empirical performance of  $\textsf{TOS}_{\boldsymbol{\lambda}^*} $ under various families of setups and different types of arrival instances.

\subsection{Simulation Setup}

\begin{wrapfigure}{r}{0.35\textwidth}
	\begin{center}
		\includegraphics[height=4.7cm]{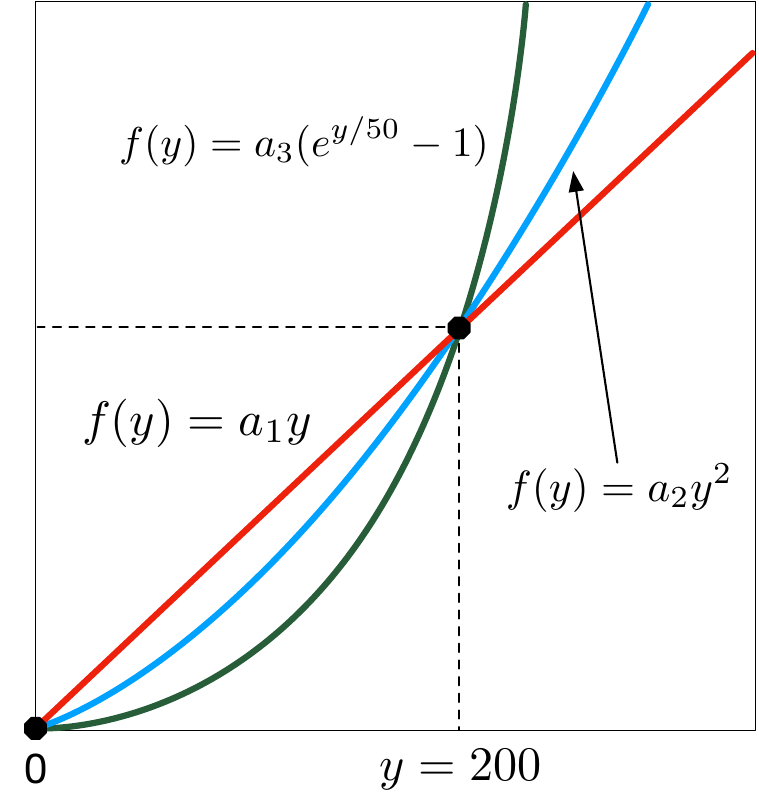}
	\end{center}
	\caption{Illustration of the three cost functions used in the simulation.}
	\label{cost_functions}
\end{wrapfigure}

Recall that a given setup $ \mathcal{S} = \{f, p_{\min}, p_{\max}, k\} $ fully defines the problem of OSCC. Throughout our simulation, we set $ p_{\min} = 50 $ and vary $ p_{\max} $ from 50 to 6000 so that the fluctuation ratio $ \rho $ stays within $ [1, 120] $. In most cases,  $ T $ is set to be 300, 400, 500, or 1000, and the details will be specified for each set of numerical experiments later on.

(\textbf{Cost Functions})  We focus on three common production cost functions including linear ($ f(y) = a_1y $), quadratic ($ f(y) = a_2y^2 $), and exponential ($ f(y) = a_3(e^{y/50}-1) $).  We set the coefficients $ a_1 = 40 $, $ a_2 = 0.2 $, and $ a_3 = 145.5 $ such that these three cost functions roughly meet at $ y = 200 $ (see Fig. \ref{cost_functions}), making them comparable to each other in quantities but contrasting in growth rate. One  major focus of our simulation is to  evaluate the impact of different production costs on the competitive ratio of $\textsf{TOS}_{\boldsymbol{\lambda}^*} $. In particular, we are interested in demonstrating how the increasing rate of production costs (related to the marginal costs) impact the worst-case and empirical performance of $\textsf{TOS}_{\boldsymbol{\lambda}^*} $ under different arrival instances.

(\textbf{Arrival Instances})  To simulate different arrival patterns, we construct the following three types of arrival instances (unless otherwise specified, the length of each arrival instance is $ T = 500 $):
\begin{itemize}
	\item Type 1: \texttt{low2high}. For arrival instances that are of type \textsf{low2high}, the prices of the first half are between $ p_{\min} $ and $ (p_{\min} + p_{\max})/2 $, and those of the second half are between   $ (p_{\min} + p_{\max})/2 $ and $ p_{\max} $. We use \texttt{low2high} to simulate the scenario when  being aggressive at the beginning (i.e., producing/selling too many units in earlier stages) may be penalized as it is likely to have sufficient number of buyers with higher prices later on. 
	
	\item Type 2: \texttt{random}. In this type of arrival instances, all the prices are randomly sampled from the real interval $ [p_{\min}, p_{\max}] $. 
	
	\item Type 3: \texttt{high2low}. This type is configured in contrast to \texttt{low2high}. We use \texttt{high2low} to simulate the scenario when being too reserved at the beginning (i.e., rejecting too many offers in earlier stages) may be penalized.
\end{itemize} 

(\textbf{Empirical Performance Metrics}) 
To evaluate the empirical performance of $ \textsf{TOS}_{\boldsymbol{\lambda}^*} $ under different input instances, we generate $ N = 1000 $ samples of arrival instances $ \mathcal{I}_n $ with $ n\in [N] $, and  calculate the empirical ratio and average empirical ratio of $ \textsf{TOS}_{\boldsymbol{\lambda}^*} $ as follows: 
\begin{equation*}
	\textsf{Empirical Ratio (ER)} \triangleq  \frac{ \textsf{OPT}(\mathcal{I}_n)}{ \textsf{TOS}_{\boldsymbol{\lambda}^*}(\mathcal{I}_n)},\   \textsf{Average Empirical Ratio (AER)} \triangleq \frac{1}{N} \sum_{n=1}^{N} \frac{ \textsf{OPT}(\mathcal{I}_n)}{ \textsf{TOS}_{\boldsymbol{\lambda}^*}(\mathcal{I}_n)},
\end{equation*}
where $\textsf{TOS}_{\boldsymbol{\lambda}^*}(\mathcal{I}_n)$ denotes the profit achieved by our threshold policy $ \textsf{TOS}_{\boldsymbol{\lambda}^*} $ in the online setting, and $\textsf{OPT}(\mathcal{I}_n)$ is the optimal profit achievable in hindsight. For each given $ \mathcal{I}_n $, we compute $\textsf{OPT}(\mathcal{I}_n)$ by solving an integer optimization problem using Gurobi.

\begin{figure} 
	\centering
	\subfigure[$ \alpha^* = \textsf{CR}_f^*(\rho, k) $ \textit{vs.} $ \rho $]{\includegraphics[height=4cm]{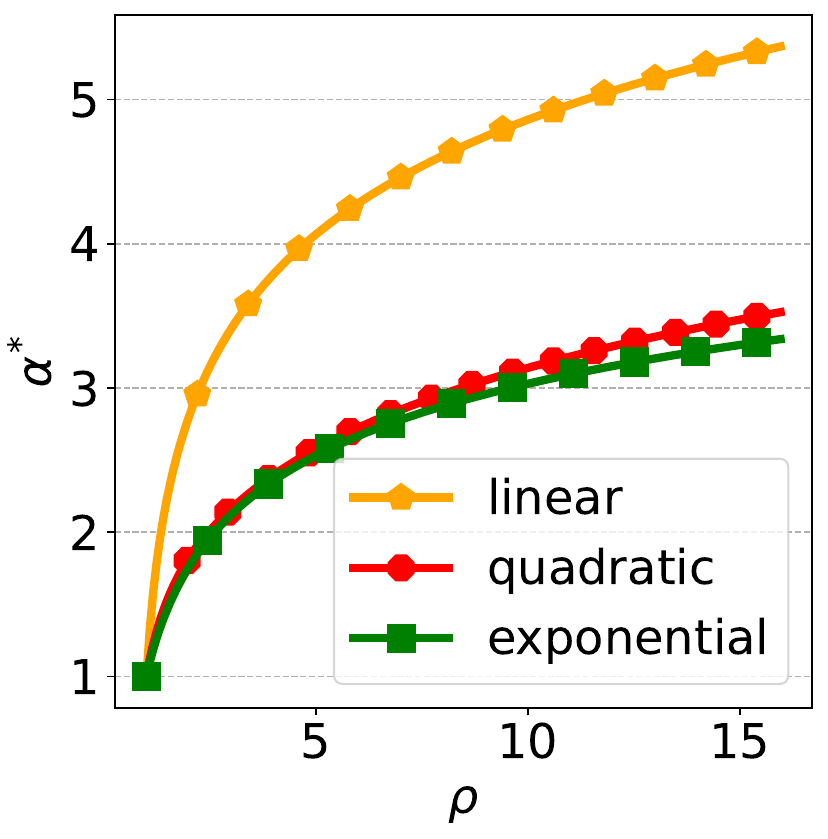}}
	\qquad 
	\subfigure[ AER \textit{vs.} $ k $ (quadratic $ f $; $ \rho = 8 $)]{\includegraphics[height=4cm]{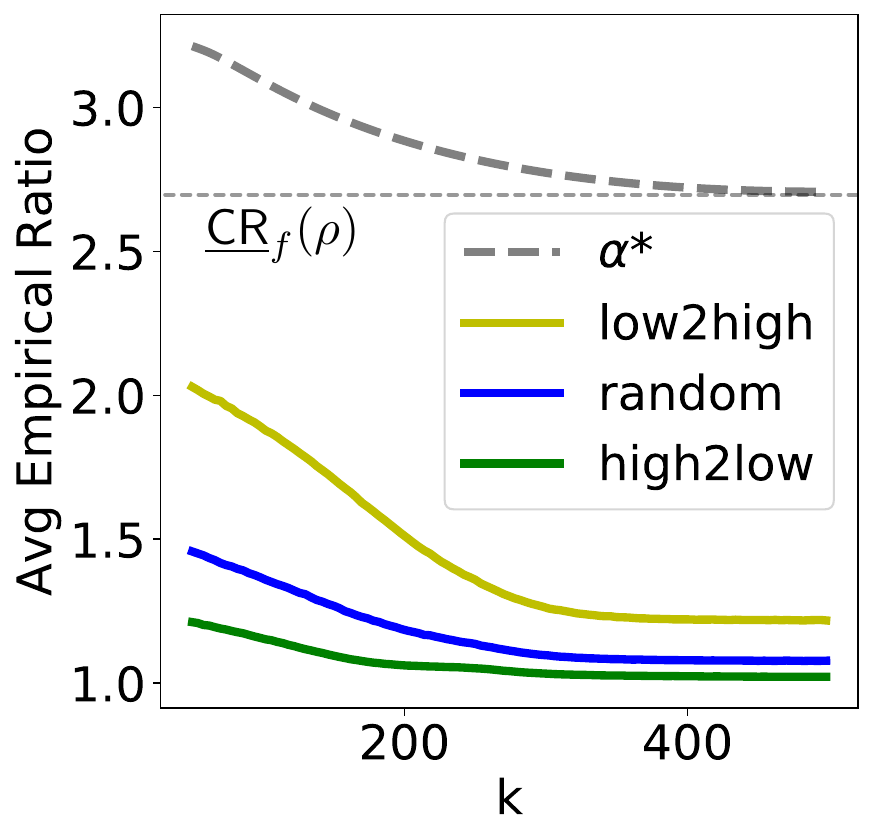}}
	\qquad
	\subfigure[AER \textit{vs.} $ \rho $]{\includegraphics[height=4cm]{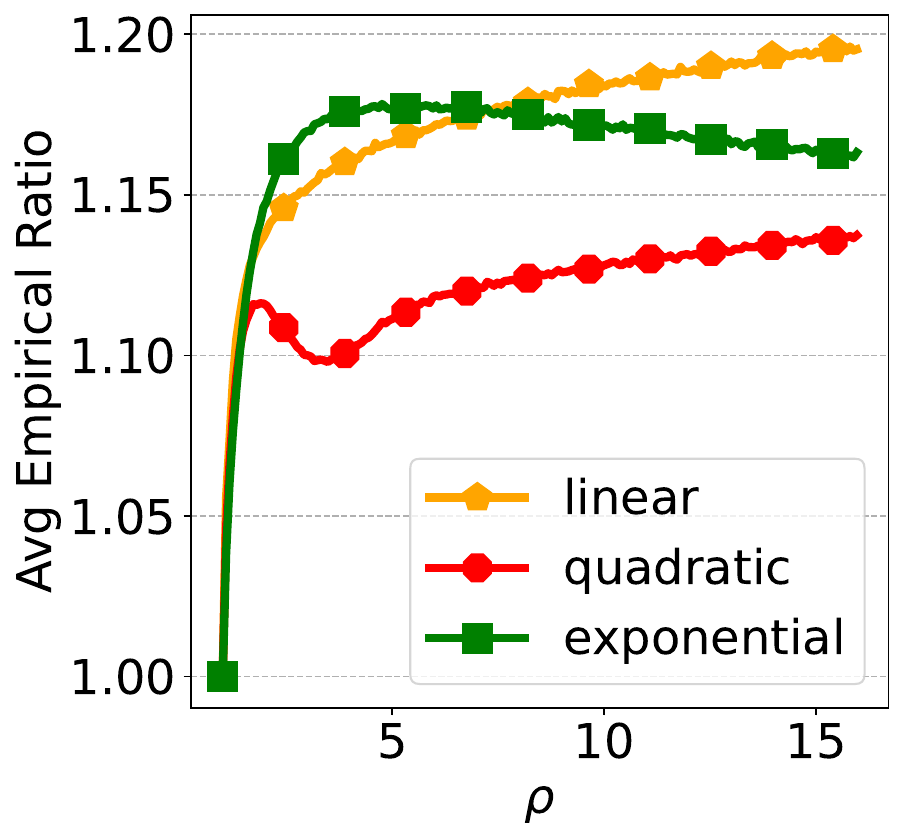}}
	\caption{Comparison of the optimal competitive ratio $ \alpha^* =  \textsf{CR}_f^*(\rho,k) $ and the average empirical ratio (AER) of $ \textsf{TOS}_{\boldsymbol{\lambda}^*} $ under different setups.  Subfigure (a): $ \alpha^* =  \textsf{CR}_f^*(\rho,k) $ under different $ f $ (linear, quadratic, and exponential) with $ k = 300 $. Subfigure (b): AER of $ \textsf{TOS}_{\boldsymbol{\lambda}^*} $ under different types of arrival instances (\texttt{low2high}, \texttt{random}, and \texttt{high2low}) with quadratic $ f $ and $ \rho = 8 $. Other than $ \alpha^* $, each curve shows the AER of $ \textsf{TOS}_{\boldsymbol{\lambda}^*} $ over 1000 arrival instances sampled by their corresponding types. Subfigure (c): AER of $ \textsf{TOS}_{\boldsymbol{\lambda}^*} $ under different $ f $ (linear, quadratic, and exponential). Each curve shows the AER of $ \textsf{TOS}_{\boldsymbol{\lambda}^*} $ over 1000 arrival instances sampled by Type 2: \texttt{random}.  }  
	\label{fig_alpha_AER_rho_k}
\end{figure}

\subsection{Numerical Results}

Fig. \ref{fig_alpha_AER_rho_k} (a) illustrates the optimal competitive ratio $ \alpha^* =  \textsf{CR}_f^*(\rho,k) $ as a function of the fluctuation ratio $ \rho \in [1, +\infty) $. For each given setup $ \mathcal{S} = (f, p_{\min}, p_{\max}, k) $, we solve the system of equations in Eq. \eqref{eq_system_of_equations} and plot $ \alpha^* =  \textsf{CR}_f^*(\rho,k) $ \textit{vs.} $ \rho \in [1, 16] $ in Fig. \ref{fig_alpha_AER_rho_k} (a). The strict monotonicity of $ \textsf{CR}_f^*(\rho,k) $ in $ \rho $ validates that higher level of price volatility leads to worse performance (i.e., a larger competitive ratio) under all production cost functions including linear, quadratic, and exponential. Moreover, the faster the cost function $ f $ grows, the better the competitive ratio $ \alpha^* = \textsf{CR}_f^*(\rho,k) $ is. For instance, the competitive ratio with linear $ f $ is strictly worse than that with exponential $ f $. Therefore, \textit{existence of faster-rising marginal costs  leads to a better and more robust performance in the worst-case}.  

Fig. \ref{fig_alpha_AER_rho_k} (b) shows the comparison between $\alpha^* = \textsf{CR}_f^*(\rho,k)$  and the empirical performance of $ \textsf{TOS}_{\boldsymbol{\lambda}^*} $ under different input instances. In Fig. \ref{fig_alpha_AER_rho_k} (b), we focus only on the quadratic $ f $ and vary $ k $ from 50 to 500. We also set the fluctuation ratio $ \rho = 8 $ and the length of input sequence $ T = 500 $. Given such a setup, the optimal competitive ratio $\alpha^* = \textsf{CR}_f^*(\rho,k)$ is roughly within [2.5, 3.2] as $ k $ varies. In comparison, the AERs of $ \textsf{TOS}_{\boldsymbol{\lambda}^*} $ are always close to 1 when the input sequences are of type \texttt{high2low}, but become considerably worse in face of \texttt{low2high}. The AER of $ \textsf{TOS}_{\boldsymbol{\lambda}^*} $  over \texttt{random} is between that of \textsf{low2high} and \texttt{high2low}, as one would expect. For all these three types, the AERs are below $\alpha^*$ since by definition  $\alpha^*$ captures the worst-case ERs. Meanwhile, Fig. \ref{fig_alpha_AER_rho_k} (b) shows that as $k$ grows, all the AERs decrease and converge to some constant (independent of $ k $), so does $ \alpha^* $. This validates Theorem \ref{theorem_asymptotic} as $ \alpha^* $ converges to  $ \underline{\textsf{CR}}_f(\rho)$ when $ k \rightarrow +\infty $.

\begin{figure} 
	\centering
	\subfigure[ER \textit{vs.} $ \rho $ (linear $ f $)]{\includegraphics[height=4cm]{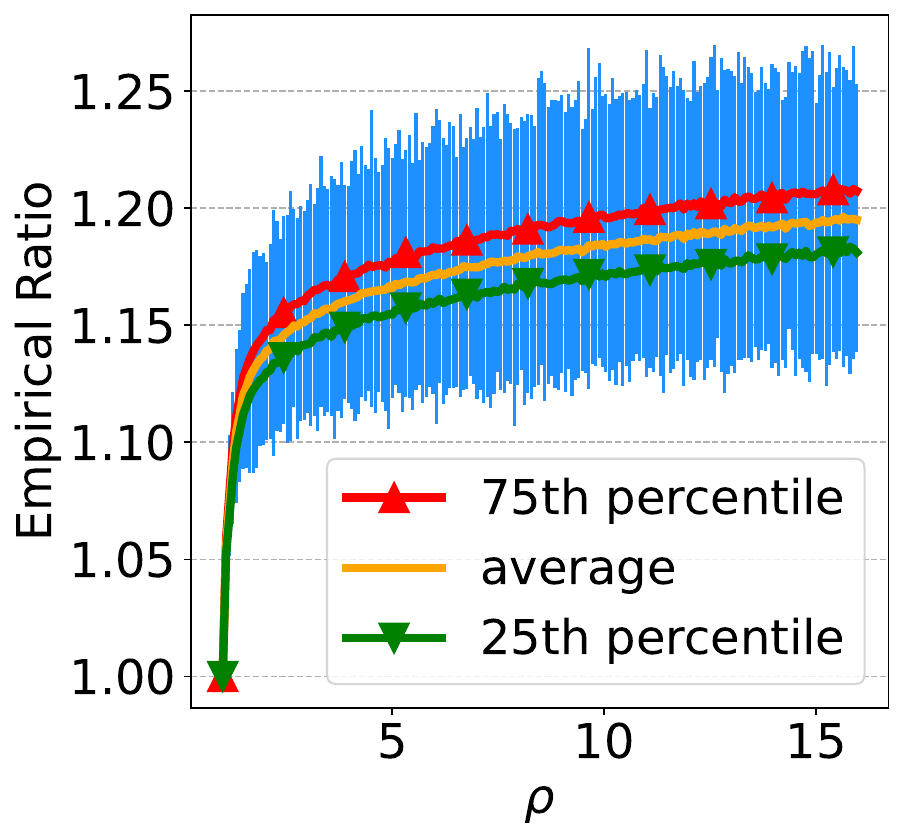}}
	\qquad
	\subfigure[ER \textit{vs.} $ \rho $ (quadratic $ f $)]{\includegraphics[height=4cm]{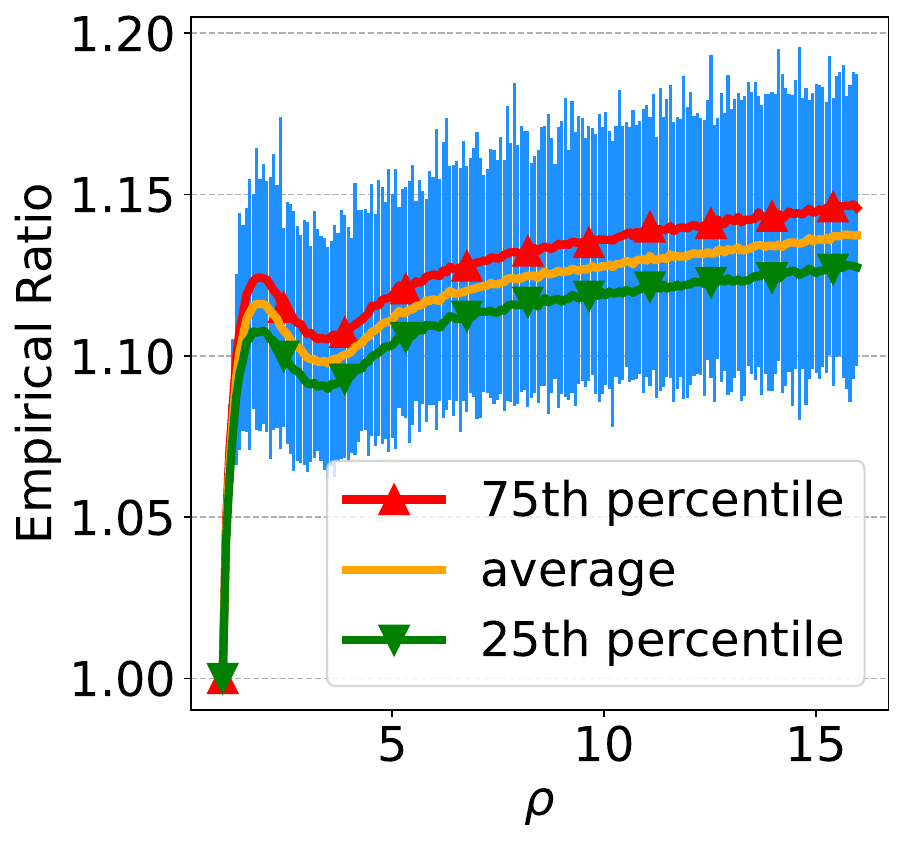}}
	\qquad
	\subfigure[ER \textit{vs.} $ \rho $ (exponential $ f $)]{\includegraphics[height=4cm]{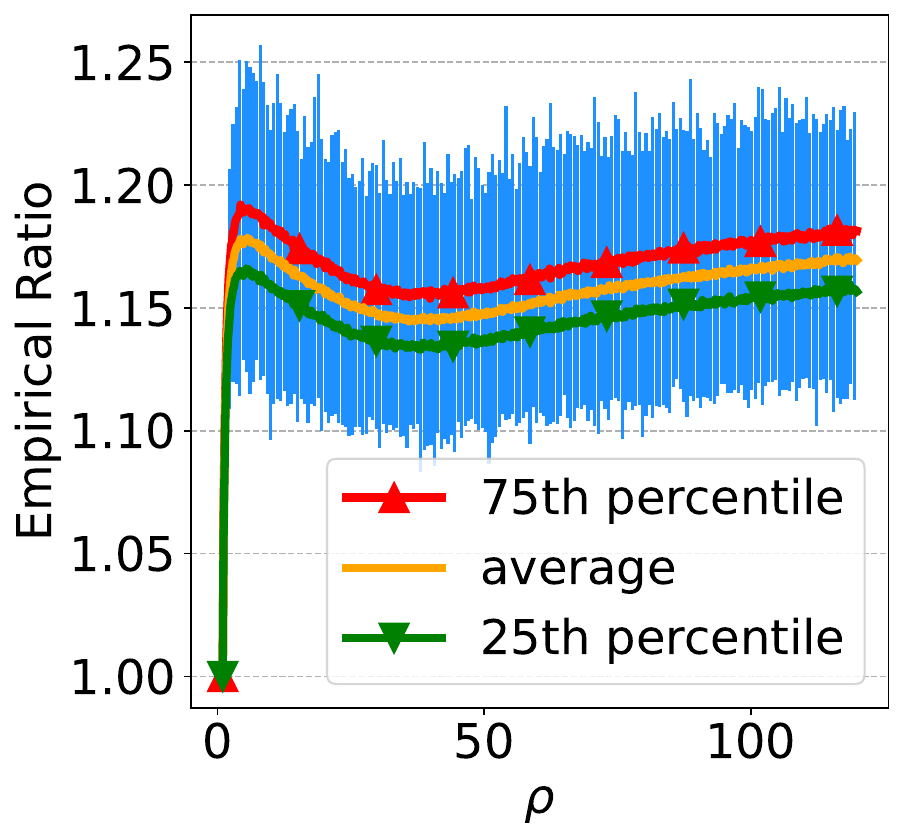}}
	
	\caption{Empirical ratio (ER) of $ \textsf{TOS}_{\boldsymbol{\lambda}^*} $ under different $ f $  with $ k = 300 $. The three real curves in each subfigure show the 25th percentile, 75th percentile, and  average of ERs over 1000 arrival instances sampled by Type 2: \textsf{random}. The scatter plots in blue show the minimum and maximum ER among the 1000 samples. }  
	\label{fig_AER_lin_qua_exp}
\end{figure}

\begin{figure} 
	\centering
	\subfigure[AER \textit{vs.} $ \hat{\rho}/\rho $ (linear $ f $)]{\includegraphics[height=4cm]{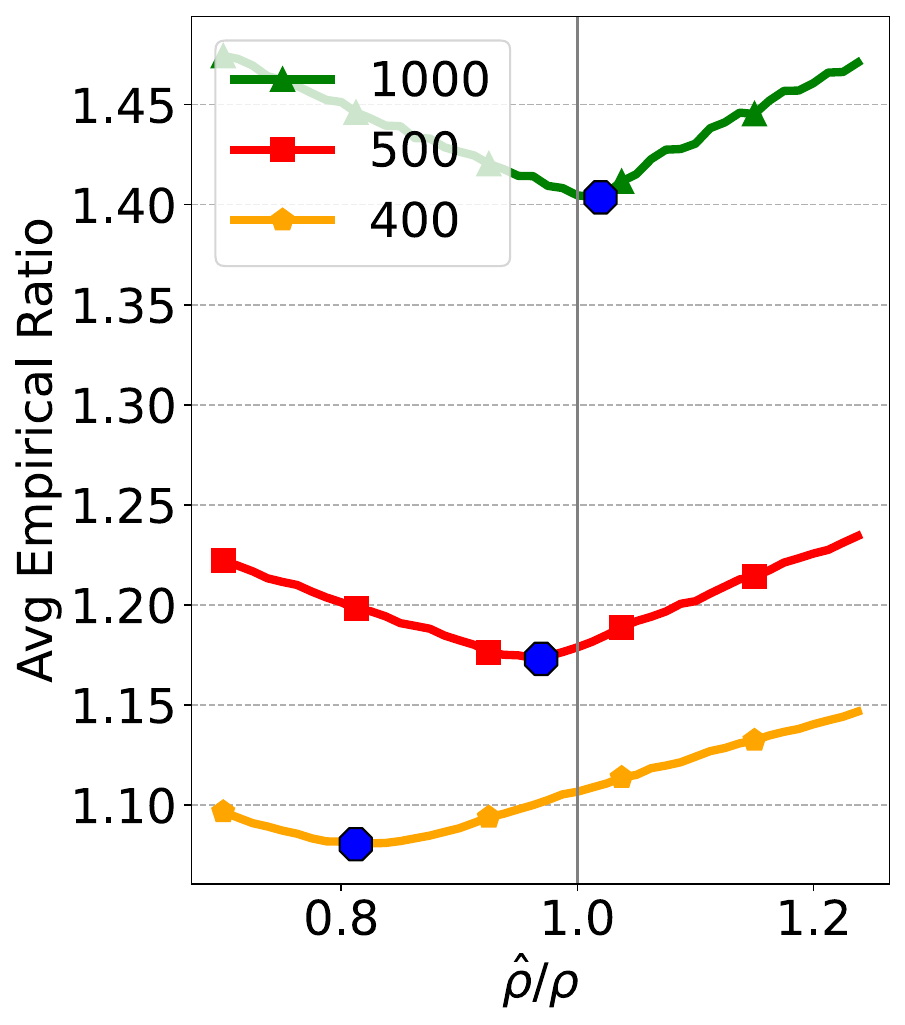}}
	\qquad \quad  
	\subfigure[AER \textit{vs.} $ \hat{\rho}/\rho $ (quadratic $ f $)]{\includegraphics[height=4cm]{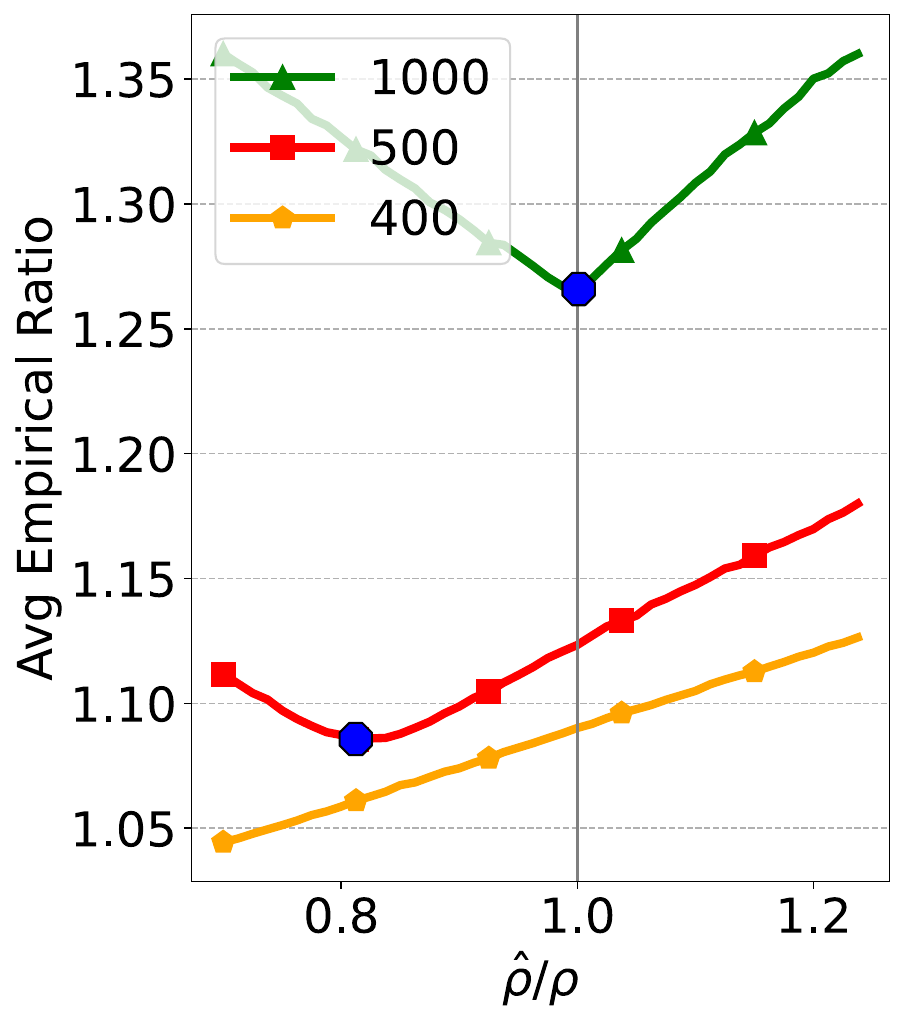}}
	\qquad \quad
	\subfigure[AER \textit{vs.} $ \hat{\rho}/\rho $ (exponential $ f $)]{\includegraphics[height=4cm]{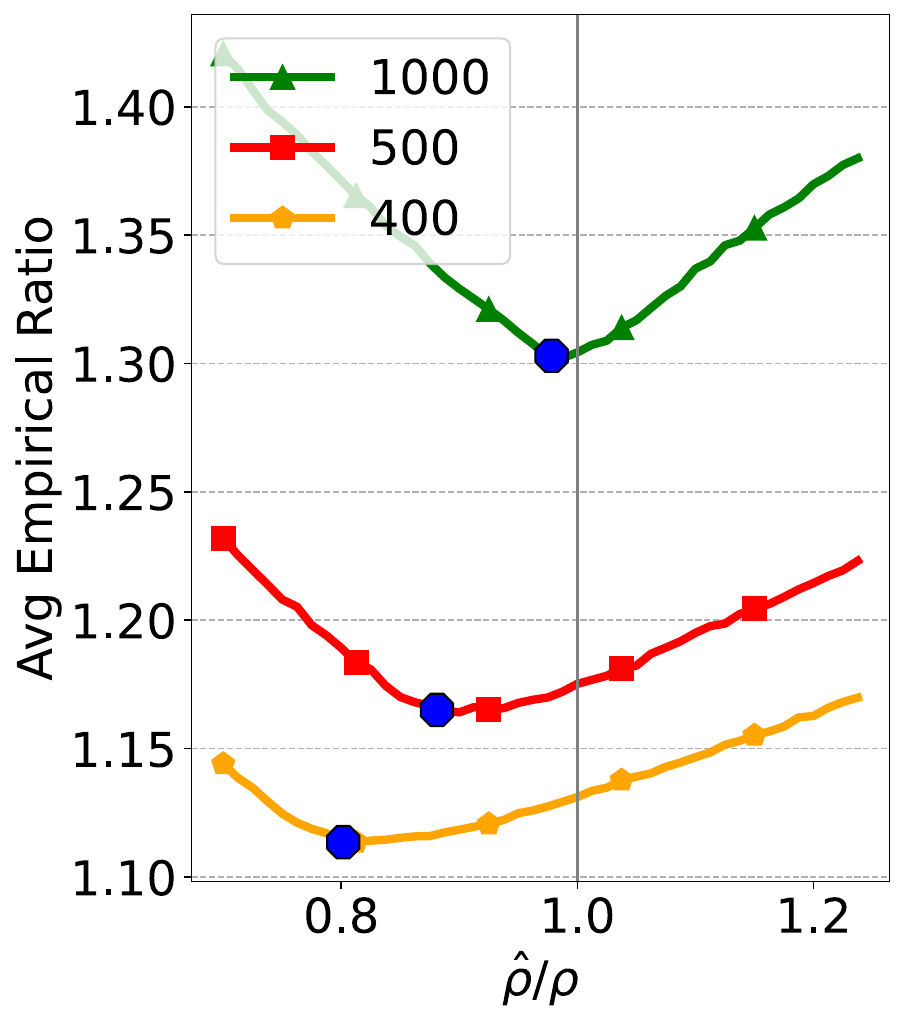}}
	
	\caption{Average empirical ratio (AER) of $ \textsf{TOS}_{\boldsymbol{\lambda}^*} $ with $ k = 300 $ under different $ f $ (linear, quadratic, and exponential) when the \textit{effective fluctuation ratio} $ \hat{\rho} $ (for designing the optimal threshold $ \boldsymbol{\lambda}^* $) differs from the \textit{true fluctuation ratio} $ \rho $ (for generating the samples of arrival instances $ \mathcal{I}_n $). In each subfigure, the three curves illustrate the AER of $ \textsf{TOS}_{\boldsymbol{\lambda}^*} $ over 1000 samples of arrival instance $ \mathcal{I}_n $ sampled by Type 2: \texttt{random} with length $ T = 400, 500 $, and $ 1000 $, respectively.} 
	\label{fig_AER_rho_estimate} 
\end{figure}

In contrast to the optimal competitive ratio (i.e., the worst-case performance of $ \textsf{TOS}_{\boldsymbol{\lambda}^*} $) illustrated in Fig. \ref{fig_alpha_AER_rho_k} (a), Fig. \ref{fig_alpha_AER_rho_k} (c) shows that the empirical performance of $ \textsf{TOS}_{\boldsymbol{\lambda}^*} $ has completely distinct behaviors \textit{w.r.t.} $ \rho\in [1,+\infty) $. Specifically, we have the following observations. 
\begin{itemize}[leftmargin=*]
	\item \textit{A higher level of price volatility does not necessarily lead to a worse empirical performance}. In fact, the results in Fig. \ref{fig_alpha_AER_rho_k} (c) regarding the quadratic and exponential $ f $ show quite the opposite. For example, when $ f $ is quadratic, it is sometimes beneficial to have higher levels of price volatility as the AERs of $ \textsf{TOS}_{\boldsymbol{\lambda}^*} $ first peak at $ \rho = 1.88 $ and then decrease when $ \rho $ varies from 1.88 (local maximum) to 3.48 (local minimum). Similar observations can also be made regarding the exponential cost in Fig. \ref{fig_alpha_AER_rho_k} (c), where the curve of exponential $ f $ first peaks and then decreases w.r.t $ \rho\in [1, 15] $. To be more clear, we also plot the 25th/75th percentile of the ERs in Fig. \ref{fig_AER_lin_qua_exp} and zoom out of the curve regarding the exponential cost in Fig. \ref{fig_AER_lin_qua_exp} (c) with a wider range of fluctuation ratio $ \rho\in [1,120] $. As shown in Fig. \ref{fig_AER_lin_qua_exp} (b) and Fig. \ref{fig_AER_lin_qua_exp} (c),  the empirical performance of $ \textsf{TOS}_{\boldsymbol{\lambda}^*} $ indeed improves when $ \rho $ is roughly within $ [5, 40] $ and then deteriorates when $ \rho $ further increases.
	
	\item \textit{A faster-rising marginal cost does not necessarily lead to a worse empirical performance}. As discussed earlier, Fig. \ref{fig_alpha_AER_rho_k} (a) shows that existence of  faster-rising marginal costs leads to better competitive ratios, and thus a more robust performance in the worst case. However, no similar definitive conclusion can be made regarding the empirical performance. In fact, combining the results illustrated in Fig. \ref{fig_alpha_AER_rho_k} (c) and Fig. \ref{fig_AER_lin_qua_exp} (a)-(c), we can see that the empirical performance of $ \textsf{TOS}_{\boldsymbol{\lambda}^*} $ has much more complex behaviors w.r.t. different cost functions. For example, Fig. \ref{fig_alpha_AER_rho_k} (c) shows that the AERs under exponential $ f $ are, quite surprisingly, smaller than those under linear $ f $ for $ \rho > 7.5$. Similarly, it is interesting to see that the AER under quadratic $ f $ is always the best of the three.  We argue that such counter-intuitive results are due to the fundamental flaw of worst-case analysis and more research is needed to study the impact of different production costs on online selection under various types of input instances.
\end{itemize}

Recall that the fluctuation ratio $ \rho $ plays a pivotal role in the design of the optimal threshold $ \boldsymbol{\lambda}^* $ as well as in shaping the arrival instance $ \mathcal{I}_n $. Thus, an intriguing question is: what if the \textit{effective fluctuation ratio} $ \hat{\rho} $ for designing  $ \boldsymbol{\lambda}^* $ differs from the \textit{true fluctuation ratio} $ \rho $ for generating the input instance $ \mathcal{I}_n $? To answer this question, Fig. \ref{fig_AER_rho_estimate} visualizes our numerical results regarding the empirical performance of $ \textsf{TOS}_{\boldsymbol{\lambda}^*} $ w.r.t. $ \hat{\rho}/\rho $. Specifically,  $ \hat{\rho}/\rho > 1 $ indicates an ``\textit{overestimation}" of the true fluctuation ratio $ \rho $ and conversely, $ \hat{\rho}/\rho < 1 $ implies ``\textit{underestimation}."  Fig. \ref{fig_AER_rho_estimate} shows that in general it is increasingly more beneficial to underestimate $ \rho $ when we have a shorter arrival instance (i.e., fewer buyers). An intuitive explanation is that, when there are fewer buyers to arrive, it is better to be more aggressive or greedy (i.e., using a smaller $ \hat{\rho} $ to design $ \boldsymbol{\lambda}^* $) since this will help produce/sell more units as early as possible; otherwise, the seller may risk of having no buyers in the end. However, depending on different setups, overestimation may also lead to better empirical performance. For example, Fig. \ref{fig_AER_rho_estimate} (a) shows that it is better to slightly overestimate $ \rho $ when the length of the input instance is 1000.

\section{Conclusions and Future Work}

We initiated the study of online selection with convex cost (OSCC) and presented a class of simple threshold policies with tight guarantees for various setups of OSCC. Specifically, we developed a threshold-based online selection algorithm that achieves the optimal competitive ratio of all deterministic algorithms. We also derived a tight, setup-dependent lower bound on competitive ratios for randomized algorithms, and proved that the competitive ratio of our threshold policy converges to this lower bound when the capacity $ k $ is large. Our results generalize various classic online search, pricing, and auction problems, and broadly extend the applicability of these modes to real-world online resource allocation problems involving production or operating costs.

One intriguing question left for future research is whether there exists any randomized algorithm that can achieve a competitive ratio matching our derived lower bound for finite $ k $, especially in the presence of arbitrary increasing marginal costs? It is also interesting to investigate whether our results  can be extended to more complex settings such as online selection with multi-unit demand and/or multiple types of resources. Finally, it is a natural next step to investigate how to leverage the recent progress on learning-augmented online optimization to overcome the overly-robust nature of worst-case analysis. 

\bibliographystyle{plain}
\bibliography{OSCC_bibliography}

\begin{thebibliography}{10}

\bibitem{online_admission_control_2009}
Noga Alon, Yossi Azar, and Shai Gutner.
\newblock Admission control to minimize rejections and online set cover with
  repetitions.
\newblock {\em ACM Trans. Algorithms}, 6(1), dec 2010.

\bibitem{network_design_cost_2016}
Matthew Andrews, Spyridon Antonakopoulos, and Lisa Zhang.
\newblock Minimum-cost network design with (dis)economies of scale.
\newblock {\em SIAM Journal on Computing}, 45(1):49--66, 2016.

\bibitem{secretary_ML_advice_NuerIPS_2020}
Antonios Antoniadis, Themis Gouleakis, Pieter Kleer, and Pavel Kolev.
\newblock Secretary and online matching problems with machine learned advice.
\newblock In {\em Proceedings of the 34th International Conference on Neural
  Information Processing Systems}, NIPS'20, Red Hook, NY, USA, 2020. Curran
  Associates Inc.

\bibitem{covering_packing}
Y.~Azar et~al.
\newblock Online algorithms for covering and packing problems with convex
  objectives.
\newblock In {\em Proocedings of the 57th Annual Symposium on Foundations of
  Computer Science (FOCS '16)}, pages 148--157, Oct. 2016.

\bibitem{online_allocation_regularizer_2021}
Santiago Balseiro, Haihao Lu, and Vahab Mirrokni.
\newblock Regularized online allocation problems: Fairness and beyond.
\newblock In Marina Meila and Tong Zhang, editors, {\em Proceedings of the 38th
  International Conference on Machine Learning}, volume 139 of {\em Proceedings
  of Machine Learning Research}, pages 630--639. PMLR, 18--24 Jul 2021.

\bibitem{OPD_learning_2020}
Etienne Bamas, Andreas Maggiori, and Ola Svensson.
\newblock The primal-dual method for learning augmented algorithms.
\newblock In H.~Larochelle, M.~Ranzato, R.~Hadsell, M.F. Balcan, and H.~Lin,
  editors, {\em Advances in Neural Information Processing Systems}, volume~33,
  pages 20083--20094. Curran Associates, Inc., 2020.

\bibitem{CA_multi_unit_2003}
Yair Bartal, Rica Gonen, and Noam Nisan.
\newblock Incentive compatible multi unit combinatorial auctions.
\newblock In {\em Proceedings of the 9th Conference on Theoretical Aspects of
  Rationality and Knowledge}, TARK '03, pages 72--87, New York, NY, USA, 2003.
  ACM.

\bibitem{Blum_2011}
Avrim Blum, Anupam Gupta, Yishay Mansour, and Ankit Sharma.
\newblock Welfare and profit maximization with production costs.
\newblock In {\em Proceedings of the 52nd Annual Symposium on Foundations of
  Computer Science (FOCS '11)}, pages 77--86, 2011.

\bibitem{convex_nonlinear_book_2006}
Jonathan Borwein and Adrian Lewis.
\newblock {\em {Convex Analysis and Nonlinear Optimization}}.
\newblock CMS Books in Mathematics. Springer New York, New York, NY, 2006.

\bibitem{Buchbinder2015}
Niv Buchbinder and R.~Gonen.
\newblock Incentive compatible multi-unit combinatorial auctions: A primal dual
  approach.
\newblock {\em Algorithmica}, 72:167--190, 2015.

\bibitem{online_routing_FOCS}
Niv {Buchbinder} and Joseph {Naor}.
\newblock Improved bounds for online routing and packing via a primal-dual
  approach.
\newblock In {\em Proceedings of the 47th Annual IEEE Symposium on Foundations
  of Computer Science (FOCS '06)}, Oct 2006.

\bibitem{Buchbinder2009}
Niv Buchbinder and Joseph~(Seffi) Naor.
\newblock {Online Primal-Dual Algorithms for Covering and Packing}.
\newblock {\em Mathematics of Operations Research}, 34(2):270--286, May 2009.

\bibitem{OPD2009}
Niv Buchbinder and Joseph (Seffi)~Naor.
\newblock The design of competitive online algorithms via a primal-dual
  approach.
\newblock {\em Found. Trends Theor. Comput. Sci.}, 3(2-3):93--263, February
  2009.

\bibitem{OKP_one_way_trading_Cao_2020}
Ying Cao, Bo~Sun, and Danny~H.K. Tsang.
\newblock Optimal online algorithms for one-way trading and online knapsack
  problems: A unified competitive analysis.
\newblock In {\em 2020 59th IEEE Conference on Decision and Control (CDC)},
  pages 1064--1069, 2020.

\bibitem{network_RA}
Ying-Ju Chen and Jiawei Zhang.
\newblock Design of price mechanisms for network resource allocation via price
  of anarchy.
\newblock {\em Mathematical Programming}, 131(1):333--364, Feb 2012.

\bibitem{Adword2009}
Nikhil~R. Devanur and Thomas~P. Hayes.
\newblock The {Adwords} problem: Online keyword matching with budgeted bidders
  under random permutations.
\newblock In {\em Proceedings of the 10th ACM Conference on Electronic Commerce
  (EC '09)}, 2009.

\bibitem{concave_return}
Nikhil~R. Devanur and Kamal Jain.
\newblock Online matching with concave returns.
\newblock In {\em Proceedings of the Forty-fourth Annual ACM Symposium on
  Theory of Computing}, STOC '12, pages 137--144, New York, NY, USA, 2012. ACM.

\bibitem{time_series_search_2001}
R.~El-Yaniv, A.~Fiat, R.~M. Karp, and G.~Turpin.
\newblock {Optimal search and one-way trading online algorithms}.
\newblock {\em Algorithmica}, 30(1):101--139, 2001.

\bibitem{OMD_EV_2011}
Enrico~H. Gerding, Valentin Robu, Sebastian Stein, David~C. Parkes, Alex
  Rogers, and Nicholas~R. Jennings.
\newblock Online mechanism design for electric vehicle charging.
\newblock In {\em The 10th International Conference on Autonomous Agents and
  Multiagent Systems (AAMAS '11)}, page 811–818, 2011.

\bibitem{random_order_Gupta_2021}
Anupam Gupta and Sahil Singla.
\newblock {\em Random-Order Models. Chapter 11 -- Beyond the Worst-Case
  Analysis of Algorithms}.
\newblock Cambridge University Press, 2021.

\bibitem{Huang_2019}
Zhiyi Huang and Anthony Kim.
\newblock {Welfare maximization with production costs: A primal dual approach}.
\newblock {\em Games and Economic Behavior}, 118(C):648--667, 2019.

\bibitem{admission_queue_setup_costs_2020}
Emmanuel Hyon and Alain Jean-Marie.
\newblock Optimal control of admission in service in a queue with impatience
  and setup costs.
\newblock {\em Performance Evaluation}, 144:102134, 2020.

\bibitem{okp_predictions_2021}
Sungjin Im, Ravi Kumar, Mahshid Montazer~Qaem, and Manish Purohit.
\newblock Online knapsack with frequency predictions.
\newblock In M.~Ranzato, A.~Beygelzimer, Y.~Dauphin, P.S. Liang, and J.~Wortman
  Vaughan, editors, {\em Advances in Neural Information Processing Systems},
  volume~34, pages 2733--2743. Curran Associates, Inc., 2021.

\bibitem{online_selection_constrained_adversary_ICML_2021}
Zhihao Jiang, Pinyan Lu, Zhihao~Gavin Tang, and Yuhao Zhang.
\newblock Online selection problems against constrained adversary.
\newblock In Marina Meila and Tong Zhang, editors, {\em Proceedings of the 38th
  International Conference on Machine Learning}, volume 139 of {\em Proceedings
  of Machine Learning Research}, pages 5002--5012. PMLR, 18--24 Jul 2021.

\bibitem{network_resource_mechanism_jsac_2006}
Ramesh {Johari} and John~N. {Tsitsiklis}.
\newblock A scalable network resource allocation mechanism with bounded
  efficiency loss.
\newblock {\em IEEE Journal on Selected Areas in Communications},
  24(5):992--999, 2006.

\bibitem{b_matching}
Bala Kalyanasundaram and Kirk~R. Pruhs.
\newblock An optimal deterministic algorithm for online b-matching.
\newblock {\em Theor. Comput. Sci.}, 233(1-2):319--325, February 2000.

\bibitem{k_secretary_2005}
Robert Kleinberg.
\newblock A multiple-choice secretary algorithm with applications to online
  auctions.
\newblock In {\em Proceedings of the Sixteenth Annual ACM-SIAM Symposium on
  Discrete Algorithms}, SODA '05, page 630–631, USA, 2005. Society for
  Industrial and Applied Mathematics.

\bibitem{one_way_trading_2019}
Qiulin Lin, Hanling Yi, John Pang, Minghua Chen, Adam Wierman, Michael Honig,
  and Yuanzhang Xiao.
\newblock Competitive online optimization under inventory constraints.
\newblock {\em Proc. ACM Meas. Anal. Comput. Syst. (SIGMETRICS '19)}, 3(1),
  March 2019.

\bibitem{k_search_2009}
Julian Lorenz, Konstantinos Panagiotou, and Angelika Steger.
\newblock Optimal algorithms for $k$-search with application in option pricing.
\newblock {\em Algorithmica}, 55(2):311--328, 2009.

\bibitem{Average_knapsack_1995}
George~S. Lueker.
\newblock Average-case analysis of off-line and on-line knapsack problems.
\newblock In {\em Proceedings of the Sixth Annual ACM-SIAM Symposium on
  Discrete Algorithms}, SODA '95, pages 179--188, USA, 1995. Society for
  Industrial and Applied Mathematics.

\bibitem{willma_OR_2020}
Will Ma and David Simchi-Levi.
\newblock Algorithms for online matching, assortment, and pricing with tight
  weight-dependent competitive ratios.
\newblock {\em Operations Research}, 68(6):1787--1803, 2020.

\bibitem{stochastic_knapsack_1995}
A.~Marchetti-Spaccamela and C.~Vercellis.
\newblock Stochastic on-line knapsack problems.
\newblock {\em Mathematical Programming}, 68(1):73--104, January 1995.

\bibitem{Adwords2007}
Aranyak Mehta, Amin Saberi, Umesh Vazirani, and Vijay Vazirani.
\newblock Adwords and generalized online matching.
\newblock {\em Journal of ACM}, 54(5), October 2007.

\bibitem{AGT}
Noam Nisan, Tim Roughgarden, \'{E}va Tardos, and Vijay Vazirani.
\newblock {\em Algorithmic Game Theory}.
\newblock Cambridge University Press, 2007.

\bibitem{online_conversion_2021}
Bo~Sun, Russell Lee, Mohammad~H. Hajiesmaili, Adam Wierman, and Danny~H.K.
  Tsang.
\newblock Pareto-optimal learning-augmented algorithms for online conversion
  problems.
\newblock In {\em Annual Conference on Neural Information Processing Systems
  2021 (NeurIPS '21), December 6-14, 2021}, pages 10339--10350, 2021.

\bibitem{OKP_EV_BoSun_2020}
Bo~Sun, Ali Zeynali, Tongxin Li, Mohammad Hajiesmaili, Adam Wierman, and
  Danny~H.K. Tsang.
\newblock Competitive algorithms for the online multiple knapsack problem with
  application to electric vehicle charging.
\newblock {\em Proc. ACM Meas. Anal. Comput. Syst. (SIGMETRICS '21)}, 4(3),
  November 2020.

\bibitem{Tan_ORA_2020}
Xiaoqi Tan, Bo~Sun, Alberto Leon-Garcia, Yuan Wu, and {\relax D.H.K}~Tsang.
\newblock Mechanism design for online resource allocation: A unified approach.
\newblock {\em Proc. ACM Meas. Anal. Comput. Syst. (SIGMETRICS '20)}, 4(2),
  June 2020.

\bibitem{wierman_speed_scaling_2009}
A.~Wierman, L.~L.~H. Andrew, and A.~Tang.
\newblock Power-aware speed scaling in processor sharing systems.
\newblock In {\em IEEE INFOCOM 2009}, pages 2007--2015, 2009.

\bibitem{XZhang_2015}
X.~Zhang, Z.~Huang, C.~Wu, Z.~Li, and F.~Lau.
\newblock Online auctions in iaas clouds: Welfare and profit maximization with
  server costs.
\newblock In {\em Proceedings of the 2015 ACM SIGMETRICS International
  Conference on Measurement and Modeling of Computer Systems (SIGMETRICS '15)},
  pages 3--15, 2015.

\bibitem{Zhang2017}
Z.~Zhang, Z.~Li, and C.~Wu.
\newblock Optimal posted prices for online cloud resource allocation.
\newblock {\em Proc. ACM Meas. Anal. Comput. Syst. (SIGMETRICS '17)}, 1(1),
  June 2017.

\bibitem{OKP_Zhou_2008}
Yunhong Zhou, Deeparnab Chakrabarty, and Rajan Lukose.
\newblock Budget constrained bidding in keyword auctions and online knapsack
  problems.
\newblock In {\em Proceedings of the 17th International Conference on World
  Wide Web (WWW '08)}. ACM, 2008.

\end{thebibliography}

\newpage
\appendix

\section{Proof of Proposition \ref{theorem_uniqueness}}
\label{proof_of_solution_SoE}
Before proving Proposition \ref{theorem_uniqueness}, we first give the following lemmas. 

\begin{lemma}\label{theorem_OPD_sufficiency_proof}
	$ \normalfont\textsf{TOS}_{\boldsymbol{\lambda}} $ is $ \alpha $-competitive if the  admission threshold $ \boldsymbol{\lambda} $ satisfies the following conditions:
	\begin{subequations}
		\begin{align} \label{eq_OPD_inequalities_proof_1}
			& p_{\min}(\tau+1) - f(\tau+1) \geq \frac{1}{\alpha} f^{*}(\lambda_{\tau+1}),\\ \label{eq_OPD_inequalities_proof_2}
			& \lambda_i - c_{i+1}
			\geq  \frac{1}{\alpha}\Big(f^{*}(\lambda_{i+1}) - f^{*}(\lambda_i)\Big), \forall i\in \{\tau+1, \cdots, \bar{k}-1\},\\ \label{eq_OPD_inequalities_proof_3}
			& \lambda_{\bar{k}} = p_{\max}.
		\end{align}
	\end{subequations}
\end{lemma}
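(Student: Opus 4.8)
The plan is to reduce Lemma~\ref{theorem_OPD_sufficiency_proof} to Corollary~\ref{theorem_OPD_inequality} (or, equivalently, to the competitive-ratio formula of Proposition~\ref{theorem_CR_threshold}). Since $\boldsymbol{\lambda}$ is already assumed to be an admission threshold and condition~\eqref{eq_OPD_inequalities_proof_3} supplies $\lambda_{\bar k}=p_{\max}$, the only hypotheses of Corollary~\ref{theorem_OPD_inequality} left to verify are (i) the turning-point membership $\tau\in\mathcal{T}^{(\alpha)}$ and (ii) the aggregate sufficient inequalities in Eq.~\eqref{eq_OPD_sufficiency}. I would derive both from conditions~\eqref{eq_OPD_inequalities_proof_1}--\eqref{eq_OPD_inequalities_proof_2} by a short telescoping argument.

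First I would record the identity $\sum_{i=0}^{\tau}(\lambda_i-c_{i+1}) = p_{\min}(\tau+1)-f(\tau+1) = g(\tau+1)$, which holds because $\lambda_0=\cdots=\lambda_\tau=p_{\min}$ by Definition~\ref{def_admission_threshold} and $f(\tau+1)=\sum_{j=1}^{\tau+1}c_j$ with $f(0)=0$. Condition~\eqref{eq_OPD_inequalities_proof_1} then reads $g(\tau+1)\ge\tfrac1\alpha f^{*}(\lambda_{\tau+1})$; since $\lambda_{\tau+1}\ge p_{\min}$ and $f^{*}$ is strictly increasing by Lemma~\ref{theorem_conjugate}, this yields $g(\tau+1)\ge\tfrac1\alpha f^{*}(p_{\min})$. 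As $g$ is strictly increasing on $\{0,\dots,\ubar{k}\}$, applying $g^{\textsf{inv}}$ (Eq.~\eqref{eq_inverse_g}) gives $\tau+1\ge g^{\textsf{inv}}\!\big(\tfrac1\alpha f^{*}(p_{\min})\big)$; combined with $\tau\le\ubar{k}-1$ from Definition~\ref{def_admission_threshold}, this places $\tau$ in the set $\mathcal{T}^{(\alpha)}$ of Eq.~\eqref{eq_K_alpha}, establishing (i).

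For (ii), I would telescope condition~\eqref{eq_OPD_inequalities_proof_2}: summing $\lambda_i-c_{i+1}\ge\tfrac1\alpha\big(f^{*}(\lambda_{i+1})-f^{*}(\lambda_i)\big)$ over $i=\tau+1,\dots,m-1$ for any $m\in\{\tau+1,\dots,\bar k\}$ (an empty sum when $m=\tau+1$) gives $\sum_{i=\tau+1}^{m-1}(\lambda_i-c_{i+1})\ge\tfrac1\alpha\big(f^{*}(\lambda_m)-f^{*}(\lambda_{\tau+1})\big)$. Adding condition~\eqref{eq_OPD_inequalities_proof_1} rewritten as $\sum_{i=0}^{\tau}(\lambda_i-c_{i+1})\ge\tfrac1\alpha f^{*}(\lambda_{\tau+1})$ cancels the $f^{*}(\lambda_{\tau+1})$ terms and produces $\sum_{i=0}^{m-1}(\lambda_i-c_{i+1})\ge\tfrac1\alpha f^{*}(\lambda_m)$ for every $m\in\{\tau+1,\dots,\bar k\}$, i.e.\ exactly Eq.~\eqref{eq_OPD_sufficiency} (the case $m=\bar k$ using $\lambda_{\bar k}=p_{\max}$). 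With all hypotheses of Corollary~\ref{theorem_OPD_inequality} in hand, $\textsf{TOS}_{\boldsymbol{\lambda}}$ is $\alpha$-competitive.

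Since the whole argument is a telescoping of the given inequalities plus a one-line check of the turning-point constraint, there is no genuine obstacle here; the only points that warrant care are the index bookkeeping (empty sums at the endpoints, and the fact that the $i=0,\dots,\tau$ block is governed by $g$ while the $i\ge\tau+1$ block is governed by $f^{*}$) and the use of strict monotonicity of $f^{*}$ to propagate $\lambda_{\tau+1}\ge p_{\min}$ through $f^{*}$. As an alternative that avoids invoking Corollary~\ref{theorem_OPD_inequality}, one could substitute the derived inequalities directly into the closed-form competitive ratio of Proposition~\ref{theorem_CR_threshold} and bound each of its finitely many fractional terms by $\alpha$; this route is equivalent and equally short.
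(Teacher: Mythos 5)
Your proposal is correct and follows essentially the same route as the paper: it telescopes condition~\eqref{eq_OPD_inequalities_proof_2} on top of condition~\eqref{eq_OPD_inequalities_proof_1} (using $\lambda_0=\cdots=\lambda_\tau=p_{\min}$ and $f(\tau+1)=\sum_{j=1}^{\tau+1}c_j$) to recover the aggregate inequalities of Corollary~\ref{theorem_OPD_inequality}, and checks $\tau\in\mathcal{T}^{(\alpha)}$ via $\lambda_{\tau+1}\ge p_{\min}$ and the strict monotonicity of $f^{*}$. The index bookkeeping you flag is handled correctly, so nothing further is needed.
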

\begin{proof}
	Eq. \eqref{eq_OPD_inequalities_proof_1} implies that $ p_{\min}(\tau+1) - f(\tau+1) = g(\tau+1) \geq  \frac{1}{\alpha} f^{*}(\lambda_{\tau+1}) \geq \frac{1}{\alpha} f^{*}(p_{\min})$, where $ \lambda_{\tau+1} \geq p_{\min}$ is because $ \boldsymbol{\lambda} $ is an admission threshold (i.e., Eq. \eqref{eq_admission_threshold}). Thus, the turning point $ \tau $ is guaranteed to be within the discrete set of $ \mathcal{T}^{(\alpha)} $ as long as Eq. \eqref{eq_OPD_inequalities_proof_1} is satisfied. 
	
	For each $ i \in \{\tau+1, \cdots, \bar{k}-1\}$,  combining Eq. \eqref{eq_OPD_inequalities_proof_1} and Eq. \eqref{eq_OPD_inequalities_proof_2} via telescoping sum leads to
	\vspace{-0.2cm}
	\begin{equation*}\vspace{-0.1cm}
		\sum_{j=0}^i \big(\lambda_j - c_{j+1}\big) \geq \frac{1}{\alpha} f^{*}(\lambda_{i+1}), \quad \forall i\in \{\tau, \tau+1,\cdots,\bar{k}-1\},
	\end{equation*}
	where we use the fact that $ \lambda_0 = \lambda_1 = \cdots  = \lambda_\tau =  p_{\min} $ and $ f(\tau+1) = \sum_{j=1}^{\tau+1} c_j $.  Thus, an admission threshold $ \boldsymbol{\tau} $ that satisfies Eq. \eqref{eq_OPD_inequalities_proof_1} -- Eq. \eqref{eq_OPD_inequalities_proof_3} indicates that the sufficient inequalities in Corollary \ref{theorem_OPD_inequality} hold, and thus $\textsf{TOS}_{\boldsymbol{\lambda}}$ is $ \alpha $-competitive. 
\end{proof}

We remark that any admission threshold $ \boldsymbol{\lambda} $ that satisfies Eq. \eqref{eq_OPD_inequalities_proof_1} -- Eq. \eqref{eq_OPD_inequalities_proof_3} implies that it also satisfies Eq. \eqref{eq_OPD_sufficiency}, but not vice versa. Thus, Lemma \ref{theorem_OPD_sufficiency_proof} provides a set of more restricted sufficient conditions than Corollary \ref{theorem_OPD_inequality}. 

As can be seen from Eq. \eqref{eq_OPD_inequalities_proof_2}, the admission threshold satisfies a certain recursive inequality parameterized by $ \alpha $. It is known that such recursive equations can be characterized by first-order difference equations. The following Lemma \ref{theorem_difference_equation} shows the existence of a unique solution for general first-order difference equations. 

\begin{lemma}\label{theorem_difference_equation}
	For a given $ z_0 $, every first-order difference equation $ z_t = H(z_{t-1},t) $ has a unique solution with the initial value given by $ z_0 $ at $ t = 0 $.
\end{lemma}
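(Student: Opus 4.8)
The plan is to prove the statement by a direct induction on $t$, exploiting the fact that a first-order difference equation is, by definition, a rule that determines each term from its immediate predecessor together with the index. First I would dispose of existence constructively: given the prescribed value $z_0$ at $t=0$, set $z_1 = H(z_0,1)$, then $z_2 = H(z_1,2)$, and in general $z_t = H(z_{t-1},t)$ for every $t\ge 1$. Since $H$ is single-valued, each $z_t$ is well-defined as soon as $z_{t-1}$ is, so this recursion produces a sequence $\{z_t\}_{t\ge 0}$ that satisfies the difference equation and has the required initial value; the same construction works whether the index set is all of $\mathbb{Z}_{\ge 0}$ or a finite range $\{0,1,\dots,N\}$.

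For uniqueness I would take two solutions $\{z_t\}$ and $\{\tilde z_t\}$ with $z_0=\tilde z_0$ and argue inductively: the base case $t=0$ holds by hypothesis, and if $z_{t-1}=\tilde z_{t-1}$ then $z_t = H(z_{t-1},t) = H(\tilde z_{t-1},t) = \tilde z_t$. Hence $z_t=\tilde z_t$ for every $t$ in the index set, which is the claimed uniqueness.

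The lemma is elementary and carries no genuine obstacle; the only point deserving a word of care is that the recursion must stay inside the domain on which $H$ is defined. In the intended application this $H$ is obtained by solving Eq. \eqref{eq_OPD_inequalities_proof_2} (taken with equality) for $\lambda_{i+1}$ in terms of $\lambda_i$ and the index, which is legitimate because $f^{*}$ is continuous and strictly increasing on $[p_{\min},p_{\max}]$ by Lemma \ref{theorem_conjugate}, so the resulting map is a bona fide single-valued function on the relevant range of prices. The substantive role of the lemma is downstream: it records that, once $\alpha$ and the turning point $\tau$ are fixed, the admission-threshold recursion pins down $\boldsymbol{\lambda}$ uniquely, which is precisely the ingredient needed for the existence-and-uniqueness analysis of $\textsf{SoE}(\bm{\chi}^{(\tau)})$ in Proposition \ref{theorem_uniqueness}.
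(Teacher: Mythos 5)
Your proof is correct and is precisely the standard forward-induction argument that the paper is appealing to when it writes off Lemma \ref{theorem_difference_equation} as ``a standard result in first-order linear/nonlinear difference equations.'' Your additional remark about the recursion staying inside the domain of $H$ is a worthwhile caveat, and it is exactly the point the paper handles in the proof of Lemma \ref{theorem_difference_equation_X} by invoking the strict monotonicity of $f^{*}$ so that the \emph{reversed} recursion (from the terminal value $\chi_{\bar{k}-\tau}^{(\tau)} = p_{\max}$ backward to $\chi_1^{(\tau)}$) is single-valued.
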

\begin{proof}
	This is a standard result in first-order linear/nonlinear difference equations.  
\end{proof}	

Based on Lemma \ref{theorem_difference_equation}, we give Lemma \ref{theorem_difference_equation_X} below, which plays a key role in our following proof.

\begin{lemma}\label{theorem_difference_equation_X}
	For each given  $ \tau \in \{0,1,\cdots,\ubar{k}-1\} $, if $ \alpha_{\textsc{is}}^{(\tau)} > 0 $, then the following \textbf{first-order nonlinear difference equation} has a unique solution with terminal value $ \chi_{\bar{k}-\tau}^{(\tau)} = p_{\max} $:
	\begin{align}\label{eq_nonliner_difference_equation}
		f^{*}\left(\chi_i^{(\tau)}\right) = f^{*}\left(\chi_{i-1}^{(\tau)}\right) + \alpha_{\textsc{is}}^{(\tau)}  \chi_{i-1}^{(\tau)} - \alpha_{\textsc{is}}^{(\tau)} c_{\tau + i}, \quad i\in \{2, 3, \cdots,\bar{k}-\tau\}.
	\end{align}	
	Meanwhile, the solution has the following properties:
	\begin{itemize}
		\item $ \chi_{i}^{(\tau)} > \chi_{i-1}^{(\tau)} $, for all $ i\in \{2, 3, \cdots,\bar{k}-\tau\} $.
		\item $ \chi_i^{(\tau)} > c_{\tau+i+1} $, for all $ i\in \{2, 3, \cdots,\bar{k}-\tau\} $.
	\end{itemize}
\end{lemma}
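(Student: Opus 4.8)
The plan is to recast the \emph{implicit} recursion \eqref{eq_nonliner_difference_equation} as an \emph{explicit} one that can be solved ``backwards'' from the prescribed terminal value. To this end, introduce the auxiliary map
\[
\phi(x)\;\triangleq\; f^{*}(x)+\alpha_{\textsc{is}}^{(\tau)}\,x,\qquad x\in\mathbb{R}^{+}.
\]
By Lemma~\ref{theorem_conjugate}, $f^{*}$ is continuous and strictly increasing on $\mathbb{R}^{+}$, with $f^{*}(0)=\max_{i\in\{0,\dots,k\}}\{-f(i)\}=-f(0)=0$ because $f$ is increasing and $f(0)=0$. Since $\alpha_{\textsc{is}}^{(\tau)}>0$, the map $\phi$ is then continuous, strictly increasing, with $\phi(0)=0$ and $\phi(x)\to+\infty$; hence it is a bijection of $[0,+\infty)$ onto itself whose inverse $\phi^{-1}$ is continuous and strictly increasing.

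\textbf{Existence, uniqueness, positivity.} Rearranging \eqref{eq_nonliner_difference_equation} to $\phi\big(\chi_{i-1}^{(\tau)}\big)=f^{*}\big(\chi_{i}^{(\tau)}\big)+\alpha_{\textsc{is}}^{(\tau)}c_{\tau+i}$ yields the backward rule
\[
\chi_{i-1}^{(\tau)}\;=\;\phi^{-1}\!\Big(f^{*}\big(\chi_{i}^{(\tau)}\big)+\alpha_{\textsc{is}}^{(\tau)}c_{\tau+i}\Big),\qquad i=\bar{k}-\tau,\ \bar{k}-\tau-1,\ \dots,\ 2 .
\]
Starting from $\chi_{\bar{k}-\tau}^{(\tau)}=p_{\max}>0$, each step is well defined because the argument of $\phi^{-1}$ is nonnegative ($f^{*}\ge 0$ on $\mathbb{R}^{+}$ and $c_{\tau+i}\ge 0$), and strictly positive whenever $\chi_{i}^{(\tau)}>0$; so by induction the $\bar{k}-\tau-1$ iterates $\chi_{1}^{(\tau)},\dots,\chi_{\bar{k}-\tau-1}^{(\tau)}$ exist, are uniquely determined, and are positive. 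Equivalently, the reindexing $w_{j}=\chi_{\bar{k}-\tau-j}^{(\tau)}$ turns this into a first-order difference equation $w_{j}=H(w_{j-1},j)$ with $w_{0}=p_{\max}$, so Lemma~\ref{theorem_difference_equation} applies directly. Since $c_{\tau+i}\le c_{\bar{k}}\le p_{\max}$ for every index in play, $\phi^{-1}$ is only ever evaluated on $[0,\phi(p_{\max})]$, so all iterates stay in $(0,p_{\max}]$ and the properties of $f^{*}$ above are invoked within $\mathbb{R}^{+}$ only.

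\textbf{Monotonicity and lower bound.} Applying $\phi$ to both sides gives the key equivalence
\[
\chi_{i-1}^{(\tau)}<\chi_{i}^{(\tau)}\;\Longleftrightarrow\;f^{*}\big(\chi_{i}^{(\tau)}\big)+\alpha_{\textsc{is}}^{(\tau)}c_{\tau+i}<f^{*}\big(\chi_{i}^{(\tau)}\big)+\alpha_{\textsc{is}}^{(\tau)}\chi_{i}^{(\tau)}\;\Longleftrightarrow\; c_{\tau+i}<\chi_{i}^{(\tau)} ,
\]
cancelling $f^{*}(\chi_{i}^{(\tau)})$ and dividing by $\alpha_{\textsc{is}}^{(\tau)}>0$. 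I would then prove by backward induction on $i$, from $i=\bar{k}-\tau-1$ down to $i=1$, the lower bound $P(i):\ \chi_{i}^{(\tau)}>c_{\tau+i+1}$ (the index range being that of the corresponding bullet of Proposition~\ref{theorem_uniqueness}). For the inductive step, $P(i)$ gives $\chi_{i}^{(\tau)}>c_{\tau+i+1}\ge c_{\tau+i}$ because marginal costs are non-decreasing, whence
\[
\phi\big(\chi_{i-1}^{(\tau)}\big)=f^{*}\big(\chi_{i}^{(\tau)}\big)+\alpha_{\textsc{is}}^{(\tau)}c_{\tau+i}\;>\;f^{*}\big(c_{\tau+i}\big)+\alpha_{\textsc{is}}^{(\tau)}c_{\tau+i}=\phi\big(c_{\tau+i}\big),
\]
so $\chi_{i-1}^{(\tau)}>c_{\tau+i}=c_{\tau+(i-1)+1}$, which is $P(i-1)$. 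Feeding $\chi_{i}^{(\tau)}>c_{\tau+i+1}\ge c_{\tau+i}$ into the equivalence gives $\chi_{i-1}^{(\tau)}<\chi_{i}^{(\tau)}$ for every $i\le\bar{k}-\tau-1$, while the top link $\chi_{\bar{k}-\tau-1}^{(\tau)}<\chi_{\bar{k}-\tau}^{(\tau)}=p_{\max}$ comes from the base case; chaining these yields $\chi_{1}^{(\tau)}<\dots<\chi_{\bar{k}-\tau-1}^{(\tau)}<p_{\max}$, which together with $P(i)$ is exactly the asserted pair of properties.

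The main obstacle is the base case of this induction: showing $\chi_{\bar{k}-\tau-1}^{(\tau)}>c_{\bar{k}}$, equivalently (by the equivalence above with $i=\bar{k}-\tau$) that $p_{\max}>c_{\bar{k}}$. This is precisely where the defining relation $\bar{k}=\Gamma(p_{\max})$ enters, and it only guarantees $p_{\max}\in[c_{\bar{k}},c_{\bar{k}+1})$, i.e.\ $p_{\max}\ge c_{\bar{k}}$. For every non-degenerate setup the inequality is strict and all the inequalities above propagate strictly; the boundary coincidence $p_{\max}=c_{\bar{k}}$ (and the analogous coincidence at $\ubar{k}$) is excluded by — or handled by a limiting argument under — the ``interesting-setup'' conventions of Section~\ref{section_OSCC_statement}. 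Beyond this, only routine bookkeeping remains: checking that the index ranges reproduce the $\bar{k}-\tau-1$ unknowns of Proposition~\ref{theorem_uniqueness}, and that the convention $c_{k+1}=+\infty$ never appears since $\tau+i+1\le\bar{k}\le k$ throughout.
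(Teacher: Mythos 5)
Your proof takes essentially the same route as the paper's: run the recursion backwards from the terminal value $\chi_{\bar{k}-\tau}^{(\tau)}=p_{\max}$, use strict monotonicity of $f^{*}$ (hence of your auxiliary bijection $\phi$) to make each backward step well-defined, and invoke Lemma~\ref{theorem_difference_equation} for uniqueness. Where the paper says the monotonicity and lower-bound properties ``trivially follow'' from $\alpha_{\textsc{is}}^{(\tau)}>0$ and $p_{\max}\geq c_{\bar{k}}$, you supply the backward induction that actually establishes them, and in doing so you surface two points the paper glosses over. First, the base case $\chi_{\bar{k}-\tau-1}^{(\tau)}>c_{\bar{k}}$ requires $p_{\max}>c_{\bar{k}}$ strictly, whereas $\bar{k}=\Gamma(p_{\max})$ only guarantees $p_{\max}\geq c_{\bar{k}}$; at the boundary $p_{\max}=c_{\bar{k}}$ the strict chain degenerates and the last threshold collides with $p_{\max}$ (in that degenerate case the $\bar{k}$-th unit yields zero marginal profit and one should effectively replace $\bar{k}$ by $\bar{k}-1$). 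Second, you correctly restrict the lower-bound claim to $i\in[\bar{k}-\tau-1]$ as in Proposition~\ref{theorem_uniqueness}, since the range $\{2,\ldots,\bar{k}-\tau\}$ printed in the lemma would at $i=\bar{k}-\tau$ assert $p_{\max}>c_{\bar{k}+1}$, which is false by the definition of $\Gamma$. Both are harmless under the paper's implicit non-degeneracy conventions, but your accounting is tighter than the paper's.
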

\begin{proof}
	Here, the subscript ``IS" of $ \alpha_{\textsc{is}}^{(\tau)} $ means ``increasing segment." In the following we also use ``HS" to represent ``horizontal segment." The  terms of ``increasing" and ``horizontal" are because $\alpha_{\textsc{is}}^{(\tau)} $ and $ \alpha_{\textsc{hs}}^{(\tau)} $ (to be defined in Eq. \eqref{eq_def_IS_FS}) correspond to the increasing and horizontal segments of the admission threshold, respectively (to be made clear in Eq. \eqref{eq_def_IS_FS}).
	
	Lemma \ref{theorem_difference_equation_X} largely follows Lemma \ref{theorem_difference_equation}.  Specifically, if Eq. \eqref{eq_nonliner_difference_equation} transits in the reverse order from $ \chi_{\bar{k}-\tau}^{(\tau)} = p_{\max} $ to $ \chi_1^{(\tau)} $ (i.e., the initial value is given by $ \chi_{\bar{k}-\tau}^{(\tau)} = p_{\max} $), then it is easy to see that we can recursively obtain $ \chi_1^{(\tau)} $ as follows:
	\begin{align*}
		p_{\max} = \chi_{\bar{k}-\tau}^{(\tau)} \rightarrow \chi_{\bar{k}-\tau-1}^{(\tau)}\rightarrow \chi_{\bar{k}-\tau-2}^{(\tau)} \rightarrow \cdots \rightarrow \chi_{i-1}^{(\tau)} \rightarrow \cdots \rightarrow \chi_2^{(\tau)} \rightarrow \chi_1^{(\tau)}.
	\end{align*}
	We emphasize the above reverse transition is unique and well-defined as the conjugation function $ f^* $ is strictly increasing (and thus its inverse is well-defined).  The uniqueness of solutions to any first-order difference equation (i.e., Lemma \ref{theorem_difference_equation}) implies that the mapping from $ \alpha_{\textsc{is}}^{(\tau)}  $ to $ \chi_1^{(\tau)}  $ is  unique as well. Since $ \alpha_{\textsc{is}}^{(\tau)} > 0 $ and $ p_{\max} \geq  c_{\bar{k}} $, $ \chi_{i}^{(\tau)}  > \chi_{i-1}^{(\tau)}  $ and $ \chi_i^{(\tau)}  > c_{\tau+i+1} $ trivially follow. 
\end{proof}

(\textbf{Rationality of} $ \textsf{SoE}(\boldsymbol{\chi}^{(\tau)}) $). The system of equations $ \textsf{SoE}(\boldsymbol{\chi}^{(\tau)}) $ follows by enforcing the equality in Eq. \eqref{eq_OPD_inequalities_proof_1} and Eq. \eqref{eq_OPD_inequalities_proof_2}. The only change is the notation of variables in $ \textsf{SoE}(\boldsymbol{\chi}^{(\tau)}) $, which is revisited here for a better reference.
\begin{align}\label{eq_system_of_equations_X_proof}
	\Big(\textsf{SoE}(\bm{\chi}^{(\tau)})\Big):\ 
	\frac{f^{*}\big(\chi_1^{(\tau)}\big)}{g(\tau+1)} = \frac{f^{*}\big(\chi_2^{(\tau)}\big)-f^{*}\big(\chi_1^{(\tau)}\big)}{\chi_1^{(\tau)}-c_{\tau+2}} = \cdots =  \frac{f^{*}(p_{\max})-f^{*}\big(\chi_{\bar{k}-\tau-1}^{(\tau)}\big)}{\chi_{\bar{k}-\tau-1}^{(\tau)} - c_{\bar{k}}}.
\end{align}



(\textbf{Proof of Proposition \ref{theorem_uniqueness}}). Now we provide the formal proof of Proposition \ref{theorem_uniqueness}.  Lemma \ref{theorem_difference_equation_X} implies that there exists a unique one-to-one mapping between $ \alpha_{\textsc{is}}^{(\tau)}  $ and $ \chi_1^{(\tau)}  $, that is, a given $ \alpha_{\textsc{is}}^{(\tau)} \in (0,+\infty) $ implies a unique  $ \chi_1^{(\tau)} \in (c_{\tau+2},p_{\max}) $. If we denote $ \chi_1^{(\tau)} = \theta \in  (c_{\tau+2},p_{\max})  $, and divide Eq. \eqref{eq_system_of_equations_X_proof} into the following two parts:
\begin{align}\label{eq_def_IS_FS}
	\alpha_{\textsc{is}}^{(\tau)}(\theta) =  \frac{f^*(\chi_2^{(\tau)})-f^*(\theta)}{\theta-c_{\tau+2}} = \cdots =  \frac{f^{*}(p_{\max})-f^{*}\big(\chi_{\bar{k}-\tau-1}^{(\tau)}\big)}{\chi_{\bar{k}-\tau-1}^{(\tau)} - c_{\bar{k}}}, \quad \alpha_{\textsc{hs}}^{(\tau)}(\theta) =
	\frac{f^*(\theta)}{g(\tau+1)},
\end{align}
then $ \alpha_{\textsc{is}}^{(\tau)}(\theta) $ is a well-defined function over $ \theta\in (c_{\tau+2},p_{\max}) $ (by Lemma \ref{theorem_difference_equation_X}). We can further prove that $ \alpha_{\textsc{is}}^{(\tau)}(\theta) $ and $ \alpha_{\textsc{hs}}^{(\tau)}(\theta) $ are strictly-decreasing and strictly-increasing over $ \theta\in (c_{\tau+2},p_{\max}) $, respectively. The proof is elementary, so here we only explain the intuitions:
\begin{itemize}
	\item \textit{Monotonicity of $ \alpha_{\textsc{is}}^{(\tau)}(\theta) $}. By Lemma \ref{theorem_difference_equation_X}, for each given  $ \tau\in \{0,1,\cdots,\ubar{k}-1\} $, $ \chi_{\bar{k}-\tau-1}^{(\tau)} $ is strictly-decreasing over $ \alpha_{\textsc{is}}^{(\tau)} \in (0,+\infty) $. We can show recursively that $ \chi_1^{(\tau)} $ is strictly-decreasing over $ \alpha_{\textsc{is}}^{(\tau)} \in (0,+\infty) $ as well. Thus, based on the monotonicity property of  inverse functions, $ \alpha_{\textsc{is}}^{(\tau)}(\theta) $ is strictly-decreasing over $ \theta\in (c_{\tau+2},p_{\max}) $.
	
	\item \textit{Monotonicity of  $ \alpha_{\textsc{hs}}^{(\tau)}(\theta) $}. Based on the strict monotonicity of the conjugate function $ f^* $ (i.e., Lemma \ref{theorem_conjugate}), it is obvious  that $ \alpha_{\textsc{hs}}^{(\tau)}(\theta) $  is strictly-increasing over $ \theta \in (c_{\tau+2},p_{\max}) $. 
\end{itemize}


Based on the strict monotonicity of $ \alpha_{\textsc{is}}^{(\tau)}(\theta) $ and $ \alpha_{\textsc{hs}}^{(\tau)}(\theta) $, to prove the existence of a unique solution to the system of equations $ \textsf{SoE}(\bm{\chi}^{(\tau)}) $, it suffices to prove that for each given $ \tau\in [\ubar{k}-1] $, there exists a unique $ \theta\in (c_{\tau+2},p_{\max}) $ so that $ \alpha_{\textsc{is}}^{(\tau)}(\theta) = \alpha_{\textsc{hs}}^{(\tau)}(\theta) $. This can be proved by evaluating the lower and upper bounds of $ \alpha_{\textsc{is}}^{(\tau)}(\theta) $ and $ \alpha_{\textsc{hs}}^{(\tau)}(\theta) $. Specifically, when $ \theta \rightarrow c_{\tau+2} $, we have $ \alpha_{\textsc{is}}^{(\tau)}(\theta) \rightarrow +\infty $; when $ \theta \rightarrow p_{\max} $, we have $ \alpha_{\textsc{is}}^{(\tau)}(\theta) \rightarrow 0 $. On the other hand, $ \alpha_{\textsc{hs}}^{(\tau)}(\theta)  $ is always a finite positive real number when $ \theta \in  (c_{\tau+2},p_{\max}) $. The strict monotonicity of $ \alpha_{\textsc{is}}^{(\tau)}(\theta) $ and $\alpha_{\textsc{hs}}^{(\tau)}(\theta) $ implies that there exists a unique $ \theta\in (c_{\tau+2},p_{\max}) $ such that $ \alpha_{\textsc{is}}^{(\tau)}(\theta) = \alpha_{\textsc{hs}}^{(\tau)}(\theta) $. Therefore, there exists a unique set of $ \bar{k}-\tau -1$ positive real numbers $ \bm{\chi}^{(\tau)} = \{\chi_1^{(\tau)}, \chi_2^{(\tau)}, \cdots, \chi_{\bar{k}-\tau-1}^{(\tau)}\} $ that satisfies the system of equations $\normalfont \textsf{SoE}(\bm{\chi}^{(\tau)}) $. The properties of $ \{\chi_1^{(\tau)}, \chi_2^{(\tau)}, \cdots, \chi_{\bar{k}-\tau-1}^{(\tau)}\} $ being monotonic and lower bounded directly follow Lemma \ref{theorem_difference_equation_X}. We thus complete the proof of Proposition \ref{theorem_uniqueness}.

\section{Proof of Theorem \ref{theorem_optimality}}
	\label{sec_proof_of_theorem_optimality}
	
	In this section, we sketch the proof of Theorem \ref{theorem_optimality}. Section \ref{section_optimal_threshold} has explained the sufficiency of Theorem \ref{theorem_optimality}, namely, $\normalfont \textsf{TOS}_{\boldsymbol{\lambda}^*} $ is $ \textsf{CR}_f^*(\rho, k) $-competitive if $ \boldsymbol{\lambda}^* $ satisfies the \textbf{SoSE} in Eq. \eqref{eq_system_of_equations}. Here in this section,  we focus on illustrating the necessity of Theorem \ref{theorem_optimality}, that is, a unique solution to Eq. \eqref{eq_system_of_equations} is necessary for the existence of any $ \textsf{CR}_f^*(\rho, k) $-competitive deterministic online algorithm. 

	\subsection{Definition of Selection Function $ \psi(p) $}
	\label{section_def_packing_function}
	
	An important first step for our following proof is the definition of selection functions below.
	
	\begin{definition}[Selection Functions]\label{def_packing_functions}
		A selection function $ \psi(p)$ is a mapping from $ p\in [p_{\min},p_{\max}] $ to an integer within $ [\bar{k}]$, denoting the total number of selected buyers whose offered price is $ p $.
	\end{definition}
	
	Given an arrival instance, the realization of any deterministic algorithm can be fully characterized by the resulting selection function. For any $ p\in [p_{\min}, p_{\max}] $, we say $ p $ is a \textbf{non-zero point} of $ \psi $ if $ \psi(p) \neq 0 $. We use $ \Omega \triangleq \{\omega_1,\omega_2,\cdots, \omega_L\} $ to denote the set of non-zero points of a selection function $ \psi $,
	where $ L $ denotes the total number of non-zero points. We assume without loss of generality that the non-zero points are labelled in the ascending order, namely, $
	p_{\min} \leq   \omega_1 < \omega_2  < \cdots < \omega_L \leq  p_{\max} $. Based on the definition of $ \Omega $, a selection function $ \psi(p) $ can be written as a finite combination of indicator functions as follows:
	\begin{equation}\label{eq_packing_function}
		\psi(p) = \sum_{\ell =1}^L A_{\ell}\cdot \mathds{1}_{\{p=\omega_{\ell}\}},
	\end{equation}
	where $ A_{\ell} $ denotes the number of selected buyers whose price is $ \omega_{\ell} $. Recall that at most $ \bar{k} $ items will be produced, we thus have $ A_{\ell}\in [\bar{k}] $  and $ L \in [\bar{k}] $. 
	
	(\textbf{Notations}) To simplify the notations in the remaining of this section, we define $ \tau^{(\alpha)} $ by
	\begin{equation*}\label{eq_k_alpha}
		\tau^{(\alpha)} \triangleq  g^{\textsf{inv}}\Big(\frac{1}{\alpha}f^{*}(p_{\min})\Big) - 1,
	\end{equation*}
	which is the smallest integer in $ \mathcal{T}^{(\alpha)} $ (defined in Eq. \eqref{eq_K_alpha}) for a given competitive ratio parameter $ \alpha\in [1,+\infty) $. Note that $ \tau^{(\alpha)} $ is monotonically decreasing in $ \alpha\in [1,+\infty) $. In particular, when $ \alpha\rightarrow +\infty $, $ \tau^{(\alpha)} \rightarrow 0 $; when $ \alpha\rightarrow 1 $, $ \tau^{(\alpha)} \rightarrow \ubar{k}-1 $.
	
	We next prove that \textit{\bfseries the existence of an $ \alpha $-competitive deterministic algorithm, not necessary a threshold policy, is  related to the existence of a set of selection functions subject to some feasibility conditions}.

	\subsection{Necessary Conditions: Existence of A Non-Empty $ \mathcal{P}^{(\alpha)} $} 
	
	
	\begin{definition}[$ \varepsilon $\textsf{-Instance} $ \mathcal{I}_{n}^{(\varepsilon)} $]\label{def_I_n}
		We discretize the interval $ [p_{\min},p_{\max}] $ into $ N^{(\varepsilon)} -1 $ segments with step size $ \varepsilon $, where $ \varepsilon $ is a small positive real number so that $ N^{(\varepsilon)} = (p_{\max}-p_{\min})/\varepsilon + 1$ is an integer. For a given step size $ \varepsilon $, we define $ \mathcal{I}_{n}^{(\varepsilon)} = \{\mathcal{A}_{1}^{(\varepsilon)}, \mathcal{A}_{2}^{(\varepsilon)}, \cdots, \mathcal{A}_{n}^{(\varepsilon)}\}$ and $ p_{n}^{(\varepsilon)} = p_{\min}  + (n-1)\varepsilon $,
		where $ n \in [N^{(\varepsilon)}] $. Specifically, $ \mathcal{I}_{n}^{(\varepsilon)} $ consists of $ n $ groups of buyers denoted by $ \mathcal{A}_{\ell}^{(\varepsilon)} $ for $ \ell\in  [n] $. The first group  $ \mathcal{A}_{1}^{(\varepsilon)} $ consists of $ \Gamma(p_{1}^{(\varepsilon)}) $ copies of identical buyers with price $ p_{1}^{(\varepsilon)} $. $ \mathcal{A}_1^{(\varepsilon)} $ is followed by the second group $ \mathcal{A}_{2}^{(\varepsilon)} $, which consists of $ \Gamma(p_{2}^{(\varepsilon)}) $ copies of identical buyers with price  $ p_{2}^{(\varepsilon)} $. In general, after group $ \mathcal{A}_{\ell}^{(\varepsilon)} $, there is a group $ \mathcal{A}_{\ell+1}^{(\varepsilon)} $ of $ \Gamma(p_{\ell+1}^{(\varepsilon)})  $ copies of identical buyers with price  $ p_{\ell+1}^{(\varepsilon)}$, where $ \ell = [n-1] $.	
	\end{definition}
	
	Definition \ref{def_I_n} indicates that $ p_1^{(\varepsilon)} = p_{\min} $ and $ p_{N^{(\varepsilon)}}^{(\varepsilon)} = p_{\max} $ hold for any small step size $ \varepsilon > 0 $. Based on the above definition of $ \varepsilon $\textsf{-instance}, Proposition \ref{theorem_necessary} below argues that to guarantee the existence of an $ \alpha $-competitive deterministic algorithm, there must exist a set of selection functions that satisfy a certain feasibility conditions.  
	
	\begin{proposition}[Feasibility Conditions]
		\label{theorem_necessary}
		If there is an $ \alpha $-competitive deterministic online algorithm, then there exists a selection function $ \psi(p): [p_{\min},p_{\max}] \rightarrow \{0,1,\cdots, k\} $ such that
		\begin{subequations}
			\begin{align} \normalfont  
				\int_{p_{\min}}^p \eta \psi(\eta)d\eta - f\left(\int_{p_{\min}}^p \psi(\eta)d\eta\right) \geq  \frac{1}{\alpha} f^{*}(p), \quad \forall p \in (p_{\min},p_{\max}],\label{eq_necessary_inequality}\\
				\psi(p_{\min}) \in \{\tau^{(\alpha)} + 1,\cdots, \ubar{k}\}, \  \int_{p_{\min}}^{p_{\max}} \psi(\eta)d\eta \leq \bar{k}. \label{eq_necessary_boundary}
			\end{align}
		\end{subequations}
	\end{proposition}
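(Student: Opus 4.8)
The plan is to extract the required selection function from the behaviour of the $\alpha$-competitive deterministic algorithm on the nested family of $\varepsilon$-instances $\mathcal{I}_1^{(\varepsilon)} \subset \mathcal{I}_2^{(\varepsilon)} \subset \cdots \subset \mathcal{I}_{N^{(\varepsilon)}}^{(\varepsilon)}$ from Definition \ref{def_I_n}, and then let $\varepsilon \to 0$. Fix $\varepsilon$ and run the algorithm on $\mathcal{I}_{N^{(\varepsilon)}}^{(\varepsilon)}$. Since the algorithm is deterministic and each $\mathcal{I}_n^{(\varepsilon)}$ is a prefix of the next, the number $a_\ell^{(\varepsilon)} \in \{0,1,\dots,\Gamma(p_\ell^{(\varepsilon)})\}$ of buyers accepted from group $\mathcal{A}_\ell^{(\varepsilon)}$ depends only on the prefix $\mathcal{A}_1^{(\varepsilon)},\dots,\mathcal{A}_\ell^{(\varepsilon)}$, hence not on how many later groups arrive; consequently the profit on each prefix is $\textsf{ALG}(\mathcal{I}_n^{(\varepsilon)}) = \sum_{\ell=1}^{n} p_\ell^{(\varepsilon)} a_\ell^{(\varepsilon)} - f\big(\sum_{\ell=1}^{n} a_\ell^{(\varepsilon)}\big)$.

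Next I would pin down the offline optimum. For any instance in which all prices are at most $p$, any accepted set $S$ has profit $\sum_{t\in S} p_t - f(|S|) \le p|S| - f(|S|) \le f^{*}(p)$, and on $\mathcal{I}_n^{(\varepsilon)}$ this bound is attained by accepting the $\Gamma(p_n^{(\varepsilon)})$ buyers of the highest-priced group, so $\textsf{OPT}(\mathcal{I}_n^{(\varepsilon)}) = f^{*}(p_n^{(\varepsilon)})$ by Lemma \ref{theorem_conjugate}. Imposing $\alpha$-competitiveness on every prefix then gives the chain
\begin{equation*}
\sum_{\ell=1}^{n} p_\ell^{(\varepsilon)} a_\ell^{(\varepsilon)} - f\Big(\sum_{\ell=1}^{n} a_\ell^{(\varepsilon)}\Big) \;\ge\; \frac{1}{\alpha}\, f^{*}(p_n^{(\varepsilon)}), \qquad n\in[N^{(\varepsilon)}].
\end{equation*}
The $n=1$ case reads $g(a_1^{(\varepsilon)}) \ge \frac{1}{\alpha} f^{*}(p_{\min})$, hence $a_1^{(\varepsilon)} \ge g^{\textsf{inv}}\!\big(\frac{1}{\alpha} f^{*}(p_{\min})\big) = \tau^{(\alpha)}+1$, while $a_1^{(\varepsilon)} \le \Gamma(p_{\min}) = \ubar{k}$ since that is all group $\mathcal{A}_1^{(\varepsilon)}$ contains. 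Finally, since the $(\bar{k}+1)$-st unit has marginal cost $c_{\bar{k}+1} > p_{\max}$, accepting any unit beyond the $\bar{k}$-th is a strict loss; replacing the algorithm by the online rule "reject once $\bar{k}$ acceptances have been made" does not decrease its profit on any instance, so without loss of generality $\sum_{\ell} a_\ell^{(\varepsilon)} \le \bar{k}$.

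It remains to pass to the limit. Encode the acceptance pattern by the non-decreasing cumulative function $h^{(\varepsilon)}(p) \triangleq \sum_{\ell:\, p_\ell^{(\varepsilon)} \le p} a_\ell^{(\varepsilon)}$ on $[p_{\min},p_{\max}]$; these are uniformly bounded by $\bar{k}$, satisfy $h^{(\varepsilon)}(p_{\min}) = a_1^{(\varepsilon)} \in \{\tau^{(\alpha)}+1,\dots,\ubar{k}\}$, and the displayed chain rewrites, at every grid point $p$, as $\int_{p_{\min}}^{p} \eta\, dh^{(\varepsilon)}(\eta) - f\big(h^{(\varepsilon)}(p)\big) \ge \frac{1}{\alpha} f^{*}(p)$. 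By Helly's selection theorem a subsequence of $\{h^{(\varepsilon)}\}$ converges pointwise to a non-decreasing $h$ with $h(p_{\min}) \in \{\tau^{(\alpha)}+1,\dots,\ubar{k}\}$ and $h(p_{\max}) \le \bar{k}$; continuity of $f$ and of $f^{*}$ (Lemma \ref{theorem_conjugate}) together with convergence of the associated Stieltjes integrals upgrade the grid-point inequalities to $\int_{p_{\min}}^{p} \eta\, dh(\eta) - f\big(h(p)\big) \ge \frac{1}{\alpha} f^{*}(p)$ for all $p\in(p_{\min},p_{\max}]$. Identifying $\psi$ with the selection "density" $dh$ — i.e. $h(p) = \int_{p_{\min}}^{p}\psi(\eta)\,d\eta$, allowing an atom of size $\psi(p_{\min}) = h(p_{\min})$ at the left endpoint — yields exactly Eq. \eqref{eq_necessary_inequality}--\eqref{eq_necessary_boundary}. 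I expect this last step to be the main obstacle: because mass can concentrate at a single price, the limit $\psi$ need not be an ordinary function, so one must set up the right compactness framework for the $h^{(\varepsilon)}$ (monotone, uniformly bounded) and argue carefully that the competitiveness inequality survives the limit, in particular at $p = p_{\min}$ and across the jumps of $h$; the remaining steps are routine given Lemma \ref{theorem_conjugate} and the prefix structure of the $\varepsilon$-instances.
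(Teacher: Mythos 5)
Your proposal is correct and follows essentially the same route as the paper: the nested $\varepsilon$-instances of Definition \ref{def_I_n}, the observation that determinism plus the prefix structure forces a single selection function across all $n$, the computation $\textsf{OPT}(\mathcal{I}_n^{(\varepsilon)}) = f^{*}(p_n^{(\varepsilon)})$, and imposing $\alpha$-competitiveness on every prefix before letting $\varepsilon \to 0$. The only differences are refinements on your side — you justify the $\bar{k}$ cap by a harmless modification of the algorithm where the paper simply asserts it, and you invoke Helly's theorem for the limit where the paper passes to $\varepsilon \to 0$ informally (note the paper sidesteps your measure-theoretic worry by reading $\psi$ as the finite atomic sum of Eq. \eqref{eq_packing_function}, so the integrals are really finite sums over non-zero points).
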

	
	\vspace{-0.3cm}
	\begin{corollary}\label{theorem_def_P_alpha}
		If there is an $ \alpha $-competitive deterministic online algorithm, and 	 $ \mathcal{P}^{(\alpha)} $ is defined as
		\begin{equation*}
			\mathcal{P}^{(\alpha)} \triangleq  \{\psi | \psi \text{ satisfies Eq. \eqref{eq_necessary_inequality} and Eq. \eqref{eq_necessary_boundary}} \},
		\end{equation*}
		then $  \mathcal{P}^{(\alpha)} $ is a well-defined and non-empty set. 
	\end{corollary}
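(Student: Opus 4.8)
\emph{Proof plan for Proposition \ref{theorem_necessary} (and Corollary \ref{theorem_def_P_alpha}).}
The plan is to run any $\alpha$-competitive deterministic algorithm $\textsf{ALG}$ on the staircase instances $\{\mathcal{I}_n^{(\varepsilon)}\}_{n\in[N^{(\varepsilon)}]}$ of Definition \ref{def_I_n}, read off a selection function from its behaviour, and let $\varepsilon\to 0$. The crucial structural fact is \emph{prefix consistency}: for each $n$, $\mathcal{I}_{n-1}^{(\varepsilon)}$ is a genuine prefix of $\mathcal{I}_{n}^{(\varepsilon)}$ (and of $\mathcal{I}_{N^{(\varepsilon)}}^{(\varepsilon)}$), so a deterministic online algorithm makes identical decisions on the shared arrivals. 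Hence there is a well-defined integer $a_j^{(\varepsilon)}\in\{0,1,\dots,\Gamma(p_j^{(\varepsilon)})\}$ — the number of buyers $\textsf{ALG}$ accepts inside the group $\mathcal{A}_j^{(\varepsilon)}$ — obtained once and for all by running $\textsf{ALG}$ on $\mathcal{I}_{N^{(\varepsilon)}}^{(\varepsilon)}$, and $\textsf{ALG}(\mathcal{I}_n^{(\varepsilon)})=\sum_{j=1}^n a_j^{(\varepsilon)}p_j^{(\varepsilon)}-f\big(\sum_{j=1}^n a_j^{(\varepsilon)}\big)$ for every $n$. I define the candidate (atomic) selection function $\psi^{(\varepsilon)}$ in the notation of \eqref{eq_packing_function} by taking the non-zero points to be those $p_j^{(\varepsilon)}$ with $a_j^{(\varepsilon)}>0$, with coefficients $a_j^{(\varepsilon)}$, so that $\int_{p_{\min}}^{p}\psi^{(\varepsilon)}$ counts the selected buyers of price at most $p$.

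\emph{From competitiveness to the inequality.} First I would compute $\textsf{OPT}(\mathcal{I}_n^{(\varepsilon)})$. The highest-priced group $\mathcal{A}_n^{(\varepsilon)}$ contains exactly $\Gamma(p_n^{(\varepsilon)})$ copies of $p_n^{(\varepsilon)}$, which is precisely the maximiser of $p_n^{(\varepsilon)}i-f(i)$; convexity of $f$ together with the definition of $\Gamma$ shows that adding any lower-priced buyer cannot help, so $\textsf{OPT}(\mathcal{I}_n^{(\varepsilon)})=p_n^{(\varepsilon)}\Gamma(p_n^{(\varepsilon)})-f(\Gamma(p_n^{(\varepsilon)}))=f^{*}(p_n^{(\varepsilon)})$ by Lemma \ref{theorem_conjugate}. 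Then $\alpha$-competitiveness gives, for every grid point, $\int_{p_{\min}}^{p_n^{(\varepsilon)}}\eta\psi^{(\varepsilon)}(\eta)\,d\eta-f\big(\int_{p_{\min}}^{p_n^{(\varepsilon)}}\psi^{(\varepsilon)}(\eta)\,d\eta\big)\ge\tfrac1\alpha f^{*}(p_n^{(\varepsilon)})$. Since $f^{*}$ is Lipschitz with constant at most $\bar k$ (Lemma \ref{theorem_conjugate}) while the left-hand side is flat between grid points, the same inequality holds at arbitrary $p$ up to an additive slack of order $\bar k\varepsilon/\alpha$.

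\emph{Passing to the limit and the boundary conditions.} Because $\textsf{ALG}$ never profitably produces a unit whose marginal cost exceeds the triggering price (rejecting weakly dominates for every future continuation), we may assume throughout that no such acceptance is ever made; in particular $\sum_j a_j^{(\varepsilon)}\le\Gamma(p_{\max})=\bar k$, so each $\psi^{(\varepsilon)}$ has at most $\bar k$ non-zero points with integer coefficients summing to at most $\bar k$, all lying in the compact interval $[p_{\min},p_{\max}]$. Along a subsequence $\varepsilon\to0$, these atomic configurations converge (atoms colliding at a common limit point are merged, masses added), yielding a bona fide selection function $\psi$ with $\int_{p_{\min}}^{p_{\max}}\psi\le\bar k$; since between consecutive atoms of $\psi$ the accumulated mass is constant and the nearby grid inequalities converge (using continuity of $f$ and $f^{*}$ and the vanishing slack), $\psi$ satisfies \eqref{eq_necessary_inequality} for all $p\in(p_{\min},p_{\max}]$. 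For the value at $p_{\min}$: $\mathcal{I}_1^{(\varepsilon)}=\mathcal{A}_1^{(\varepsilon)}$ is just $\ubar{k}=\Gamma(p_{\min})$ copies of $p_{\min}$, so $\textsf{ALG}(\mathcal{I}_1^{(\varepsilon)})=g(a_1^{(\varepsilon)})\ge\tfrac1\alpha f^{*}(p_{\min})$ forces $a_1^{(\varepsilon)}\ge g^{\textsf{inv}}\big(\tfrac1\alpha f^{*}(p_{\min})\big)=\tau^{(\alpha)}+1$ — exactly the content of Proposition \ref{theorem_k_alpha} — while the ``never overproduce'' convention gives $\sum_{j:\,p_j^{(\varepsilon)}<c_{\ubar{k}+1}}a_j^{(\varepsilon)}\le\ubar{k}$, so the mass accumulating at $p_{\min}$ lies in $\{\tau^{(\alpha)}+1,\dots,\ubar{k}\}$, which is \eqref{eq_necessary_boundary}. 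This proves Proposition \ref{theorem_necessary}; Corollary \ref{theorem_def_P_alpha} then follows immediately, since $\mathcal{P}^{(\alpha)}$ is by definition the set of selection functions satisfying \eqref{eq_necessary_inequality}--\eqref{eq_necessary_boundary} and the proposition exhibits one.

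\emph{Main obstacle.} The routine parts are prefix consistency, the identity $\textsf{OPT}(\mathcal{I}_n^{(\varepsilon)})=f^{*}(p_n^{(\varepsilon)})$, and the reduction of the corollary. The delicate step is the $\varepsilon\to0$ passage: one must justify that the discrete staircase witnesses converge to a finitely atomic selection function for which \eqref{eq_necessary_inequality} holds at \emph{every} $p$, not merely on the grid, and that the integer boundary constraints (especially the cap $\ubar{k}$ at $p_{\min}$) survive the limit; the book-keeping around the breakpoints $\{c_i\}$ of $\Gamma$ and $f^{*}$, where one-sided limits are needed, is where the argument requires the most care.
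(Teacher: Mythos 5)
Your proposal matches the paper's argument: it runs the (deterministic) algorithm on the nested $\varepsilon$-instances of Definition \ref{def_I_n}, invokes prefix consistency to get a single selection function $\psi^{(\varepsilon)}$ valid for all prefixes, derives the grid inequalities from $\alpha$-competitiveness and $\textsf{OPT}(\mathcal{I}_n^{(\varepsilon)})=f^*(p_n^{(\varepsilon)})$, and passes to $\varepsilon\to 0$ to obtain Eq.~\eqref{eq_necessary_inequality}--\eqref{eq_necessary_boundary}. The only difference is that you fill in details the paper glosses over --- the compactness/subsequence argument for the $\varepsilon\to 0$ limit, the Lipschitz bound on $f^*$, and the explicit ``never over-produce'' reduction justifying $\int\psi\le\bar k$ --- which is a welcome tightening but not a different proof.
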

	\begin{proof}
		Proposition \ref{theorem_necessary} and Corollary \ref{theorem_def_P_alpha} are equivalent to each other.  The proof is based on the fact that if there is an $ \alpha $-competitive online algorithm, denoted by \textsf{ALG}, then \textsf{ALG}  must achieve at least $ \frac{1}{\alpha} $ portion of optimum under all possible arrival instances, including the $ \varepsilon $\textsf{-Instance} $ \mathcal{I}_{n}^{(\varepsilon)} $ for all $ n \in  [N_{\varepsilon}] $.  
		
		Suppose $ n=1 $, and the arrival instance is given by $ \mathcal{I}_1^{(\varepsilon)} $. The optimal profit in hindsight is  $
		\textsf{OPT}(\mathcal{I}_1^{(\varepsilon)}) = p_{1}^{(\varepsilon)}  \Gamma(p_{1}^{(\varepsilon)}) - f(\Gamma(p_{1}^{(\varepsilon)})) = f^{*}(p_{1}^{(\varepsilon)}) = f^{*}(p_{\min}) $. In the online setting, similar to the proof of Proposition \ref{theorem_k_alpha}, an online algorithm is $ \alpha $-competitive only if there exists an integer $ \tau\in \mathcal{T}^{(\alpha)} $ such that  $ \textsf{ALG}(\mathcal{I}_1^{(\varepsilon)}) =  p_{\min}(\tau+1) - f(\tau+1) = g(\tau+1) \geq \frac{1}{\alpha} \textsf{OPT}(\mathcal{I}_1^{(\varepsilon)}) = \frac{1}{\alpha} f^{*}(p_{\min})$. Therefore, there must exist a selection function $ \psi_1 $ with $ \tau \in \mathcal{T}^{(\alpha)} $ such that  $ \psi_1(p_{\min}) = \tau+1 \in \{\tau^{(\alpha)} + 1, \cdots, \ubar{k}\} $.
		
		Suppose $ n\in \{2,3,\cdots,N^{(\varepsilon)}\} $, and the arrival instance is given by $ \mathcal{I}_n^{(\varepsilon)} $. In the offline setting, similar to the case with $ \mathcal{I}_1^{(\varepsilon)} $, the optimal strategy in hindsight is to select all the buyers in the last group $ \mathcal{A}_{n}^{(\varepsilon)} $, i.e., $ \textsf{OPT}(\mathcal{I}_{n}^{(\varepsilon)}) =  v_{n}^{(\varepsilon)} \Gamma(p_{n}^{(\varepsilon)}) -  f(\Gamma(p_{n}^{(\varepsilon)})) = f^{*}(p_{n}^{(\varepsilon)}) $. The existence of an $ \alpha $-competitive online algorithm is equivalent to the existence of at least one selection function $ \psi_n $ so that $ \textsf{ALG}(\mathcal{I}_{n}^{(\varepsilon)}) = \sum_{\ell=1}^{n} p_{\ell}^{(\varepsilon)} \psi_n(p_{\ell}^{(\varepsilon)}) - f(\sum_{\ell=1}^{n} \psi_n(p_{\ell}^{(\varepsilon)})) \geq \frac{1}{\alpha}\textsf{OPT}(\mathcal{I}_n^{(\varepsilon)})  = \frac{1}{\alpha} f^{*}(p_{n}^{(\varepsilon)})$ and $ \sum_{\ell=1}^{n} \psi_n(p_{\ell}^{(\varepsilon)})\leq \bar{k} $ (i.e., at most $ \bar{k} $ items will be produced), where $ p_{\ell}^{(\varepsilon)} = p_{\min} + (\ell-1)\varepsilon $. Thus, the online algorithm being $ \alpha $-competitive indicates that 
		\begin{equation*}
			\sum_{\ell=1}^{n} p_{\ell}^{(\varepsilon)} \psi_n\big(p_{\ell}^{(\varepsilon)}\big) - f\Big(\sum_{\ell=1}^{n} \psi_n\big(p_{\ell}^{(\varepsilon)}\big)\Big) \geq \frac{1}{\alpha}f^{*}\big(p_{n}^{(\varepsilon)}\big) \text{ and } \sum_{\ell=1}^{n} \psi_n(p_{\ell}^{(\varepsilon)})\leq \bar{k}.
		\end{equation*}
		
		Notice that in Definition \ref{def_I_n}, the $ \varepsilon $\textsf{-Instance} is constructed in a nested manner, namely, $ \mathcal{I}_{n+1}^{(\varepsilon)} = \mathcal{I}_{n}^{(\varepsilon)} \cup \{\mathcal{A}_{n+1}^{(\varepsilon)}\}$ for all $ n = [N^{(\varepsilon)}-1] $. Meanwhile, the arrival instance to be presented to an online algorithm can be $ \mathcal{I}_{n}^{(\varepsilon)} $ for any $ n \in  [N^{(\varepsilon)}] $, and it is impossible for an online algorithm to know the information of $ n $ a priori (recall that only the setup $ \mathcal{S} $ is known). Thus,  the series of selection functions $ \psi_n $ for $ n \in  [N^{(\varepsilon)}] $ must overlap and be the same function $ \psi $ that satisfies 
		\begin{align*}
			\begin{cases}
				\sum_{\ell=1}^{n} p_{\ell}^{(\varepsilon)} \psi\big(p_{\ell}^{(\varepsilon)}\big) - f\left(\sum_{\ell=1}^{n} \psi\big(p_{\ell}^{(\varepsilon)}\big)\right) \geq \frac{1}{\alpha}f^{*}\big(p_{n}^{(\varepsilon)}\big), \smallskip \\
				\psi(p_{\min}) \in \{\tau^{(\alpha)} + 1,\cdots, \ubar{k}\},\ \sum_{\ell=1}^{n} \psi(p_{\ell}^{(\varepsilon)})\leq \bar{k},
			\end{cases}
			\ \ \forall n \in  [N^{(\varepsilon)}].
		\end{align*}
		The above inequalities hold for all $ \varepsilon > 0 $, and thus it holds for $ \varepsilon \rightarrow 0 $ as well, leading to the integral inequalities in Eq. \eqref{eq_necessary_inequality} and  Eq. \eqref{eq_necessary_boundary}.
		We thus complete the proof of Proposition \ref{theorem_necessary}. 
	\end{proof}
	
	(\textbf{$\alpha$-Feasible}) Proposition \ref{theorem_necessary} indicates that  $ \mathcal{P}^{(\alpha)} $ is non-empty as long as there exists an $ \alpha $-competitive deterministic online algorithm. In what follows we say $ \psi $ is $ \alpha $-feasible if $ \psi\in \mathcal{P}^{(\alpha)}  $. Intuitively, for any $ \alpha $-feasible selection function $ \psi $, $ p_{\min} $ is always the first non-zero point, i.e.,  $ \omega_1 = p_{\min} $. Thus, an $ \alpha $-feasible $ \psi $ must have at least one non-zero point. Corollary \ref{theorem_number_of_omega} below shows that, the number of non-zero points of an $ \alpha $-feasible selection function is upper bounded.
	\begin{corollary}\label{theorem_number_of_omega}
		If  $\psi $ is $ \alpha $-feasible, i.e., $ \psi \in \mathcal{P}^{(\alpha)} $, then $ \psi $ has at most $ \bar{k} - \tau^{(\alpha)} $ non-zero points.
	\end{corollary}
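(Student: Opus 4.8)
The plan is to combine the lower bound on $\psi$ at each non-zero point coming from the profit inequality \eqref{eq_necessary_inequality} with the capacity constraint $\int_{p_{\min}}^{p_{\max}}\psi(\eta)\,d\eta \le \bar k$ from \eqref{eq_necessary_boundary}, and show that having more than $\bar k - \tau^{(\alpha)}$ distinct non-zero points would force the total production to exceed $\bar k$. Recall that an $\alpha$-feasible $\psi$ satisfies $\psi(p_{\min}) \ge \tau^{(\alpha)}+1$, so at least $\tau^{(\alpha)}+1$ units are already committed at the very first non-zero point $\omega_1 = p_{\min}$. The key observation to make rigorous is that $\psi$ must be (weakly) increasing in $p$ on its non-zero points, or at the very least that each additional non-zero point $\omega_{\ell}$ with $\ell \ge 2$ contributes at least one more unit of production: indeed, $\psi(\omega_\ell) \ge 1$ by definition of a non-zero point.

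First I would write the total production as $\int_{p_{\min}}^{p_{\max}}\psi(\eta)\,d\eta$; since $\psi$ is a step function supported on the finite set $\Omega = \{\omega_1,\dots,\omega_L\}$ as in \eqref{eq_packing_function}, one needs to be slightly careful --- $\psi$ as written is a sum of indicator functions at isolated points, so the Lebesgue integral would be zero. I would instead interpret $\int_{p_{\min}}^p \psi(\eta)\,d\eta$ in the sense consistent with Proposition \ref{theorem_necessary}'s derivation, i.e., as the cumulative count $\sum_{\ell:\,\omega_\ell \le p}\psi(\omega_\ell) \cdot (\text{step size normalization})$; in the $\varepsilon \to 0$ limit the relevant quantity is really the running sum $\Psi(p) := \sum_{\ell:\,\omega_\ell\le p}A_\ell$ where $A_\ell = \psi(\omega_\ell)$. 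The boundary condition then reads $\Psi(p_{\max}) = \sum_{\ell=1}^{L}A_\ell \le \bar k$. Since $A_1 = \psi(p_{\min}) \ge \tau^{(\alpha)}+1$ and $A_\ell \ge 1$ for every $\ell = 2,\dots,L$, we get
\[
\bar k \;\ge\; \sum_{\ell=1}^{L}A_\ell \;\ge\; (\tau^{(\alpha)}+1) + (L-1) \;=\; \tau^{(\alpha)} + L,
\]
hence $L \le \bar k - \tau^{(\alpha)}$, which is exactly the claim.

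The main obstacle I anticipate is reconciling the integral notation of Proposition \ref{theorem_necessary} with the discrete step-function description of selection functions in Definition \ref{def_packing_functions} --- one has to pin down precisely what $\int_{p_{\min}}^{p}\psi(\eta)\,d\eta$ means so that the capacity bound $\le \bar k$ genuinely translates into $\sum_\ell A_\ell \le \bar k$ rather than something vacuous. Once that bookkeeping is settled, the argument is a one-line counting estimate. A secondary point worth a sentence is to confirm $\omega_1 = p_{\min}$: if $\psi(p_{\min}) = 0$ then \eqref{eq_necessary_inequality} at $p \to p_{\min}^+$ would force $-f(0) \ge \frac1\alpha f^*(p_{\min}) > 0$, contradicting $f(0)=0$ and $f^*(p_{\min}) = p_{\min}\ubar k - f(\ubar k) > 0$; so indeed $\psi(p_{\min}) \ge \tau^{(\alpha)}+1 \ge 1$ and $p_{\min}$ is the first non-zero point, which is what makes the leading term $(\tau^{(\alpha)}+1)$ in the sum legitimate.
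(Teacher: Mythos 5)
Your proof is correct and follows essentially the same counting argument as the paper: the capacity bound $\sum_{\ell}A_{\ell}\leq\bar{k}$ combined with $A_1\geq\tau^{(\alpha)}+1$ and $A_{\ell}\geq 1$ for $\ell\geq 2$ gives $L\leq\bar{k}-\tau^{(\alpha)}$. Your extra care in interpreting the integral as the running sum $\sum_{\ell:\,\omega_{\ell}\leq p}A_{\ell}$ and in verifying $\omega_1=p_{\min}$ tightens points the paper leaves implicit, but the underlying argument is identical.
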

	
	Corollary \ref{theorem_number_of_omega} is intuitive based on Definition \ref{def_packing_functions} and Proposition \ref{theorem_necessary}. Under the $ \varepsilon $\textsf{-instance}, Eq. \eqref{eq_necessary_boundary} implies that no more than  $ \bar{k} $ units should be produced/sold, of which at least $ \tau^{(\alpha)} + 1 $ units must be sold to buyers from group $ \mathcal{A}_1^{(\epsilon)} $, i.e., buyers with price $ p_{\min} $. Thus, based on the definition of $ \psi $ in Eq. \eqref{eq_packing_function}, we know that at most $ \bar{k} - \tau^{(\alpha)}   $ non-zero points exist with $ A_1  =  \tau^{(\alpha)} + 1$ and $A_2 = A_3= \cdots = A_{\bar{k} - \tau^{(\alpha)}} = 1 $.

	\subsection{Characteristic Selection Function $ \psi^{(\alpha)} $ and Its Feasibility}
	\label{section_characteristic_packing_function}
	
	A key step in proving Theorem \ref{theorem_optimality} is the construction of a special $ \alpha $-feasible selection function, termed \textbf{characteristic selection function}, given in Theorem \ref{theorem_characteristic} below.
	
	
	
	\begin{theorem}[Characteristic selection function] 
		\label{theorem_characteristic}
		If there is an $ \alpha $-competitive deterministic online algorithm, then $ \mathcal{P}^{(\alpha)} $ consists of a unique characteristic selection function $ \psi^{(\alpha)}  $ given by
		\begin{align} \label{eq_Lambda_alpha}
			\psi^{(\alpha)}(p) = 
			\begin{cases}
				\tau^{(\alpha)} + 1 & \text{ if } p = \omega_1^{(\alpha)} = p_{\min},\\
				1   & \text{ if } p \in \Omega^{(\alpha)}\backslash \{\omega_1^{(\alpha)}\},\\ 
				0   & \text{ if } p \in [p_{\min},p_{\max}]\backslash \Omega^{(\alpha)}, 
			\end{cases} 
		\end{align}
		where the non-zero points $ \Omega^{(\alpha)} = \{\omega_1^{(\alpha)},\omega_2^{(\alpha)},\cdots, \omega_{L^{(\alpha)}}^{(\alpha)}\} $ satisfy:
		\vspace{-0.1cm}
		\begin{equation}\label{eq_omega_alpha} \vspace{-0.1cm}
			\sum_{j=1}^{\ell-1} \omega_{j}^{(\alpha)}\cdot \psi^{(\alpha)}\big(\omega_{j}^{(\alpha)}\big) - f\Big(\tau^{(\alpha)} + \ell - 1\Big) = \frac{1}{\alpha}f^*\big(\omega_{\ell}^{(\alpha)}\big), \quad \forall \ell \in \{2, 3, \cdots, L^{(\alpha)}\}.
		\end{equation}
		Note that $ L^{(\alpha)}\in [\bar{k} - \tau^{(\alpha)}]$ and Eq. \eqref{eq_omega_alpha} vanishes if $ L^{(\alpha)} = 1 $.
	\end{theorem}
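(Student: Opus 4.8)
The plan is to prove the two assertions of the theorem separately: \emph{existence} of a member of $\mathcal{P}^{(\alpha)}$ having the explicit greedy shape \eqref{eq_Lambda_alpha}--\eqref{eq_omega_alpha}, and \emph{uniqueness} of such a member. The hypothesis enters only through Proposition~\ref{theorem_necessary} / Corollary~\ref{theorem_def_P_alpha}: an $\alpha$-competitive deterministic algorithm furnishes a witness $\psi_0\in\mathcal{P}^{(\alpha)}$, which I will use to control the capacity budget of the function I construct. Throughout I will lean on Lemma~\ref{theorem_conjugate}, namely that $f^{*}$ is continuous, strictly increasing, and (via $f^{*}(p)=p\Gamma(p)-f(\Gamma(p))$) explicitly computable on $[p_{\min},p_{\max}]$, so that $(f^{*})^{-1}$ is single-valued wherever it is needed.

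\textbf{Construction of $\psi^{(\alpha)}$.} I build the non-zero points greedily. Put $\omega_1^{(\alpha)}:=p_{\min}$ and $\psi^{(\alpha)}(p_{\min}):=\tau^{(\alpha)}+1$ (legitimate since $\tau^{(\alpha)}\le\ubar{k}-1$). Inductively, having fixed $\omega_1^{(\alpha)}<\cdots<\omega_{\ell-1}^{(\alpha)}$ with $\psi^{(\alpha)}(\omega_j^{(\alpha)})=1$ for $2\le j\le \ell-1$, let the running online profit be $P_{\ell-1}:=\sum_{j=1}^{\ell-1}\omega_j^{(\alpha)}\psi^{(\alpha)}(\omega_j^{(\alpha)})-f(\tau^{(\alpha)}+\ell-1)$. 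If $\tau^{(\alpha)}+\ell-1<\bar{k}$ and the (unique) solution $p$ of $\tfrac1\alpha f^{*}(p)=P_{\ell-1}$ lies in $(\omega_{\ell-1}^{(\alpha)},p_{\max}]$, set $\omega_\ell^{(\alpha)}$ equal to it and continue; otherwise stop and set $L^{(\alpha)}:=\ell-1$. The resulting points satisfy exactly Eq.~\eqref{eq_omega_alpha}, and the function given by Eq.~\eqref{eq_Lambda_alpha} is the candidate characteristic selection function.

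\textbf{Well-posedness and feasibility.} I would then verify: (i) the $\omega_\ell^{(\alpha)}$ are strictly increasing — this is equivalent to $P_\ell>P_{\ell-1}$, i.e. to $\omega_{\ell}^{(\alpha)}\psi^{(\alpha)}(\omega_{\ell}^{(\alpha)})>c_{\tau^{(\alpha)}+\ell}$, which for $\ell=1$ reads $g(\tau^{(\alpha)}+1)>0$ (true, as $g$ is strictly increasing from $g(0)=0$) and for $\ell\ge 2$ follows inductively from $\omega_\ell^{(\alpha)}>c_{\tau^{(\alpha)}+\ell}$, a monotonicity fact forced by combining $P_{\ell-1}=\tfrac1\alpha f^{*}(\omega_\ell^{(\alpha)})$ with the representation of $f^{*}$ in Lemma~\ref{theorem_conjugate}; (ii) the recursion terminates within $\bar{k}-\tau^{(\alpha)}$ steps, which is Corollary~\ref{theorem_number_of_omega}; (iii) $\psi^{(\alpha)}\in\mathcal{P}^{(\alpha)}$: inequality \eqref{eq_necessary_inequality} holds because its left-hand side is a step function of $p$ equal to $P_{\ell-1}$ on $[\omega_{\ell-1}^{(\alpha)},\omega_\ell^{(\alpha)})$ while $\tfrac1\alpha f^{*}(p)$ is increasing and meets it only at $p=\omega_\ell^{(\alpha)}$; the boundary value $\psi^{(\alpha)}(p_{\min})\in\{\tau^{(\alpha)}+1,\dots,\ubar{k}\}$ is immediate; and the capacity bound $\int_{p_{\min}}^{p_{\max}}\psi^{(\alpha)}\le\bar{k}$ is obtained by an \emph{extremality} argument — the greedy run, which defers each additional sale to the latest price the competitive constraint permits, consumes no more units than any $\psi\in\mathcal{P}^{(\alpha)}$, so applying this to $\psi=\psi_0$ and using $\int\psi_0\le\bar{k}$ gives the bound (equivalently, this rules out the run ending at $\bar{k}$ units with profit still below $\tfrac1\alpha f^{*}(p_{\max})$, which would make $\mathcal{P}^{(\alpha)}$ empty).

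\textbf{Uniqueness and the main obstacle.} Uniqueness is then short: any member of $\mathcal{P}^{(\alpha)}$ of the form \eqref{eq_Lambda_alpha} has its values prescribed ($\tau^{(\alpha)}+1$ at $p_{\min}$, $1$ at the other non-zero points), so it is determined by its non-zero set; but \eqref{eq_omega_alpha} forces $\omega_\ell^{(\alpha)}=(f^{*})^{-1}(\alpha P_{\ell-1})$ recursively, single-valued by Lemma~\ref{theorem_conjugate}, and $L^{(\alpha)}$ is fixed by the stopping rule. The part I expect to be genuinely delicate is item~(iii): establishing the extremality of the greedy selection (and hence transferring the capacity budget from the assumed-existing $\psi_0$), since naive exchange moves — shifting accepted units to lower or higher prices — each break feasibility on one side, so the argument must instead compare cumulative counts at the break-points $\omega_\ell^{(\alpha)}$ carefully, using that at least $\tau^{(\alpha)}+1$ of every feasible solution's units are pinned to price $p_{\min}$. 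The auxiliary monotonicity claim $\omega_\ell^{(\alpha)}>c_{\tau^{(\alpha)}+\ell}$ in item~(i) is the other place requiring care.
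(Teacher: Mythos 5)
Your proposal takes a genuinely different route from the paper's. The paper's proof is a \emph{transformation} argument: it starts from the witness $\psi_0\in\mathcal{P}^{(\alpha)}$ guaranteed by Proposition~\ref{theorem_necessary} and repeatedly applies three feasibility-preserving local operations --- \textsc{Push-Down-to-Minimum} (reduce $\psi(p_{\min})$ to $\tau^{(\alpha)}+1$), \textsc{Push-Down-to-One} (reduce $\psi(\omega_\ell)$ to $1$ at each non-zero point beyond $p_{\min}$), and \textsc{Push-Right-to-Equality} (shift a non-zero point to the right until Eq.~\eqref{eq_omega_alpha} holds with equality) --- each of which is proved separately (in Appendices G--I) to keep one inside $\mathcal{P}^{(\alpha)}$. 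Iterating these operations transforms $\psi_0$ into $\psi^{(\alpha)}$, proving both membership and (via the determinism of the operations) uniqueness. You instead build $\psi^{(\alpha)}$ \emph{directly} by the greedy recursion $\omega_\ell^{(\alpha)}=(f^*)^{-1}(\alpha P_{\ell-1})$ and then verify feasibility post hoc. Conceptually the two are close: the combined effect of the paper's three pushes is exactly your ``extremality'' statement that $\psi^{(\alpha)}$ is the cumulative-count pointwise minimum over $\mathcal{P}^{(\alpha)}$ --- the pushes never increase $\int\psi$, and \textsc{Push-Right-to-Equality} is precisely the ``defer as long as permitted'' move. What the paper's decomposition buys is modularity: each push is a self-contained exchange lemma whose feasibility-preservation is checked locally. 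What your approach buys is a more transparent characterization of $\psi^{(\alpha)}$ as the extreme element of $\mathcal{P}^{(\alpha)}$.

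The genuine gap in your write-up is the extremality lemma itself, which you flag as delicate but do not prove, and it is not immediate: one must show that at the first $p^*$ where $\int_{p_{\min}}^{p^*}\psi^{(\alpha)}$ would exceed $\int_{p_{\min}}^{p^*}\psi_0$, the two cumulative counts agree at $(p^*)^-$, and then use an integration-by-parts identity $R(p)=p\,\Psi(p^-)-\int_{p_{\min}}^p\Psi(\eta)\,d\eta$ to deduce that the greedy's revenue (hence profit) at $(p^*)^-$ is at least $\psi_0$'s, so the greedy's constraint cannot bind strictly earlier; feasibility of $\psi_0$ then forces $\psi_0$ to jump at $p^*$ as well, contradicting the violation. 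This argument is correct but substantive and must be written out, together with the subsidiary bound $\omega_\ell^{(\alpha)}>c_{\tau^{(\alpha)}+\ell}$ (needed already to check Eq.~\eqref{eq_necessary_inequality} at $p$ just above $\omega_{\ell-1}^{(\alpha)}$), which the paper obtains by appealing to the monotone/lower-bounded properties of Proposition~\ref{theorem_uniqueness} rather than deriving it afresh. One minor slip: your item~(ii) cites Corollary~\ref{theorem_number_of_omega} for termination, but that corollary is about members of $\mathcal{P}^{(\alpha)}$, which your greedy is not yet known to be; termination should instead follow directly from the stopping rule ($\tau^{(\alpha)}+\ell-1<\bar{k}$), after which Corollary~\ref{theorem_number_of_omega} is recovered as a consequence.
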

	\begin{proof}
		Theorem \ref{theorem_characteristic} can be considered a stronger version of Proposition \ref{theorem_necessary}.  Here the conclusion is twofold: i) $ \psi^{(\alpha)} $ is always $ \alpha $-feasible as long as $ \mathcal{P}^{(\alpha)} $ is non-empty, and ii) $ \psi^{(\alpha)} $ is uniquely determined by Eq. \eqref{eq_Lambda_alpha} and Eq. \eqref{eq_omega_alpha} for whatever types of $ \alpha $-competitive deterministic algorithms. The proof is based on three \textit{feasibility-preserving operations}, based on which we can show that  $ \psi^{(\alpha)} $ can be constructed from any $ \psi\in\mathcal{P}^{(\alpha)} $ by repeatedly applying these operations without violating the feasibility conditions in Eq. \eqref{eq_necessary_inequality} and Eq. \eqref{eq_necessary_boundary}. The uniqueness of $\psi^{(\alpha)} $ follows the fact that the construction of the non-zero points $ \Omega^{(\alpha)} = \{\omega_1^{(\alpha)},\omega_2^{(\alpha)},\cdots, \omega_{L^{(\alpha)}}^{(\alpha)}\} $ is unique.
		
		The three feasibility-preserving operations are given as follows:
		\begin{claim}[\textsc{Push-Down-to-Minimum}]\label{claim_k}
			If there exists a selection function $ \psi \in \mathcal{P}^{(\alpha)} $ with $ \psi(p_{\min}) > \tau^{(\alpha)} + 1$, then there exists another selection function $ \tilde{\psi}\in \mathcal{P}^{(\alpha)} $ with $ \tilde{\psi}(p_{\min}) =  \tau^{(\alpha)} + 1$.
		\end{claim}
		
		\begin{claim}[\textsc{Push-Down-to-One}]\label{claim_two_one}
			If there exists a selection function $ \psi \in \mathcal{P}^{(\alpha)} $ with $ \psi(\omega_{\ell}) > 1 $ for some non-zero point $ \omega_{\ell}\in \Omega\backslash\{p_{\min}\} $, then there exists another selection function $ \tilde{\psi}\in \mathcal{P}^{(\alpha)} $  which is the same as $  \psi $ for  $ p\in [p_{\min},\omega_{\ell}) $ and satisfies $ \tilde{\psi}(\omega_{\ell}) = 1 $. 
		\end{claim}
		
		\begin{claim}[\textsc{Push-Right-To-Equality}]\label{claim_equality}
			If there exists a selection function $ \psi \in \mathcal{P}^{(\alpha)} $ with at least two non-zero points (i.e., $ L\geq 2$) and the following inequality holds for some $ \ell \in \{2,3,\cdots, L\} $:
			\begin{align}\label{eq_larger_than}
				\sum_{j=1}^{\ell-1} \omega_{j} \psi(\omega_{j})  - f\Big(\sum_{j=1}^{\ell-1} \psi(\omega_{j})\Big) >  \frac{1}{\alpha} f^{*}(\omega_{\ell}), 
			\end{align}
			then there exists a selection function $ \tilde{\psi}\in \mathcal{P}^{(\alpha)} $ which is the same as $  \psi $ for  $ p\in [p_{\min},\omega_{\ell}) $ and satisfies
			\begin{align}\label{eq_equality}
				\sum_{j=1}^{\ell-1} \omega_{j} \tilde{\psi}(\omega_{j}) - f\Big(\sum_{j=1}^{\ell-1} \tilde{\psi}(\omega_{j})\Big) =  \frac{1}{\alpha} f^{*}(\tilde{\omega}_{\ell}),
			\end{align}
			where $ \tilde{\omega}_{\ell} > \omega_{\ell} $   and $ \tilde{\omega}_{\ell} $ is the $ \ell $-th non-zero point of $ \tilde{\psi} $ if $ \tilde{\omega}_{\ell}\in (\omega_{\ell},p_{\max}] $. 
		\end{claim}
		
		\textit{Intuition of Claim \ref{claim_k}}:  
		The proof of this claim is given in Appendix \ref{proof_of_claim_k}. For any $ \alpha $-competitive deterministic algorithm, Proposition \ref{theorem_necessary} indicates that there exists an $ \alpha $-feasible selection function $ \psi $  so that $  \psi(p_{\min}) \geq \tau^{(\alpha)} + 1 $. Here, Claim \ref{claim_k} gives a stronger result by showing that it is sufficient to consider $ \psi(p_{\min}) = \tau^{(\alpha)} + 1 $ only. We refer to Claim \ref{claim_k} by the \textsc{Push-Down-to-Minimum} operation as it always pushes $ \psi(p_{\min}) $ down to some value as small as possible.

		\textit{Intuition of Claim \ref{claim_two_one}}: 
		The proof of this claim is given in Appendix \ref{proof_of_claim_two_one}. Claim \ref{claim_two_one} argues that if an $ \alpha $-feasible $ \psi $ selects more than one buyer with price $\omega_{\ell} $ for some $ \ell\in \{2,3,\cdots, L\} $ (note that $ \omega_1 = p_{\min} $), then it is always possible to construct a new $ \alpha $-feasible selection function $ \tilde{\psi} $ by keeping the same number of selected buyers with prices lower than $ \omega_{\ell} $; accepting only one buyer with price $ \omega_{\ell} $ (i.e., $\tilde{\psi}(\omega_{\ell}) =1$); and reserving the residual production capacity for buyers with prices  higher  than $ \omega_{\ell} $.  Here our proof is based on the convexity of $ f $, and the details are given in  Appendix \ref{proof_of_claim_two_one}. We refer to Claim \ref{claim_two_one} by the \textsc{Push-Down-to-One} operation as it always pushes the value of $ \psi(p) $ down to 1 for all $ p \in \Omega\backslash\{p_{\min}\} $. 
		
		\textit{Intuition of Claim \ref{claim_equality}}: 
		The proof of this claim is given in Appendix \ref{proof_of_claim_equality}. Claim \ref{claim_equality}  argues that whenever an $ \alpha $-feasible selection function $ \psi $ has a non-zero point $ \omega_{\ell} $ that satisfies the strict inequality in Eq. \eqref{eq_larger_than}, then it is always possible to increase the value of this non-zero point to satisfy the equality in Eq. \eqref{eq_equality}, namely, the existence of a unique $ \tilde{\omega}_{\ell} > \omega_{\ell} $ so that Eq. \eqref{eq_equality} holds.  Here our proof is heavily based on the strict monotonicity of $ f^* $, and the details are given in  Appendix \ref{proof_of_claim_equality}. We refer to Claim \ref{claim_equality} by the \textsc{Push-Right-to-Equality} operation as it always pushes the non-zero points of $ \psi(p) $ to the rightmost (i.e., the increasing direction) to achieve the equality in Eq. \eqref{eq_equality}.
		
		We are now ready to demonstrate that Theorem \ref{theorem_characteristic} follows by repeatedly applying  the above three feasibility-preserving operations. Specifically, for any given $ \alpha $\textit{-feasible} selection function $ \psi $ (note that such $  \psi $ always exists since $ \mathcal{P}^{(\alpha)} $ is non-empty), if $  \psi =  \psi^{(\alpha)} $, then we finish the proof; otherwise,  we apply the above three operations in the following order:
		\begin{itemize}
			\item First, if $ \psi(p_{\min})  > \tau^{(\alpha)} + 1 $,  then we apply the \textsc{Push-Down-to-Minimum} operation by   Claim \ref{claim_k} so that $ \psi(\omega_1) = \psi(p_{\min})  = \tau^{(\alpha)} + 1 $. 
			
			\item Second, for all $ \ell \in \{ 2, 3,\cdots, L\} $, if  $ \psi(\omega_{\ell}) > 1 $, then we apply the \textsc{Push-Down-to-One} operation by Claim \ref{claim_two_one} to push $ \psi(\omega_{\ell}) = 1 $ while keeping the selection function unchanged over the real interval $ [p_{\min},\omega_{\ell}) $. We can repeatedly apply this operation so that $ \psi(p) = 1 $ holds for all the non-zero points except $ \omega_1 $. 
			
			\item Third, if the resulting selection function consists of at least two non-zero points, then from its second non-zero point up to the last one, we repeatedly apply the \textsc{Push-Right-to-Equality} operation by Claim \ref{claim_equality} so that Eq. \eqref{eq_equality} holds for all the non-zero points  except $ \omega_1 $. 
		\end{itemize}
		
		Based on the above operations, we can construct the unique $ \psi^{(\alpha)} $ given by Eq. \eqref{eq_Lambda_alpha} and guarantee that $ \psi^{(\alpha)}  $ is $ \alpha $-feasible, i.e., $ \psi^{(\alpha)} \in \mathcal{P}^{(\alpha)} $.  The non-zero points of  $ \psi^{(\alpha)} $  should satisfy Eq. \eqref{eq_equality}, and thus Eq. \eqref{eq_omega_alpha} naturally follows after substituting the expression of $ \psi^{(\alpha)} $ into Eq. \eqref{eq_equality}. In summary, for any $ \alpha $-competitive deterministic algorithm, there exists a unique $ \alpha $-feasible characteristic selection function $ \psi^{(\alpha)} $ given by Eq. \eqref{eq_Lambda_alpha}. We thus complete the proof of Theorem \ref{theorem_characteristic}.
	\end{proof}
	\vspace{-0.2cm}
	
	The key principle revealed by Theorem \ref{theorem_characteristic} is that, \textit{\bfseries all $ \alpha $-competitive deterministic online algorithms share a common and unique $ \alpha $-feasible characteristic selection function $ \psi^{(\alpha)} $. Thus, if the competitive ratio $ \alpha^* $ is such that for any $ \alpha  < \alpha^*  $, $ \psi^{(\alpha)} $ does not exist, and for any $ \alpha  \geq  \alpha^*  $, $ \psi^{(\alpha)} $ exists and is unique, then $ \alpha^* $ is the best-possible competitive ratio of all deterministic online algorithms}. Based on this principle, in the next subsection we give Theorem \ref{theorem_optimality_conditions}, which summarizes the conditions that once satisfied by $ \psi^{(\alpha)} $, then the corresponding competitive ratio $ \alpha $ is optimal.

	\subsection{Optimality Conditions and Proof of Theorem \ref{theorem_optimality}}
	
	Now we are ready to prove the optimality of $ \boldsymbol{\lambda}^* $ given by Theorem \ref{theorem_optimality}.
	
	\begin{theorem}
		\label{theorem_optimality_conditions}
		If  $ \alpha \geq 1 $ is such that the characteristic selection function $ \psi^{(\alpha)}  $  has $ \bar{k} - \tau^{(\alpha)} $ non-zero points, denoted by $ \Omega^{(\alpha)} = \{\omega_1^{(\alpha)}, \omega_2^{(\alpha)}, \cdots, \omega_{\bar{k}-\tau^{(\alpha)} }^{(\alpha)} \}$, and satisfy
		\vspace{-0.1cm}
		\begin{equation}\label{eq_optimality_conditions}\vspace{-0.1cm}
			\sum_{\ell=1}^{\bar{k} - \tau^{(\alpha)}} \omega_{\ell}^{(\alpha)}\cdot \psi^{(\alpha)}(\omega_{\ell}^{(\alpha)}) - f(\bar{k}) = \frac{1}{\alpha}f^*(p_{\max}),
		\end{equation}
		then, $ \alpha $ is the best-possible competitive ratio of all deterministic   algorithms. 
	\end{theorem}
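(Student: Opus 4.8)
The plan is to establish two halves: (a) the stated $\alpha$ is attained by a deterministic algorithm, and (b) no deterministic algorithm beats it; together these make $\alpha$ the best-possible deterministic competitive ratio.

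For (a) I would turn the characteristic selection function $\psi^{(\alpha)}$ of Theorem~\ref{theorem_characteristic} into an admission threshold and apply Corollary~\ref{theorem_OPD_inequality}. Writing $\Omega^{(\alpha)}=\{\omega_1^{(\alpha)},\dots,\omega_{\bar k-\tau^{(\alpha)}}^{(\alpha)}\}$ with $\omega_1^{(\alpha)}=p_{\min}$, I would set $\lambda_0=\cdots=\lambda_{\tau^{(\alpha)}}=p_{\min}$, $\lambda_{\tau^{(\alpha)}+j}=\omega_{j+1}^{(\alpha)}$ for $j=1,\dots,\bar k-\tau^{(\alpha)}-1$, and $\lambda_{\bar k}=p_{\max}$. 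Because $\tau^{(\alpha)}$ is, by definition, the smallest element of $\mathcal T^{(\alpha)}$, this $\boldsymbol{\lambda}$ is an admission threshold with an admissible turning point; substituting $\psi^{(\alpha)}(\omega_1^{(\alpha)})=\tau^{(\alpha)}+1$, $\psi^{(\alpha)}(\omega_\ell^{(\alpha)})=1$ for $\ell\ge2$, and $f(i)=\sum_{j\le i}c_j$ into Eq.~\eqref{eq_omega_alpha} and telescoping yields $\sum_{j=0}^{i}(\lambda_j-c_{j+1})=\tfrac1\alpha f^{*}(\lambda_{i+1})$ for every $i\in\{\tau^{(\alpha)},\dots,\bar k-2\}$, while Eq.~\eqref{eq_optimality_conditions} supplies the same identity at $i=\bar k-1$ with $\lambda_{\bar k}=p_{\max}$. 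Hence Eq.~\eqref{eq_OPD_sufficiency} holds with equality throughout, so $\textsf{TOS}_{\boldsymbol{\lambda}}$ is $\alpha$-competitive; equivalently, this $\boldsymbol{\lambda}$ is precisely the \textbf{SoSE} solution of Eq.~\eqref{eq_system_of_equations} with $\textsf{CR}_f^{*}(\rho,k)=\alpha$, so (a) also follows from the sufficiency half of Theorem~\ref{theorem_optimality}.

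For (b) I would suppose that some $\alpha'<\alpha$ is achievable and derive a contradiction. By Theorem~\ref{theorem_characteristic} the unique characteristic selection function $\psi^{(\alpha')}$ then lies in $\mathcal P^{(\alpha')}$, and its non-zero points are generated by recursively solving Eq.~\eqref{eq_omega_alpha} --- pushing each new point to the largest price preserving the equality --- starting from $\omega_1^{(\alpha')}=p_{\min}$ with multiplicity $\tau^{(\alpha')}+1$ and stopping when either the next point would exceed $p_{\max}$ or the $\bar k-\tau^{(\alpha')}$ available units run out (Corollary~\ref{theorem_number_of_omega}). The key quantity is the overflow price $G(\alpha):=\omega_{\bar k-\tau^{(\alpha)}+1}^{(\alpha)}$, the price the recursion would assign to a hypothetical $(\bar k+1)$-st unit after all $\bar k$ units have been placed; the hypothesis Eq.~\eqref{eq_optimality_conditions} says exactly that $G(\alpha)=p_{\max}$. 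If one shows $G(\alpha')<p_{\max}$ for every $\alpha'<\alpha$, then for such $\alpha'$ the recursion places all $\bar k-\tau^{(\alpha')}$ non-zero points strictly below $p_{\max}$, the capacity is exhausted, and the total reserved profit equals $\tfrac1{\alpha'}f^{*}(G(\alpha'))<\tfrac1{\alpha'}f^{*}(p_{\max})$ by strict monotonicity of $f^{*}$, which violates the feasibility inequality Eq.~\eqref{eq_necessary_inequality} at $p=p_{\max}$; hence $\mathcal P^{(\alpha')}=\varnothing$, and by Proposition~\ref{theorem_necessary} no $\alpha'$-competitive deterministic algorithm exists, the desired contradiction.

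The remaining task, and the main obstacle, is the monotonicity $G(\alpha')<G(\alpha)$ for $\alpha'<\alpha$. From Eq.~\eqref{eq_omega_alpha}, for a fixed turning point $\tau$ the recursion reads $f^{*}(\omega_2)=\alpha\,g(\tau+1)$ and $f^{*}(\omega_{\ell+1})=f^{*}(\omega_\ell)+\alpha(\omega_\ell-c_{\tau+\ell})$; since $g(\tau+1)>0$ and $\omega_\ell>c_{\tau+\ell}$, every $\omega_\ell$ --- and hence $G$ restricted to that $\tau$-regime --- is strictly increasing in $\alpha$, which is routine. The delicate point is that $\tau^{(\alpha)}$ jumps by one at the discrete values $\alpha=f^{*}(p_{\min})/g(m)$, so this per-regime monotonicity must be stitched into a global one. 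I would do so by first noting $\mathcal P^{(\alpha')}\subseteq\mathcal P^{(\alpha'')}$ whenever $\alpha'\le\alpha''$ (the constraints in Eq.~\eqref{eq_necessary_inequality}--\eqref{eq_necessary_boundary} only loosen as $\alpha$ grows, using $\tau^{(\alpha'')}\le\tau^{(\alpha')}$), which reduces the claim to a left-neighbourhood of $\alpha$, and then checking that $G$ extends continuously across each jump --- as $\tau^{(\alpha)}$ increments, the newly created unit at $p_{\min}$ is the limit of $\omega_2\to p_{\min}$, so the two one-sided trajectories coincide in the limit. Everything else is bookkeeping with Eq.~\eqref{eq_omega_alpha} and the established properties of $f^{*}$, $g$, and the feasibility-preserving operations underlying Theorem~\ref{theorem_characteristic}.
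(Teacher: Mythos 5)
Your proposal is correct, and while it is built from the same ingredients as the paper's proof (the characteristic selection function, the recursion of Theorem~\ref{theorem_characteristic}, and a perturbation-in-$\alpha$ contradiction), it organizes the logic in a genuinely different and arguably cleaner way. The paper proves the \emph{converse} implication — if $\alpha_*$ is optimal then the conditions \eqref{eq_optimality_conditions} plus $L^{(\alpha_*)}=\bar k-\tau^{(\alpha_*)}$ must hold — and then closes the loop by invoking the uniqueness of the \textbf{SoSE}/\textbf{SoNE} solution to conclude that the unique $\alpha$ satisfying the conditions must coincide with the optimal one. You instead prove the stated implication directly: (a) achievability, by translating $\psi^{(\alpha)}$ into an admission threshold and verifying that Eq.~\eqref{eq_OPD_sufficiency} holds with equality, and (b) non-improvability, by showing $\mathcal P^{(\alpha')}=\varnothing$ for all $\alpha'<\alpha$. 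Two of your devices are not in the paper and make the argument tighter: the nesting $\mathcal P^{(\alpha')}\subseteq\mathcal P^{(\alpha'')}$ for $\alpha'\le\alpha''$, which legitimately reduces the contradiction to a left neighbourhood of $\alpha$, and the overflow price $G(\alpha)$, which gives a single scalar whose monotonicity in $\alpha$ captures exactly when capacity is exhausted short of $p_{\max}$. The paper's proof of \textbf{OPT-Condition-1} simply asserts that an $\epsilon$ preserving $\tau^{(\alpha)}$ and $L^{(\alpha)}$ ``always exists,'' which silently skips the case where $\alpha_*$ sits at a jump of $\tau^{(\cdot)}$; you explicitly identify that discontinuity as the crux and sketch the correct fix (continuity of the relabelled trajectory across the jump, $\omega_{\ell+1}^{(\alpha_0^+)}\to\omega_{\ell}^{(\alpha_0^-)}$ with the vanishing $\omega_2\to p_{\min}$ absorbed into the horizontal segment). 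That argument does work, and the within-regime monotonicity $f^{*}(\omega_{\ell+1})=f^{*}(\omega_\ell)+\alpha(\omega_\ell-c_{\tau+\ell})$, combined with $\omega_\ell>c_{\tau+\ell}$ from Proposition~\ref{theorem_uniqueness}, gives $\omega_\ell$ and hence $G$ strictly increasing in $\alpha$ by a routine induction, exactly as you claim. So the proposal is sound; what you gain over the paper is a self-contained forward proof that does not lean on the uniqueness argument proved elsewhere, at the cost of having to make the jump-point continuity explicit rather than elide it.
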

	\begin{corollary}
		\label{theorem_optimality_system_of_equations}
		If $ \alpha \geq 1 $ is such that the characteristic selection function $ \psi^{(\alpha)}  $  has $ \bar{k} - \tau^{(\alpha)} $ non-zero points which satisfy the following system of equations:
		\begin{equation}
			\label{eq_system_equations_omega_alpha}
			\normalfont
			\textbf{(SoNE):}\quad 
			\alpha = \frac{f^{*}(\omega_2^{(\alpha)})}{g(\tau^{(\alpha)} + 1)} =  \frac{f^{*}(\omega_3^{(\alpha)})-f^{*}(\omega_2^{(\alpha)})}{\omega_2^{(\alpha)}-c_{\tau^{(\alpha)}+2}} = \cdots 
			= \frac{f^{*}\big(p_{\max} \big)-f^{*}\big(\omega_{\bar{k}-\tau^{(\alpha)}}^{(\alpha)}\big)}{\omega_{\bar{k}-\tau^{(\alpha)}}^{(\alpha)} - c_{\bar{k}}},
		\end{equation}
		then, $ \alpha $ is the best-possible competitive ratio of all deterministic   algorithms. 
	\end{corollary}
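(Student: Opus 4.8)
The plan is to treat Corollary~\ref{theorem_optimality_system_of_equations} as a telescoped reformulation of Theorem~\ref{theorem_optimality_conditions}. Concretely, I would show that, for the characteristic selection function $\psi^{(\alpha)}$ of Theorem~\ref{theorem_characteristic}, the \textbf{SoNE} in Eq.~\eqref{eq_system_equations_omega_alpha} --- together with $\psi^{(\alpha)}$ having exactly $\bar{k}-\tau^{(\alpha)}$ non-zero points --- is equivalent to the hypotheses of Theorem~\ref{theorem_optimality_conditions}, namely that $\psi^{(\alpha)}$ has $\bar{k}-\tau^{(\alpha)}$ non-zero points and Eq.~\eqref{eq_optimality_conditions} holds. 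Only the forward direction is actually needed: once it is in hand, Theorem~\ref{theorem_optimality_conditions} immediately gives that $\alpha$ is the best-possible competitive ratio of all deterministic algorithms.

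The core of the argument is a substitution-and-differencing computation on Eq.~\eqref{eq_omega_alpha}. Recall that whenever $\psi^{(\alpha)}$ is well defined (i.e., $\mathcal{P}^{(\alpha)}$ is non-empty), Theorem~\ref{theorem_characteristic} guarantees $\omega_1^{(\alpha)}=p_{\min}$, $\psi^{(\alpha)}(\omega_1^{(\alpha)})=\tau^{(\alpha)}+1$, $\psi^{(\alpha)}(\omega_\ell^{(\alpha)})=1$ for $\ell\ge 2$, and that Eq.~\eqref{eq_omega_alpha} holds for every $\ell\in\{2,3,\cdots,L^{(\alpha)}\}$. Plugging these values into Eq.~\eqref{eq_omega_alpha} at $\ell=2$ and using $f(\tau^{(\alpha)}+1)=\sum_{j=1}^{\tau^{(\alpha)}+1}c_j$ together with $g(\tau^{(\alpha)}+1)=p_{\min}(\tau^{(\alpha)}+1)-f(\tau^{(\alpha)}+1)$ turns it into $g(\tau^{(\alpha)}+1)=\frac{1}{\alpha}f^{*}(\omega_2^{(\alpha)})$, which is precisely the first equality of the SoNE. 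For $\ell\in\{2,3,\cdots,L^{(\alpha)}-1\}$, subtracting Eq.~\eqref{eq_omega_alpha} at index $\ell$ from the one at index $\ell+1$, and using $f(\tau^{(\alpha)}+\ell)-f(\tau^{(\alpha)}+\ell-1)=c_{\tau^{(\alpha)}+\ell}$ and $\psi^{(\alpha)}(\omega_\ell^{(\alpha)})=1$, gives $\omega_\ell^{(\alpha)}-c_{\tau^{(\alpha)}+\ell}=\frac{1}{\alpha}\big(f^{*}(\omega_{\ell+1}^{(\alpha)})-f^{*}(\omega_\ell^{(\alpha)})\big)$, i.e., the intermediate equalities of the SoNE. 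Finally, under the hypothesis $L^{(\alpha)}=\bar{k}-\tau^{(\alpha)}$, I would view Eq.~\eqref{eq_optimality_conditions} as the ``$(\bar{k}-\tau^{(\alpha)}+1)$-th'' instance of Eq.~\eqref{eq_omega_alpha} with virtual non-zero point $p_{\max}$ and $f(\tau^{(\alpha)}+(\bar{k}-\tau^{(\alpha)}+1)-1)=f(\bar{k})$; subtracting Eq.~\eqref{eq_omega_alpha} at $\ell=\bar{k}-\tau^{(\alpha)}$ from it, with $f(\bar{k})-f(\bar{k}-1)=c_{\bar{k}}$, gives the last equality $\omega_{\bar{k}-\tau^{(\alpha)}}^{(\alpha)}-c_{\bar{k}}=\frac{1}{\alpha}\big(f^{*}(p_{\max})-f^{*}(\omega_{\bar{k}-\tau^{(\alpha)}}^{(\alpha)})\big)$ of the SoNE. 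Since the telescoped identities for $\ell\le\bar{k}-\tau^{(\alpha)}$ are already implied by Theorem~\ref{theorem_characteristic}, the only genuinely new content of the SoNE is Eq.~\eqref{eq_optimality_conditions}; this establishes the equivalence, and Theorem~\ref{theorem_optimality_conditions} closes the proof.

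Two boundary points need care. The computation uses $\psi^{(\alpha)}(\omega_\ell^{(\alpha)})=1$, hence $\ell\ge 2$ and $\bar{k}-\tau^{(\alpha)}\ge 2$; in the degenerate case $\bar{k}-\tau^{(\alpha)}=1$ the SoNE of Eq.~\eqref{eq_system_equations_omega_alpha} simply vanishes and the claim reduces to the $L^{(\alpha)}=1$ instance of Theorem~\ref{theorem_optimality_conditions}. I would also record, as a consistency check rather than a step, that under the relabelling $\omega_{j+1}^{(\alpha)}\leftrightarrow\lambda_{\tau+j}^{*}$ (with $\omega_1^{(\alpha)}=p_{\min}$ and $\lambda_{\bar{k}}^{*}=p_{\max}$) the SoNE is term-by-term identical to the \textbf{SoSE} of Eq.~\eqref{eq_system_of_equations}; combined with the uniqueness in Proposition~\ref{theorem_uniqueness}, this forces the $\alpha$ in the corollary to coincide with $\textsf{CR}_f^{*}(\rho,k)$, which is how the corollary ties the optimal deterministic competitive ratio to the value in Theorem~\ref{theorem_optimality}. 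The main difficulty here is purely bookkeeping: keeping the two index conventions (the $\lambda$-indexing running over $\tau,\tau+1,\cdots,\bar{k}$ versus the $\omega$-indexing running over $1,2,\cdots,\bar{k}-\tau^{(\alpha)}$) aligned, and correctly matching the two asymmetric endpoint terms --- $g(\tau^{(\alpha)}+1)$ at the left and $f^{*}(p_{\max})$ (with $f(\bar{k})$) at the right. All of the substantive work has already been carried out in Theorem~\ref{theorem_characteristic} and Theorem~\ref{theorem_optimality_conditions}.
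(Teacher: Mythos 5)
Your proposal is correct and follows essentially the same route as the paper: the paper's proof simply invokes Theorem \ref{theorem_optimality_conditions} and asserts that the \textbf{SoNE} is equivalent to Eq. \eqref{eq_optimality_conditions} combined with Eq. \eqref{eq_omega_alpha} "after some basic manipulations," and your substitution-and-differencing computation is precisely those manipulations spelled out (your closing consistency check against the \textbf{SoSE} likewise mirrors the paper's subsequent "Proof of Theorem \ref{theorem_optimality}" paragraph).
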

	\begin{proof}
		Theorem \ref{theorem_optimality_conditions}  is proved in Appendix \ref{proof_of_optimality_conditions}.  Corollary \ref{theorem_optimality_system_of_equations} directly follows 
		Theorem \ref{theorem_optimality_conditions} since Eq. \eqref{eq_system_equations_omega_alpha} is equivalent to Eq. \eqref{eq_optimality_conditions} and Eq. \eqref{eq_omega_alpha} after some basic manipulations.  The intuition of Theorem \ref{theorem_optimality_conditions} is that, a smaller $ \alpha $ tends to increase the value of $ L^{(\alpha)} $, i.e., the number of non-zero points increases (at least non-decreasing) when $ \alpha $ is smaller. By this property, when $ \alpha_* $ is small enough such that $ L^{(\alpha_*)} $ achieves its maximum, i.e., $ L^{(\alpha_*)} = \bar{k} - \tau^{(\alpha_*)} $, and at the same time, the non-zero points are such that Eq. \eqref{eq_optimality_conditions} or Eq. \eqref{eq_system_equations_omega_alpha} hold, then it is impossible to have another $ \alpha $-feasible characteristic selection function $ \psi^{(\alpha)} $ with $ \alpha < \alpha_* $ since otherwise this would violate the uniqueness of solutions to the system of equations $ \textsf{SoE}(\boldsymbol{\chi}^{(\tau)}) $ in Eq. \eqref{eq_system_of_equations_X}.
		Note that Eq. \eqref{eq_system_equations_omega_alpha} shares the same structure as Eq. \eqref{eq_system_of_equations_X}.
		We refer to Eq. \eqref{eq_system_equations_omega_alpha} as the \textbf{system of necessary equations} (\textbf{SoNE}).
	\end{proof}

	\noindent\textbf{Proof of Theorem \ref{theorem_optimality}}. The \textbf{SoNE} in Eq. \eqref{eq_system_equations_omega_alpha} differs from the \textbf{SoSE} in Eq. \eqref{eq_system_of_equations} in notation  only: the non-zero points $ \{\omega_2^{(\alpha)}, \omega_3^{(\alpha)}, \cdots, \omega_{\bar{k}-\tau^{(\alpha)} }^{(\alpha)} \} $ correspond to the thresholds $ \{\lambda_{\tau^{(\alpha)} + 1}^*, \lambda_{\tau^{(\alpha)} + 2}^*, \cdots, \lambda_{\bar{k}-1}^*\} $ in an element-wise manner. 
	Thus, if the admission threshold $ \boldsymbol{\lambda}^* $ is designed based on Theorem \ref{theorem_optimality}, then the threshold policy $ \textsf{TOS}_{\boldsymbol{\lambda}^*} $ is $ \textsf{CR}_f^*(\rho,k) $-competitive, which is optimal since any $ \alpha < \textsf{CR}_f^*(\rho,k) $ will lead to the non-existence of $ \psi^{(\alpha)} $. We thus complete the proof of Theorem \ref{theorem_optimality}.
	
	\vspace{+0.2cm}
	\begin{remark}[Relationship between \textbf{SoNE} and \textbf{SoSE}]
		The \textbf{SoNE} in Eq. \eqref{eq_system_equations_omega_alpha} coincides with the \textbf{SoSE} in Eq. \eqref{eq_system_of_equations}, and they both share the same structure as $\normalfont \textsf{SoE}(\boldsymbol{\chi}^{(\tau)}) $ given by Eq. \eqref{eq_system_of_equations_X}. However, as demonstrated by our proof in this section, both \textbf{SoSE} and \textbf{SoNE} are developed separately without relying on each other, and they respectively contribute to the sufficiency and necessity of the competitiveness, uniqueness, and optimality of $\normalfont \textsf{TOS}_{\boldsymbol{\lambda}^*} $.
	\end{remark}

\section{Proof of Theorem \ref{theorem_hardness_results_general}}
\label{proof_theorem_hardness_results}
In this section, we show how to derive the lower bound $ \textsf{CR}_f^{\textsf{lb}}(\rho,k) $ in Theorem \ref{theorem_hardness_results_general}. We remark that the proof below is heavily related to Appendix \ref{sec_proof_of_theorem_optimality}, especially the concept of selection functions in Definition \ref{def_packing_functions} and the  $ \varepsilon $\textsf{-instance} described in Definition \ref{def_I_n}. Thus, readers are suggested to have an overview of  Appendix \ref{sec_proof_of_theorem_optimality} first before proceeding with the following proof.

\subsection{Average Selection Function: Definition and Existence}

Based on Definition \ref{def_packing_functions}, we define \textit{average selection functions} below.

\begin{definition}[Average selection functions]\label{def_packing_functions_average}
	An average selection function $ \bar{\psi}(p): [p_{\min},p_{\max}] \rightarrow [1, \bar{k}]$ is a mapping from $ p\in [p_{\min},p_{\max}] $ to a non-negative real number within $ [0, \bar{k}]$, denoting the \textbf{average number} of selected buyers whose offered price is $ p $.
\end{definition}

Given an arrival instance, the realization of any randomized algorithm can be fully characterized by the resulting average selection function. Definition \ref{def_packing_functions_average} implies that an average selection function $ \bar{\psi} $ can be considered the expectation of $ \psi $ over possible randomness of the algorithm: $ \bar{\psi}(p) = \mathbb{E}\left[\psi(p) \right]$, where $ \psi $ is the selection function discussed in Section \ref{section_def_packing_function}. By definition, the core difference between $ \bar{\psi} $ and $ \psi $  is that \textit{the average selection function $ \bar{\psi}  $ ranges from 0 to $ \bar{k} $ continuously while $ \psi $ only takes integer values in $ [\bar{k}] $}.

Based on Definition \ref{def_packing_functions_average}, we give Proposition \ref{theorem_necessary_random} below to show the existence of an average selection function whenever there exists an $ \alpha $-competitive randomized algorithm.
\begin{proposition} \label{theorem_necessary_random}
	If there is an $ \alpha $-competitive randomized online algorithm, then there exists an average selection function $ \bar{\psi} $ such that
	\begin{subequations}\label{eq_necessary_random}
		\begin{align} \normalfont  
			\int_{p_{\min}}^p \eta \bar{\psi}(\eta)d\eta - f\Big(\int_{p_{\min}}^p \bar{\psi}(\eta)d\eta\Big) \geq  \frac{1}{\alpha} f^{*}(p), \quad \forall p \in (p_{\min},p_{\max}], \label{eq_necessary_random_1}\\
			\bar{\psi}(p_{\min}) \in \Big[\hat{g}^{-1}\Big(\frac{1}{\alpha}f^*(p_{\min})\Big), \ubar{k} \Big], \ \int_{p_{\min}}^{p_{\max}}  \bar{\psi}(\eta)d\eta \leq \bar{k}, \label{eq_necessary_random_2}
		\end{align}
	\end{subequations}
	where $ \hat{g}^{-1} $ is the inverse of $ \hat{g}(y) \triangleq  p_{\min}y - f(y) $ for $ y\in [0,\bar{k}] $.
\end{proposition}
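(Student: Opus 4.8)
The plan is to replay the argument behind Proposition~\ref{theorem_necessary}, replacing every deterministic quantity by its expectation over the algorithm's internal randomness and invoking Jensen's inequality to cope with the convex cost $f$. First I would view the $\alpha$-competitive randomized algorithm as a distribution over deterministic algorithms: on any fixed arrival instance each realization induces a (random) selection function $\psi$ in the sense of Definition~\ref{def_packing_functions}, and I set $\bar{\psi}(p)\triangleq\mathbb{E}[\psi(p)]$, which is precisely an average selection function (Definition~\ref{def_packing_functions_average}). As in the deterministic case one may assume without loss of generality that the algorithm never produces more than $\bar{k}$ units --- the $(\bar{k}+1)$-th unit has marginal cost $c_{\bar{k}+1}>p_{\max}$, so never accepting it can only raise the profit of every realization and hence preserves $\alpha$-competitiveness --- so that $\sum_j\psi(p_j)\le\bar{k}$ pointwise, and therefore $\sum_j\bar{\psi}(p_j)\le\bar{k}$.

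Next I would feed the algorithm the nested $\varepsilon$-instances $\mathcal{I}_n^{(\varepsilon)}$ of Definition~\ref{def_I_n}. Since the first $n$ groups of $\mathcal{I}_m^{(\varepsilon)}$ coincide with $\mathcal{I}_n^{(\varepsilon)}$ for every $m\ge n$ and the algorithm cannot tell which instance it faces, its randomized behaviour on those groups has the same law under all of them; hence, exactly as in the proof of Proposition~\ref{theorem_necessary}, a single $\bar{\psi}$ governs all the $\mathcal{I}_n^{(\varepsilon)}$ simultaneously. Fixing $n$ and writing $\textsf{ALG}(\mathcal{I}_n^{(\varepsilon)})=\sum_{j=1}^n p_j^{(\varepsilon)}\psi(p_j^{(\varepsilon)})-f\big(\sum_{j=1}^n\psi(p_j^{(\varepsilon)})\big)$, linearity handles the revenue term while convexity of $f$ gives, by Jensen's inequality,
\[
\mathbb{E}\Big[f\Big(\textstyle\sum_{j=1}^n\psi(p_j^{(\varepsilon)})\Big)\Big]\ \ge\ f\Big(\textstyle\sum_{j=1}^n\bar{\psi}(p_j^{(\varepsilon)})\Big),
\]
so that
\[
\sum_{j=1}^n p_j^{(\varepsilon)}\bar{\psi}(p_j^{(\varepsilon)})-f\Big(\sum_{j=1}^n\bar{\psi}(p_j^{(\varepsilon)})\Big)\ \ge\ \mathbb{E}\big[\textsf{ALG}(\mathcal{I}_n^{(\varepsilon)})\big]\ \ge\ \tfrac{1}{\alpha}\textsf{OPT}(\mathcal{I}_n^{(\varepsilon)})=\tfrac{1}{\alpha}f^{*}(p_n^{(\varepsilon)}),
\]
the last equality being the offline optimum computed as in the proof of Corollary~\ref{theorem_def_P_alpha}. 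Letting $\varepsilon\to0$ --- the same limiting passage as in Proposition~\ref{theorem_necessary} --- turns these discrete inequalities into Eq.~\eqref{eq_necessary_random_1}, and $\sum_j\bar{\psi}(p_j)\le\bar{k}$ into $\int_{p_{\min}}^{p_{\max}}\bar{\psi}(\eta)d\eta\le\bar{k}$.

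For the boundary value, group $\mathcal{A}_1^{(\varepsilon)}$ contains exactly $\Gamma(p_{\min})=\ubar{k}$ buyers at price $p_{\min}$, so $\psi(p_{\min})\le\ubar{k}$ in every realization and thus $\bar{\psi}(p_{\min})\le\ubar{k}$. Applying $\alpha$-competitiveness to $\mathcal{I}_1^{(\varepsilon)}$, whose offline optimum is $f^{*}(p_{\min})$, and using Jensen once more gives $\hat{g}(\bar{\psi}(p_{\min}))=p_{\min}\bar{\psi}(p_{\min})-f(\bar{\psi}(p_{\min}))\ge\mathbb{E}[\textsf{ALG}(\mathcal{I}_1^{(\varepsilon)})]\ge\tfrac{1}{\alpha}f^{*}(p_{\min})$; inverting $\hat{g}$ on the branch $[0,\ubar{k}]$ where it is increasing (by the definition of $\ubar{k}$) yields $\bar{\psi}(p_{\min})\ge\hat{g}^{-1}\big(\tfrac{1}{\alpha}f^{*}(p_{\min})\big)$, and this left endpoint is itself at most $\ubar{k}$ since $\hat{g}(\ubar{k})=f^{*}(p_{\min})$, so the interval in Eq.~\eqref{eq_necessary_random_2} is nonempty. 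Together these establish both conditions in Eq.~\eqref{eq_necessary_random}.

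The only genuinely new ingredient beyond Proposition~\ref{theorem_necessary} is Jensen's inequality, and it conveniently points the right way: replacing $\mathbb{E}[f(\cdot)]$ by $f(\mathbb{E}[\cdot])$ only enlarges the left-hand ``profit'', making it easier to keep above $\tfrac{1}{\alpha}\textsf{OPT}$. I expect the main technical nuisance to be the $\varepsilon\to0$ passage, which is more delicate than in the deterministic case: unlike $\psi$, the averaged object $\bar{\psi}$ need not be concentrated on finitely many prices, so one must argue that the limit is a bona fide integrable function (equivalently, control the limiting measure) and that the discrete inequalities converge uniformly in $p$. A secondary point to verify is the ``truncate production at $\bar{k}$'' reduction in the randomized setting, namely that it can be enforced online realization-by-realization without ever lowering realized profit.
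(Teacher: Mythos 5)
Your proposal is correct and takes essentially the same route as the paper: feed the nested $\varepsilon$-instances of Definition~\ref{def_I_n}, invoke Jensen's inequality on the convex $f$ to pass from the random selection function $\psi$ to its expectation $\bar\psi$, observe that the nested structure forces a single $\bar\psi$ across all $n$, and take $\varepsilon\to 0$. The technical caveats you flag (the limit passage for $\bar\psi$ and the truncation at $\bar k$) are real but are treated at the same informal level in the paper's proof, so there is no gap relative to the paper.
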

\begin{proof} 
	Proposition \ref{theorem_necessary_random} is an extension of Proposition \ref{theorem_necessary} to consider randomized algorithms. Similar to our proof of Proposition \ref{theorem_necessary}, here we also leverage the fact that if there is an $ \alpha $-competitive randomized algorithm \textsf{ALG}, then \textsf{ALG} must achieve at least $ \frac{1}{\alpha} $ portion of optimum under all possible arrival instances, including the $ \varepsilon $\textsf{-Instance} $ \mathcal{I}_{n}^{(\varepsilon)} $ for all $ n \in  [N_{\varepsilon}] $ given by Definition \ref{def_I_n}. 
	
	Let us first consider that the arrival instance is given by $ \mathcal{I}_1^{(\epsilon)} $. Suppose $ \bar{\psi}_1  = \mathbb{E}[\psi_1] $ is the average selection function corresponds to an $ \alpha $-competitive randomized algorithm \textsf{ALG} under $ \mathcal{I}_1^{(\epsilon)} $. Thus, the expected profit achieved by \textsf{ALG} is given by  $ \mathbb{E}\big[\textsf{ALG}(\mathcal{I}_1^{(\epsilon)})\big] =  \mathbb{E}\big[p_{\min} \psi_1(p_{\min}) - f(\psi_1(p_{\min}))\big] \leq p_{\min} \bar{\psi}_1(p_{\min}) - f\big(\bar{\psi}_1(p_{\min})\big) = \hat{g}\big(\bar{\psi}_1(p_{\min})\big) $, where we use the Jensen's inequality. The algorithm \textsf{ALG} being $ \alpha $-competitive indicates that 
	\begin{equation*}
		\hat{g}\Big(\bar{\psi}_1(p_{\min})\Big) \geq \mathbb{E}\big[\textsf{ALG}(\mathcal{I}_1^{(\epsilon)})\big] \geq \frac{1}{\alpha} \textsf{OPT}(\mathcal{I}_1^{(\epsilon)}) = \frac{1}{\alpha} f^{*}(p_{\min}),
	\end{equation*}
	where $ \textsf{OPT}(\mathcal{I}_1^{(\epsilon)}) = f^{*}(p_{\min}) $ denotes the optimal profit achieved in the offline setting under the arrival instance $ \mathcal{I}_1^{(\epsilon)} $.
	Thus, there must exist an average selection function $ \bar{\psi}_1 $ such that $ \bar{\psi}_1(p_{\min}) \in [\hat{g}^{-1}(\frac{1}{\alpha}f^*(p_{\min})), \ubar{k}] $  as long as there exists an $ \alpha $-competitive randomized  algorithm.
	
	Similarly, let us consider that the arrival instance is given by $ \mathcal{I}_n^{(\varepsilon)} $ for $ n\in \{2,3,\cdots,N^{(\varepsilon)}\} $. The existence of an $ \alpha $-competitive randomized algorithm is equivalent to the existence of at least one average selection function $ \hat{\psi}_n $ so that
	\begin{align*}
		\mathbb{E}\left[\textsf{ALG}\big(\mathcal{I}_n^{(\varepsilon)}\big)\right]  
		= \ & \mathbb{E}\left[ \sum_{\ell=1}^{n} p_{\ell}^{(\varepsilon)} \psi_n\big(p_{\ell}^{(\varepsilon)}\big)  -  f\left( \sum_{\ell=1}^{n}  \psi_n\big(p_{\ell}^{(\varepsilon)}\big) \right)\right] \\
		\leq \ &   \sum_{\ell=1}^{n} p_{\ell}^{(\varepsilon)} \bar{\psi}_n\big(p_{\ell}^{(\varepsilon)}\big) - f\left( \sum_{\ell=1}^{n}  \bar{\psi}_n\big(p_{\ell}^{(\varepsilon)}\big) \right),
	\end{align*}
	where we again use the Jensen's inequality and $ p_{\ell}^{(\varepsilon)} = p_{\min} + (\ell-1)\varepsilon $ for $ \ell\in [n] $ (please refer to Definition \ref{def_I_n}).
	In the offline setting, similar to the case with $ \mathcal{I}_1^{(\varepsilon)} $, the optimal strategy is to reject all but the $ \Gamma(p_n^{(\varepsilon)}) $ copies of identical buyers in the last group $ \mathcal{I}_n^{(\varepsilon)} $, i.e.,  $ \textsf{OPT}(\mathcal{I}_n^{(\varepsilon)}) = p_n^{(\varepsilon)} \Gamma(p_n^{(\varepsilon)}) - f\big(\Gamma(p_n^{(\varepsilon)})\big) = f^*\big(p_n^{(\varepsilon)}\big)$. Thus, the randomized algorithm being $ \alpha $-competitive indicates that $ \mathbb{E} [\textsf{ALG}(\mathcal{I}_n^{(\varepsilon)})]  \geq \frac{1}{\alpha} \textsf{OPT}(\mathcal{I}_n^{(\varepsilon)}) $ and $ \sum_{\ell=1}^{n} \bar{\psi}_n(p_{\ell}^{(\varepsilon)})\leq \bar{k} $ (i.e., at most $ \bar{k} $ buyers will be selected), meaning that there must exist an average selection function $ \bar{\psi}_n(p) $ so that the following condition is satisfied:
	\begin{align*}
		\sum_{\ell=1}^{n} p_{\ell}^{(\varepsilon)} \bar{\psi}_n\big( p_{\ell}^{(\varepsilon)} \big) - f\left( \sum_{\ell=1}^{n} \bar{\psi}_n\big( p_{\ell}^{(\varepsilon)} \big) \right) \geq \mathbb{E}\left[\textsf{ALG}\big(\mathcal{I}_n^{(\varepsilon)}\big)\right] \geq \frac{1}{\alpha} \textsf{OPT}(\mathcal{I}_n^{(\varepsilon)}) = \frac{1}{\alpha} f^{*}\big( p_n^{(\varepsilon)}\big).
	\end{align*}
	
	Similar to our proof of Proposition \ref{theorem_necessary}, the arrival instance to be presented to a randomized algorithm can be $ \mathcal{I}_{n}^{(\varepsilon)} $ for any $ n = [N^{(\varepsilon)}] $, and we argue that it is impossible for an online algorithm to know the information of $ n $ a priori. Thus,  the series of average selection functions $ \bar{\psi}_n $ for any $ n \in  [N^{(\varepsilon)}]  $ must overlap and be the same function $ \bar{\psi} $ that satisfies
	\begin{align}\label{eq_psi_bar}
		\begin{cases}
			\sum_{\ell=1}^{n} p_{\ell}^{(\varepsilon)} \bar{\psi}\big( p_{\ell}^{(\varepsilon)} \big) - f\left( \sum_{\ell=1}^{n} \bar{\psi}\big( p_{\ell}^{(\varepsilon)} \big) \right) \geq \frac{1}{\alpha} f^{*}\big( p_n^{(\varepsilon)}\big),\bigskip \\
			\bar{\psi}(p_{\min}) \in \left[\hat{g}^{-1}\Big(\frac{1}{\alpha}f^*(p_{\min})\Big), \ubar{k} \right], \ \sum_{\ell=1}^{n} \bar{\psi}_n\big(p_{\ell}^{(\varepsilon)}\big) \leq \bar{k},
		\end{cases} 
		n \in  [N^{(\varepsilon)}-1].
	\end{align}
	Since Eq. \eqref{eq_psi_bar} holds for all $ \varepsilon > 0 $, it holds when $ \varepsilon \rightarrow 0 $ as well, 
	leading to the integral inequalities in Eq. \eqref{eq_necessary_random_1} and Eq. \eqref{eq_necessary_random_2}. We thus complete the proof of Proposition \ref{theorem_necessary_random}. 
\end{proof}

\subsection{Characteristic Average Selection (CAS) Function}
Similar to our proof of Theorem \ref{theorem_optimality} in Appendix \ref{section_characteristic_packing_function}, an important step in deriving the lower bound  $ \textsf{CR}_f^{\textsf{lb}}(\rho,k) $ is the construction of a special average selection function, termed \textit{characteristic average selection (CAS)} function, given in Theorem \ref{theorem_characteristic_random} below.

\begin{theorem}\label{theorem_characteristic_random}
	If there is an $ \alpha $-competitive randomized online algorithm, then there exists a CAS function $ \bar{\psi}^{(\alpha)} $ such that
	\begin{subequations}
		\begin{align} \normalfont  
			\int_{p_{\min}}^p \eta \bar{\psi}^{(\alpha)}(\eta)d\eta - f\Big(\int_{p_{\min}}^p \bar{\psi}^{(\alpha)}(\eta)d\eta\Big) =  \frac{1}{\alpha} f^{*}(p), \quad \forall p \in (p_{\min},p_{\max}],\label{eq_necessary_inequality_random}\\
			\bar{\psi}^{(\alpha)}(p_{\min}) = \hat{g}^{-1}\Big(\frac{1}{\alpha}f^*(p_{\min})\Big), \ \int_{p_{\min}}^{p_{\max}}  \bar{\psi}^{(\alpha)}(\eta)d\eta \leq \bar{k}.\label{eq_necessary_boundary_random}
		\end{align}
	\end{subequations}
\end{theorem}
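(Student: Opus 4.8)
The plan is to mirror the construction of the characteristic selection function $\psi^{(\alpha)}$ in Appendix \ref{section_characteristic_packing_function}, but now in the continuous ``average'' regime where $\bar\psi$ takes real values in $[0,\bar k]$ rather than integer values. By Proposition \ref{theorem_necessary_random}, the existence of an $\alpha$-competitive randomized algorithm already guarantees a (possibly non-extremal) average selection function $\bar\psi$ satisfying the inequalities \eqref{eq_necessary_random_1}--\eqref{eq_necessary_random_2}. The goal is to massage any such $\bar\psi$ into a canonical one, $\bar\psi^{(\alpha)}$, for which the first inequality holds with equality everywhere on $(p_{\min},p_{\max}]$ and the boundary value at $p_{\min}$ is pinned to its smallest admissible level $\hat g^{-1}(\tfrac1\alpha f^*(p_{\min}))$.

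First I would establish two ``feasibility-preserving'' reductions, the continuous analogues of \textsc{Push-Down-to-Minimum} and \textsc{Push-Right-to-Equality} from the deterministic proof (the \textsc{Push-Down-to-One} step has no counterpart here since $\bar\psi$ is already real-valued). Step one: show that if $\bar\psi(p_{\min})$ strictly exceeds $\hat g^{-1}(\tfrac1\alpha f^*(p_{\min}))$, we may lower it to exactly that value. Because $\hat g(y)=p_{\min}y-f(y)$ is strictly increasing on $[0,\bar k]$ (by definition of $\ubar k$ and $\bar k$), $\hat g^{-1}$ is well-defined, and lowering $\bar\psi(p_{\min})$ only decreases $\int\bar\psi$ and leaves enough ``mass'' that can be redistributed to slightly larger prices without violating \eqref{eq_necessary_random_1}; I would make this precise by an explicit mass-transport argument, moving the excess to a small interval $(p_{\min},p_{\min}+\delta]$. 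Step two: show that whenever the integral inequality \eqref{eq_necessary_random_1} is strict at some price $p$, we can shift some of the selection ``mass'' to the right (toward higher prices) to restore equality at $p$, exploiting the strict monotonicity and continuity of $f^*$ (Lemma \ref{theorem_conjugate}). In the continuous setting this amounts to choosing $\bar\psi$ so that the left-hand side, viewed as a function of $p$, exactly tracks $\tfrac1\alpha f^*(p)$; differentiating, this pins down $\bar\psi$ as the derivative of the inverse of $\tfrac1\alpha f^*$ composed appropriately, i.e.\ it yields an ODE characterization of $\bar\psi^{(\alpha)}$ with the boundary condition fixed in step one.

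Concretely, I would define $\bar\psi^{(\alpha)}$ directly as the unique solution of the implicit equation obtained by forcing equality in \eqref{eq_necessary_inequality_random}, then verify (a) it is well-defined and nonnegative on $(p_{\min},p_{\max}]$, using that $f^*$ is strictly increasing and piecewise linear so that the implicit differentiation is valid a.e.; (b) it satisfies the boundary value $\bar\psi^{(\alpha)}(p_{\min})=\hat g^{-1}(\tfrac1\alpha f^*(p_{\min}))$, which is exactly the limiting version of the equality at $p\to p_{\min}^+$ together with $f^*(p_{\min})=\hat g(\ubar k)$-type bookkeeping; and (c) the capacity constraint $\int_{p_{\min}}^{p_{\max}}\bar\psi^{(\alpha)}(\eta)\,d\eta\le\bar k$ is inherited, because the canonicalization procedure only ever moves mass to the right or removes it, never creates it, so $\int\bar\psi^{(\alpha)}\le\int\bar\psi\le\bar k$ by Proposition \ref{theorem_necessary_random}. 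Finally I would argue that this $\bar\psi^{(\alpha)}$ is obtained from \emph{any} $\bar\psi\in$ (the feasible set of Proposition \ref{theorem_necessary_random}) by applying the two reductions, hence it exists whenever an $\alpha$-competitive randomized algorithm does.

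The main obstacle I anticipate is the rigorous treatment of the mass-transport / canonicalization argument in the continuous setting: unlike the deterministic case, where $\psi$ has finitely many non-zero points and the operations are discrete surgeries, here $\bar\psi$ is an arbitrary integrable function and one must be careful that ``pushing mass to the right'' is well-defined, preserves integrability, and genuinely restores \eqref{eq_necessary_inequality_random} with equality \emph{everywhere} rather than just at isolated points. The cleanest route is probably to bypass the iterative surgery entirely and instead \emph{define} $\bar\psi^{(\alpha)}$ via the equality-constrained integral equation, prove existence/uniqueness of that solution using strict monotonicity of $f^*$, and then separately show any feasible $\bar\psi$ dominates it in the appropriate sense — but reconciling the measure-theoretic subtleties (points where $f^*$ is non-differentiable, i.e.\ the breakpoints $c_i$) with the clean ODE picture will require some care, and that is where I expect the bulk of the technical work to lie.
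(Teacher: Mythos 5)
Your proposal matches the paper's approach: the paper's own proof is a three-sentence sketch --- ``push $\bar\psi(p)$ down to its minimum for all $p$ so that the equality holds; we skip the details'' --- and your final strategy of defining $\bar\psi^{(\alpha)}$ directly by forcing equality in \eqref{eq_necessary_inequality_random}, invoking strict monotonicity of $f^*$ for well-definedness, and inheriting the capacity bound by dominance over any feasible $\bar\psi$ from Proposition~\ref{theorem_necessary_random} is precisely what that sketch compresses. Note only that the paper's intended continuous reduction is a pointwise mass-\emph{removing} ``push down,'' not the conservative rightward mass transport you initially outline (which holds $\int\bar\psi$ fixed and therefore does not immediately yield \eqref{eq_necessary_boundary_random}); you in effect recognize this, since you correctly drop \textsc{Push-Down-to-One} as vacuous and pivot to the direct equality-constrained definition at the end.
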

\begin{proof}
Theorem \ref{theorem_characteristic_random} is similar to Theorem \ref{theorem_characteristic}. The difference is that here the output of the CAS function $ \bar{\psi}^{(\alpha)} $ is continuous  while that of $ \psi^{(\alpha)} $ is discrete. The intuition is that, whenever there exists a feasible  average selection function $ \bar{\psi} $ that satisfies  Eq. \eqref{eq_necessary_random}, then we can always push $ \bar{\psi}(p) $ down to its minimum  for all $ p\in [p_{\min},p_{\max}] $ so that the equality holds. Here we skip the details for brevity. 
\end{proof}

Notice that, the CAS function $ \bar{\psi}^{(\alpha)} $ cannot be written in analytical forms, and thus differs from the characteristic selection function $ \psi^{(\alpha)} $ in Theorem \ref{theorem_characteristic}. To derive the expression of $ \bar{\psi}^{(\alpha)} $, let us first give the following definition.

\begin{definition}
\label{def_average_cumulative_packing_function}
An average cumulative selection function $   \bar{\Psi}(p): [p_{\min},p_{\max}] \rightarrow [1,k]$ is defined as the integration of $ \bar{\psi} $ over interval $ [p_{\min},p] $ as follows:
\begin{align}\label{eq_stair_case_function}
	\bar{\Psi}(p) \triangleq  \bar{\psi}(p_{\min})\cdot \mathds{1}_{\{p=p_{\min}\}} + 
	\left(\int_{p_{\min}}^p \bar{\psi}(\eta) d\eta\right)\cdot \mathds{1}_{\{p\in (p_{\min},p_{\max}]\}},
	\quad p\in [p_{\min},p_{\max}].
\end{align}
\end{definition}

Definition \ref{def_average_cumulative_packing_function} implies that $ \bar{\psi}(p) = \bar{\Psi}'(p) $ for all $ p\in [p_{\min},p_{\max}] $, where the derivative $ \bar{\Psi}'(p) $ at the lower and upper boundaries are defined as the right and left derivative, respectively. Based on Theorem \ref{theorem_characteristic_random} and Definition \ref{def_average_cumulative_packing_function}, we give Corollary \ref{theorem_optimality_random} below.

\begin{corollary} 
\label{theorem_optimality_random}
If there is an $ \alpha $-competitive randomized online algorithm, then there exists a strictly-increasing function $ \bar{\Psi}^{(\alpha)} $ such that
\begin{align}\label{eq_bvp_Psi}
	\begin{cases}
		\frac{d \bar{\Psi}^{(\alpha)}}{dp} =  \frac{1}{\alpha}\cdot \frac{\sum_{i = \ubar{k}}^{\bar{k}} i \cdot \mathds{1}_{\left\{p\in [c_i, c_{i+1})\right\}} }{p - f'\big(\bar{\Psi}^{(\alpha)}(p)\big)}, \quad  p \in (p_{\min},p_{\max}), \bigskip \\
		\bar{\Psi}^{(\alpha)}(p_{\min}) = \hat{g}^{-1}\big(\frac{1}{\alpha}f^*(p_{\min})\big),  \bar{\Psi}^{(\alpha)}(p_{\max}) \leq  \bar{k}.
	\end{cases}
\end{align}
Moreover, when $ \bar{\Psi}^{(\alpha)}(p_{\max}) = \bar{k} $, the corresponding $ \alpha $ is the competitive ratio that no randomized algorithm can outperform.
\end{corollary}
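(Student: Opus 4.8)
The plan is to obtain the boundary value problem in Eq.~\eqref{eq_bvp_Psi} by differentiating the integral identity satisfied by the characteristic average selection function $\bar{\psi}^{(\alpha)}$ from Theorem~\ref{theorem_characteristic_random}, and then to extract the lower bound from a monotonicity/comparison argument for the resulting ODE. Concretely, suppose an $\alpha$-competitive randomized algorithm exists. Theorem~\ref{theorem_characteristic_random} supplies a CAS function $\bar{\psi}^{(\alpha)}$ with $\int_{p_{\min}}^p \eta\,\bar{\psi}^{(\alpha)}(\eta)\,d\eta - f\big(\int_{p_{\min}}^p \bar{\psi}^{(\alpha)}(\eta)\,d\eta\big) = \tfrac{1}{\alpha}f^{*}(p)$ for all $p\in(p_{\min},p_{\max}]$, together with the boundary data in Eq.~\eqref{eq_necessary_boundary_random}. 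I define $\bar{\Psi}^{(\alpha)}$ by Eq.~\eqref{eq_stair_case_function}, so that $\bar{\Psi}^{(\alpha)}(p)=\int_{p_{\min}}^p\bar{\psi}^{(\alpha)}(\eta)\,d\eta$ for $p>p_{\min}$ and, a.e., $(\bar{\Psi}^{(\alpha)})'=\bar{\psi}^{(\alpha)}$.

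First I would differentiate the CAS identity in $p$. The left side yields $p\,\bar{\psi}^{(\alpha)}(p) - f'\big(\bar{\Psi}^{(\alpha)}(p)\big)\,\bar{\psi}^{(\alpha)}(p) = \bar{\psi}^{(\alpha)}(p)\big(p - f'(\bar{\Psi}^{(\alpha)}(p))\big)$ (chain rule is legitimate a.e. because $f$ is locally Lipschitz and $\bar{\Psi}^{(\alpha)}$ is absolutely continuous). For the right side, Lemma~\ref{theorem_conjugate} gives $f^{*}(p)=p\,\Gamma(p)-f(\Gamma(p))$ with $\Gamma$ piecewise constant on $[p_{\min},p_{\max}]$, so $(f^{*})'(p)=\Gamma(p)=\sum_{i=\ubar{k}}^{\bar{k}} i\cdot\mathds{1}_{\{p\in[c_i,c_{i+1})\}}$, the index range being restricted to $\{\ubar{k},\dots,\bar{k}\}$ since $\Gamma(p_{\min})=\ubar{k}$ and $\Gamma(p_{\max})=\bar{k}$. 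Equating the two derivatives and dividing gives exactly the ODE in Eq.~\eqref{eq_bvp_Psi}. The boundary conditions are then read off directly from Eq.~\eqref{eq_necessary_boundary_random}: $\bar{\Psi}^{(\alpha)}(p_{\min})=\bar{\psi}^{(\alpha)}(p_{\min})=\hat{g}^{-1}\big(\tfrac{1}{\alpha}f^{*}(p_{\min})\big)$ and $\bar{\Psi}^{(\alpha)}(p_{\max})=\int_{p_{\min}}^{p_{\max}}\bar{\psi}^{(\alpha)}(\eta)\,d\eta\le\bar{k}$. Strict monotonicity of $\bar{\Psi}^{(\alpha)}$ and finiteness of the ODE's right-hand side both follow from the factorized identity just obtained: since $\Gamma(p)\ge\ubar{k}\ge 1$, the product $\bar{\psi}^{(\alpha)}(p)\big(p-f'(\bar{\Psi}^{(\alpha)}(p))\big)$ is strictly positive, and as $\bar{\psi}^{(\alpha)}(p)\ge 0$ both factors must be strictly positive; hence $\bar{\psi}^{(\alpha)}>0$ (so $\bar{\Psi}^{(\alpha)}$ is strictly increasing) and $p-f'(\bar{\Psi}^{(\alpha)}(p))>0$ (so the denominator in Eq.~\eqref{eq_bvp_Psi} never vanishes along the solution).

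For the ``moreover'' claim I would argue as follows. Because $\bar{\Psi}^{(\alpha)}$ is strictly increasing it has an inverse $P^{(\alpha)}$, and on each maximal $p$-interval where $\Gamma\equiv i$ the ODE rewrites as the \emph{linear} equation $\tfrac{dP^{(\alpha)}}{dy}=\tfrac{\alpha}{i}\big(P^{(\alpha)}(y)-f'(y)\big)$; linearity gives a unique solution from any initial point, so gluing across breakpoints shows the solution of Eq.~\eqref{eq_bvp_Psi} is uniquely determined by its left boundary value alone (this is also the integrated form appearing in Eq.~\eqref{eq_beta_gamma}). Viewing the terminal value $\Phi(\alpha):=\bar{\Psi}^{(\alpha)}(p_{\max})$ (set $\Phi(\alpha)=+\infty$ if the solution blows up before $p_{\max}$) as a function of $\alpha$, note that as $\alpha$ decreases the left boundary value $\hat{g}^{-1}(\tfrac{1}{\alpha}f^{*}(p_{\min}))$ strictly increases (as $\hat{g}$ is increasing on its relevant range $[0,\ubar{k}]$) and the vector field $\tfrac{1}{\alpha}\tfrac{\Gamma(p)}{p-f'(y)}$ increases pointwise; it is moreover non-decreasing in the state $y$ since $f'$ is non-decreasing and the denominator stays positive along solutions. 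A standard comparison argument then makes $\Phi$ strictly decreasing and continuous where finite, with $\Phi(\alpha)\to 0$ as $\alpha\to\infty$, so there is a unique $\alpha^{\textsf{lb}}$ with $\Phi(\alpha^{\textsf{lb}})=\bar{k}$. If some randomized algorithm were $\alpha$-competitive with $\alpha<\alpha^{\textsf{lb}}$, the construction above would yield a $\bar{\Psi}^{(\alpha)}$ with $\bar{\Psi}^{(\alpha)}(p_{\max})\le\bar{k}$, yet comparison forces $\bar{\Psi}^{(\alpha)}(p_{\max})=\Phi(\alpha)>\Phi(\alpha^{\textsf{lb}})=\bar{k}$ (or blow-up before $p_{\max}$), a contradiction with Theorem~\ref{theorem_characteristic_random}. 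Hence no randomized algorithm can beat $\alpha^{\textsf{lb}}$, i.e.\ the value of $\alpha$ for which $\bar{\Psi}^{(\alpha)}(p_{\max})=\bar{k}$ is precisely the lower bound asserted.

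The hard part will be making the comparison principle watertight in the presence of the discontinuities of the right-hand side of Eq.~\eqref{eq_bvp_Psi}: $\Gamma(\cdot)$ jumps at the breakpoints $c_i$, and $f'(\cdot)$ may have kinks (it is only non-decreasing), so the ODE must be read in the Carath\'eodory sense and one has to check the ordering of two solutions cannot flip at a jump. Working interval-by-interval between consecutive breakpoints of $\Gamma$, where the linear reformulation is exact, and using continuity of $\bar{\Psi}^{(\alpha)}$ across breakpoints, resolves this. The possibility of finite-$p$ blow-up must also be handled, but it only helps: a larger solution makes the denominator smaller, so blow-up for $\alpha<\alpha^{\textsf{lb}}$ is itself incompatible with $\bar{\Psi}^{(\alpha)}(p_{\max})\le\bar{k}$. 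The remaining ingredients — differentiating an absolutely continuous integral, identifying $(f^{*})'=\Gamma$ via Lemma~\ref{theorem_conjugate}, and the monotonicity of $\hat{g}^{-1}$ — are routine.
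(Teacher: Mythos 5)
Your derivation of the ODE and boundary conditions is essentially the paper's: both differentiate the CAS identity from Theorem~\ref{theorem_characteristic_random}, invoke Lemma~\ref{theorem_conjugate} to identify $(f^{*})'(p)=\Gamma(p)=\sum_{i=\ubar{k}}^{\bar{k}} i\cdot\mathds{1}_{\{p\in[c_i,c_{i+1})\}}$, and read the boundary data off Eq.~\eqref{eq_necessary_boundary_random} and Definition~\ref{def_average_cumulative_packing_function}. Two things you add are genuine improvements in rigor that the paper leaves implicit: the a.e.\ chain-rule justification via absolute continuity, and the neat factorization argument $\bar{\psi}^{(\alpha)}(p)\bigl(p-f'(\bar{\Psi}^{(\alpha)}(p))\bigr)=\tfrac{1}{\alpha}\Gamma(p)>0$, which, because $\bar{\psi}^{(\alpha)}\ge 0$, forces both factors strictly positive and so delivers strict monotonicity of $\bar{\Psi}^{(\alpha)}$ and non-vanishing of the denominator in one stroke.

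For the ``moreover'' part your argument departs from the paper's. The paper simply asserts that for $\alpha<\alpha_*$ there is no $\bar{\Psi}^{(\alpha)}$ satisfying Eq.~\eqref{eq_bvp_Psi}, citing an analogy to the deterministic optimality conditions of Theorem~\ref{theorem_optimality_conditions}; it does not carry out the argument. You instead make it self-contained: passing to the inverse $P^{(\alpha)}$ so that the ODE is linear on each interval where $\Gamma$ is constant (this gives uniqueness of the initial-value solution and is exactly what produces the integrated form Eq.~\eqref{eq_beta_gamma}); observing that as $\alpha$ decreases both the left boundary value $\hat{g}^{-1}\bigl(\tfrac{1}{\alpha}f^{*}(p_{\min})\bigr)$ and the vector field increase, and that the field is monotone in the state because $f'$ is non-decreasing; concluding by comparison that the terminal map $\Phi(\alpha)=\bar{\Psi}^{(\alpha)}(p_{\max})$ is strictly decreasing and continuous where finite, so there is a unique $\alpha^{\textsf{lb}}$ with $\Phi(\alpha^{\textsf{lb}})=\bar{k}$, and any $\alpha<\alpha^{\textsf{lb}}$ forces $\Phi(\alpha)>\bar{k}$ (or blow-up before $p_{\max}$), contradicting Theorem~\ref{theorem_characteristic_random}. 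Your handling of the two technical caveats — reading the ODE in the Carath\'eodory sense at the jumps of $\Gamma$ and showing blow-up only strengthens the contradiction — is exactly what is needed to make the paper's sketch precise. Net effect: same conclusion and same structure, but your version supplies the comparison-principle machinery that the paper waves at rather than proves; what it costs you is the segment-by-segment bookkeeping that the paper avoids by gesturing to the deterministic analogue.
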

\begin{proof} 
This corollary largely follows Theorem \ref{theorem_characteristic_random}. By the definition of $ \bar{\Psi} $, we denote by $ \bar{\Psi}^{(\alpha)} $ the average cumulative selection function corresponding to $ \bar{\psi}^{(\alpha)} $. Based on Eq. \eqref{eq_necessary_inequality_random}, we have
\begin{align*}
	\int_{p_{\min}}^p \eta \bar{\psi}^{(\alpha)}(\eta)d\eta - f\Big( \bar{\Psi}^{(\alpha)}(p) \Big) =  \frac{1}{\alpha} f^{*}(p).
\end{align*}
Taking derivative of both sides leads to $ p \frac{d \bar{\Psi}^{(\alpha)}}{dp} - f'(\bar{\Psi}^{(\alpha)}(p)) \frac{d \bar{\Psi}^{(\alpha)}}{dp} = \frac{1}{\alpha} f^{*'}(p) $. 
Since $ f^{*} $ is piecewise linear, the derivative of $ f^{*} $ is piecewise constant (Lemma \ref{theorem_conjugate}). Therefore, we can write the above derivative in different segments, leading to the following combination of indicator functions:
\begin{align*}
	\frac{d \bar{\Psi}^{(\alpha)}}{dp} =  \frac{1}{\alpha}\cdot \frac{1}{p - f'\big(\bar{\Psi}^{(\alpha)}(p)\big)} \cdot \sum_{i = \ubar{k}}^{\bar{k}} i \cdot \mathds{1}_{\left\{p\in [c_i, c_{i+1})\right\}}.
\end{align*}
The two boundary conditions in Eq. \eqref{eq_bvp_Psi} trivially follow the definition of $ \bar{\Psi}^{(\alpha)} $ and Eq. \eqref{eq_necessary_boundary_random}.


Similar to our proof of the two optimality conditions in Theorem \ref{theorem_optimality_conditions}, no online algorithm can achieve a competitive ratio better than $ \alpha_* $  if $ \alpha_* $ is such that $ \bar{\Psi}^{(\alpha_*)}(p_{\max}) = \int_{p_{\min}}^{p_{\max}} \bar{\psi}^{(\alpha_*)}(\eta) d\eta = \bar{k} $ (similar to the optimality condition of $ L^{(\alpha)} = \bar{k} - \tau^{(\alpha)} $ in Theorem \ref{theorem_optimality_conditions}). This is because, if $ \bar{\Psi}^{(\alpha_*)}(p_{\max}) = \bar{k} $, then for any $ \alpha < \alpha_* $, we can prove that there exists no such a function $ \bar{\Psi}^{(\alpha)} $ that satisfies Eq. \eqref{eq_bvp_Psi}. As a result, any $ \alpha < \alpha_* $ will inevitably lead to  the non-existence of $ \bar{\psi}^{(\alpha)} $, which violates the necessary conditions given in Theorem \ref{theorem_characteristic_random}.  Corollary \ref{theorem_optimality_random} thus follows.
\end{proof}

\subsection{Deriving the Lower Bound $ F(\gamma^{(1)}) $ (Proof of Theorem \ref{theorem_hardness_results_general})}
Now we are ready to prove why  no randomized algorithm is $ \big(\textsf{CR}_f^{\textsf{lb}}(\rho,k) -\epsilon\big)$-competitive for any $ \epsilon > 0 $ with $ \textsf{CR}_f^{\textsf{lb}}(\rho,k) $ being given by Eq. \eqref{eq_CR_lb_general}, that is, $ \textsf{CR}_f^{\textsf{lb}}(\rho,k) = F(\gamma^{(1)}) $.

Since $ \bar{\Psi}^{(\alpha)}(p) $ is strictly-increasing over $ p\in [p_{\min}, p_{\max}] $, its inverse function exists and is well-defined. Let us denote by $ \bar{\Phi}^{(\alpha)} = \bar{\Psi}^{(\alpha),-1} $, namely, $ \bar{\Phi}^{(\alpha)} $  and $  \bar{\Psi}^{(\alpha)}  $ are inverse to each other. In the following we drop the superscript `$ (\alpha) $' and directly write $ \bar{\Phi} $ to simplify the notations (we still keep the bar notation to indicate that this is related to the average selection function $ \bar{\psi} $). Based on Eq. \eqref{eq_bvp_Psi}, we know that there exists a unique strictly-increasing function $ \bar{\Phi}(y) $ such that:
\begin{equation}\label{eq_bvp_Phi}
\begin{aligned}
	\normalfont
	\begin{cases}
		\bar{\Phi}'(y) = \normalfont  \frac{\alpha \left(\bar{\Phi}(y) - f'(y)\right) }{\sum_{i = \ubar{k}}^{\bar{k}} i \cdot \mathds{1}_{\left\{\bar{\Phi}(y)\in [c_i, c_{i+1})\right\} } }, \qquad y\in \left( \hat{g}^{-1}\big(\frac{f^*(p_{\min})}{\alpha}\big), \bar{k} \right),\bigskip \\
		\bar{\Phi}\left(\hat{g}^{-1}\big(\frac{f^*(p_{\min})}{\alpha} \big) \right) = p_{\min}, \quad \bar{\Phi}(\bar{k}) =  p_{\max}.
	\end{cases}
\end{aligned}
\end{equation}
Eq. \eqref{eq_bvp_Phi} can be derived from Eq. \eqref{eq_bvp_Psi} based on the inverse relationship between $ \bar{\Phi} $ and $ \bar{\Psi} $. We omit the details as the manipulations are elementary. 


We emphasize that Eq. \eqref{eq_bvp_Phi} is equivalent to Eq. \eqref{eq_bvp_Psi} in the sense that a function $ \bar{\Phi} $ that satisfies Eq. \eqref{eq_bvp_Phi} means that $ \bar{\Phi}^{-1} $ satisfies Eq. \eqref{eq_bvp_Psi}, and vice versa. The purpose of performing such an equivalent transformation is because Eq. \eqref{eq_bvp_Phi} is much easier to deal with from the perspective of solving differential equations. Specifically, the solution to the differential equation in Eq. \eqref{eq_bvp_Phi} is given by
\begin{align*}
\bar{\Phi}(y) = \sum_{i = \ubar{k}}^{\bar{k}} \left( \exp\left(\frac{\alpha y}{i}\right)\cdot \Big( \beta^{(i-\ubar{k}+1)}  - \frac{\alpha}{i} \int_0^y \frac{f'(\eta)}{\exp\left( \frac{\alpha}{i}\eta \right)} d\eta \Big) \right)\cdot \mathds{1}_{\left\{\bar{\Phi}(y)\in [c_i, c_{i+1}) \right\}},
\end{align*}
where $ \beta^{(i-\ubar{k}+1)} $ is any real constant. Note that here the superscript `$ (i-\ubar{k}+1) $' is purposely chosen so that the constant term starts from $ \beta^{(1)} $ as $ i $ starts from $\ubar{k} $.

As argued by Corollary \ref{theorem_optimality_random}, \textit{\bfseries if we can find an $ \alpha $ such that there exists a unique strictly-increasing function $ \bar{\Phi} $ that satisfies Eq. \eqref{eq_bvp_Phi}, then this $ \alpha $ is the lower bound of competitive ratios that no randomized algorithm can outperform}. Thus, we focus on demonstrating how to compute $ \alpha $  based on Eq. \eqref{eq_bvp_Phi}, or more specifically, obtain the value of $ \alpha $  such that $ \bar{\Phi}(y) $ is uniquely determined as follows: 
\begin{equation*}
\begin{aligned}
	\normalfont
	\begin{cases}
		\bar{\Phi}(y) = \sum_{i = \ubar{k}}^{\bar{k}} \left( \exp\left(\frac{\alpha y}{i}\right)\cdot \left( \beta^{(i-\ubar{k}+1)} - \frac{\alpha}{i} \int_0^y \frac{f'(\eta)}{\exp\left( \frac{\alpha}{i}\eta \right)} d\eta \right) \right)\cdot \mathds{1}_{\left\{\bar{\Phi}(y)\in [c_i, c_{i+1})\right\}}, \bigskip \\
		\bar{\Phi}\left(\hat{g}^{-1}\big(\frac{f^*(p_{\min})}{\alpha} \big) \right) = p_{\min}, \quad \bar{\Phi}(\bar{k}) =  p_{\max},
	\end{cases}
\end{aligned}
\end{equation*}
where $ y\in ( \hat{g}^{-1}(\frac{f^*(p_{\min})}{\alpha}), \bar{k} ) $. Mathematically, $ \bar{\Phi}(y) $ is a piecewise continuous function with $ \bar{k}-\ubar{k} + 1 $ segments. Therefore, there exist a sequence of $ \bar{k}-\ubar{k} + 2 $ positive real numbers that define these $ \bar{k}-\ubar{k} + 1 $ consecutive and non-overlapping segments or intervals. If we define $ \gamma^{(1)} \triangleq \hat{g}^{-1}\big(\frac{1}{\alpha}f^*(p_{\min})\big) $, namely,
\begin{align*}
\hat{g}(\gamma^{(1)}) = p_{\min}\gamma^{(1)} - f(\gamma^{(1)}) = \frac{1}{\alpha}f^*(p_{\min}) \quad \text{  or  } \quad \alpha  =  \frac{f^{*}(p_{\min})}{p_{\min}\gamma^{(1)} - f(\gamma^{(1)})}  =   F(\gamma^{(1)}),
\end{align*}
then $ y = \gamma^{(1)} $ is the first endpoint. Meanwhile, $ y = \bar{k} $ must be the $ (\bar{k}-\ubar{k} + 2) $-th endpoint (i.e., the last one). To define the $ \bar{k}-\ubar{k}  $ endpoints  between $ \gamma^{(1)} $ and $ \bar{k} $, we introduce a series of positive real numbers as follows:
\begin{align*}
0 < \gamma^{(1)} < \gamma^{(2)} < \cdots < \gamma^{(\bar{k}-\ubar{k} + 1)} < \gamma^{(\bar{k}-\ubar{k} + 2)} \triangleq \bar{k},
\end{align*}
where $ \gamma^{(\ell)} $ denotes the $ \ell $-th endpoint for $ \ell \in [\bar{k}-\ubar{k} + 2] $. For technical reasons, here we define $ \gamma^{(\bar{k}-\ubar{k} + 2)} \triangleq \bar{k} $ to simplify the notations in our following analysis.
Since $ \bar{\Phi}(y) $ is continuous at its endpoints, for $ \ell \in [\bar{k}-\ubar{k} + 2] $,  $ \gamma^{(\ell)} $ satisfies:
\begin{itemize}
\item First Endpoint: $ \ell = 1 $. Based on the first boundary condition in Eq. \eqref{eq_bvp_Phi}, we have $ \bar{\Phi}(\gamma^{(1)}) = p_{\min} $ and thus the following equation of $ \gamma^{(1)} $ and $ \beta^{(1)} $ holds:
\begin{align}\label{eq_Phi_1}
	\exp\left(\frac{ \alpha \gamma^{(1)} }{\ubar{k}} \right) \cdot \Big( \beta^{(1)}  - \frac{\alpha}{\ubar{k}}\int_0^{\gamma^{(1)}} \frac{f'(\eta)}{\exp\big(\frac{\alpha \eta}{\ubar{k}}\big)} d\eta \Big) = p_{\min}.
\end{align}

\item Middle Endpoints: $ \ell \in \{2,3,\cdots, \bar{k}-\ubar{k} + 1 \} $. Based on the continuity of $ \bar{\Phi}(y) $ at its endpoints and $ \bar{\Phi}(\gamma^{(\ell)}) = c_{\ell + \ubar{k} - 1} $, the following two equations of $ \gamma^{(\ell)},  \beta^{(\ell-1)} $, and $ \beta^{(\ell)} $ hold  for all $ \ell \in \{2,3,\cdots, \bar{k}-\ubar{k} + 1 \} $:
\begin{align}\label{eq_Phi_2}
	\begin{cases}
		\exp\left(\frac{ \alpha \gamma^{(\ell)} }{\ell + \ubar{k} - 2}\right)\cdot \Big( \beta^{(\ell-1)} - \frac{\alpha}{\ell + \ubar{k} - 2} \int_0^y \frac{f'(\eta)}{\exp\left( \frac{\alpha \eta }{ \ell + \ubar{k} - 2 } \right)} d\eta \Big) = c_{\ell + \ubar{k} - 1}, \\ 
		\\
		\exp\left(\frac{ \alpha \gamma^{(\ell)} }{ \ell + \ubar{k} - 1 }\right)\cdot \Big( \beta^{(\ell)} - \frac{\alpha}{ \ell + \ubar{k} - 1 } \int_0^y \frac{f'(\eta)}{\exp\left( \frac{ \alpha \eta }{ \ell + \ubar{k} - 1 } \right)} d\eta \Big) =  c_{\ell + \ubar{k} - 1}. 
	\end{cases}
\end{align}

\item Last Endpoint: $ \ell = \bar{k} - \ubar{k} + 2  $. The second boundary condition in Eq. \eqref{eq_bvp_Phi} implies that $ \bar{\Phi}(\gamma^{(\bar{k} - \ubar{k} + 2)}) = \bar{\Phi}(\bar{k}) = p_{\max} $, and thus the following equation of $ \gamma^{(\bar{k} - \ubar{k} + 2)} $ and $ \beta^{(\bar{k} - \ubar{k} + 1)} $ holds
\begin{align}\label{eq_Phi_3}
	\exp\Big(\frac{\alpha \gamma^{(\bar{k} - \ubar{k} + 2)} }{\bar{k}} \Big)\cdot \bigg( \beta^{(\bar{k} - \ubar{k} + 1)}  - \frac{\alpha}{\bar{k}}\int_0^{\gamma^{(\bar{k} - \ubar{k} + 2)}} \frac{f'(\eta)}{\exp\big(\frac{\alpha \eta}{\bar{k}}\big)} d\eta \bigg) = p_{\max}.
\end{align}
\end{itemize}

For ease of notation, let us define $ q^{(1)}  \triangleq p_{\min}$,  $ q^{(\ell)} \triangleq c_{\ubar{k} + \ell -1} $ for $ \ell \in \{2,3,\cdots, \bar{k}-\ubar{k} + 1\}  $,  and $ q^{(\bar{k}-\ubar{k} + 2)} \triangleq p_{\max} $. We can manipulate Eq. \eqref{eq_Phi_1} -- Eq. \eqref{eq_Phi_3} to eliminate $ \beta^{(\ell)} $ for $ \ell = [\bar{k}-\ubar{k} + 1]  $, and obtain the following equations of $ \gamma^{(\ell)} $  for $ \ell \in  [\bar{k}-\ubar{k} + 1]  $:
\begin{align}
& \frac{q^{(\ell + 1)}(\ubar{k} + \ell -1)}{\exp\left(\frac{\alpha}{\ubar{k} + \ell -1} \gamma^{(\ell + 1)} \right)} - \frac{q^{(\ell)}(\ubar{k} + \ell -1)}{ \exp\left( \frac{\alpha}{\ubar{k} + \ell -1} \gamma^{(\ell)} \right) }  
= \int_{\gamma^{(\ell)}}^{\gamma^{(\ell+1)}} \frac{\alpha f'(y)}{\exp\left(\frac{\alpha y}{\ubar{k} + \ell -1}\right)} dy, \quad \forall \ell \in [\bar{k} - \ubar{k} + 1].
\bigskip \label{eq_system_of_equations_abg_1}
\end{align}
Substituting $ \alpha = \frac{f^{*}(p_{\min})}{p_{\min}\gamma^{(1)} - f(\gamma^{(1)})}  =  F(\gamma^{(1)}) $ into Eq. \eqref{eq_system_of_equations_abg_1} leads to Eq. \eqref{eq_beta_gamma} in Theorem \ref{theorem_hardness_results_general}. We emphasize that since $ \bar{\Phi} $ is strictly-increasing, the sequence of increasing positive real numbers $ 0 < \gamma^{(1)} < \gamma^{(2)} < \cdots < \gamma^{(\bar{k}-\ubar{k} + 1)} < \bar{k} $ always exists.  
As argued by Corollary \ref{theorem_optimality_random}, no randomized algorithm is $ \big(F(\gamma^{(1)}) -\epsilon\big)$-competitive for any $ \epsilon > 0 $. We thus complete the proof of Theorem \ref{theorem_hardness_results_general}.

\vspace{+0.2cm}
\begin{remark}[Computation of $ \gamma^{(1)} $]\label{remark_computation_gamma_1}
    Finding the exact solution to Eq. \eqref{eq_system_of_equations_abg_1} requires solving a system of $ \bar{k} - \ubar{k} + 1 $  equations with $ \bar{k} - \ubar{k} + 1 $ variables (i.e., $ \big\{ \gamma^{(\ell)} \big\}_{\ell = [\bar{k}-\ubar{k} + 1] } $). Note that for each given $ \gamma^{(1)}\in (0,\ubar{k}]$, one can sequentially solve Eq. \eqref{eq_system_of_equations_abg_1} to obtain $\gamma^{(2)}, \gamma^{(3)}, \cdots, \gamma^{(\bar{k}-\ubar{k} + 1)}  $. Thus,  we can perform a one-dimensional bisection search over $ \gamma^{(1)}\in (0,\ubar{k}]$ in the outer loop, and sequentially compute the values of  $\gamma^{(2)}, \gamma^{(3)}, \cdots, \gamma^{(\bar{k}-\ubar{k} + 1)}  $ in the inner loop. In the \textsc{High-Value} case when  $\bar{k} = \ubar{k} = k $, the inner loop vanishes as  $ \gamma^{(1)}\in (0,\ubar{k}]$ becomes the only variable. In this case the system of equations in Eq. \eqref{eq_system_of_equations_abg_1} reduces to a single equation of $ \gamma^{(1)}\in (0,\ubar{k}]$ as follows:
    \begin{equation*}
		\frac{p_{\max} k}{\exp(\alpha)} - \frac{p_{\min} k }{\exp\left(\gamma^{(1)} \alpha/k\right)} =  \int_{\gamma^{(1)}}^{k} \frac{\alpha f'(y)}{\exp\left(y \alpha /k\right)} dy,
	\end{equation*}
    which is exactly the same as Eq. \eqref{eq_gamma} after substituting $\alpha  = F(\gamma^{(1)})$. It is also worth noting that the monotonicity property of these $ \bar{k} - \ubar{k} + 1 $ variables (i.e., $ 0 < \gamma^{(1)} < \gamma^{(2)} < \cdots < \gamma^{(\bar{k}-\ubar{k} + 1)} <  \bar{k}  $) guarantees that the bisection search over $ \gamma^{(1)}\in (0,\ubar{k}]$ in the outer loop always convergences.
\end{remark}

\newpage
\section{Proof of Theorem \ref{theorem_asymptotic}}
\label{proof_of_theorem_asymptotic}
In this section, we prove the asymptotic properties given by Theorem \ref{theorem_asymptotic}. Before proving the convergence of $ \textsf{CR}_f^*(\rho,k) $ and $ \textsf{CR}_f^{\textsf{lb}}(\rho,k) $ as $ k\rightarrow +\infty $, we first provide some preliminaries regarding the formulation of OSCC as an optimization problem (in the offline setting).

\subsection{Preliminaries}

We first give an alternative formulation of OSCC in the offline setting by normalizing the total production capacity to be 1:
\begin{align}\label{SWM_M}
\underset{x_t\in \{0,1\}}{\textsf{maximize}}\quad    \sum_{t=1}^T p_t x_t - \tilde{f}\left(\frac{1}{k}\sum_{t=1}^T x_t\right).
\end{align}
In Problem \eqref{SWM_M}, the demand of each buyer is $ 1/k $ (which was 1 in the initial OSCC formulation, hence the \textit{unit-demand setting} hereinafter) and  $ \tilde{f} $ is a scaled version of $ f $ given in Eq. \eqref{equation_f}. More specifically, $ \tilde{f}(y) $ is monotonically increasing and convex over $ y\in [0,1] $, and satisfies
\begin{align*}
\tilde{f}\left(\frac{i}{k}\right) = f(i) = \sum_{j = 1}^i c_{j}, \quad \forall i = [k].
\end{align*}
In other words, the marginal cost of producing the $ i $-th unit is given by
\begin{align*}
c_i = \tilde{f}\left(\frac{i}{k}\right) - \tilde{f}\left(\frac{i-1}{k}\right), \quad  \forall i = [k].
\end{align*}

Problem \eqref{SWM_M} is equivalent to the presented OSCC in the offline setting. The purpose of performing such an equivalent reformulation is that the demand of each buyer approaches zero when $ k \rightarrow +\infty $. This transforms our initial \textit{unit-demand setting} to the \textit{infinitesimal setting} (i.e., the demand of each buyer is infinitesimal) that has proven to be mathematically more convenient \cite{OKP_Zhou_2008}. In what follows we refer to Problem \eqref{SWM_M} as the \textit{normalized unit-demand setting} so as to differ it from our initial \textit{unit-demand setting}.

(\textbf{Fenchel Conjugate}) The Fenchel conjugate of $ \tilde{f} $ is defined as:
\begin{align*}
\tilde{f}_{\textsf{fc}}^{*}(p) \triangleq  \max_{y \in [0, 1]} \ py    - \tilde{f}(y), \quad p \in [0,+\infty),
\end{align*}
which differs from our previous definition of $ f^* $ in Eq. \eqref{eq_f_star} as here the maximization is taken over a continuous interval of $ [0,1] $, while previously it was done over a discrete set of $ \{0,1,\cdots, k\} $. To differentiate these two different definitions, we define the discrete conjugate of $ \tilde{f} $ by
\begin{align*}
\tilde{f}_{\textsf{dc}}^{*}(p) \triangleq  \max_{i \in \{0,1,\cdots, k\} } \ p \frac{i}{k}    - \tilde{f}\left(\frac{i}{k}\right), \quad p \in [0,+\infty).
\end{align*}
Here, the subscript ``\textsf{fc}" and ``\textsf{dc}" denote ``Fenchel conjugate" and ``discrete conjugate", respectively. 


\begin{lemma}\label{theorem_f_star_convergence}
The discrete conjugate $\normalfont \tilde{f}_{\textsf{dc}}^{*} $ is asymptotically equivalent to $\normalfont \tilde{f}_{\textsf{fc}}^{*} $ when $ k \rightarrow +\infty $.
\end{lemma}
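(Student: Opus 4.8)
The plan is to show that $\tilde f_{\textsf{dc}}^*(p)$ and $\tilde f_{\textsf{fc}}^*(p)$ differ by a quantity that vanishes uniformly as $k \to +\infty$. The key observation is that both conjugates are maximizing the same affine-minus-convex objective $py - \tilde f(y)$; the only difference is that $\tilde f_{\textsf{dc}}^*$ restricts $y$ to the grid $\{0, 1/k, 2/k, \dots, 1\}$ whereas $\tilde f_{\textsf{fc}}^*$ allows all of $[0,1]$. Hence trivially $\tilde f_{\textsf{dc}}^*(p) \le \tilde f_{\textsf{fc}}^*(p)$ for every $p \ge 0$, and the whole content of the lemma is a matching asymptotic lower bound on $\tilde f_{\textsf{dc}}^*(p)$.

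First I would fix $p \in [0,+\infty)$ and let $y^\star = y^\star(p) \in \arg\max_{y\in[0,1]} \big(py - \tilde f(y)\big)$, so that $\tilde f_{\textsf{fc}}^*(p) = p y^\star - \tilde f(y^\star)$. Choose the nearest grid point $i^\star/k$ with $i^\star = \lfloor k y^\star \rfloor$ (or $\lceil k y^\star \rceil$), so $|y^\star - i^\star/k| \le 1/k$. Then
\begin{equation*}
\tilde f_{\textsf{dc}}^*(p) \ \ge\ p\frac{i^\star}{k} - \tilde f\Big(\frac{i^\star}{k}\Big)
\ =\ \tilde f_{\textsf{fc}}^*(p) - p\Big(y^\star - \frac{i^\star}{k}\Big) + \Big(\tilde f(y^\star) - \tilde f\big(\tfrac{i^\star}{k}\big)\Big).
\end{equation*}
The first correction term is bounded in absolute value by $p/k$. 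For the second, I would use that $\tilde f$ is convex and increasing on $[0,1]$, hence Lipschitz on $[0,1]$ with constant $\tilde f'(1^-) = c_{\max}$ (the paper already assumes the marginal costs are bounded by $c_{\max}$, which is exactly $\tilde f'(1^-)$ up to the scaling); so $|\tilde f(y^\star) - \tilde f(i^\star/k)| \le c_{\max}/k$. Combining, $0 \le \tilde f_{\textsf{fc}}^*(p) - \tilde f_{\textsf{dc}}^*(p) \le (p + c_{\max})/k$. Since in the relevant regime $p$ ranges over the bounded interval $[p_{\min}, p_{\max}]$ (or more generally one only needs $p \le p_{\max}$ for the application), the bound is $O(1/k)$ uniformly in $p$, which gives the claimed asymptotic equivalence $\lim_{k\to\infty}\big(\tilde f_{\textsf{dc}}^*(p) - \tilde f_{\textsf{fc}}^*(p)\big) = 0$.

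The main obstacle, and the place to be careful, is the Lipschitz/differentiability claim near the boundary $y = 1$: $\tilde f$ need only be convex and increasing, so $\tilde f'$ may fail to exist at finitely many points and could in principle blow up as $y \uparrow 1$. This is ruled out precisely by the standing assumption that $c_1 \le c_2 \le \cdots \le c_k \le c_{\max}$, i.e. the discrete slopes are uniformly bounded; one should phrase the argument using secant slopes $\big(\tilde f(b) - \tilde f(a)\big)/(b-a) \le c_{\max}$ for $0 \le a < b \le 1$ rather than derivatives, so that no smoothness beyond convexity is needed. A secondary subtlety is that the grid and the function $\tilde f$ both change with $k$; but since $\tilde f$ is defined consistently by interpolating the fixed cost data (or, in the asymptotic regime, converges to a fixed limiting convex function), the secant-slope bound $c_{\max}$ is uniform in $k$, so the $O(1/k)$ estimate is genuinely uniform and the limit statement follows. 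I would close by remarking that the same estimate shows $\tilde f_{\textsf{dc}}^*$ converges to $\tilde f_{\textsf{fc}}^*$ uniformly on compact subsets of $[0,+\infty)$, which is the form actually needed when passing the \textbf{SoSE}/ODE limit in the rest of the proof of Theorem \ref{theorem_asymptotic}.
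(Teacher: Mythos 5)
Your overall structure is the right one, and since the paper omits its own proof entirely (it offers only the one-line remark that $\tilde f_{\textsf{dc}}^{*}$ is a piecewise-linear under-approximation of $\tilde f_{\textsf{fc}}^{*}$ whose number of segments grows), an explicit argument of the kind you give — the trivial inequality $\tilde f_{\textsf{dc}}^{*}\le\tilde f_{\textsf{fc}}^{*}$ plus a matching lower bound obtained by rounding the continuous maximizer $y^\star$ to the grid — is exactly what is missing and is the natural way to fill the gap.

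There is, however, a genuine quantitative error in the step you yourself flagged as the delicate one. Under the paper's stated normalization $\tilde f(i/k)=f(i)=\sum_{j\le i}c_j$, the increment of $\tilde f$ over a single grid cell of width $1/k$ is exactly $c_i$, so the secant slope of $\tilde f$ over that cell is $\bigl(\tilde f(i/k)-\tilde f((i-1)/k)\bigr)/(1/k)=k\,c_i$, and the Lipschitz constant of $\tilde f$ on $[0,1]$ is of order $k\,c_{\max}$, not $c_{\max}$ (likewise $\tilde f'(1^-)\ge k\,c_k$, not $c_k$). Consequently, moving the argument by $|y^\star-i^\star/k|\le 1/k$ can change $\tilde f$ by up to about $c_{\max}$ — one full marginal cost — and your bound should read $0\le\tilde f_{\textsf{fc}}^{*}(p)-\tilde f_{\textsf{dc}}^{*}(p)\le p/k+O(c_{\max})$, which does not vanish. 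The quantities that are uniformly bounded by $c_{\max}$ are the per-cell \emph{increments} $\tilde f(i/k)-\tilde f((i-1)/k)$, not the secant slopes. The fix is to make the normalization explicit: the lemma, and the entire ODE limit in Appendix D, only make sense in the regime where $\tilde f$ is a fixed ($k$-independent) convex function on $[0,1]$ whose derivative is comparable to the prices — equivalently $c_i=\tilde f(i/k)-\tilde f((i-1)/k)=\Theta(1/k)$ — which is what Lemma \ref{theorem_Tan2020} and Eq. \eqref{eq_bvp_phi} implicitly assume when they compare $\tilde f'(y)$ directly with $\phi(y)\in[p_{\min},p_{\max}]$ and define $\vartheta=\tilde f'^{-1}(p_{\max})$. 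Under that convention your estimate is correct verbatim with $c_{\max}$ replaced by $\sup_{y}\tilde f'(y)$, and the $O(1/k)$ bound is uniform on $[p_{\min},p_{\max}]$. One further caveat for your closing remark: the asymptotic argument in Appendix D also uses $\tilde f_{\textsf{dc}}^{*\prime}\to\tilde f_{\textsf{fc}}^{*\prime}$, which does not follow from uniform convergence of the functions alone; you should add that both conjugates are convex in $p$ and the limit is differentiable at the relevant points, so convergence of derivatives follows from the standard fact about uniformly convergent sequences of convex functions.
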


The proof of Lemma \ref{theorem_f_star_convergence} is omitted for brevity. Recall that Lemma \ref{theorem_conjugate} implies that $ \tilde{f}_{\textsf{dc}}^{*} $ is piecewise linear with $ k $ segments. When $ k \rightarrow +\infty $, $ \tilde{f}_{\textsf{dc}}^{*} $ asymptotically converges to $ \tilde{f}_{\textsf{fc}}^{*} $ as a piecewise linear approximation with infinitely-many segments. Meanwhile, it is easy to see that $ \tilde{f}_{\textsf{dc}}^{*}(p) \leq \tilde{f}_{\textsf{fc}}^{*}(p) $ holds for all $ p\in [0,+\infty) $. Thus, $ \tilde{f}_{\textsf{dc}}^{*}(p) $ is a piecewise linear approximation of $ \tilde{f}_{\textsf{fc}}^{*}(p) $ from below, and converges to $ \tilde{f}_{\textsf{fc}}^{*}(p) $  when $ k $ is sufficiently large.

(\textbf{Continuous Min-Profit Function}) In Section \ref{section_preliminaries}, we introduced our definition of the min-profit function $ g(i) = p_{\min}i -f(i) $. Here, we extend the domain of $ g $ from a discrete set to a continuous interval as follows:
\begin{align*}
\tilde{g}(y) \triangleq  p_{\min}y - \tilde{f}(y), \quad y\in [0,1].
\end{align*}
Note that in Theorem \ref{theorem_necessary_random}, we define $ \hat{g}(y) =  p_{\min}y - f(y)$ for $ y\in [0,\bar{k}] $. Thus, $ \tilde{g} $ is a scaled version of $ \hat{g} $, similar to the relationship between $ \tilde{f} $ and $ f $.


(\textbf{Asymptotic Lower Bound} $ \underline{\textsf{CR}}_f(\rho) $) The following Lemma \ref{theorem_Tan2020} follows the results developed by Tan et al. \cite{Tan_ORA_2020}. Specifically, Lemma \ref{theorem_Tan2020} relates the asymptotic lower bound $ \underline{\textsf{CR}}_f(\rho) $ to the existence of a unique solution to an ordinary differential equation (ODE) with boundary conditions, based on which we can calculate the value of $ \underline{\textsf{CR}}_f(\rho) $ as long as $ \tilde{f}, p_{\min} $, and $ p_{\max} $ are given.

\begin{lemma}[Asymptotic Lower Bound]
\label{theorem_Tan2020}
For any given convex setup $ \mathcal{S} $ with $ k \rightarrow +\infty $, the best-possible competitive ratio of all online algorithms (possibly randomized) is   $ \underline{\textsf{CR}}_f(\rho) $, where $ \underline{\textsf{CR}}_f(\rho) \in [1,+\infty) $ is such that there exists a unique strictly increasing function $ \phi $ that satisfies the following first-order ODE with two boundary conditions:
\begin{align}\label{eq_bvp_phi}
	\normalfont
	\begin{cases}
		\phi'(y) = \underline{\textsf{CR}}_f(\rho) \cdot \frac{\phi(y) - \tilde{f}'(y)}{\tilde{f}_{\textsf{fc}}^{*'}\big(\phi(y)\big)}, \quad y\in \Big( \tilde{g}^{-1}\Big(\frac{ \tilde{f}_{\textsf{fc}}^{*}(p_{\min}) }{ \underline{\textsf{CR}}_f(\rho) } \Big), \vartheta \Big),\bigskip \\
		\phi\Big( \tilde{g}^{-1}\Big(\frac{ \tilde{f}_{\textsf{fc}}^{*}(p_{\min}) }{ \underline{\textsf{CR}}_f(\rho) }\Big) \Big) = p_{\min}, \quad \phi\left( \vartheta \right) = p_{\max}.
	\end{cases}
\end{align}
In Eq. \eqref{eq_bvp_phi}, $ \vartheta\in (0,1] $ is defined as follows:
\begin{align*}
	\vartheta \triangleq  \lim_{k \rightarrow +\infty} \frac{\bar{k}}{k} = \lim_{k \rightarrow +\infty} \frac{\Gamma(p_{\max})}{k}=
	\begin{cases}
		\tilde{f}^{'-1}(p_{\max}) & \text{ if }   p_{\max} < \tilde{f}'(1),\\
		\\
		1   & \text{ if } p_{\max} \geq \tilde{f}'(1),
	\end{cases}
\end{align*}
where $ \tilde{f}^{'-1} $ is the inverse of the derivative of $ \tilde{f} $ (i.e., the inverse of the marginal cost).
\end{lemma}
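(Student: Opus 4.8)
The plan is to read Lemma~\ref{theorem_Tan2020} as the $\bar k\to+\infty$ boundary case of the finite-capacity analysis and to reach it by transporting Corollary~\ref{theorem_optimality_random} through the normalized unit-demand reformulation \eqref{SWM_M} and then passing to the limit. Corollary~\ref{theorem_optimality_random} already says that for finite $k$ the tight randomized competitive ratio is the unique ratio parameter $\alpha$ for which the boundary-value problem \eqref{eq_bvp_Psi} — equivalently, after inverting as in the proof of Theorem~\ref{theorem_hardness_results_general}, the problem \eqref{eq_bvp_Phi} — admits a strictly increasing solution meeting its terminal condition; its data are the discretized cost $f$, the discrete conjugate $f^{*}$, the endpoint $\bar k$, and the left-endpoint map $\hat g^{-1}$. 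In capacity-fraction coordinates $z=\tfrac1k\sum_t x_t\in[0,1]$ these data become $\tilde f$, the discrete conjugate $\tilde f_{\textsf{dc}}^{*}$, the endpoint $\bar k/k$, and $\tilde g^{-1}$, and the limiting object is exactly the scalar-ODE boundary-value problem \eqref{eq_bvp_phi} studied by Tan et al.\ \cite{Tan_ORA_2020} for the infinitesimal diseconomy-of-scale model, provided one identifies their conjugate with the Fenchel conjugate $\tilde f_{\textsf{fc}}^{*}$ — legitimate precisely because of Lemma~\ref{theorem_f_star_convergence} — and their effective capacity with $\vartheta=\lim \Gamma(p_{\max})/k$.

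Concretely I would proceed as follows. \textbf{(1)} Write the BVP of Corollary~\ref{theorem_optimality_random} in the fraction coordinate and record that it depends on $k$ only through $\tilde f$, $\tilde f_{\textsf{dc}}^{*}$, $\bar k/k$, and the left endpoint $\tilde g^{-1}\!\big(\tilde f_{\textsf{dc}}^{*}(p_{\min})/\alpha\big)$. \textbf{(2)} Invoke Lemma~\ref{theorem_f_star_convergence} together with Lemma~\ref{theorem_conjugate} to get $\tilde f_{\textsf{dc}}^{*}\to\tilde f_{\textsf{fc}}^{*}$ uniformly on $[p_{\min},p_{\max}]$ (a piecewise-linear approximation from below with mesh tending to $0$), hence $\tilde f_{\textsf{dc}}^{*\prime}\to\tilde f_{\textsf{fc}}^{*\prime}$ off the kinks, together with $\bar k/k=\Gamma(p_{\max})/k\to\vartheta$ and convergence of the left endpoint via continuity of $\tilde g^{-1}$. \textbf{(3)} For fixed $\alpha$, apply a continuous-dependence argument for scalar ODEs to conclude that the normalized solution profiles converge locally uniformly to the solution $\phi^{(\alpha)}$ of the ODE in \eqref{eq_bvp_phi} with the left boundary condition; then, since the field of \eqref{eq_bvp_phi} is monotone in $\alpha$, the map $\alpha\mapsto\phi^{(\alpha)}(\vartheta)$ is continuous and strictly monotone and crosses $p_{\max}$ exactly once, which identifies the unique $\underline{\textsf{CR}}_f(\rho)\in[1,+\infty)$ — this is the existence/uniqueness statement of \cite{Tan_ORA_2020}. \textbf{(4)} Assemble the two directions: the lower bound ``no randomized algorithm beats $\underline{\textsf{CR}}_f(\rho)$'' is the $k\to+\infty$ limit of Corollary~\ref{theorem_optimality_random}, while achievability is the $k\to+\infty$ limit of the \textbf{SoSE} \eqref{eq_system_of_equations} of Theorem~\ref{theorem_optimality} (which is itself a discretization of \eqref{eq_bvp_phi}), so $\textsf{CR}_f^*(\rho,k)\to\underline{\textsf{CR}}_f(\rho)$; together these give ``best-possible.''

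I expect the main obstacle to be making Step~(3) airtight in the form needed here: upgrading ``the BVP data converge'' to ``the threshold $\alpha$ converges,'' uniformly over the relevant range of $\alpha$ and robustly to the fact that $\tilde f_{\textsf{dc}}^{*}$ approximates $\tilde f_{\textsf{fc}}^{*}$ only one-sidedly and only piecewise-linearly, with the kink locations themselves drifting with $k$. This requires a quantitative, Gronwall-type continuous-dependence estimate for the scalar ODE together with a uniform a~priori bound on $\alpha$ (available e.g.\ from $\underline{\textsf{CR}}_f(\rho)\le 1+\ln\rho(c_{\max})$ under $p_{\min}>c_{\max}$ via Proposition~\ref{theorem_convex_upper_bound}, and from the $1$-max-search-type bound in general) so that a convergent subsequence can be extracted and then pinned down by uniqueness. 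A secondary obstacle is getting the normalization exactly right and handling the capacity-binding case $\vartheta=1$ separately from the interior case $p_{\max}<\tilde{f}'(1)$: the interior case falls verbatim under the infinite-supply model of \cite{Tan_ORA_2020}, whereas the boundary case requires the extra observation that a binding hard cap at $z=1$ only tightens feasibility without changing the characterizing ODE — which can also be read off as the $k\to+\infty$ limit of the \textsc{High-Value} analysis behind Corollary~\ref{theorem_optimality_random}. Once these points are settled, Lemma~\ref{theorem_Tan2020} reduces to the scalar-ODE characterization of \cite{Tan_ORA_2020}.
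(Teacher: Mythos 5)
The paper does not prove Lemma~\ref{theorem_Tan2020} at all: it is imported verbatim from Tan et al.\ \cite{Tan_ORA_2020} as a black-box result about the infinitesimal-demand model, and the paper immediately pivots (in the remainder of Appendix~\ref{proof_of_theorem_asymptotic}) to proving Theorem~\ref{theorem_asymptotic}, namely that $\textsf{CR}_f^*(\rho,k)$ and $\textsf{CR}_f^{\textsf{lb}}(\rho,k)$ each converge to the $\underline{\textsf{CR}}_f(\rho)$ characterized by the BVP~\eqref{eq_bvp_phi} --- a step that \emph{uses} the lemma rather than proving it. Your proposal takes a genuinely different route: you try to re-derive the lemma internally from the paper's finite-$k$ machinery (Corollary~\ref{theorem_optimality_random} on one side, the \textbf{SoSE} of Theorem~\ref{theorem_optimality} on the other) by passing to the $k\to\infty$ limit in fraction coordinates. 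That is self-contained and does not require the external citation, which is a real advantage; the cost is that the convergence analysis you need in your Step~(3) is essentially the entire proof of Theorem~\ref{theorem_asymptotic}, so your argument inverts the paper's logical dependency (paper: Lemma~\ref{theorem_Tan2020} $\Rightarrow$ Theorem~\ref{theorem_asymptotic}; you: Corollary~\ref{theorem_optimality_random} + Theorem~\ref{theorem_optimality} $\Rightarrow$ Lemma~\ref{theorem_Tan2020} $\Leftrightarrow$ Theorem~\ref{theorem_asymptotic}). If made rigorous, your version proves both statements simultaneously.

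The gaps you flag are the real ones and not cosmetic. Existence and uniqueness of the strictly increasing solution to~\eqref{eq_bvp_phi}, and the strict monotonicity of $\alpha\mapsto\phi^{(\alpha)}(\vartheta)$, are precisely what Tan et al.\ establish; you cannot simply assume them if the point is to avoid the citation, so your Step~(3) needs an independent scalar-ODE argument (e.g.\ that the vector field is locally Lipschitz away from $\phi(y)=\tilde f'(y)$ and is strictly increasing in $\alpha$, so the shooting map from the left boundary datum to $\phi^{(\alpha)}(\vartheta)$ is strictly increasing and continuous). The passage from $\tilde f_{\textsf{dc}}^{*}$ to $\tilde f_{\textsf{fc}}^{*}$ is only a one-sided, piecewise-linear, kink-drifting approximation of the conjugate, so a plain Gr\"onwall estimate does not close the argument on the nose: you need either a bracketing argument producing upper and lower comparison solutions whose terminal values both converge to $p_{\max}$, or convergence in a topology under which the discrete BVP's terminal-value map converges uniformly on a compact $\alpha$-interval (with the a priori bound on $\alpha$ you mention, e.g.\ via Proposition~\ref{theorem_convex_upper_bound} when $p_{\min}>c_{\max}$, and the $1+\ln\rho$ ceiling in general). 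Finally, your Step~(4) establishes that the limit of the optimal finite-$k$ competitive ratios equals $\underline{\textsf{CR}}_f(\rho)$, which is a slightly weaker statement than the lemma's claim that $\underline{\textsf{CR}}_f(\rho)$ \emph{is} the optimal ratio in the infinitesimal setting; to match the statement you would add the short observation that the infinitesimal model is exactly the $k\to\infty$ normalized unit-demand model, so the two quantities coincide. None of this is a wrong idea, but as written the proposal leans on the very facts it should be establishing, and closing Step~(3) would require roughly the amount of work the paper avoids by citing~\cite{Tan_ORA_2020}.
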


Lemma \ref{theorem_Tan2020} shows that when  $ k\rightarrow +\infty $, the best-possible competitive ratio $ \underline{\textsf{CR}}_f(\rho) $ is a positive real value within $ [1,+\infty) $ such that Eq. \eqref{eq_bvp_phi} has a unique solution\footnote{Eq. \eqref{eq_bvp_phi} has a solution means that there exists a function that satisfies both the ODE and the boundary conditions.}. As noted earlier,  the fact that the demand of each buyer is infinitesimal is equivalent to the normalized unit-weight setting with $ k \rightarrow +\infty $. Thus, $ \underline{\textsf{CR}}_f(\rho) $ is the best-possible competitive ratio of all online algorithms in our unit-demand setting when the production capacity  $ k $ is sufficiently large. 


\subsection{Convergence of $ \normalfont \textsf{CR}_f^*(\rho,k) \rightarrow \protect\underline{\textsf{CR}}_f(\rho) $}
In this subsection we prove the convergence of $ \normalfont \textsf{CR}_f^*(\rho,k) \rightarrow \underline{\textsf{CR}}_f(\rho) $ when $ k\rightarrow +\infty $.  The key is to show that the \textbf{SoSE} in Theorem \ref{theorem_optimality} (i.e., Eq. \eqref{eq_system_of_equations}) is asymptotically equivalent to a first-order ODE with two boundary conditions when $ k \rightarrow +\infty $. Moreover, this converged ODE is the same as Eq. \eqref{eq_bvp_phi}, leading to the convergence of $ \normalfont \textsf{CR}_f^*(\rho,k) \rightarrow \underline{\textsf{CR}}_f(\rho) $ as $ k\rightarrow +\infty $.

Theorem \ref{theorem_asymptotic} implies that our optimal threshold $ \boldsymbol{\lambda}^* $ and the optimal competitive ratio $ \textsf{CR}_f^*(\rho, k) $ satisfy the following system of equations:
\begin{align*}
\normalfont
\textsf{CR}_f^*(\rho, k) =
\frac{f^{*}\big(\lambda_{\tau+1}^*\big)}{g(\tau+1)}
=  \frac{f^{*}\big(\lambda_{\tau+2}^*\big)-f^{*}\big(\lambda_{\tau+1}^*\big)}{\lambda_{\tau+1}^*-c_{\tau+2}} = \cdots =  \frac{f^{*}(\lambda_{\bar{k}}^*)-f^{*}\big(\lambda_{\bar{k}-1}^*\big)}{\lambda_{\bar{k}-1}^* - c_{\bar{k}}},
\end{align*}
which indicates that the sequence of positive real numbers $ \{ \lambda_i^* \}_{\forall i = \{\tau+1, \tau+2, \cdots, \bar{k} \} } $ satisfy the following recursive equation:
\begin{align*}
f^*(\lambda_i^*) = f^*(\lambda_{i-1}^*) + \textsf{CR}_f^*(\rho, k)\cdot (\lambda_{i-1}^* - c_i).
\end{align*}
Note that this is for the unit-demand setting where the demand of each buyer is one unit. After normalization, the demand of each buyer becomes $ 1/k $, and thus we have
\begin{align}\label{eq_1_M}
\tilde{f}_{\textsf{dc}}^*(\lambda_i^*) = \tilde{f}_{\textsf{dc}}^*(\lambda_{i-1}^*) + \textsf{CR}_f^*(\rho, k)\cdot \left( \frac{1}{k} \lambda_{i-1}^* - c_i \right), \quad i\in \{\tau+2,\tau+3,\cdots, \bar{k}\},
\end{align}

Since the threshold $ \lambda_i^*  $ depends on the total number of units that have been produced/sold, we can write $ \lambda_i^* $ as a function of the ratio $ i/k $, namely,
\begin{align*}
\lambda_i^* = \phi_{*}\left(\frac{i}{k}\right), \quad i\in \{\tau+2,\tau+3,\cdots, \bar{k}\},
\end{align*}
where $ \phi_* $ is referred to as the optimal threshold function and $\frac{i}{k} $ denotes the utilization ratio after accepting the $ i $-th buyer.  We emphasize that $ \phi_* $ can be uniquely determined as long as the optimal admission threshold $ \boldsymbol{\lambda}^* $ is given, and vice versa.  Substituting $ \lambda_i^* = \phi_{*}\left(\frac{i}{k}\right) $ into Eq. \eqref{eq_1_M}, we have
\begin{align*}
\tilde{f}_{\textsf{dc}}^*\Big(\phi_{*}\big(\frac{i}{k}\big)\Big) - \tilde{f}_{\textsf{dc}}^*\Big(\phi_{*}\big(\frac{i-1}{k}\big)\Big)  = \textsf{CR}_f^*(\rho,k) \cdot \left( \frac{1}{k} \phi_{*}\big(\frac{i-1}{k}\big) - c_i \right).
\end{align*}
After some simple manipulations, we have the following equivalent form:
\begin{align*}
\frac{ \phi_{*}\left(\frac{i}{k}\right) - \phi_{*}\left(\frac{i-1}{k} \right) }{ \frac{i}{k} - \frac{i-1}{k} }\cdot \frac{\tilde{f}_{\textsf{dc}}^*\big(\phi_{*}\left(\frac{i}{k}\right)\big) - \tilde{f}_{\textsf{dc}}^*\big(\phi_{*}\left(\frac{i-1}{k}\right)\big) }{ \phi_{*}\left(\frac{i}{k}\right) - \phi_{*}\left(\frac{i-1}{k}\right) } 
= \textsf{CR}_f^*(\rho,k) \cdot \bigg( \phi_{*}\big(\frac{i-1}{k} \big) - \frac{ \tilde{f}\left(\frac{i}{k}\right) - \tilde{f}\left(\frac{i-1}{k} \right) }{ \frac{i}{k} - \frac{i-1}{k} } \bigg).
\end{align*}

A key observation that helps derive the asymptotic properties in Theorem \ref{theorem_asymptotic} is that, when $ k \rightarrow 0 $, we asymptotically have the following differential equation:
\begin{align*}
\lim\limits_{k\rightarrow +\infty} \left\{\phi_{*}'\big(\frac{i-1}{k}\big) \cdot \tilde{f}_{\textsf{dc}}^{*'}\Big( \phi_{*}\big(\frac{i-1}{k}\big) \Big) \right\} = \lim\limits_{k\rightarrow +\infty}  \left\{ \textsf{CR}_f^*(\rho, k) \cdot \Big( \phi_{*}\big(\frac{i-1}{k}\big) - \tilde{f}'\big(\frac{i-1}{k}\big)\Big) \right\}.
\end{align*}
Note that when $ k\rightarrow +\infty $, Lemma \ref{theorem_f_star_convergence} implies the following asymptotic property:
\begin{align*}
\lim_{k\rightarrow +\infty} \tilde{f}_{\textsf{dc}}^{*'}\Big( \phi_{*}\big(\frac{i-1}{k}\big) \Big) = \tilde{f}_{\textsf{fc}}^{*'}\Big( \phi_{*}\big(\frac{i-1}{k}\big) \Big).
\end{align*}
Thus, we have the following ODE:
\begin{align*}
\lim\limits_{k\rightarrow +\infty} \phi_{*}'\Big(\frac{i-1}{k}\Big) =  \lim\limits_{k\rightarrow +\infty} \left\{ \textsf{CR}_f^*(\rho, k) \cdot \frac{ \phi_{*}\left(\frac{i-1}{k}\right) - \tilde{f}'\left(\frac{i-1}{k}\right)}{ \tilde{f}_{\textsf{fc}}^{*'}\left( \phi_{*}\left(\frac{i-1}{k}\right) \right)} \right\}, \ i\in \{\tau+2,\tau+3, \cdots, \bar{k}+1\}. 
\end{align*}

Based on the design of the optimal turning point $ \tau $ in Theorem \ref{theorem_optimality},
the value of $ \tau $ in our normalized unit-demand setting satisfies
\begin{align*}
\lim_{k\rightarrow +\infty} \tilde{g}\Big( \frac{\tau+1}{k}  \Big) = \lim_{k\rightarrow +\infty} \Big( p_{\min} \cdot \frac{\tau+1}{k} - \tilde{f}\big( \frac{\tau+1}{k}  \big) \Big) = \lim_{k\rightarrow +\infty} \frac{\tilde{f}_{\textsf{dc}}^{*}(p_{\min})}{\textsf{CR}_f^*(\rho,k)} = \lim_{k\rightarrow +\infty} \frac{\tilde{f}_{\textsf{fc}}^{*}(p_{\min})}{\textsf{CR}_f^*(\rho,k)}.
\end{align*}
Thus, the optimal turning point $ \tau $ in the normalized unit-demand setting satisfies
\begin{align*}
\lim_{k\rightarrow +\infty} \frac{\tau+1}{k} = \lim_{k\rightarrow +\infty} \tilde{g}^{-1}\Big( \frac{\tilde{f}_{\textsf{fc}}^{*}(p_{\min})}{ \textsf{CR}_f^*(\rho,k)} \Big).
\end{align*}

Based on the above analysis, in the normalized unit-demand setting with $ k \rightarrow +\infty $, the optimal threshold function $ \phi_* $ satisfies the following ODE with two boundary conditions
\begin{align}\label{eq_bvp_phi_star}
\begin{cases}
	\phi_{*}'(y) = \Big( \lim\limits_{k\rightarrow +\infty} \textsf{CR}_f^*(\rho,k) \Big) \cdot \frac{ \phi_{*}(y) - \tilde{f}'(y)}{\tilde{f}_{\textsf{fc}}^{*'}( \phi_{*}(y) ) }, \quad y\in \Big( \lim\limits_{k\rightarrow +\infty}\tilde{g}^{-1}\big( \frac{\tilde{f}_{\textsf{fc}}^{*}(p_{\min})}{ \textsf{CR}_f^*(\rho, k)} \big), \vartheta \Big),\bigskip \\
	\phi_{*}\Big( \lim\limits_{k\rightarrow +\infty} \tilde{g}^{-1}\big(\frac{\tilde{f}_{\textsf{fc}}^{*}(p_{\min})}{\textsf{CR}_f^*(\rho,k)}\big) \Big) = p_{\min}, \quad \phi_{*}(\vartheta) = p_{\max}.
\end{cases}
\end{align}

Notice that, Eq. \eqref{eq_bvp_phi_star} is the same as Eq. \eqref{eq_bvp_phi} except the limit of $ \textsf{CR}_f^*(\rho, k) $, which indicates that the following asymptotic equivalence holds:
\begin{align*}
\lim_{k \rightarrow +\infty} \textsf{CR}_f^*(\rho,k) = \underline{\textsf{CR}}_f(\rho).
\end{align*}
Thus, we complete the proof of $ \normalfont \textsf{CR}_f^*(\rho,k) \rightarrow \underline{\textsf{CR}}_f(\rho) $ when $ k \rightarrow +\infty $.

\subsection{Convergence of $ \normalfont \textsf{CR}_f^{\textsf{lb}}(\rho,k) \rightarrow \protect\underline{\textsf{CR}}_f(\rho) $}
In this subsection we prove the convergence of $ \normalfont \textsf{CR}_f^{\textsf{lb}}(\rho,k) \rightarrow \underline{\textsf{CR}}_f(\rho) $ when $ k\rightarrow +\infty $. Similar to the proof of  $ \normalfont \textsf{CR}_f^*(\rho,k)$ in the previous subsection, the key is to show that Eq. \eqref{eq_bvp_Phi} asymptotically converges to Eq. \eqref{eq_bvp_phi} when $ k\rightarrow +\infty $.

By Eq. \eqref{eq_bvp_Phi}, the lower bound $ \textsf{CR}_f^{\textsf{lb}}(\rho,k) $ is such that the following ODE has a unique solution:
\begin{equation}\label{eq_bvp_Phi_convergence}
\begin{aligned}
	\normalfont
	\begin{cases}
		\bar{\Phi}'(y) = \normalfont  \frac{ \textsf{CR}_f^{\textsf{lb}}(\rho,k)\cdot( \bar{\Phi}(y) - f'(y)) }{ \sum_{i = \ubar{k}}^{\bar{k}} i \cdot \mathds{1}_{\left\{\bar{\Phi}(y)\in [c_i, c_{i+1})\right\}} },  
		\quad  y\in \Big( \hat{g}^{-1}\big(\frac{f^*(p_{\min})}{\textsf{CR}_f^{\textsf{lb}}(\rho, k)}\big), \bar{k} \Big),\bigskip \\
		\bar{\Phi}\Big(\hat{g}^{-1}\big(\frac{f^*(p_{\min})}{\textsf{CR}_f^{\textsf{lb}}(\rho, k)}\big)\Big) = p_{\min}, \quad \bar{\Phi}(\bar{k}) =  p_{\max}.
	\end{cases}
\end{aligned}
\end{equation}
where $ \hat{g}^{-1} $ is the inverse of $ \hat{g}(y) =  p_{\min}y - f(y) $ for $ y\in [0,\bar{k}] $. Recall that the derivative of the discrete conjugate function $ f^* $ is piecewise constant, and thus we have
\begin{equation*}
f^{*'}(\bar{\Phi}(y)) = \sum_{i = \ubar{k}}^{\bar{k}} i\cdot \mathds{1}_{\left\{ \bar{\Phi}(y)\in [c_i, c_{i+1}) \right\}}.
\end{equation*}
Therefore, the right-hand-side of the ODE in Eq. \eqref{eq_bvp_Phi_convergence} can be written as a function of $ f^{*'}(\bar{\Phi}(y)) $. Let us assume that $ \phi_{\textsf{lb}} $ is a scaled version of $ \bar{\Phi} $ to keep consistent with our normalized unit-demand setting, i.e.,
\begin{equation*}
\phi_{\textsf{lb}}\left(\frac{y}{k}\right) = \bar{\Phi}(y), \quad y\in [0,\bar{k}],
\end{equation*}
then, leveraging the fact that $ \tilde{f}_{\textsf{dc}}^{*} $ is asymptotically equivalent to $\normalfont \tilde{f}_{\textsf{fc}}^{*} $ when $ k \rightarrow +\infty $,  and further using the relationship between $ f $ and $ \tilde{f} $, 
we can transform Eq. \eqref{eq_bvp_Phi_convergence} to the problem as follows:
\begin{align}\label{eq_bvp_phi_lb}
\begin{cases}
	\phi_{\textsf{lb}}'(y) = \left( \lim\limits_{k\rightarrow +\infty} \textsf{CR}_f^\textsf{lb}(\rho, k) \right) \cdot \frac{ \phi_{\textsf{lb}}(y) - \tilde{f}'(y)}{ \tilde{f}_{\textsf{fc}}^{*}( \phi_{\textsf{lb}}(y) )}, \quad y\in \Big( \lim\limits_{k\rightarrow +\infty} \tilde{g}^{-1}\Big(\frac{ \tilde{f}_{\textsf{fc}}^{*}(p_{\min})}{\textsf{CR}_f^\textsf{lb}(\rho, k)}\Big), \vartheta \Big),\bigskip \\
	\phi_{\textsf{lb}}\bigg( \lim\limits_{k\rightarrow +\infty} \tilde{g}^{-1}\Big(\frac{ \tilde{f}_{\textsf{fc}}^{*}(p_{\min})}{\textsf{CR}_f^\textsf{lb}(\rho, k)}\Big)\bigg) = p_{\min}, \quad  \phi_{\textsf{lb}}(\vartheta) = p_{\max}.
\end{cases}
\end{align}

Notice that, Eq. \eqref{eq_bvp_phi_lb} is the same as Eq. \eqref{eq_bvp_phi} except the limit of $ \textsf{CR}_f^{\textsf{lb}}(\rho,k) $, which indicates that the following asymptotic equivalence holds
\begin{align*}
\lim_{k \rightarrow +\infty} \textsf{CR}_f^{\textsf{lb}}(\rho,k) = \underline{\textsf{CR}}_f(\rho).
\end{align*}
Thus, we complete the proof of $ \normalfont \textsf{CR}_f^{\textsf{lb}}(\rho,k) \rightarrow \underline{\textsf{CR}}_f(\rho) $ when $ k\rightarrow +\infty $.

\section{Proof of Proposition \ref{theorem_convex_upper_bound}}
\label{proof_of_theorem_convex_upper_bound}
When $ p_{\min} > c_k $, we have $ \ubar{k} = \bar{k} = k $ and $ f^*(p) = pk -f(k) $. Based on the \textbf{SoSE} in Eq. \eqref{eq_system_of_equations},  we have the following recursive equation:
\begin{align*}
f^{*}(\lambda_i^*) = f^{*}(\lambda_{i-1}^*) + \textsf{CR}_f^*(\rho, k) \cdot (\lambda_{i-1}^* - c_i), \quad i\in \{\tau+2,\cdots,k\}.
\end{align*}
Substituting $ f^*(p) = kp -f(k) $ into the above recursive equation leads to
\begin{align} 
\lambda_i^* = \Big(1 + \frac{\textsf{CR}_f^*(\rho, k)}{k} \Big)\lambda_{i-1}^* -  \frac{\textsf{CR}_f^*(\rho, k)}{k}c_i,\quad  i\in \{\tau+2,\cdots,k\}.
\end{align}
For $ i\in \{\tau+2, \cdots, k\} $, the solution to the above recursive equation is given by
\begin{align*}
\lambda_i^* = \Big(1 + \frac{\textsf{CR}_f^*(\rho,k)}{k} \Big)^{i-\tau-1}\lambda_{\tau+1}^* - \frac{\textsf{CR}_f^*(\rho,k)}{k}\sum_{j=\tau+2}^{i} c_{j}\cdot \Big(1 + \frac{\textsf{CR}_f^*(\rho,k)}{k} \Big)^{j - \tau - 2}.
\end{align*}
Substituting $ i = k $ into the above general solution leads to the value of $ \lambda_k^* $:
\begin{align*}
\lambda_k^* =\ & \Big(1 + \frac{\textsf{CR}_f^*(\rho,k)}{k} \Big)^{k-\tau-1} \cdot \lambda_{\tau+1}^* - \frac{\textsf{CR}_f^*(\rho,k)}{k}\sum_{j=\tau+2}^{k} c_{j}\cdot \Big(1 + \frac{\textsf{CR}_f^*(\rho,k)}{k} \Big)^{j - \tau - 2}\\
\geq\ & \Big(1 + \frac{\textsf{CR}_f^*(\rho,k)}{k} \Big)^{k-\tau-1}\cdot p_{\min}  - \frac{\textsf{CR}_f^*(\rho,k)}{k}\cdot c_k \cdot \frac{\Big(1 + \frac{\textsf{CR}_f^*(\rho,k)}{k} \Big)^{k-\tau-1}- 1}{\frac{\textsf{CR}_f^*(\rho,k)}{k}} \\
=\ & \Big(1 + \frac{\textsf{CR}_f^*(\rho,k)}{k} \Big)^{k-\tau-1}\cdot p_{\min}   - c_k\cdot\bigg(\Big(1 + \frac{\textsf{CR}_f^*(\rho,k)}{k} \Big)^{k-\tau-1}- 1\bigg)\\
=\ & \Big(1 + \frac{\textsf{CR}_f^*(\rho,k)}{k} \Big)^{k-\tau-1}\cdot\left(p_{\min} - c_k\right) + c_k,
\end{align*}
where we use the fact that $ \lambda_{\tau+1}^* \geq p_{\min} = \lambda_\tau^* $ and $ c_i \leq c_k $ for all $ i\in [k] $. Since $ \lambda_k^* = p_{\max} $, we have
\begin{align*}
\Big(1 + \frac{\textsf{CR}_f^*(\rho,k)}{k} \Big)^{k-\tau-1} \leq  \frac{p_{\max} - c_k}{p_{\min} - c_k} = \rho(c_k).
\end{align*}
Recall that $
k = g^{\textsf{inv}}(\frac{f^{*}(p_{\min})}{\textsf{CR}_f^*(\rho, M)}) - 1 $, and thus $ \textsf{CR}_f^*(\rho, k) $ satisfies the following inequality
\begin{align}\label{eq_CR_f_upper_bound_proof}
\Big(1 + \frac{\textsf{CR}_f^*(\rho,k)}{k} \Big)^{k-g^{\textsf{inv}} \big(\frac{f^{*}(p_{\min})}{\textsf{CR}_f^*(\rho,k)} \big) } \leq  \rho(c_k).
\end{align}

For any $ \alpha \in [1,+\infty) $, the monotonicity of $ g $ and the convexity of $ f $ imply
\begin{align*}
g\Big(\big\lceil \frac{k}{\alpha} \big\rceil \Big) = p_{\min}  \big\lceil \frac{k}{\alpha} \big\rceil - f\Big(\big\lceil \frac{k}{\alpha}\big\rceil\Big)
\geq  \frac{1}{\alpha} \left(k p_{\min} - \alpha f\Big(\frac{k}{\alpha}\Big)\right)
\geq \frac{1}{\alpha} \Big(k p_{\min} - f(k)\Big) = \frac{1}{\alpha} f^{*}(p_{\min}),
\end{align*}
where the second inequality  is because  $  \alpha f(\frac{k}{\alpha}) \leq  f(k)$. Thus, by the definition of $ g^{\textsf{inv}} $, we have
\begin{align*}
g^{\textsf{inv}}\Big(\frac{f^{*}(p_{\min})}{\textsf{CR}_f^*(\rho,k)}\Big) \leq \big\lceil \frac{k}{\textsf{CR}_f^*(\rho,k)} \big\rceil.
\end{align*}
Therefore, the optimal competitive ratio $ \textsf{CR}_f^*(\rho,k) $ satisfies
\begin{align*}
\Big(1 + \frac{\textsf{CR}_f^*(\rho,k)}{k} \Big)^{k- \big\lceil \frac{k}{\textsf{CR}_f^*(\rho,k)} \big\rceil} \leq \Big(1 + \frac{\textsf{CR}_f^*(\rho,k)}{k} \Big)^{k-g^{\textsf{inv}}\big(\frac{f^{*}(p_{\min})}{\textsf{CR}_f^*(\rho,k)}\big) } \leq   \rho(c_k).
\end{align*}
We thus complete the proof of the upper bound of $ \textsf{CR}_f^*(\rho,k)$, that is, the first term in Eq. \eqref{eq_upper_bound_convex}. 

As to the asymptotic lower bound, when $ k \rightarrow +\infty $, we have
\begin{align*}
\lim_{k \rightarrow +\infty} \Big(1 + \frac{\textsf{CR}_f^*(\rho,k)}{k} \Big)^{k- \lceil \frac{k}{\textsf{CR}_f^*(\rho,k)} \rceil} = \exp\Big( \underline{\textsf{CR}}_f(\rho)-1 \Big) \leq \lim_{k\rightarrow +\infty}\rho(c_k) \leq \rho(c_{\max}),
\end{align*}
which thus leads to the following inequality
\begin{align*}
\underline{\textsf{CR}}_f(\rho) \leq 1 + \ln \left(\frac{p_{\max} - c_{\max}}{p_{\min} - c_{\max}}\right) = 1 +\ln\big(\rho(c_{\max})\big),
\end{align*}
where the equality holds when the production cost function $ f(y) = a y  $ for $ a \geq 0 $. We thus complete the proof of Proposition \ref{theorem_convex_upper_bound}.

\section{Proof of Theorem \ref{theorem_strongly_convex}}
\label{proof_of_strongly_convex}

Before proving the upper bounds in Theorem \ref{theorem_strongly_convex}, we first give the following two lemmas.

\begin{lemma}
If $ p_{\max}\geq p_{\min} > c_k $ and $ f $ is $ \mu $-strongly convex, then $ \xi $ and $ \zeta $ satisfy
\begin{align*}
	\xi = \frac{p_{\min} - f'(0)}{\mu k} \in \Big [1 - \frac{1}{2k}, +\infty \Big), \quad \zeta =  \lim_{k\rightarrow +\infty} \frac{p_{\min} - f'(0)}{\mu k} \in \big [1,+\infty\big),
\end{align*}
\end{lemma}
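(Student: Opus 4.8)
The plan is to reduce the whole claim to a single estimate on the largest marginal cost $c_k$ coming from $\mu$-strong convexity, and then to combine it with the standing hypothesis $p_{\min}>c_k$.

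First I would record the pointwise consequence of $\mu$-strong convexity: since $f(y)-\frac{\mu}{2}y^2$ is convex on $[0,k]$, its derivative $f'(y)-\mu y$ is non-decreasing, so $f'(y)\ge f'(0)+\mu y$ for every $y\in[0,k]$ (here $f'(0)$ denotes the one-sided derivative at the left endpoint; equivalently one can start directly from the subgradient form of $\mu$-strong convexity recorded in Theorem \ref{theorem_strongly_convex}). Integrating this bound over $[k-1,k]$ and using $c_k=f(k)-f(k-1)=\int_{k-1}^{k}f'(y)\,dy$ gives
\[ c_k \ \ge\ \int_{k-1}^{k}\bigl(f'(0)+\mu y\bigr)\,dy \ =\ f'(0)+\frac{\mu}{2}(2k-1)\ =\ f'(0)+\mu k-\frac{\mu}{2}. \]

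Next I would feed this into the hypothesis $p_{\min}>c_k$, which gives $p_{\min}-f'(0)>\mu k-\frac{\mu}{2}=\mu k\bigl(1-\frac{1}{2k}\bigr)$; dividing by $\mu k>0$ yields $\xi=\frac{p_{\min}-f'(0)}{\mu k}>1-\frac{1}{2k}$, hence $\xi\in[1-\frac{1}{2k},+\infty)$. That the numerator is positive, so that $\xi$ is a genuine positive number, follows from $p_{\min}>c_k\ge c_1=f(1)\ge f'(0)$ by convexity together with $f(0)=0$. For $\zeta$, which is the $k\to\infty$ limit of $\xi$ along the associated family of setups, the bound $\xi>1-\frac{1}{2k}$ holds for every $k$, so passing to the limit gives $\zeta\ge\lim_{k\to\infty}\bigl(1-\frac{1}{2k}\bigr)=1$; thus $\zeta\in[1,+\infty)$ whenever this limit is finite (the boundary case $\mu=0$, where $\xi=\zeta=+\infty$, is the convex-but-not-strongly-convex regime that the theorem handles separately).

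The only point requiring care — there is no genuine obstacle here — is the passage from the defining inequality of $\mu$-strong convexity to the pointwise estimate $f'(y)\ge f'(0)+\mu y$, in particular the behaviour of the one-sided derivative at $y=0$; and, relatedly, being explicit that the argument consumes only $p_{\min}>c_k$, which is weaker than the $p_{\min}>c_{\max}$ assumed in the enclosing theorem, so that the lemma is legitimately available in the stated form.
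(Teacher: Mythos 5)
Your proof is correct and essentially matches the paper's: both reduce to showing $c_k \geq f'(0) + \mu k - \tfrac{\mu}{2}$ via $\mu$-strong convexity and then divide through by $\mu k$. The only cosmetic difference is that you obtain this bound by integrating $f'(y)\geq f'(0)+\mu y$ over $[k-1,k]$, whereas the paper chains two pointwise applications of strong convexity ($c_k \geq f'(k-1) + \tfrac{\mu}{2}$ and $f'(k-1)\geq f'(0)+\mu(k-1)$); the resulting algebra is identical.
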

\begin{proof}
The proof is trivial based on the strong convexity of $ f $. Specifically, we have
\begin{align*}
	\xi =  \frac{p_{\min} - f'(0)}{\mu k} \geq \frac{c_{k} - f'(0)}{\mu k} \geq \frac{f'(k-1) + \frac{\mu}{2} - f'(0)}{\mu k} \geq \frac{\mu (k-1) + \frac{\mu}{2}}{\mu k} = 1 - \frac{1}{2k}.
\end{align*}
Since $ \zeta $ is the limit of $ \xi $ when $ k \rightarrow +\infty $, $ \zeta \in [1,+\infty) $ follows.
\end{proof}

\begin{lemma} 
\label{theorem_g_inverse_upper_bound}
Given a setup $ \mathcal{S} $, if $ f(y) $ is $ \mu $-strongly convex over $ y\in [0,k] $, then for any $ \alpha \geq 1 $, the following inequality holds:
\begin{align}\label{eq_g_inv_upper_bound}
	g^{\textsf{inv}}\left(\frac{1}{\alpha} f^{*}(p_{\min})\right) \leq \Big\lceil \frac{\ubar{k}}{\alpha} \bigg(\xi \alpha - \sqrt{ \left(\xi \alpha - 1 \right)^2 + \alpha-1}\bigg) \Big\rceil.
\end{align}
\end{lemma}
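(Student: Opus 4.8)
The plan is to reduce the claim to a one–variable calculus fact about the continuous extension of the min-profit function and then solve a quadratic inequality. Throughout we are in the regime $p_{\min} > c_{\max}$ relevant to Theorem \ref{theorem_strongly_convex}, so $\ubar{k} = \bar{k} = k$ and $f^{*}(p_{\min}) = p_{\min}k - f(k) = g(k)$. Let $\tilde{g}(y) \triangleq p_{\min}y - f(y)$ for $y\in[0,k]$ be the continuous extension of the min-profit function; since $f$ is convex and increasing with $f(0)=0$, $\tilde{g}$ is strictly increasing and concave, and $\tilde{g}(i)=g(i)$ for integers $i$. By Definition \ref{theorem_inverse_g}, $g^{\textsf{inv}}(\varsigma)$ is the smallest integer $j\in[\ubar{k}]$ with $g(j)\ge\varsigma$, so by monotonicity of $\tilde{g}$ it suffices to exhibit a real $m\in(0,k]$ with $\tilde{g}(m)\ge\tfrac{1}{\alpha}\tilde{g}(k)$; then $g(\lceil m\rceil)=\tilde{g}(\lceil m\rceil)\ge\tilde{g}(m)\ge\tfrac1\alpha f^{*}(p_{\min})$ and hence $g^{\textsf{inv}}(\tfrac1\alpha f^{*}(p_{\min}))\le\lceil m\rceil$.

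Second, I would lower-bound $\tilde{g}(m)$ and upper-bound $\tilde{g}(k)$ using strong convexity. Applying the $\mu$-strong-convexity inequality of Theorem \ref{theorem_strongly_convex} with $y_1=0$, $y_2=k$, $\delta=1-y/k$, and using $f(0)=0$, gives
\[
f(y)\ \le\ \tfrac{y}{k}f(k)-\tfrac{\mu}{2}(k-y)y,\qquad y\in[0,k],
\]
hence $\tilde{g}(y)\ge\tfrac{y}{k}\tilde{g}(k)+\tfrac{\mu}{2}(k-y)y$. Likewise, the gradient form of strong convexity at $0$ gives $f(k)\ge f'(0)k+\tfrac{\mu}{2}k^{2}$, so
\[
\tilde{g}(k)\ \le\ \bigl(p_{\min}-f'(0)\bigr)k-\tfrac{\mu}{2}k^{2}\ =\ \mu k^{2}\bigl(\xi-\tfrac12\bigr),
\]
recalling $\xi=(p_{\min}-f'(0))/(\mu k)$. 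Since $\tilde{g}(k)=f^{*}(p_{\min})>0$, this already forces $\xi>\tfrac12$, which keeps the square roots below well defined and strictly real.

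Third, I would combine the two estimates. For $m\le k/\alpha$, the inequality $\tilde{g}(m)\ge\tfrac1\alpha\tilde{g}(k)$ follows once $\tfrac{\mu}{2}(k-m)m\ge(\tfrac1\alpha-\tfrac{m}{k})\tilde{g}(k)$; since the right-hand side is nonnegative and $\tilde{g}(k)\le\mu k^{2}(\xi-\tfrac12)$, it is enough that $\tfrac12(k-m)m\ge(\tfrac1\alpha-\tfrac{m}{k})k^{2}(\xi-\tfrac12)$. Writing $t=m/k$ and rearranging turns this into the quadratic inequality $t^{2}-2\xi t+\tfrac{2\xi-1}{\alpha}\le 0$, whose smallest root is $t^{\star}=\xi-\sqrt{\xi^{2}-\tfrac{2\xi-1}{\alpha}}$. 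A short check (using $\xi>\tfrac12$ and $\alpha\ge1$) shows $0<t^{\star}\le 1/\alpha$, so $m=kt^{\star}$ is an admissible argument; and since $\xi^{2}-\tfrac{2\xi-1}{\alpha}=\tfrac{(\xi\alpha-1)^{2}+\alpha-1}{\alpha^{2}}$,
\[
kt^{\star}=k\Bigl(\xi-\sqrt{\xi^{2}-\tfrac{2\xi-1}{\alpha}}\Bigr)=\tfrac{k}{\alpha}\Bigl(\xi\alpha-\sqrt{(\xi\alpha-1)^{2}+\alpha-1}\Bigr).
\]
With $\ubar{k}=k$ this is exactly the claimed bound, and taking the ceiling yields Eq. \eqref{eq_g_inv_upper_bound}.

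The computation is essentially mechanical; the two points that need care are (i) recognizing that the naive chord bound $f(y)\le\tfrac{y}{k}f(k)$ must be sharpened by the quadratic strong-convexity term, and that $\tilde{g}(k)$ should be re-expressed through $\xi$ rather than left in terms of $f(k)$, so that the final bound depends only on $\xi,\alpha,k$; and (ii) verifying that the smaller root $t^{\star}$ indeed lies in $(0,1/\alpha]$, so that $m=kt^{\star}$ is a legitimate argument of $\tilde{g}$ and the resulting estimate genuinely improves on the crude bound $\lceil k/\alpha\rceil$ from Proposition \ref{theorem_convex_upper_bound}. Everything else is routine algebra.
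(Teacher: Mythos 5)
Your proof is correct and follows essentially the same route as the paper's: both arguments pass to the continuous extension $\tilde{g}(y) = p_{\min}y - f(y)$, reduce the claim to exhibiting a real $m\le\ubar{k}/\alpha$ with $\tilde{g}(m)\ge\frac{1}{\alpha}f^{*}(p_{\min})$, and then unfold strong convexity into the quadratic inequality $t^2 - 2\xi t + \frac{2\xi-1}{\alpha}\le 0$ in $t=m/\ubar{k}$, whose left root $t^{\star}$ gives the stated bound. The organizational difference is that you lower-bound $\tilde{g}(m)$ via a single chord over $[0,\ubar{k}]$ together with the a-priori estimate $\tilde{g}(\ubar{k})\le\mu\ubar{k}^{2}(\xi-\tfrac12)$, whereas the paper pivots at the auxiliary point $\ubar{k}/\alpha$ (first bounding $\alpha f(\ubar{k}/\alpha)$ by $f(\ubar{k})$ and then using the gradient inequality at $\varphi$); your packaging is tighter and avoids the two-phase structure, and your explicit verification that $t^{\star}\in(0,1/\alpha]$ closes a loose end the paper leaves implicit. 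One precision issue common to both write-ups: $\tilde{g}$ need not be strictly increasing over all of $[0,\ubar{k}]$ (its derivative $p_{\min}-f'(y)$ can turn negative on $(\ubar{k}-1,\ubar{k}]$ when $p_{\min}$ only slightly exceeds $c_{\ubar{k}}$), so the step $g(\lceil m\rceil)\ge\tilde{g}(m)$ requires a one-line case split rather than an appeal to global monotonicity: if $\lceil m\rceil\le\ubar{k}-1$, then $\tilde{g}$ is increasing on $[0,\ubar{k}-1]$ because $f'(\ubar{k}-1)\le c_{\ubar{k}}\le p_{\min}$; if $\lceil m\rceil=\ubar{k}$, then $g^{\textsf{inv}}(\cdot)\le\ubar{k}$ holds by definition and the claim is immediate. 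The conclusion is unaffected, but the stated justification should be tightened accordingly.
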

\begin{proof}
The strong convexity of $ f $ with constant $ \mu $ implies that
\begin{align*}
	\alpha f\Big(\frac{ \ubar{k} }{\alpha}\Big) \leq  f( \ubar{k} ) - \frac{(\alpha-1)\mu}{2\alpha} \ubar{k}^2.
\end{align*}

To simplify the notations in the remaining analysis, let us define $ \varphi_0 $ by
\begin{align*}
	\varphi_0 \triangleq \frac{ \ubar{k} }{\alpha} \bigg(\xi \alpha - \sqrt{ \left(\xi \alpha - 1 \right)^2 + \alpha-1}\bigg).
\end{align*}
By the monotonicity of the min-profit function $ g $, we have $ g\big(\lceil \varphi_0  \rceil\big) = p_{\min} \lceil \varphi_0 \rceil  - f( \lceil \varphi_0 \rceil)  \geq p_{\min} \varphi_0  - f(\varphi_0) $.
Therefore, if we can prove that $ p_{\min} \varphi_0 - f\left(\varphi_0\right) \geq  \frac{1}{\alpha} f^{*}(p_{\min}) $ holds for all $ \mu $-strongly convex $ f $, then based on the definition of $ g^{\textsf{inv}} $ in Eq. \eqref{eq_g}, Eq. \eqref{eq_g_inv_upper_bound}  follows the following inequalities
\begin{align*}
	g\big(\lceil \varphi_0  \rceil\big) = p_{\min} \lceil \varphi_0 \rceil  - f( \lceil \varphi_0 \rceil)  \geq p_{\min}\varphi_0 - f\left(\varphi_0\right) \geq  \frac{1}{\alpha} f^{*}(p_{\min}).
\end{align*}

Our proof is performed in the reversed order. Specifically, we first solve the following inequalty
\begin{align}\label{eq_varphi}
	p_{\min} \varphi - f(\varphi) \geq p_{\min}\frac{ \ubar{k} }{\alpha} - f\big(\frac{ \ubar{k} }{\alpha}\big) -  \frac{(\alpha-1)\mu}{2\alpha^2} \ubar{k}^2,
\end{align}
and then show that $ \varphi = \varphi_0 $ is indeed included in the solution. Note that if Eq. \eqref{eq_varphi} holds, then 
\begin{align*}
	p_{\min} \varphi - f(\varphi) \geq p_{\min}\frac{ \ubar{k} }{\alpha} - f\big(\frac{ \ubar{k} }{\alpha}\big) -  \frac{(\alpha-1)\mu}{2\alpha^2} \ubar{k}^2 \geq  \frac{1}{\alpha} \Big(\ubar{k} p_{\min} - f( \ubar{k} )\Big) = \frac{1}{\alpha} f^{*}(p_{\min}),
\end{align*}
where the second inequality follows the strong convexity of $ f $, and the last equality is based on the definition of $ f^* $. Thus, if we show that $ \varphi = \varphi_0 $ satisfies Eq. \eqref{eq_varphi}, then we finish the proof of  Eq. \eqref{eq_g_inv_upper_bound} and Lemma \ref{theorem_g_inverse_upper_bound} follows. 

We now show that   $ \varphi = \varphi_0 $ indeed satisfies Eq. \eqref{eq_varphi}. Note that Eq. \eqref{eq_varphi} is equivalent to the following inequality:
\begin{align*}
	f\big(\frac{ \ubar{k} }{\alpha}\big) - f(\varphi) \geq p_{\min}\big( \frac{ \ubar{k} }{\alpha} - \varphi \big) - \frac{(\alpha-1)\mu}{2\alpha^2} \ubar{k}^2.
\end{align*}
The strong-convexity of $ f $ with constant $ \mu $ indicates that
\begin{align*}
	f\big( \frac{ \ubar{k} }{\alpha} \big) - f(\varphi) \geq f'(\varphi)\Big(\frac{ \ubar{k} }{\alpha} - \varphi\Big) + \frac{\mu}{2}\Big(\frac{ \ubar{k} }{\alpha} - \varphi\Big)^2 \geq \big(\mu \varphi + f'(0)\big) \Big(\frac{ \ubar{k} }{\alpha} - \varphi\Big) + \frac{\mu}{2}\Big(\frac{ \ubar{k} }{\alpha} - \varphi\Big)^2.
\end{align*}
Therefore, to show whether $ \varphi = \varphi_0 $ satisfies Eq. \eqref{eq_varphi}, it suffices to show whether $ \varphi = \varphi_0 $  satisfies
\begin{align*}
	\Big(\mu \varphi + f'(0)\Big)\Big(\frac{ \ubar{k} }{\alpha} - \varphi\Big) + \frac{\mu}{2}\Big(\frac{ \ubar{k} }{\alpha} - \varphi\Big)^2 \geq p_{\min}\Big(\frac{ \ubar{k} }{\alpha} - \varphi\Big) - \mu\cdot\frac{\alpha-1}{2}\cdot\Big(\frac{ \ubar{k} }{\alpha}\Big)^2.
\end{align*}
In fact, solving the above quadratic inequality of $ \varphi $ leads to the following solution:
\begin{align*}
	\varphi \geq 
	\frac{ \ubar{k} }{\alpha} \bigg(\xi \alpha - \sqrt{ \left(\xi \alpha - 1 \right)^2 + \alpha-1}\bigg) = \varphi_0.
\end{align*}	
We thus complete the proof of Lemma \ref{theorem_g_inverse_upper_bound}.
\end{proof}

\textbf{Proof of Theorem \ref{theorem_strongly_convex}}. Now we prove Eq. \eqref{eq_upper_bound_strong_convex} in Theorem \ref{theorem_strongly_convex}. Lemma \ref{theorem_g_inverse_upper_bound} is our general result regarding the upper bound of the turning point $ k $. When $ p_{\min} \geq c_{\max}\geq c_k $, we have $ \ubar{k} = k $. Thus, based on Lemma \ref{theorem_g_inverse_upper_bound} and Eq. \eqref{eq_CR_f_upper_bound_proof}, we have
\begin{align}\label{eq_proof_of_strongly_convex}
\left(1 + \frac{ \textsf{CR}_f^*(\rho,k) }{k} \right)^{k - \left\lceil \frac{k}{ \textsf{CR}_f^*(\rho,k) }\big(\xi \textsf{CR}_f^*(\rho,k) - \sqrt{ \left(\xi \textsf{CR}_f^*(\rho,k) - 1 \right)^2 + \textsf{CR}_f^*(\rho,k) -1}\big)\right\rceil } \leq \rho\big(c_k\big).
\end{align}
Since Eq. \eqref{eq_proof_of_strongly_convex} holds for all $ k $, it holds when $ k \rightarrow +\infty $ as well, leading to the following inequality
\begin{align*}
\frac{\exp\left(\underline{\textsf{CR}}_f(\rho)\right)}{\exp\left(\zeta\cdot \underline{\textsf{CR}}_f(\rho) - \sqrt{ \left(\zeta\cdot \underline{\textsf{CR}}_f(\rho) - 1 \right)^2 + \underline{\textsf{CR}}_f(\rho) - 1}\right)} \leq \lim_{k\rightarrow +\infty}\rho(c_k) \leq \rho\big(c_{\max}\big).
\end{align*}
Thus, we have
\begin{align*}
\underline{\textsf{CR}}_f(\rho) - \left( \zeta\cdot  \underline{\textsf{CR}}_f(\rho) - \sqrt{ \left(\zeta\cdot \underline{\textsf{CR}}_f(\rho) - 1 \right)^2 + \underline{\textsf{CR}}_f(\rho) -1} \right) \leq \ln \Big(\rho\big(c_{\max}\big)\Big).
\end{align*}
Solving the above inequality leads to the upper bound of $ \underline{\textsf{CR}}_f(\rho) $ in Eq. \eqref{eq_upper_bound_strong_convex_2}. We thus complete the proof of Theorem \ref{theorem_strongly_convex}.

\section{Proof of Claim \ref{claim_k}}
\label{proof_of_claim_k}

In this section, we prove Claim \ref{claim_k}, that is, it is possible to construct another $ \alpha $-feasible selection function $ \tilde{\psi} $ with  $ \tilde{\psi}(p_{\min}) = \tau^{(\alpha)} + 1 $  whenever there exists an $ \alpha $-feasible selection function $  \psi $ with $  \psi(p_{\min}) > \tau^{(\alpha)}  + 1 $. 

For any $ \alpha $-feasible selection function $  \psi\in \mathcal{P}^{(\alpha)} $ with $  \psi(p_{\min}) > \tau^{(\alpha)} + 1$, let us assume without loss of generality\footnote{For any case with $  \psi(p_{\min}) > \tau^{(\alpha)} + 2  $, the proof follows the same way as $  \psi(p_{\min}) = \tau^{(\alpha)} + 2  $.} that $  \psi(p_{\min}) = \tau^{(\alpha)} + 2 $.  Recall 
that the non-zero points of $ \psi $ are denoted by
$ \Omega = \{\omega_1,\omega_2,\cdots, \omega_L\} $, where the first non-zero point $ \omega_1 = p_{\min} $ (by Corollary \ref{theorem_number_of_omega}). The definitions of $\tau^{(\alpha)}$ and $ \mathcal{K}^{(\alpha)}$ imply that $ g(\tau^{(\alpha)} + 1) = p_{\min} (\tau^{(\alpha)}+1)   - f\left(\tau^{(\alpha)} + 1\right) \geq \frac{1}{\alpha}f^*(p_{\min})$. Thus, based on the strict monotonicity of $ f^* $, there must exist an $ \tilde{\omega}_{2}\geq \omega_1 = p_{\min} $ so that:
\vspace{-0.2cm}
\begin{equation}\label{eq_existence_of_omega_2}\vspace{-0.2cm}
g(\tau^{(\alpha)} + 1) = p_{\min} (\tau^{(\alpha)} + 1)   - f\left(\tau^{(\alpha)} + 1\right) \geq  \frac{1}{\alpha}f^*(\tilde{\omega}_{2}) \geq \frac{1}{\alpha}f^*(p_{\min}).
\end{equation}
Now we show that a new $ \alpha $-feasible selection function $ \tilde{\psi} $ can be constructed based on $ \tilde{\omega}_{2} $. Our proof is organized into two cases depending on whether $ \tilde{\omega}_{2}\in [p_{\min}, \omega_2) $ or $ \tilde{\omega}_{2}\in [\omega_2, +\infty) $.

(\textbf{Case-1}: $ \tilde{\omega}_{2}\in [p_{\min}, \omega_2) $) We first discuss the case when $ \tilde{\omega}_{2}\in [p_{\min}, \omega_2) $. In this case, we prove that the following selection function $ \tilde{\psi} $ is $ \alpha $-feasible:
\begin{align*}
\tilde{\psi}(p) =
\begin{cases}
	\tau^{(\alpha)} + 1 & \text{if } p  = \omega_1 = p_{\min},\\
	1 & \text{if } p = \tilde{\omega}_{2},\\
	\psi(p) & \text{if } p\in  \{ \omega_2, \omega_3, \cdots, \omega_L\},  \\
	0 & \text{otherwise}.
\end{cases}
\end{align*}
To show $ \tilde{\psi} $ is $ \alpha $-feasible, we need to prove that $ \tilde{\psi}(p) $ satisfies the feasibility conditions in Eq. \eqref{eq_necessary_inequality} and Eq. \eqref{eq_necessary_boundary}.
Note that the boundary conditions in Eq. \eqref{eq_necessary_boundary} are always satisfied as $ \tilde{\psi}(p_{\min}) = \tau^{(\alpha)} + 1 $ and $ \int_{p_{\min}}^{p_{\max}} \tilde{\psi}(\eta)d\eta  = \int_{p_{\min}}^{p_{\max}} \psi(\eta)d\eta \leq \bar{k} $ (i.e., the feasibility of $ \psi $). Thus, we only need to prove that $ \tilde{\psi}(p) $ satisfies Eq. \eqref{eq_necessary_inequality} for all $ p\in (p_{\min},p_{\max}] $. We next prove this by evaluating the feasibility of $ \tilde{\psi} $ in three sub-cases as follows:
\begin{itemize}
\item When $ p\in (p_{\min}, \tilde{\omega}^{2}) $, it is easy to check that $ \tilde{\psi}(p) $ satisfies  Eq. \eqref{eq_necessary_inequality}:
\begin{align*}
	\int_{p_{\min}}^{p} \eta \tilde{\psi}(\eta)d\eta - f\left(\int_{p_{\min}}^{p}  \tilde{\psi}(\eta)d\eta\right) = p_{\min}(\tau^{(\alpha)}+1) - f\big(\tau^{(\alpha)} + 1 \big) \geq   \frac{1}{\alpha} f^*(\tilde{\omega}_{2}) \geq \frac{1}{\alpha} f^*(p). 
\end{align*}

\item When $ p\in [\tilde{\omega}_{2}, \omega_2 $),  $ \tilde{\psi} $ satisfies  Eq. \eqref{eq_necessary_inequality} because:
\begin{align*}
	\int_{p_{\min}}^{p} \eta \tilde{\psi}(\eta)d\eta - f\left(\int_{p_{\min}}^{p}  \tilde{\psi}(\eta)d\eta\right) 
	=\ & p_{\min}(\tau^{(\alpha)} + 1) + \tilde{\omega}_{2} - f\big( \tau^{(\alpha)} + 2 \big) \\
	>\ & p_{\min} \big( \tau^{(\alpha)} + 2 \big) - f\big( \tau^{(\alpha)} + 2 \big)\\
	=\ & \int_{p_{\min}}^{p} \eta \psi(\eta)d\eta - f\left(\int_{p_{\min}}^{p}  \psi(\eta)d\eta\right)\\
	\geq\ & \frac{1}{\alpha} f^{*}(p).
\end{align*}
where the last inequality is because $ \psi $ is $ \alpha $-feasible.

\item When $ p\in [\omega_2, p_{\max}] $, $ \tilde{\psi} $ satisfies  Eq. \eqref{eq_necessary_inequality} because:  \begin{align*}
	& \int_{p_{\min}}^{p} \eta \tilde{\psi}(\eta)d\eta - f\left(\int_{p_{\min}}^{p}  \tilde{\psi}(\eta)d\eta\right) \\
	=\ & p_{\min}(\tau^{(\alpha)} + 1) + \tilde{\omega}_{2} +  \int_{\tilde{\omega}_{2}}^{p}  \eta \tilde{\psi}(\eta)d\eta - f\left(\int_{p_{\min}}^{p}  \tilde{\psi}(\eta)d\eta\right) \\
	=\ & p_{\min}(\tau^{(\alpha)} + 1) + \tilde{\omega}_{2} + \int_{\tilde{\omega}_{2}}^{p} \eta \psi(\eta)d\eta  - f\left(\int_{p_{\min}}^{p}\psi(\eta)d\eta\right)\\
	>\ & p_{\min}(\tau^{(\alpha)} + 2) + \int_{\tilde{\omega}_{2}}^{p} \eta \psi(\eta)d\eta  - f\left(\int_{p_{\min}}^{p} \psi(\eta)d\eta\right)\\
	=\ & \int_{p_{\min}}^{p} \eta \psi(\eta)d\eta - f\left(\int_{p_{\min}}^{p}  \psi(\eta)d\eta\right)\\
	\geq\ & \frac{1}{\alpha} f^{*}(p).
\end{align*}
\end{itemize}

Combining the above discussions, we conclude that a new $ \alpha $-feasible selection function $ \tilde{\psi} $ can be constructed with $ \tilde{\psi}(p_{\min}) = \tau^{(\alpha)} + 1$ when $ \tilde{\omega}_{2}\in [p_{\min}, \omega_{2}) $.  We next proceed with \textbf{Case-2} when $ \tilde{\omega}_{2}\in  [\omega_2,+\infty) $. 

(\textbf{Case-2}: $ \tilde{\omega}_{2}\in  [\omega_2,+\infty) $) When $ \tilde{\omega}_{2}\in  [\omega_2,+\infty) $, we need to discuss the following three sub-cases. 
\begin{itemize}
\item First, when $ \tilde{\omega}_{2} = \omega_2 $, we can construct a new $ \alpha $-feasible selection function by slightly modifying the one from \textbf{Case-1}:
\begin{align*}
	\tilde{\psi}(p) =
	\begin{cases}
		\tau^{(\alpha)} + 1 & \text{if } p = \omega_1 = p_{\min},\\
		2 & \text{if } p = \tilde{\omega}_{2} = \omega_2,\\
		\psi(p) & \text{if } p\in  \{\omega_3,\omega_4, \cdots, \omega_L\},  \\
		0 & \text{otherwise}.
	\end{cases}
\end{align*}

\item Second, when $ \tilde{\omega}_{2} \in (\omega_2, \omega_L] $, let us assume $ \tilde{\omega}\in (\omega_{\ell}, \omega_{\ell+1}) $, where $ \ell $ can be any integer within $ \{2,3,\cdots,L-1\} $. Note that here we do not include the case when $ \tilde{\omega}_{2} = \omega_{\ell+1} $ since whenever we can construct a new $ \alpha $-feasible selection function for $ \tilde{\omega}\in (\omega_{\ell}, \omega_{\ell+1}) $,  we can always construct an  $\alpha$-feasible selection function for $ \tilde{\omega}_{2} = \omega_{\ell+1} $, as demonstrated above regarding $ \tilde{\omega}_{2} = \omega_2 $. We argue that when  $ \tilde{\omega}\in (\omega_{\ell}, \omega_{\ell+1}) $ for some $ \ell \in  \{2,3,\cdots, L-1\} $, the following selection function is $\alpha$-feasible:
\begin{align*}
	\tilde{\psi}(p) =
	\begin{cases}
		\tau^{(\alpha)} + 1 & \text{if } p = \omega_1  = p_{\min},\\
		1+\sum_{p\in \{ \omega_2, \omega_3, \cdots, \omega_{\ell}\} } \psi(p) & \text{if } p = \tilde{\omega}_{2},\\
		\psi(p) &\text{if } p\in \{\omega_{\ell+1}, \omega_{\ell+2}, \cdots, \omega_L\},\\
		0 & \text{otherwise}.
	\end{cases}
\end{align*}

\item Third, when $  \tilde{\omega}_{2} \in (\omega_L, p_{\max}] $, we can construct a new $ \alpha $-feasible selection function as follows:
\begin{align*}
	\tilde{\psi}(p) =
	\begin{cases}
		\tau^{(\alpha)} + 1 & \text{if } p = p_{\min} = \omega_1,\\
		1+ \sum_{p\in \{ \omega_2, \omega_3, \cdots, \omega_L\} } \psi(p) & \text{if } p = \tilde{\omega}_{2},\\
		0 & \text{otherwise}.
	\end{cases}
\end{align*}

\item Finally, when $  \tilde{\omega}_2 \in (p_{\max},+\infty) $, we can construct a new $ \alpha $-feasible selection function as follows:
\begin{align*}
	\tilde{\psi}(p) =
	\begin{cases}
		\tau^{(\alpha)} + 1 & \text{if } p = \omega_1 =  p_{\min},\\
		0 & \text{otherwise}.
	\end{cases}
\end{align*}
\end{itemize}

It can be shown that the new selection functions in the above four sub-cases are all $\alpha$-feasible with $ \tilde{\psi}(p_{\min}) = \tau^{(\alpha)} + 1 $. We omit the details as the proof is similar to \textbf{Case-1}. 

Summarizing the discussions of \textbf{Case-1} and \textbf{Case-2} above, we complete the proof of Claim \ref{claim_k}.



\section{Proof of Claim \ref{claim_two_one}}
\label{proof_of_claim_two_one}
To prove Claim \ref{claim_two_one}, we assume that $  \psi(\omega_{\ell}) = 2 $ for some $ \ell\in \{2,3,\cdots, L\} $, and then show that it is possible to construct another $ \alpha $-feasible selection function $ \tilde{\psi} $ with $ \tilde{\psi}(\omega_{\ell}) = 1 $. 
Note that similar to our proof of Claim \ref{claim_k}, for any case with $  \psi(\omega_{\ell}) >  2  $, the proof follows the same way as $  \psi(\omega_{\ell}) = 2  $. Thus, it is without loss of generality to consider $  \psi(\omega_{\ell}) = 2 $ only.

The $ \alpha $-feasibility of $ \psi $ indicates that before and after selecting two buyers with price $ \omega_{\ell} $, the feasibility condition in Eq. \eqref{eq_necessary_inequality} is satisfied, i.e., 
\begin{subequations}\label{eq_feasibility_before_after}
\begin{align}
	\sum_{j=1}^{\ell-1}\omega_{j}\psi(\omega_{j}) - f\left(\sum_{j=1}^{\ell-1}\psi(\omega_{j})\right) \geq \frac{1}{\alpha}f^*\left( \omega_{\ell}\right),\\
	\sum_{j=1}^{\ell-1}\omega_{j}\psi(\omega_{j}) + 2\omega_{\ell} - f\left(\sum_{j=1}^{\ell-1}\psi(\omega_{j}) + 2\right) \geq \frac{1}{\alpha}f^*\left( \omega_{\ell}\right).
\end{align}
\end{subequations}

Now we show that if $ \tilde{\psi}(p) = \psi(p) $ for $ p\in [p_{\min}, \omega_{\ell}) $, and $ \tilde{\psi}(p) = 1 $ for $ p = \omega_{\ell} $, then we can construct a feasible set of non-zero points for $ \tilde{\psi}(p) $ over the remaining interval $ [\omega_{\ell}, p_{\max}] $. We first show that selecting one buyer with price $ \omega_{\ell} $ is indeed feasible:
\begin{align*}
\ & \sum_{j=1}^{\ell-1}\omega_{j}\tilde{\psi}(\omega_{j}) + \omega_{\ell} - f\left(\sum_{j=1}^{\ell-1}\tilde{\psi}(\omega_{j}) + 1\right) \\
= \ & \sum_{j=1}^{\ell-1}\omega_{j}\tilde{\psi}(\omega_{j}) + \frac{1}{2}\left( 2\omega_{\ell} - 2f\Big(\sum_{j=1}^{\ell-1}\tilde{\psi}(\omega_{j}) + 1\Big) \right)\\
\geq\ & \sum_{j=1}^{\ell-1}\omega_{j}\psi(\omega_{j}) + \frac{1}{2}\left( 2\omega_{\ell} - f\Big(\sum_{j=1}^{\ell-1} \psi(\omega_{j}) + 2 \Big) - f\Big(\sum_{j=1}^{\ell-1} \psi(\omega_{j})\Big) \right)\\
\geq  \ & \sum_{j=1}^{\ell-1}\omega_{j}\psi(\omega_{j}) + \frac{1}{\alpha}f^*\left( \omega_{\ell}  \right) -  \sum_{j=1}^{\ell-1}\omega_{j} \psi(\omega_{j})  \\
=\ & \frac{1}{\alpha}f^*\left( \omega_{\ell}  \right).
\end{align*}
where the first inequality is based on the convexity of $ f $, and the second inequality is a simple manipulation based on Eq. \eqref{eq_feasibility_before_after}.

The above discussion implies that $ \tilde{\psi}(p) $ satisfies Eq. \eqref{eq_necessary_inequality} when $ v $ is within interval $[p_{\min}, \omega_{\ell}] $. The construction of feasible non-zero points for $ \tilde{\psi}(p) $ over the remaining interval $ (\omega_{\ell}, p_{\max}] $ is similar to the proof of Claim \ref{claim_k}. The only difference is that, in Claim \ref{claim_k}, the interval starts from $ p_{\min} $ and we try to push $ \psi(p_{\min}) = \tau^{(\alpha)} +2 $ to $ \tilde{\psi}(p_{\min}) = \tau^{(\alpha)} + 1 $, while here the interval starts from $ \omega_{\ell} $ and we try to push $ \psi(\omega_{\ell}) = 2 $ to $ \tilde{\psi}(\omega_{\ell}) = 1 $. We omit the details for brevity.

In summary, if there exists an $ \alpha $-feasible selection function $ \psi  $ with $ \psi(\omega_{\ell}) > 1 $ for some non-zero point $ \omega_{\ell}\in \Omega\backslash\{p_{\min}\} $, then there exists another $ \alpha $-feasible selection function $ \tilde{\psi} $  which is the same as $  \psi $ for  $ p\in [p_{\min},\omega_{\ell}) $ and satisfies $ \tilde{\psi}(\omega_{\ell}) = 1 $. We thus complete the proof of Claim \ref{claim_two_one}.

\section{Proof of Claim \ref{claim_equality}}
\label{proof_of_claim_equality}
Claim \ref{claim_equality} largely follows our proof of Claim \ref{claim_k} in Appendix \ref{proof_of_claim_k}. Here we briefly explain the intuitions. If $ \psi $ is $ \alpha $-feasible and Eq. \eqref{eq_larger_than} holds for some $ \omega_{\ell} $, then, similar to Eq. \eqref{eq_existence_of_omega_2}, there must exist a unique $ \tilde{\omega}_{\ell} > \omega_\ell $ so that:
\begin{align*}
\sum_{j=1}^{\ell-1} \omega_{j} \psi(\omega_{j})  - f\Big(\sum_{j=1}^{\ell-1} \psi(\omega_{j})\Big) =  \frac{1}{\alpha}f^*(\tilde{\omega}_{\ell}) > \frac{1}{\alpha}f^*(\omega_{\ell}).
\end{align*}
Here, the unique existence of $ \tilde{\omega}_{\ell} $ is because of the strict monotonicity of the conjugate function $ f^* $. Based on $ \tilde{\omega}_{\ell} $ and our proof of Claim \ref{claim_k} in Appendix \ref{proof_of_claim_k}, we can prove in the similar way that a new $ \alpha $-feasible selection function $ \tilde{\psi} $ can be constructed.  Meanwhile, it is easy to see that $ \tilde{\omega}_{\ell}  $ is always larger than $ \omega_{\ell} $, and it could even be larger than $ p_{\max} $ (e.g., the last sub-case of \textbf{Case-2} in Appendix \ref{proof_of_claim_k}). Intuitively, if $ \omega_{\ell} > p_{\max}  $, then $ \tilde{\omega}_{\ell}  $ is not a feasible non-zero point of $ \tilde{\psi} $. Otherwise,  $ \tilde{\omega}_{\ell}  $ is the $ \ell $-th non-zero point of $ \tilde{\psi} $. We thus complete the proof of Claim \ref{claim_equality}.

\section{Proof of Theorem \ref{theorem_optimality_conditions}}
\label{proof_of_optimality_conditions}

To prove Theorem \ref{theorem_optimality_conditions}, it suffices to prove the following two optimality conditions:
\begin{itemize}
\item \textbf{OPT-Condition-1}:  if $ \alpha_* $ is the optimal competitive ratio, then the non-zero points satisfy
\begin{align}\label{eq_opt_condition_one}
	\sum_{\ell=1}^{L^{(\alpha_*)}} \omega_{\ell}^{(\alpha_*)}\cdot \psi^{(\alpha_*)}\Big(\omega_{\ell}^{(\alpha_*)}\Big) - f\Big(\tau^{(\alpha_*)} + L^{(\alpha_*)}\Big) = \frac{1}{\alpha_*}f^*(p_{\max}).
\end{align}

\item \textbf{OPT-Condition-2}: if $ \alpha_* $ is the optimal competitive ratio, then the characteristic selection function $ \psi^{(\alpha_*)} $ has $ \bar{k} - \tau^{(\alpha_*)} $ non-zero points, i.e., $ L^{(\alpha_*)} = \bar{k} - \tau^{(\alpha_*)} $.
\end{itemize}
We emphasize that the above two optimality conditions should be satisfied simultaneously in order to claim that $ \alpha_* $ is the optimal competitive ratio. In what follows we prove the first condition in detail, and give a brief proof for the second condition due to its similarity.

\textbf{Proof of OPT-Condition-1}. This condition can be proved by contradiction. The key idea is that, if the competitive ratio parameter $ \alpha_* $ is optimal and its characteristic selection function $ \psi^{(\alpha_*)} $ does not satisfy Eq. \eqref{eq_opt_condition_one}, then we can show that a smaller competitive ratio parameter $ \alpha < \alpha_* $ has an $ \alpha $-feasible characteristic selection function $ \psi^{(\alpha)} $, leading to contradiction that $ \alpha_* $ is the optimal competitive ratio of all deterministic online algorithms. 

To be more specific, if  $ \alpha_* $ is such that Eq. \eqref{eq_opt_condition_one} does not hold, then
\begin{align*}
\sum_{\ell=1}^{L^{(\alpha_*)}} \omega_{\tau}^{(\alpha_*)}\cdot \psi^{(\alpha_*)}\big(\omega_{\tau}^{(\alpha_*)}\big) - f\big(\tau^{(\alpha_*)} + L^{(\alpha_*)}\big) > \frac{1}{\alpha_*}f^*(p_{\max}).
\end{align*}
Let us consider $ \alpha  = \alpha_* - \epsilon $ and $ \epsilon $ is a very small positive real number. We assume that $ \epsilon $ is very small so that $ \tau^{(\alpha)} = \tau^{(\alpha_*)} $ and $ L^{(\alpha)} = L^{(\alpha_*)} $. Note that such $ \epsilon $ always exists based on the definitions of $ \tau^{(\alpha)} $ and $ L^{(\alpha)} $. We next show that it is possible to construct another characteristic selection function $ \psi^{(\alpha)} $ that shifts the non-zero points  of $ \psi^{(\alpha_*)} $ slightly towards the left (except the first non-zero point, as $ \omega_1^{(\alpha_*)} = \omega_1^{(\alpha)} = p_{\min} $ always holds). For instance, the new second non-zero point $ \omega_2^{(\alpha)} $ will be smaller than $ \omega_2^{(\alpha_*)} $:
\begin{align*}
f^*(\omega_2^{(\alpha)}) = (\alpha_* - \epsilon)g(\tau^{(\alpha_*)} + 1)  < \alpha_* g(\tau^{(\alpha_*)} + 1) 
=\ & f^*(\omega_2^{(\alpha_*)}).
\end{align*}
Similarly, we can show that $ \omega_{\ell}^{\alpha} < \omega_{\ell}^{\alpha_*} $ holds for all $ \ell = \{2,3,\cdots, L^{(\alpha_*)}\} $, meaning that a new characteristic selection function $ \psi^{(\alpha_*-\epsilon)} $ can be constructed so that
\begin{align*}
& \sum_{\ell=1}^{L^{(\alpha_*)}} \omega_{\ell}^{(\alpha_*)}\cdot \psi^{(\alpha_*)}\big(\omega_{\ell}^{(\alpha_*)}\big) - f\big(\tau^{(\alpha_*)} + L^{(\alpha_*)}\big)\\
>\ & \sum_{\ell=1}^{L^{(\alpha_*)}} \omega_{\ell}^{(\alpha_*-\epsilon)}\cdot \psi^{(\alpha_*-\epsilon)}\big(\omega_{\ell}^{(\alpha_*-\epsilon)}\big) - f\big(\tau^{(\alpha_*)} + L^{(\alpha_*)}\big) \\
\geq\ &  \frac{1}{\alpha_*-\epsilon} f^*(p_{\max}) \\
>\    &  \frac{1}{\alpha_*}f^*(p_{\max}).
\end{align*}
Thus,  $ \psi^{(\alpha)} $ is an $ \alpha $-feasible characteristic selection function with $ \alpha = \alpha_* - \epsilon $, meaning that our threshold policy $ \textsf{TOS}_{\boldsymbol{\lambda}} $ can be $ (\alpha_* - \epsilon) $-competitive as long as the threshold $ \boldsymbol{\lambda} $ is designed based on Theorem \ref{theorem_optimality} with $ \textsf{CR}_f^*(\rho,k) = \alpha_* - \epsilon  $.  This contradicts with the assumption that $ \alpha_* $ is the optimal competitive ratio of all deterministic online algorithms. We thus complete the proof of \textbf{OPT-Condition-1}.

\textbf{Proof of OPT-Condition-2}. The proof of the second optimality condition is similar to the first one. In particular, we can show that whenever an optimal competitive ratio $ \alpha_* $ is such that $ L^{(\alpha_*)} < \bar{k} - \tau^{(\alpha_*)} $, there must exist a smaller competitive ratio parameter $ \alpha < \alpha_* $ so that a new $ \alpha $-feasible characteristic selection function $ \psi^{(\alpha)} $ can be constructed and the number of non-zero points of $ L^{(\alpha)}  $ is within $   [L^{(\alpha_*)}, \bar{k} - \tau^{(\alpha)}] $. This contradicts with the assumption that $ \alpha_* $ is the optimal competitive ratio. The details are omitted for brevity.


Based on the above two optimality conditions, Theorem \ref{theorem_optimality_conditions} follows.

\section{Online Primal-Dual Analysis}
\label{appendix_OPD}
In this section, we briefly explain how our threshold policy is related to the online primal-dual approach. 
Based on statement of OSCC in Section \ref{section_OSCC_statement}, the relaxed version of OSCC in the offline setting can be written as follows:
\begin{subequations}
\begin{alignat*}{3}
	(\textbf{Primal}): \qquad 
	& \underset{\{x_t \geq 0\}_{\forall t}, y}{\text{maximize}}\qquad  & & \sum_{t=1}^T p_t x_t - f(y) \\
	& \text{subject to} & &  y \geq  \sum_{t=1}^T x_t,  \quad &  & (\lambda)\\ 
	& & &  x_t \leq 1,   \forall t\in[T],   \quad & & (u_t)
\end{alignat*}
\end{subequations}
where $ \lambda $ and $ u_t $ denote the dual variables of the corresponding constraints.  The above primal formulation is equivalent to the initial definition of OSCC except the relaxation of $ \{x_t\}_{\forall t} $. The dual of the above primal problem can be written as 
\begin{subequations}
\begin{alignat*}{3}
	(\textbf{Dual}): \qquad  
	& \underset{\lambda, \{u_t\}_{\forall t}}{\text{minimize}} \qquad  & & \sum_{t=1}^T u_t  + f^*(\lambda) \\
	& \text{subject to} & &  u_t \geq p_t - \lambda,  \\ 
	& & &  \lambda, u_t \geq 0,   \forall t\in[T],   
\end{alignat*}
\end{subequations}
where $ f^*(\lambda) $ is given by
\begin{align*}
f^*(\lambda) = \max_{ i \in \mathbb{Z}_{\geq 0} }\ \lambda i    - f (i) = \max_{ i \in \{0,1,\cdots, k\} }\ \lambda i   - f(i), \quad   \lambda \in [0,+\infty).
\end{align*}
It is worth pointing out that this is exactly the conjugate $ f^* $ we defined in Eq. \eqref{eq_f_star}.

\begin{lemma}[Weak Duality]\label{theorem_weak_duality}
The optimal objective value of the \textbf{Dual} program, denoted by $ D^* $, satisfies $ D^* \geq \textsf{OPT} $, where $ \textsf{OPT} $ denotes the optimal profit achieved in the offline setting.
\end{lemma}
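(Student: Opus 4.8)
The plan is to prove the lemma by the standard weak-duality argument. It suffices to check that for every primal-feasible point $(\{x_t\},y)$ of the \textbf{Primal} program and every dual-feasible point $(\lambda,\{u_t\})$ of the \textbf{Dual} program one has ``primal objective $\le$ dual objective''; then, since the offline optimum of OSCC is itself a primal-feasible point (an integral assignment $x_t\in\{0,1\}$ with $y=\sum_t x_t$ satisfies all the relaxed constraints, and $y>k$ is never optimal because $-f(y)=-\infty$ there), we obtain $\textsf{OPT}\le\sum_t u_t+f^{*}(\lambda)$ for \emph{every} dual-feasible $(\lambda,\{u_t\})$, and minimizing the right-hand side over the dual feasible set yields $\textsf{OPT}\le D^{*}$. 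Equivalently, writing $P^{*}$ for the optimal value of \textbf{Primal}, this is the chain $\textsf{OPT}\le P^{*}\le D^{*}$, where the first inequality is a relaxation fact and the second is weak duality.

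The pointwise inequality is obtained by a short computation: inserting $\pm\lambda$ into the primal objective and then invoking the constraints,
\begin{align*}
\sum_{t=1}^{T}p_t x_t-f(y)
&=\sum_{t=1}^{T}(p_t-\lambda)x_t+\lambda\sum_{t=1}^{T}x_t-f(y)\\
&\le\sum_{t=1}^{T}u_t x_t+\lambda y-f(y)\\
&\le\sum_{t=1}^{T}u_t+\bigl(\lambda y-f(y)\bigr)\\
&\le\sum_{t=1}^{T}u_t+f^{*}(\lambda),
\end{align*}
where the first inequality uses $p_t-\lambda\le u_t$ together with $x_t\ge0$, and $\sum_t x_t\le y$ together with $\lambda\ge0$; the second uses $x_t\le1$ together with $u_t\ge0$; and the last is the definition of the conjugate in Eq.~\eqref{eq_f_star}. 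That is the entire argument once the conjugate step is justified.

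The one step that needs care — and what I would regard as the only real obstacle — is the final inequality $\lambda y-f(y)\le f^{*}(\lambda)$, because in the relaxed \textbf{Primal} the produced quantity $y$ ranges over the continuum $[0,k]$, whereas $f^{*}$ in Eq.~\eqref{eq_f_star} (Definition~\ref{def_conjugate}) maximizes only over the integers $i\in\{0,1,\dots,k\}$. I would handle this by taking the production cost appearing in the relaxed program to be the piecewise-linear interpolation of $f(0),f(1),\dots,f(k)$ — equivalently, relaxing only the selection variables $x_t$ while keeping the production level integral — so that $\sup_{y\in[0,k]}\bigl(\lambda y-f(y)\bigr)$ is attained at an integer and hence equals $f^{*}(\lambda)$; Lemma~\ref{theorem_conjugate} then even supplies the explicit value of $f^{*}(\lambda)$ for $\lambda\in[p_{\min},p_{\max}]$. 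With this convention in place the remaining manipulations are elementary and the lemma follows.
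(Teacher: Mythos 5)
Your proof is correct and is exactly the standard weak-duality computation that the paper leaves implicit (the lemma is stated without proof). You are right to flag the only delicate point — that $f^{*}$ in Eq.~\eqref{eq_f_star} maximizes over integers while the relaxed primal lets $y$ range over $[0,k]$ — and your resolution works; note that the cleanest version of it is the one you mention in passing: since the lemma only compares $D^{*}$ against the \emph{integral} offline optimum, it suffices to instantiate your chain at that solution, where $y=\sum_t x_t$ is automatically an integer and $\lambda y - f(y)\le f^{*}(\lambda)$ holds by definition, with no interpolation convention needed.
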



\textbf{Principle of Online Primal-Dual Approach}. The core idea of the online primal-dual approach is to construct a dual feasible objective
that is close to the primal feasible objective at each stage when there is a new arrival of buyer, namely, construction of $ \{x_t, y^{(t)} \} $, and $ \{\lambda^{(t)}, u_t\} $ after processing each buyer $ t\in [T] $, and then
use weak duality (i.e., Lemma \ref{theorem_weak_duality}) to bound the performance of the online algorithm.  Note that here we denote by $ y^{(t)} $ and $ \lambda^{(t)} $ the primal and dual variables $ y $ and $ \lambda $ at each stage, respectively. This indicates that we need to design a \textit{trajectory} of feasible $ y $'s and $ \lambda $'s.  Specifically, let us denote the change of the primal objective by $ \Delta_P $ after processing a buyer (we call it one stage of change), and denote by $ \Delta_D $ the change of the dual objective. An online algorithm $ \textsf{ALG} $ is $ \alpha $-competitive if 
\vspace{-0.3cm}
\begin{equation*}
\Delta_P  \geq \frac{1}{\alpha}\Delta_D
\end{equation*}
holds at each stage. This is because, after processing all the buyers available, the profit achieved by the online algorithm $ \textsf{ALG} $ satisfies
\begin{align*}
\textsf{ALG} = \sum \Delta_P \geq \frac{1}{\alpha} \sum \Delta_D \geq \frac{1}{\alpha} D^* \overset{(\text{Lemma \ref{theorem_weak_duality}})}{\geq} \frac{1}{\alpha} \textsf{OPT},
\end{align*}
which thus implies that at least $ 1/\alpha $ fraction of optimum will be achieved. Note that $ \textsf{ALG} = \sum \Delta_P $ is because of the telescoping sum of $ \Delta_P $ over all stages.

\textbf{Relationship to Our Threshold Policies}. The design of $ \textsf{TOS}_{\boldsymbol{\lambda}} $, as well as its competitive analysis, is primarily inspired by the principle described above, except that the design and analysis of the admission threshold at its horizontal segment (i.e, $ \lambda_0 = \lambda_1 =\cdots = \lambda_\tau = p_{\min} $) and the boundary part (i.e, $ \lambda_{\bar{k}} = p_{\max} $) are derived differently based on the worst-case analysis. We  next briefly explain the online primal-dual flavor of our threshold policy.

If buyer $ t\in [T] $ is rejected, then $ \Delta_P = \Delta_D = 0 $, and thus $ \Delta_P  \geq \frac{1}{\alpha}\Delta_D $ holds; If buyer $ t $ is accepted, and suppose that this buyer is the $ i $-th buyer that has been selected, then to guarantee $ \Delta_P  \geq \frac{1}{\alpha}\Delta_D $, we have to design the threshold so that
\begin{align*}
\Delta_P = p_t - \Big( f(i) - f(i-1) \Big) \geq \frac{1}{\alpha} \Delta_D = \frac{1}{\alpha} \Big( u_t + f^*(\lambda_{i}) - f^*(\lambda_{i-1}) \Big)
\end{align*}
holds for all possible arrival instances, namely, all possible $ p_t\in [p_{\min},p_{\max}] $. Here, the threshold $ \{\lambda_i\}_{\forall i} $ is exactly the trajectory of $ \{\lambda^{(t)}\}_{\forall t} $ as mentioned above. If we designed the dual variables in a way such that $ u_t = p_t - \lambda_{i-1} \geq 0 $ (note that this is a feasible design), then the following inequality must hold in order to guarantee $ \Delta_P  \geq \frac{1}{\alpha}\Delta_D $:
\begin{align*}
p_t - c_i = \lambda_{i-1} + u_t - c_i \geq \frac{1}{\alpha}\Big( u_t + f^*(\lambda_{i}) - f^*(\lambda_{i-1}) \Big).
\end{align*}
Since $ u_t\geq 0 $, it suffices to guarantee that the following inequality holds:
\begin{align*}
\lambda_{i-1} -  c_i  \geq \frac{1}{\alpha}\Big( f^*(\lambda_{i}) - f^*(\lambda_{i-1}) \Big),
\end{align*}
which is exactly one of the key sufficient inequalities given by Lemma \ref{theorem_OPD_sufficiency_proof} (i.e., Eq. \eqref{eq_OPD_inequalities_proof_2}). Recall that an admission threshold $ \boldsymbol{\lambda} $ satisfying the $ \alpha $-parameterized inequalities in Lemma \ref{theorem_OPD_sufficiency_proof} implies that the sufficient inequalities in Corollary \ref{theorem_OPD_inequality} hold. In this regard, Lemma \ref{theorem_OPD_sufficiency_proof} follows the online primal-dual framework and Corollary \ref{theorem_OPD_inequality} follows Lemma \ref{theorem_OPD_sufficiency_proof}.


\end{document}